\documentclass[10.5pt,a4paper]{article}
\usepackage[utf8]{inputenc}
\usepackage{fullpage}
\usepackage{bm}
\usepackage{amsmath,amsfonts,amssymb,graphicx,color}
\usepackage{quantikz}
\usepackage{bbm}
\usepackage{cite}
\usepackage{amsthm}
\usepackage{authblk}
\usepackage{soul}
\usepackage{titlesec}
\usepackage[absolute,overlay]{textpos}

\usepackage[unicode=true,bookmarks=true,
bookmarksnumbered=false,bookmarksopen=false,
breaklinks=false,
pdfborder={0 0 1},
backref=false,colorlinks=true]{hyperref}
\usepackage{soul}

\numberwithin{equation}{section}

\titleclass{\subsubsubsection}{straight}[\subsection]

\newcounter{subsubsubsection}[subsubsection]
\renewcommand\thesubsubsubsection{\thesubsubsection.\arabic{subsubsubsection}}

\titleformat{\subsubsubsection}
  {\normalfont\normalsize\bfseries}{\thesubsubsubsection}{1em}{}
\titlespacing*{\subsubsubsection}
{0pt}{3.25ex plus 1ex minus .2ex}{1.5ex plus .2ex}

\makeatletter
\def\toclevel@subsubsubsection{4}
\def\l@subsubsubsection{\@dottedtocline{4}{7em}{4em}}
\makeatother
\hypersetup{pdftitle={title}, pdfauthor={Javier Pagan Lacambra, Aidan Chatwin-Davies, Masazumi Honda, Philipp A. Hoehn}, citecolor=blue,linkcolor=blue,urlcolor=blue,citecolor=blue}

\newtheorem{thm}{Theorem}[section]
\newtheorem{prop}[thm]{Proposition}
\newtheorem{lem}[thm]{Lemma}

\newtheorem{example}[thm]{Example}

\newcommand{\mbf}[1]{\mathbf{#1}}
\newcommand{\mrm}[1]{\mathrm{#1}}
\newcommand{\Hil}{\mathcal{H}}
\newcommand{\Oh}{\mathcal{O}}
\newcommand{\dee}{\mathrm{d}}
\newcommand{\Span}{\mathrm{span}}

\newcommand{\be} {\begin{equation}}
\newcommand{\ee} {\end{equation}}
\newcommand{\bea} {\begin{eqnarray}}
\newcommand{\eea} {\end{eqnarray}}
\newcommand{\eq} {equation}
\newcommand{\eqa} {eqnarray}
\newcommand{\NN} {\nonumber}

\usepackage{dsfont}

\newcommand{\shortket}[1]{| #1 \rangle}

\usepackage{xcolor}

\definecolor{purple}{rgb}{0.5,0,0.5}

\definecolor{jplgreen}{rgb}{0.0,0.5,0.0}

\title{\textbf{Gauss law codes and vacuum codes from \\lattice gauge theories}}

\author[1]{Javier P. Lacambra\thanks{j.pagan@oist.jp}}
\author[2,3]{Aidan Chatwin-Davies\thanks{aidan.chatwindavies@uri.edu}}
\author[4,5]{Masazumi Honda\thanks{masazumi.honda@riken.jp}}
\author[1]{Philipp~A.~H\"ohn\thanks{philipp.hoehn@oist.jp}}
\affil[1]{\small{Okinawa Institute of Science and Technology Graduate University \protect\\ Onna, Okinawa 904-0495, Japan}\smallskip}
\affil[2]{\small{Department of Physics, University of Rhode Island \protect\\ Kingston, RI 02881, United States of America}\smallskip}
\affil[3]{\small{Brown Center for Theoretical Physics and Innovation, Brown University \protect\\ Providence, RI 02912, United States of America}\smallskip}
\affil[4]{\small{Center for Interdisciplinary Theoretical and Mathematical Sciences, RIKEN \protect\\  Wako, Saitama 351-0198, Japan}\smallskip}
\affil[5]{\small{Graduate School of Science and Engineering, Saitama University \protect\\
Sakura-ku, Saitama 338-8570, Japan}}

\date{\today}

\begin{document}
\begin{textblock*}{8cm}(12cm,1cm)
\raggedleft
RIKEN-iTHEMS-Report-26, STUPP-26-295
\end{textblock*}


\maketitle
\thispagestyle{empty}

\begin{abstract}
We develop a comprehensive framework for constructing quantum error correcting codes (QECCs) from Abelian lattice gauge theories (LGTs) using quantum reference frames (QRFs) as a unifying formalism.
We consider LGTs with arbitrary compact Abelian gauge groups 
supported on lattices in arbitrary numbers of spatial dimensions, and we work with both pure gauge theories and theories with couplings to bosonic and fermionic matter. 
The codes that we construct fall into two classes:
First, \emph{Gauss law codes} identify the code subspace with the full gauge-invariant sector of the theory.
In models with matter coupled to gauge fields, these codes inherit a natural subsystem structure in which gauge-invariant Wilson loops and dressed matter excitations factorize the code space.
Second, \emph{vacuum codes} restrict the code subspace to the matter vacuum sector within the gauge-invariant subspace, yielding codes where errors correspond to gauge-invariant charge excitations rather than to violations of the Gauss law.
Despite their distinct setup, we show that when the gauge group is finite, vacuum codes are unitarily equivalent to pure gauge theory Gauss law codes, and that when the group is continuous, this is only true upon a charge coarse-graining of the vacuum code. In all cases, QRFs provide a systematic apparatus for fully characterizing the codes' algebraic structures and correctable error sets. For clarity, we illustrate our general results in $\mathbb{Z}_2$-gauge theory, as well as in scalar and fermionic QED. 
These findings offer fundamental insights into the parallelism between quantum error correction and gauge theory and point toward practical advantages for simulating LGTs on noisy quantum devices.
\end{abstract}

\clearpage

\pagenumbering{arabic}
\setcounter{page}{1}
\tableofcontents

\section{Introduction}

At a high level, quantum error correcting codes and quantum gauge theories share many anatomical similarities.
Beginning with a collection of logical degrees of freedom, a quantum error correcting code (QECC) embeds these redundantly within a larger collection of physical degrees of freedom.
The code maps out a subspace within the Hilbert space of the physical degrees of freedom, and to decode is to strip away redundancy and read out the encoded logical state.
In a gauge theory, gauge symmetry selects a smaller, gauge-invariant subspace within the larger Hilbert space of the theory's kinematical degrees of freedom, and this redundancy can be removed by fixing a gauge.
These qualitative parallels suggest a structural correspondence between QECCs and gauge theories, which can be made precise using quantum reference frames.

As their name suggests, quantum reference frames (QRFs) constitute a formalism for a relational description of quantum theory \cite{Aharonov67, Aharonov84}; that is, a setting in which the physics of subsets of a system's degrees of freedom is described conditionally with respect to the configurations of the other parts. The precursors of
QRFs trace their origins to quantum cosmology \cite{Page-Wootters, DeWitt67,Rovelli:1990pi,Rovelli02}; there are no external frames of reference when the system in question is an entire closed universe.
Instead, internal degrees of freedom, like the value of the scale factor or the homogeneous mode of a field, must be used as the \emph{frame} with respect to which other degrees of freedom evolve.
In this sense, a QRF is just a frame that is quantum mechanical.

A more modern and nuanced view of QRFs, however, is that they provide a universal toolkit for quantization in the presence of symmetry, with frames themselves being used to (de)parametrize the consequent redundancy \cite{Bartlett:2006tzx,Giacomini_2019,delaHamette:2021oex, Hoehn:2023ehz, Vanrietvelde:2018pgb,AliAhmad:2021adn,Araujo-Regado:2025ejs,Carette:2023wpz,Castro-Ruiz:2025yvi}.
As such, QRFs naturally find application to the quantization of gauge theories \cite{Carrozza:2021gju,Araujo-Regado:2024dpr,Araujo-Regado:2025ejs, Donnelly_2016,Fewster:2025ijg,Hoehn:2025pmx,Goeller:2022rsx,DeVuyst:2024khu,Freidel:2026stu,Kabel:2024lzr}.
Intuitively, within a gauge theory, different choices of reference frames amount to different ways of parametrizing gauge redundancy, yielding different pathways to non-redundant, gauge-fixed descriptions of the system.

Returning to the task at hand, the language of QRFs naturally mediates a correspondence between gauge theory and quantum error correction.
Indeed, our earlier work \cite{Carrozza:2024smc} established a precise and general dictionary between QECCs and gauge systems that are described by QRFs. As a first concrete study, we developed a detailed correspondence between qubit Pauli stabilizer QECCs and QRFs setups.  
There, a code that encodes $k$ logical qubits within $n$ physical qubits gives rise to QRFs having the symmetry group $(\mathbb{Z}_2)^{n-k}$.
A particular choice of QRF explicitly factorizes the $n$ qubits' physical Hilbert space into a system, which is isomorphic to $k$ qubits and supports the logical data, and a frame, which absorbs the code's redundancy.
A key insight from the dictionary is that different choices of QRF are in one-to-one correspondence with different choices for the error set that the code corrects, with the action of the chosen errors localizing to the frame. As an aside, this observation afforded a novel perspective on the relation between the correctability of errors sets and redundancy that the QRF encodes.
In principle, a single code subspace can be used to correct many different sets of elementary errors; in practice, this amounts to how one attributes syndromes to errors.
Intuitively, QRFs provide a clean and systematic way to parametrize this choice at the algebraic level, which organizes the $n$-qubit Hilbert space into the $k$-qubit code subspace and its images under the action of errors. As a first step towards lattice gauge theories, we explored this correspondence in detail in surface codes \cite{Kitaev:1997wr,Bravyi:1998sy}, which admit the interpretation of a lattice gauge theory with Abelian group.

In this article, we take a step further and construct QECCs out of bona fide Abelian lattice gauge theories (LGTs) \cite{Wilson:1974sk,Kogut:1974ag,Kogut:1979wt}, illustrating the general correspondence between QECCs and gauge systems laid out in \cite{Carrozza:2024smc} in a more general setting. 
We use QRFs to elucidate the codes' logical algebras, to identify what operations are errors, and to delineate sets of errors that are correctable.
We consider LGTs defined on generic directed graphs $(\mathcal{V}, \mathcal{L})$ in an arbitrary number of spatial dimensions, consisting of a collections of vertices $\mathcal{V} = \{v\}$ and directed edges $ \mathcal{L} = \{\ell\}$, and such that the lattice is closed. 
The gauge symmetry (structure) groups $G$ that we consider are compact Lie groups, and so may be finite, for example the cyclic group $G = \mathbb{Z}_D$, or continuous, for example in the case of $G = U(1)$, or they may be direct products of these finite and continuous components.
Gauge field degrees of freedom live on the edges of the lattice, and we allow for matter on the vertices that couples to the gauge fields.
Each vertex $v$, together with its adjacent edges, supports a unitary representation of a gauge transformation, $U_v^g$ for $g \in G$, and so we denote a local gauge transformation everywhere on the lattice by $U^{\bm g} \equiv \otimes_{v \in \mathcal{V}} U_v^{g_v}$.
Letting $\Hil_\mrm{kin}$ denote the full kinematical Hilbert space of gauge field and matter states on the lattice, our starting point is to interpret the gauge-invariant subspace
\begin{equation} \label{eq:Hpn}
    \Hil_\mrm{pn} = \left\{ \ket{\Psi} \in \Hil_\mrm{kin} ~ | ~ U^{\bm{g}} \ket{\Psi} = \ket{\Psi} ~ \forall ~ \bm{g} \in \bm{G} =G^{\times N_V}\right\} ,
\end{equation}
or certain subspaces thereof, as the code subspace of a QECC ($N_V$ denotes the number of vertices in $\mathcal{V}$).
The operators $U^{\bm{g}}$ thus function as the code's stabilizers.
Given that the unitary gauge transformations are generated by a Hermitian Gauss law constraint when the full gauge-invariant Hilbert space is taken as the code space, the name ``Gauss law codes'' has been adopted for such codes \cite{eczoo_gauss_law,Rajput:2021trn,Spagnoli:2024mib,Spagnoli:2026qni,Carrozza:2024smc,Yao:2025cxs}. In this work, we will also consider nontrivial subspaces of this to define codes.

The idea of identifying the stabilizers of a QECC with the unitary gauge transformations of a LGT has of course been explored before.
Before reviewing a selection of this literature, however, let us first address \emph{why} one might be interested in aligning a code with a gauge symmetry in the first place.
From a practical perspective, quantum simulation is the motivation.

Quantum simulation is anticipated to be a critically important use for quantum computers, since simulating quantum systems on conventional computers is generically inefficient, while the task may be performed efficiently on a quantum device \cite{Feynman1982Simulating, Lloyd1996UniversalQS, Georgescu_2014, Altman:2019vbv,Daley:2022eja}
Given the pervasiveness of gauge theories in science, simulating discretized LGTs is thus of extremely broad interest.
For example, while conventional numerical approaches to lattice QCD based on classical Monte Carlo simulation in the path integral formalism are successful in explaining static properties of hadrons (e.g.,~\cite{Ratti:2018ksb, Lin:2011ti, Philipsen:2012nu, Ding:2015ona}), there are still many important problems which seem difficult to explore via the conventional approach due to the infamous sign problem (see, e.g.,~\cite{Aarts:2015tyj,Nagata:2021ugx}).\footnote{
In the conventional approach, numerical integration of path integrals in lattice field theories is usually done by the Markov-chain Monte Carlo method, which treats Boltzmann weights as probabilities.
This method becomes problematic when the integrand is non-real or non-positive, and highly oscillating. This sign problem occurs when there are topological terms, chemical potentials, or one considers real-time evolution, to name a few settings.}
To that end, digital quantum simulation has recently attracted much attention as 
a means of overcoming the sign problem (see \cite{Banuls:2019bmf,Funcke:2023jbq,Bauer:2022hpo,DiMeglio:2023nsa} for reviews). 

Quantum error correction is necessary for robust quantum computing; however, its computational resource draw is prohibitively intensive for existing and near-term hardware.
Any optimizations that reduce the resource draw are thus advantageous, including optimizations that come from adapting an error correction scheme to a specific application, such as simulating LGTs.
Gauss law codes that are aligned with the gauge symmetry being simulated have the potential for a lower computational overhead compared to general-purpose codes.
Intuitively, this is because check operations of the code can be combined with the checks required during simulation to enforce the gauge theory's Gauss law, thus reducing overhead \cite{Rajput:2021trn, Carena:2024dzu}.

Some of the earliest work in this area (itself having draw inspiration from Refs.~\cite{Stryker:2018efp,Raychowdhury:2018osk}) 
examined Gauss law codes for the gauge group $\mathbb{Z}_2$ in one and two spatial dimensions \cite{Rajput:2021trn}, with later generalizations to $\mathbb{Z}_D$ in arbitrary dimensions \cite{Spagnoli:2024mib} and with coupling to matter \cite{Spagnoli:2026qni}.
Important insights obtained in this line of work include that both logical operations and time evolution can be implemented fault-tolerantly.
Ref.~\cite{Carena:2024dzu} established thresholds for when simulating lattice $\mathbb{Z}_D$ theories using Gauss law codes (i.e., with full gauge redundancy) is preferable, in terms of overhead, to simulating a gauge-fixed instance of the theory, while Ref.~\cite{Pato:2026wow} recently highlighted limitations of the approach.
Initial steps toward non-Abelian codes have been taken in Ref.~\cite{Yao:2025cxs}, which constructed codes from $SU(2)$ on the lattice.

In more detail, we consider two types of codes, both based on the idea of aligning the code subspace with an invariant subspace.
First, we construct Gauss law codes by identifying the code subspace, $\Hil_\mrm{code}$, with the full gauge-invariant subspace $\Hil_\mrm{pn}$ defined in Eq.~\eqref{eq:Hpn}.
We develop such codes in the pure gauge sector (i.e., without matter), with $\Hil_\mrm{kin} = \Hil_\mrm{gauge}$, and also from LGTs consisting of gauge fields coupled to matter, with $\Hil_\mrm{kin} = \Hil_\mrm{gauge} \otimes \Hil_\mrm{matter}$.
In so doing, we conduct a detailed analysis for both bosonic and fermionic models of matter at the lattice's vertices.
The matter-coupled Gauss law codes inherit a natural subsystem structure from the kinematical factorization into gauge and matter degrees of freedom, which reads
\begin{equation} \label{eq:introLoopsDrMatter}
    \Hil_\mrm{code} = \Hil_\mrm{pn} \simeq \Hil_\mrm{loops} \otimes_R \Hil_\mrm{matter}^\mrm{dr} \, .
\end{equation}
$\Hil_\mrm{loops}$ contains the gauge-invariant degrees of freedom from the gauge field sector, namely, excitations of Wilson loops above the vacuum state, while $\Hil_\mrm{matter}^\mrm{dr}$ describes gauge-invariant excitations of the matter fields that have been dressed by Wilson lines.
In principle, either factor can be interpreted as the information-carrying factor of the subsystem code.

The notation $\otimes_R$ indicates that the tensor product factorization is induced by a specific choice of QRF \cite{Carrozza:2024smc}, $R$, which we construct from spanning trees on the lattice \cite{Araujo-Regado:2025ejs}.
A spanning tree is a subgraph of the lattice $(\mathcal{V},\mathcal{L})$ which consists of a subset of edges $R \subset \mathcal{L}$ and a preferred vertex $v_0 \in \mathcal{V}$, called the root of the tree, such that there is a unique path $\gamma_R[v,v_0] \equiv R_v \subset R$ from each vertex $v\in\mathcal{V}$ to the root $v_0$; see Fig.~\ref{fig:latticetree}.
The Wilson lines along these tree paths constitute group-valued frames that transform covariantly under the structure group $G$ at each vertex $v\neq v_0$, and together they constitute a lattice field of such frames for $G$. Different choices of the frame $R$ yield different representations of the code's logical algebra and action of errors. Regardless of the specific error set in question, errors in these Gauss law codes are operators that violate the combined matter and gauge field Gauss law, resulting in excursions out of $\Hil_\mrm{pn}$ and mismatches between matter charge and gauge field flux.

Second, given that Gauss law codes feature errors that correspond to gauge charge excitation and Gauss law violation at each vertex, one may wonder whether there is a gauge-invariant incarnation of these codes. The underlying idea is that the charge excitations correspond to exciting an additional, previously ``frozen'' matter species that compensates the non-invariance of the fields described by the Gauss law code. This effort culminates in what we call \emph{vacuum codes}. We show that there is indeed a sense in which these codes constitute a gauge-invariant incarnation of Gauss law codes, though a unitary equivalence generally requires a nontrivial charge coarse-graining, as we will explain.

Here, our starting point is to interpret the gauge-invariant subspace $\Hil_\mrm{pn}$ in Eq.~\eqref{eq:Hpn} as the physical space of the code, i.e.,\ the one describing the ``physical qubits'' invoked to carry the code, rather than the LGT's kinematical configuration space.
We then identify the code subspace with the matter vacuum subspace of $\Hil_\mrm{pn}$.
With respect to the kinematical factorization $\Hil_\mrm{kin} = \Hil_\mrm{gauge} \otimes \Hil_\mrm{matter}$, a gauge transformation $U^g_v$ decomposes as $U^g_v = U^g_{v,\mrm{gauge}} \otimes u^g_{v,\mrm{matter}}$.
Therefore, in terms of gauge transformations, the vacuum code subspace is given by
\begin{equation}
    \Hil_\mrm{code} = \left\{ \ket{\Psi}_\mrm{pn} \in \Hil_\mrm{pn} ~ \right| \left. ~ \mathbbm{1}_\mrm{gauge} \otimes u^{\bm g}_\mrm{matter} \ket{\Psi}_\mrm{pn} = \ket{\Psi}_\mrm{pn} ~ \forall ~ \bm{g} \in \bm{G} \right\} \, .
\end{equation}
In other words, in addition to obeying the total gauge invariance condition $U^{\bm{g}} \ket{\Psi}_\mrm{pn} = \ket{\Psi}_\mrm{pn}$, demanding triviality of codewords under isolated gauge transformations in the matter sector forces the matter into its vacuum state.
Correspondingly, if there are several species of charged matter, we demand the stronger condition $u^{\bm g}_{\mrm{matter},i}\ket{\Psi}_\mrm{pn}=\ket{\Psi}_\mrm{pn}$ for each species $i$. This leads in the general case to a more fine grained error syndrome and thereby a finer code, though we demonstrate how one can map vacuum codes into Gauss law codes.

In contrast to Gauss law codes, vacuum codes are arguably more natural from the perspective of a gauge theory in Nature because errors keep the computational state within $\Hil_\mrm{pn}$; the ``physical'' space of the code is the space of ``physical'' configurations of the gauge theory.
Errors are thus excursions into $\Hil_\mrm{pn}$ and may be interpreted as gauge-invariant excitations of charge that was previously frozen.
For such errors, measuring a syndrome amounts to measuring the amount of charge at each vertex.
Gauss law codes nevertheless remain relevant for practical purposes; when simulating a gauge theory on a quantum computer, errors can of course cause the computational state to drift outside of the gauge-invariant subspace being simulated. 

We also note that both Gauss law codes and vacuum codes bear a certain similarity to surface codes \cite{Kitaev:1997wr,Bravyi:1998sy}, which include the well-known toric code and admit a lattice gauge theory interpretation. These codes feature two types of stabilizer operators, defined on plaquettes and on vertices or stars. The latter constitute Gauss law operators, whereas the former can be understood as implementing a flatness condition for the discrete gauge connection. Furthermore, the code space corresponds precisely to the ground state space of a Hamiltonian featuring these stabilizer operators. Accordingly, this class of codes can be viewed as a hybrid of Gauss law and vacuum codes: violations of the plaquette constraints map out of the ground or vacuum state space, while preserving the Gauss law, thus taking the form of vacuum code errors, while errors violating the star operators violate Gauss's law and therefore constitute errors of a Gauss law code. In contrast to what we discuss here, surface codes are often viewed as emergent gauge theories, where the gauge symmetry is not fundamental and not present in the entire physically relevant state space, but only in the ground state space and for excitations that do not violate the star operators. Moreover, the excitations beyond the ground state do not correspond to standard matter excitations, but quasi-particles instead. Hence, while our codes share certain qualitative features with surfaces codes, they can be regarded as a generalization of these features to bona fide LGTs with standard matter inclusions. We comment further on this in the main body.

For both Gauss law codes and vacuum codes, QRFs provide a systematic way of identifying the possible correctable error sets. 
This complete and precise characterization of the inequivalent, maximal sets of correctable errors is a striking output of the QRF formalism.
A collection of errors $\mathcal{E} = \{E_\mu\}_{\mu \in \Lambda}$ is correctable when all pairs of errors satisfy the Knill-Laflamme condition \cite{Knill:1996ny},
\begin{equation} \label{eq:KLintro}
    \Pi_\mrm{code} E_\mu^\dagger E_\nu \Pi_\mrm{code} = C_{\mu\nu} \Pi_\mrm{code} \, ,
\end{equation}
where $\Pi_\mrm{code}$ is the projector onto $\Hil_\mrm{code}$ and $C_{\mu\nu}$ is a Hermitian matrix.
Two error sets are equivalent when (the code restriction of) their linear spans are the same, and an error set is maximal if it cannot be enlarged by including additional linearly independent errors while still remaining correctable.
In short, QRFs provide a clean way to solve \eqref{eq:KLintro} for error sets that are demonstrably maximal.

Here, errors are operators that violate the Gauss law, or a restriction of the Gauss law to the matter vacuum sector in the case of vacuum codes.
Error therefore result in charged defects at vertices, and the pattern of defect charges constitutes a quantitative syndrome for each error.
Accordingly, we call the mappings that we construct from errors to syndromes \emph{Gauss law maps}.
A Gauss law map is of course many-to-one---there are many inequivalent ways to create an apparent charge defect at a given vertex.
In other words, we may organize the errors that produce the same syndrome into fibers of a Gauss law map.
Modulo certain subtleties when fermionic matter couples to the gauge field, correctable errors sets are then sections of the fiber bundle, and a maximal such set includes one representative from each fiber, thus constituting precisely a global section.
Using a QRF to pass from the kinematical factorization of $\Hil_\mrm{code}$ to the frame-induced factorization \eqref{eq:introLoopsDrMatter} ultimately allows us to neatly parametrize the fibers and sections of Gauss law maps in a way that makes maximality plainly apparent.

The findings that we discuss in this article emphasize the structural insights obtained by using QRFs to construct QECCs from LGTs, extending the case studies of \cite{Carrozza:2024smc}.
Nevertheless, our findings are of practical interest for computational settings in which gauge theoretic and error structures align.
As has been previously noted \cite{Stryker:2018efp, Rajput:2021trn}, Gauss law codes are essentially classical codes that protect against ``bit flips'' that consist of dislocations of gauge field flux and matter charge.
While it is possible to upgrade Gauss law codes to genuine quantum codes \cite{Rajput:2021trn,Spagnoli:2024mib,Spagnoli:2026qni}, already a Gauss law code or a vacuum code should offer significant advantage, for example, in the presence of noise that is heavily biased toward $X$-type or $Z$-type errors \cite{Aliferis_2008,Puri:2019qkc}.
We continue with a brief discussion of practical considerations in Sec.~\ref{sec:disc}.

The organization of this article is as follows.
We begin by reviewing background material on stabilizer quantum error correction, Abelian lattice gauge theories, quantum reference frames, and the QECC/QRF correspondence of \cite{Carrozza:2024smc} in Sec.~\ref{sec:preliminaries}.
Sec.~\ref{sec:general} then develops the general theory of how to obtain a QECC from a generic, compact, Abelian LGT on a closed lattice.
We discuss how to obtain Gauss law codes and vacuum codes, and in each case explain, using QRFs, how the logical algebras and correctable error sets emerge. Crucially, we also discuss the relation between Gauss law and vacuum codes, demonstrating that the latter are unitarily equivalent to the pure gauge theory version of the former, when $G$ is finite, and that a unitary equivalence only holds when $G$ has continuous components, provided a charge coarse-graining at each vertex is carried out in the vacuum code.
In Sec.~\ref{sec:examples}, we illustrate the general theory with a succession of concrete examples, encompassing various instances of $\mathbb{Z}_2$ gauge theory, as well as scalar and fermionic QED in different dimensions.
A discussion follows in Sec.~\ref{sec:disc}, and we provide an outlook and closing remarks in Sec.~\ref{sec:conc}.

\subsubsection*{Suggested Reading Guide}

An experienced practitioner of quantum error correction, lattice gauge theory, or quantum reference frames could skip the corresponding parts of Sec.~\ref{sec:preliminaries}, although briefly skimming this section is recommended to take note of our notation and conventions.
For a minimal illustration of the technical apparatus, a reader may skip to Sec.~\ref{sec:examples}; in particular, the first three examples in Secs.~\ref{ssec:3qubit_puregauge}, \ref{ssec:3qubit_gauss}, and \ref{ssec:3qubit_vacuum} are already enough to illustrate the majority of the technical results.
For a reader who wishes to dive into the general theory in full detail, we recommend reading Secs.~\ref{sec:general} and \ref{sec:examples} in tandem.

\subsubsection*{Note Added}

While completing this work, we became aware of a parallel effort to connect lattice quantum electrodynamics with quantum error correction using quantum reference frames, and we anticipate some overlap with results in this manuscript.
This work is documented in part of a M.Sc.\ thesis by Elias Rothlin \cite{Elias} and the article \cite{Lin-Qing_and_friends}, which is submitted simultaneously to the arXiv.
We thank Elias Rothlin, Carla Ferradini and Lin-Qing Chen for coordination.
~

\section{Preliminaries} \label{sec:preliminaries}

We begin with a review of the most important background material for this article.
Sec.~\ref{ssec:stabilizer_codes} briefly reviews stabilizer quantum error correcting codes, followed by a short introduction to lattice gauge theory in Sec.~\ref{sec:LGT}.
Sec.~\ref{ssec_QRF} introduces quantum reference frames and their specific adaptations to lattice gauge theories. 
We finish with a review of our previous work on a dictionary between quantum error correcting codes and quantum reference frame setups in Sec.~\ref{ssec_QECC/QRF}, in anticipation of applying the formalism to lattice gauge theory.

\subsection{Stabilizer Codes}
\label{ssec:stabilizer_codes}

We begin by briefly reviewing the stabilizer formalism for quantum error correcting codes, primarily as a way to establish conventions and notation.
For a more complete review of quantum error correction (QEC) and stabilizer quantum error correcting codes (QECCs), see, e.g., \cite{Gottesman:1997zz,Nielsen2010,Gottesman:2009zvw}.

In essence, a QECC consists of a relation among three Hilbert spaces: a logical space, $\Hil_\mrm{log}$, a kinematical Hilbert space, $\Hil_\mrm{kin}$, and a code subspace, $\Hil_\mrm{code}$.
The first space, $\Hil_\mrm{log}$, is the space of states one wishes to manipulate and use for quantum computation.
In most QECCs, an isometry $V : \Hil_\mrm{log} \rightarrow \Hil_\mrm{kin}$ is used to embed $\Hil_\mrm{log}$ into the larger Hilbert space $\Hil_\mrm{kin}$, which represents the Hilbert space realized by the actual degrees of freedom that constitute a quantum computer.
Since the computer's physical layer implements this latter space, it is usually called the ``physical'' Hilbert space in the QEC literature.
However, so as to avoid conflicting with terminology from the QRF literature, and in anticipation of forthcoming connections, here we will borrow QRF terminology and refer to this space as the ``kinematical'' Hilbert space.
We call a quantum operation which implements $V$ an \emph{encoding map}, and a quantum operation which implements $V^\dagger$ a \emph{decoding map}.
Finally, the image of $\Hil_\mrm{log}$ within $\Hil_\mrm{kin}$ under the isometry $V$ is the code subspace, $\Hil_\mrm{code}$.

Let $\mathcal{U}(\Hil_\mrm{kin})$ denote the set of unitary operators acting within $\Hil_\mrm{kin}$.
In a stabilizer QECC, one characterizes $\Hil_\mrm{code}$ as the subspace of $\Hil_\mrm{kin}$ which is invariant under the action of an Abelian subgroup $\mathcal{G} \subseteq \mathcal{U}(\Hil_\mrm{kin})$ called the code's \emph{stabilizer group}; to wit,
\begin{equation} \label{eq:sec2Hcode}
\Hil_\mrm{code} = \{\shortket{\bar{\psi}} \in \Hil_\mrm{kin} ~ | ~ U \shortket{\bar{\psi}} = \shortket{\bar{\psi}} ~ ~ \forall ~ U \in \mathcal{G} \}.
\end{equation}

Elements of the stabilizer group $\mathcal{G}$ act trivially on the code subspace and therefore furnish representations of the logical identity operator; in other words, $V^\dagger U V = \mathbbm{1}_\mrm{log}$ and $U|_{\Hil_\mrm{code}} = \mathbbm{1}_\mrm{code}$ for each $U \in \mathcal{G}$.
Unitary operators on $\Hil_\mrm{kin}$ that preserve $\Hil_\mrm{code}$ but do not act as the identity within it implement nontrivial logical operations on the encoded degrees of freedom.
In contrast, unitary operators that map states in $\Hil_\mrm{code}$ to states with support outside of $\Hil_\mrm{code}$ are interpreted as errors.

To be a complete QEC scheme, a code must also be equipped with a recovery operation that returns corrupted states back to the code subspace.
Typically, recovery proceeds by first diagnosing which error has occurred and then applying a correction conditioned on the observed syndrome.
If the combined action of an error followed by recovery acts as the identity within $\Hil_\mrm{code}$, then error correction is said to succeed.
Otherwise, if the net effect corresponds to the application of a nontrivial logical operator to the encoded state, a logical error has occurred.

A large and practically important class of QECCs is given by qubit Pauli stabilizer codes. In these constructions, $k$ logical qubits are isometrically embedded into $n$ kinematical qubits, so that $\Hil_\mrm{log}~\cong~\Hil_\mrm{code}~\cong~(\mathbb{C}^2)^{\otimes k}$ and $\Hil_\mrm{kin}~\cong~(\mathbb{C}^2)^{\otimes n}$.
The stabilizer group $\mathcal{G}$ is taken to be an Abelian subgroup of the $n$-qubit Pauli group, $\mathcal{P}_n$, whose elements are tensor products of single-qubit Pauli operators $I, X, Y, Z$, multiplied by an overall phase $\pm 1$ or $\pm i$.
In the computational basis $\{\ket{0}, \ket{1}\}$, the single-qubit Pauli operators read
\begin{equation}
    I = \begin{pmatrix}
1 & 0 \\
0 & 1
\end{pmatrix} \quad X = \begin{pmatrix}
0 & 1 \\
1 & 0
\end{pmatrix} \quad iY = \begin{pmatrix}
0 & 1 \\
-1 & 0
\end{pmatrix} \quad Z = \begin{pmatrix}
1 & 0 \\
0 & -1
\end{pmatrix},
\end{equation}
and so
\begin{equation}
    \mathcal{P}_n = \left\{ i^\lambda P_1 \otimes \cdots \otimes P_n ~ | ~ \lambda \in \{0, 1, 2, 3\}, P_j \in \{I, X, Y, Z\}, 1 \leq j \leq n  \right\} \, .
\end{equation}
For a code that encodes $k$ logical qubits, $\mathcal{G}$ contains $2^{n-k}$ distinct elements and can be generated by any subset of $n-k$ independent elements, which are then called \emph{stabilizer generators}.

Given a string of Pauli operators $P_1 \cdots P_n$ (in which we suppress then tensor product symbol), the \emph{weight} of the operator is the number of non-identity single-qubit Pauli operators appearing in the string.
The \emph{distance}, $d$, of a qubit QECC is the weight of the smallest-weight encoded logical operator.
Altogether, the parameters of a QECC are often succinctly communicated using double brackets as $[[n,k,d]]$.
Single brackets are used analogously for classical error correcting codes for bit strings.

\emph{Error sets} are sets of operators that act on $\Hil_\mrm{kin}$ and which describe the errors that can occur during a computation.
For example, given a collection of $n$ kinematical qubits, the set
\begin{equation}
    \mathcal{E} = \{X_i ~ | ~ 1 \leq i \leq n\} \cup \{ I^{\otimes n} \}
\end{equation}
allows for an erroneous Pauli $X$ operator on any given qubit,\footnote{The notation $X_i$ means $X$ applied to the $i$\textsuperscript{th} qubit, tensored with identity operators on all other qubits, and extends in the expected way to generic multipartite Hilbert spaces.} as well as the option that no error is applied---hence the inclusion of the identity operator.
An example of an \emph{error channel} constructed from this set is
\begin{equation}
\begin{aligned}
    \mathcal{N} ~ : ~ \mathcal{B}(\Hil_\mrm{kin}) &\rightarrow \mathcal{B}(\Hil_\mrm{kin}) \\
    \rho &\mapsto (1-p) \rho + \sum_{i=1}^n \frac{p}{n} X_i \rho X_i \, ,
\end{aligned}
\end{equation}
which could describe a model in which no error occurs with probability $1-p$, and a single Pauli $X$ is applied with probability $p$.

Given a Pauli stabilizer QECC and a set of stabilizer generators, a unitary error that consists of a single string of Pauli operators will anticommute with at least one stabilizer generator.
In other words, an error $E$ and a stabilizer generator $U$ as such result in
\begin{equation}
    U E \ket{\bar{\psi}} = - E U \ket{\bar{\psi}} = -E \ket{\bar{\psi}}
\end{equation}
for any $\ket{\bar\psi} \in \Hil_\mrm{code}$.
Pauli stabilizer generators, being self-adjoint, can be measured; hence, a typical recovery scheme for a Pauli stabilizer QECC is to measure a set of stabilizer generators, yielding $n-k$ measurement outcomes of $\pm 1$'s, and to match this syndrome to an error from a specified error set.
Recovery then proceeds by applying the decoded error $E$ again since, if the decoding was correct and $E$ being a Pauli string, it follows that $E^2 = \mathbbm{1}$.

When is a set of errors correctable?
This is spelled out by the Knill-Laflamme (KL) condition \cite{Knill:1996ny}.
Let $\Hil_\mrm{code}$ be a stabilizer code subspace within a kinematical space $\Hil_\mrm{kin}$, and let $\Pi_\mrm{code}$ denote the projector onto $\Hil_\mrm{code}$.
An error set $\mathcal{E} = \{E_\mu\}_{\mu \in \Lambda}$ is correctable if and only if
\begin{equation}\label{eq_KLstabcode}
    \Pi_\mrm{code} E_\mu^\dagger E_\nu \Pi_\mrm{code} = C_{\mu\nu} \Pi_\mrm{code}
\end{equation}
for a Hermitian matrix $C_{\mu\nu}$.
In particular, note that $\Hil_\mrm{code}$ and $\Hil_\mrm{kin}$ need neither be isomorphic to collections of qubits, nor even finite-dimensional \cite{Beny_2009}.\footnote{That the (KL) condition continues to hold in infinite settings follows from the result that quantum channels on infinite-dimensional Hilbert space continue to admit an operator-sum decomposition \cite{holevo2011entropy}.} 
Furthermore, error sets need not consist of unitary operators; members of a correctable set of errors $\mathcal{E}$ could include projectors, for example.
As long as the KL condition is satisfied, then $\mathcal{E}$ is correctable and the errors act unitarily \emph{when restricted to the code subspace}.

We also adopt the following definition from \cite{Carrozza:2024smc} of equivalence of correctable error sets.
We say that two correctable error sets are \emph{equivalent} if the complex linear span of their elements, restricted to the code subspace, is the same.
In other words, if $\mathcal{E} = \{E_\mu\}_{\mu \in \Lambda}$ and $\mathcal{E}' = \{E_{\mu'}'\}_{{\mu'} \in \Lambda'}$, then $\mathcal{E}$ and $\mathcal{E}'$ are equivalent if
\begin{equation}
    \Span_\mathbb{C} \{ E_\mu \Pi_\mrm{code} \}_{\mu \in \Lambda} = \Span_\mathbb{C} \{E_{\mu'} \Pi_\mrm{code} \}_{{\mu'} \in \Lambda'} .
\end{equation}
We write $\mathcal{E}_\mu \sim \mathcal{E}'_{\mu'}$ for equivalent sets of correctable errors, and we write $[\mathcal{E}]$ to denote the equivalence class implied by $\sim$.
We say that a correctable error set $\mathcal{E}$ is \emph{maximal} if no further linearly independent errors may be appended to $\mathcal{E}$ while keeping $\mathcal{E}$ correctable.

We finish this section with two additional topics that will be relevant for the coming discussion: how qubit Pauli stabilizer codes generalize to qudits, and subsystem codes.

Pauli stabilizer codes generalize readily from strings of qubits to strings of $D$-dimensional qudits.
With respect to a computational basis $\{\ket{j} ~ | ~ 0 \leq j \leq D-1 \}$, the qudit generalizations of the Pauli $X$ and $Z$ operators are the shift and phase operators
\begin{equation}\label{eq_Pauliqudit}
    X_D \ket{j} = \ket{j + 1 \mod D} \qquad Z_D \ket{j} = \omega^j \ket{j},
\end{equation}
where $\omega = e^{2\pi i/D}$.
These operators satisfy $Z_D X_D = \omega X_D Z_D$ (cf. Eqs.~\eqref{eq:weyl}, \eqref{eq:lattice_weyl_finite}-\eqref{eq:lattice_weyl_infinite}).
Taking products of $X_D$ and $Z_D$ generates single qudit generalized Pauli operators, and strings of generalized Pauli operators (multiplied by powers of $\omega$) constitute the generalized $n$-qudit Pauli group.\footnote{A common choice is to let $D$ be a prime number, in which case the relationship between the number of encoded qudits and the number of independent stabilizer generators generalizes directly from the $D = 2$ case, and the qudit code can be put into a standard CSS-like form (see, e.g., \cite{Gheorghiu:2011awf}). This assumption will not be needed for our computations.}
(See, e.g., \cite{Rains:1997uh,Gottesman:1998se,Ashikhmin:2000ylq,eczoo_qudit_stabilizer,Wang:2020ife} for further details.)

Finally, a subsystem QECC is a stabilizer QECC in which the code subspace additionally has a bipartite tensor product structure:
\begin{equation}
    \Hil_\mrm{code} \cong \Hil_A \otimes \Hil_B
\end{equation}
In a subsystem QECC, only one factor of the code subspace (say, $\Hil_A$) is used to store logical information, while the other factor functions as a redundant commodity to be used for error correction.
In the QEC literature, the redundant factor $\Hil_B$ is usually called the ``gauge'' subsystem, although this should not be confused with our use of the word to describe gauge field degrees of freedom in the coming sections.
Since the $\Hil_B$ factor is ultimately irrelevant to the stored logical information, it is perfectly acceptable to finish an encoded computation with an unknown state in $\Hil_B$.
This makes possible various computational simplifications; for example, a given subsystem QECC (such as the 9-qubit Bacon-Shor code \cite{Bacon:2005ckt}) could have stabilizer generators with lower weights than those of the corresponding non-subsystem code (here, the 9-qubit Shor code \cite{Shor:1995hbe}).
Given the logical irrelevance of $\Hil_B$, the KL condition is correspondingly weakened for subsystem QECCs.
In this case, an error set $\mathcal{E} = \{E_\mu\}_{\mu \in \Lambda}$ is correctable if and only if
\begin{equation}\label{eq_KLsubsystem}
    \Pi_\mrm{code} E_\mu^\dagger E_\nu \Pi_\mrm{code} = \mathbbm{1}_A \otimes (\Oh_{\mu\nu})_B \Pi_\mrm{code}
\end{equation}
where $(\Oh_{\mu\nu})_B$ is allowed to be a nontrivial operator on $\Hil_B$.
(See, e.g., \cite{kribs2006, Poulin_2005} for further details.)

\subsection{Abelian Lattice Gauge Theories}
\label{sec:LGT}
Here we review the kinematics of lattice gauge theory \cite{Wilson:1974sk} in the operator formalism \cite{Kogut:1974ag,Kogut:1979wt,Araujo-Regado:2025ejs,Byrnes:2005qx,Zohar:2015hwa,Zohar:2021nyc} for a compact Abelian structure (gauge) group $G$.
Let us consider a spatial lattice without boundaries consisting of a set $\mathcal{V}$ of vertices and a set $\mathcal{L}$ of links, each of which is equipped with an orientation. Henceforth, we denote by $N_V$ the number of vertices in $\mathcal{V}$.

In lattice gauge theory, we place kinematical degrees of freedom corresponding to the gauge field on each link and to matter on each vertex.
The total kinematical Hilbert space is the tensor product of the gauge field and matter Hilbert spaces,
\begin{equation}\label{eq_Hkinbos}
    \Hil_{\rm kin} :=\Hil_{\rm gauge}\otimes\Hil_{\rm matter}\,.
\end{equation}
For the matter, we assume that we have a Hilbert space $\Hil_v$ at each vertex $v\in\mathcal{V}$ carrying \emph{some} (not necessarily irreducible) unitary representation $u_v$ of $G$.
The total matter Hilbert space is then the tensor product\footnote{
Note that for fermionic matter, the algebra on $\Hil_{\rm matter}$ is not simply a tensor product of the local algebras on $\Hil_v$, as reflected in the canonical anti-commutation relation. See App.~\ref{app_fermions} for review of this subtlety. There is no such subtlety for bosonic matter. 
} 
\begin{equation}
    \Hil_{\rm matter} :=\bigotimes_{v\in\mathcal{V}}\Hil_v\,.
\label{eq_Hbosonic_matter}
\end{equation}
The kinematical Hilbert space of the pure gauge sector is 
\begin{equation} \label{eq:Hkingauge}
        \mathcal{H}_{\text{gauge}} 
        := \bigotimes_{\ell \in \mathcal{L}} \mathcal{H}_\ell ,\quad \mathcal{H}_\ell =L^2(G) ,
\end{equation}
where $\mathcal{H}_\ell$ thus carries the regular representation of $G$ and is spanned by the \emph{group basis} $\ket{g}$ associated with group elements $g\in G$.
As we will return to shortly, the $\ket{g}$ basis diagonalizes the link variables that are the lattice counterparts of gauge field operators, more precisely, of Wilson lines. As such, this basis will often also be called the \emph{magnetic basis}, especially when used in the context of Wilson loops.

There is another useful basis, often called the \emph{electric basis}, which diagonalizes the lattice counterparts of electric field operators.
The electric basis is labeled by the Pontryagin dual $\hat{G}$ of $G$, which is the group of continuous homomorphism from $G$ to $\rm{U}(1)$ (i.e., the group of \emph{characters}).\footnote{See \cite[Apps.~C-E]{Carrozza:2024smc} and \cite[App.~J]{Araujo-Regado:2025ejs} for introductions to Pontryagin duality in the context of quantum error correction and lattice gauge theories.}
Particularly important examples for this article are
\begin{\eq}
 \widehat{\rm{U}(1)}=\mathbb{Z} ,\quad    \widehat{\mathbb{Z}_n}=\mathbb{Z}_n ,
\end{\eq}
where the right-hand side physically labels quanta of electric flux on the lattice.
The electric basis $\ket{\chi}$ ($\chi\in\hat{G}$) is related to the group basis $|g\rangle$ ($g\in G$) via the group Fourier transform\footnote{When $G$ is finite group, the symbol $\int_G \dee g $ should be understood as $\frac{1}{|G|} \sum_{g \in G}$. We retain the integral notation below for simplicity. When $G$ is continuous, we absorb the volume of the group inside the normalization of the Haar measure.
Also note that our convention is different from that of Ref.~\cite{Araujo-Regado:2025ejs}.
}
\begin{equation}\label{groupvselectricbases}
    \ket{g} = \sum_{\chi \in \hat{G}}\chi(g)\ket{\chi}, \qquad \ket{\chi} = \int_G \dee g ~ \bar{\chi}(g)\ket{g} ,
\end{equation}
where $\chi (g)\in\rm{U}(1)$ denotes the canonical pairing between a group element $g\in G$ and a dual character $\chi\in\hat G$. Since the two bases are related by a Fourier transform, the Plancherel theorem implies 
\begin{\eq}
\mathcal{H}_\ell = L^2 (G) \simeq L^2(\hat{G}) .
\end{\eq}

On each $\mathcal{H}_\ell$, the Wilson line operator $W^{\chi}$ along the link $\ell$, in the representation $\chi$ of $G$, acts as
\begin{equation}\label{eq_Pdualrep}
        W^{\chi} \ket{g} = \bar{\chi}(g)\ket{g} ,\quad 
        W^{\chi}\ket{\chi'} = \ket{\chi \chi'} ,
        \qquad g\in G ,\ \chi ,\chi'  \in \hat{G}.
\end{equation}
The standard link operator of the lattice gauge theory corresponds to the Wilson line operator in a faithful representation. Associated with $g\in G$, there is an operator $U^g$ on each $\mathcal{H}_\ell$ satisfying the Weyl-type relation\footnote{The representations $U$ of $G$ and $W$ of $\hat{G}$ are then dual to one another \cite[App.~E]{Carrozza:2024smc}.}
\begin{equation} \label{eq:weyl}
        W^{\chi}U^g = \bar{\chi}(g)U^g W^{\chi} 
        \qquad ( g\in G ,\ \chi  \in \hat{G}) .
\end{equation}
This is an extension of the canonical commutation relation from ordinary quantum mechanics, and $U^g$, physically, is the counterpart to the (exponentiated) electric field.\footnote{
Our convention differs from much of the literature on lattice gauge theory, in which $U$ often denotes a link variable. 
}
Indeed, the electric basis $\ket{\chi}$ is an eigenstate of $U^g$, and the group basis $\ket{g}$ transforms as expected under the action of $U^g$:
\begin{equation} \label{eq:Uoperator}
        U^g \ket{\chi} = \chi(g) \ket{\chi} ,\quad
         U^g \ket{g'} = \ket{gg'} .        
\end{equation}
The operators $W^\chi$ and $U^g$ can be expressed as
\begin{equation}
       W^\chi =\int \dee{g}~\bar{\chi}(g)\ket{g}\!\bra{g}, \quad
       U^g = \sum_{\chi\in\hat{G}} \chi (g) \ket{\chi}\bra{\chi},
\end{equation}
and generate the bounded operator algebra on $\mathcal{H}_\ell$:
\begin{equation}
        \mathcal{B}(L^2(G)) = \langle U^g, W^{\chi} ~ \big|  ~ g \in G, \chi \in \hat{G} \rangle .
\end{equation}
 Here, $\langle A,B\rangle$ denotes the unital $*$-algebra generated by operators of the form $A,B$. 

Let us now explicitly label the operators $W^\chi$ and $U^g$ with the link $\ell$ on which they act by writing $W_\ell^\chi$ and $U_\ell^g$, respectively.
The gauge-invariant (perspective-neutral) subspace $\mathcal{H_{\text{pn}}}$ of $\mathcal{H}_{\rm kin}$ consists of those states that satisfy the Gauss law\footnote{
In the context of gauge theory, $\Hil_\mrm{pn}$ is often called the {\it physical} Hilbert space and denoted by $\mathcal{H_{\text{phys}}}$. To avoid potential confusion with the physical layer in a quantum computer, we denote the Hilbert space in question by $\mathcal{H_{\text{pn}}}$, which comes from {\it perspective-neutral}. Indeed, in the context of quantum reference frames (QRFs), to be introduced shortly, the gauge-invariant Hilbert space assumes the role of a QRF-perspective-neutral description of the physics, linking all their internal perspectives  \cite{delaHamette:2021oex,Hoehn:2023ehz,Vanrietvelde:2018pgb,AliAhmad:2021adn}, whence the name.
}
\begin{equation} \label{eq:HpnSec2}
        \mathcal{H_{\text{pn}}} := \{ \ket{\Psi} \in \mathcal{H}_{\text{kin}}  \mid U_v^g \ket{\Psi} = \ket{\Psi}\  \forall\, v \in \mathcal{V}, g \in G  \} .
\end{equation}
Here, $U_v^{g}$ is the gauge transformation at the vertex $v$, defined as
\begin{equation}\label{generalGTs}
U_v^{g} 
:= \left( \bigotimes_{\ell \in \mathcal{L}_{\text{out}}(v)}U_{\ell}^{g}\right) \otimes 
\left( \bigotimes_{\ell \in \mathcal{L}_{\text{in}}(v)}U_{\ell}^{(g^{-1})}    \right)
\otimes u_v^g ,\qquad\qquad g\in G\,,
\end{equation}
where we distinguish the orientation of the links, and $\mathcal{L}_{\text{out}}$ (resp. $\mathcal{L}_{\text{in}}$) is  outgoing (resp.\ incoming) links of $v$.
The operator $u_v^g$ acts on the matter Hilbert space $\Hil_v$ and is the lattice counterpart to the (exponentiated) charge operator, for which we will give explicit expressions in the subsequent constructions.
The first two bracketed terms act on the gauge field and are given by the product of $U_\ell^g$'s on the links that connect with $v$.
Consequently, the Wilson line operator $W_\ell^\chi$ at a link $\ell$ that is oriented from the vertex $v_i$ to the vertex $v_f$ transforms as
\begin{equation}
        W_{\ell}^{\chi} \rightarrow U^{\bm{g}} 
W_{\ell}^{\chi}U^{\bm{g} \dagger} 
= \chi(g_{v_i}g_{v_f}^{-1})W_{\ell}^{\chi} ,
\label{eq:gauge-trans}
\end{equation}
where
\begin{equation}
\label{eq_gaugetrbos}
U^{\bm{g}} := \prod_{v \in \mathcal{V}} U_v^{g_v}         
= U_{\text{gauge}}^{\bm{g}} \otimes U_{\text{matter}}^{\bm{g}}. 
\end{equation}
We will use $\mathcal{G}$ to denote the group of independent gauge transformations on the lattice as the image of the gauge group $\bm{G}$ under the unitary representation $U$. $\mathcal{G}$ is generated by $U^{\bm{g}}$, but not all of the $U^{\bm{g}}$'s are necessarily independent, depending on the boundary conditions.
For instance, with periodic boundary conditions, which we will henceforth adopt, the global  transformation, which acts uniformly across the lattice, acts trivially on the $\Hil_{\rm gauge}$ factor in $\mathcal{H}_{\text{kin}}$. This can be seen from Eq.~\eqref{generalGTs} and the fact that each link connects two vertices at which such transformations act.
It therefore follows that $\bm{G} = G^{\times N_V - 1}$ for periodic boundary conditions when no matter is included. When matter is involved, we instead have $\bm{G} = G^{\times N_V}$.
In terms of the $U^{\bm{g}}$'s, the projector from $\mathcal{H}_{\rm kin}$ onto $\mathcal{H}_{\rm pn}$ is
\begin{equation}\label{Pipn}
        \Pi_{\text{pn}} := \int_{\bm{G}} \dee \bm{g} ~ U^{\bm{g}} ,
\end{equation}
which we will make frequent use of. Besides using it to build gauge-invariant states, we shall also invoke it to build gauge-invariant observables.

Equivalently, one may also fix a gauge by appropriately choosing the gauge transformation in \eqref{eq:gauge-trans}.
Since a gauge transformation at a vertex acts only on the links adjacent to that vertex, gauge-invariant observables remain unchanged if, for a given vertex $v$, we select one adjacent link $\ell$ and project the corresponding Hilbert space $\Hil_\ell$ onto a fixed group basis state $\ket{g_{\ell |v\in \partial \ell}}$, associated with some element $g \in G$ (often taken to be the identity). This will fix the gauge completely at $v$ because the link state has as many parameters as there are independent gauge transformations at $v$. The question is whether or not we can remove all redundancy across the lattice this way. We will return to this in greater detail when discussing quantum reference frames in section~\ref{ssec_QRF}.

\subsection{Quantum reference frames on the lattice}\label{ssec_QRF}

In what follows, we will make heavy use of internal \emph{quantum reference frames} (QRFs), which constitute a universal toolset for dealing with symmetries in quantum systems. QRFs appear in various guises, adapted to different kinds of symmetries (e.g., physical vs.\ gauge), different in the way these symmetries are implemented \cite{Bartlett:2006tzx,Giacomini:2017zju,delaHamette:2021oex,Hoehn:2023ehz,Vanrietvelde:2018pgb,AliAhmad:2021adn,Araujo-Regado:2025ejs,Carrozza:2024smc,Carette:2023wpz,Castro-Ruiz:2025yvi}. The basic idea of a QRF is that it is some internal subsystem of the composite quantum system of interest which is subjected to a symmetry principle. The purpose in life of the QRF is to transform nontrivially under the given symmetry and to parametrize its orbits with its orientations. It may then be used as an internal vantage point from which to describe the remaining degrees of freedom in an invariant manner.

In the present article, we are interested in gauge theories, so we will invoke the formalism of gauge QRFs, which is also known as the \emph{perspective-neutral} formulation \cite{delaHamette:2021oex,Hoehn:2023ehz,Vanrietvelde:2018pgb,AliAhmad:2021adn,Araujo-Regado:2025ejs,Carrozza:2024smc}. In this approach, a QRF $R$ for the gauge group $\bm{G}$ corresponds to a choice of split between redundant ($R$ itself) and non-redundant degrees of freedom, which we collectively call the system $S$. While not strictly necessary, one typically assumes that $R$ and $S$ are jointly described on a kinematical Hilbert space that is a tensor product between Hilbert spaces individually describing $S$ and $R$ and on which $\bm{G}$ acts via a tensor product representation:
    \begin{equation}\label{eq_HkinRSQRF}
        \mathcal{H}_{\text{kin}} = \mathcal{H}_R \otimes \mathcal{H}_S, \qquad U^{\bm{g}} = U_R^{\bm{g}} \otimes U_S^{\bm{g}}\,,\qquad\bm{g}\in\bm{G}\,.
    \end{equation}

We shall henceforth assume that its configurations transform regularly under $\bm{G}$, so that they can be labeled by group elements and used to parametrize $\bm{G}$-orbits. This is analogous to how tetrads take values in the Lorentz group, or how the position of a particle takes value in the translation group.
The configurations $\ket{\bm{g}}_R$, $\bm{g}\in \bm{G}$, of the QRF are henceforth called its \emph{orientations}, and we wish to construct them such that they furnish a generalized (possibly distributional) coherent state system and basis for $\Hil_R$:
\begin{equation}\label{eq_completeQRF}
U_R^{\bm{g}}\ket{\bm{g'}}_R = \ket{\bm{gg'}}_R\,,\qquad\qquad \int_{\bm{G}}{\dee}\bm{g}\,\ket{\bm{g}}\!\bra{\bm{g}}_R=\mathbbm{1}_R\,.
\end{equation}
A frame with these properties is also called \emph{complete} \cite{delaHamette:2021oex,Araujo-Regado:2025ejs}.
When the orientations are perfectly distinguishable, in which case
\begin{equation}\label{eq_ideal}
    \braket{\bm{g}}{\bm{g}'}_R=\delta(\bm{g},\bm{g}')\,,
\end{equation}
we call the QRF \emph{ideal}. 
Here $\delta(\bm{g},\bm{g}')$ denotes the delta function on $\bm{G}$ normalized with respect to the Haar measure ${\rm d}\bm{g}$ (hence a Kronecker delta when $G$ is finite). In what follows, we will only encounter ideal QRFs in lattice gauge theories.

In the perspective-neutral formulation, the aim is to develop a gauge-invariant description of both states and observables. Hence, here as well one invokes the projector $\Pi_{\rm pn}$ given in Eq.~\eqref{Pipn} to proceed to the gauge-invariant (physical) Hilbert space denoted $\Hil_{\rm pn}$. Invoking the QRF $R$, we may now devise a description of $\Hil_{\rm pn}$ and the observable algebra on it that corresponds to a gauge-invariant \emph{relational} description of $S$.

First, we note that the QRF orientation states are useful for gauge fixing. Indeed, we can simply condition on the orientation of $R$ being, say, $\bm{g}$. This is implemented by what is sometimes called a Page-Wootters reduction
\begin{equation}\label{eq_PW}
        \mathcal{R}_R^{\bm{g}} = \left(  \bra{\bm{g}}_R \otimes \mathbbm{1}_S  \right) \Pi_{\text{pn}}: \mathcal{H}_{\text{pn}} \rightarrow \mathcal{H}_{|R} \simeq \mathcal{H}_S\,,
    \end{equation}
where the latter isomorphism holds only for ideal QRFs. This map is a unitary gauge fixing \cite{delaHamette:2021oex}. We can use this map and its adjoint to encode any system observable $f_S\in\mathcal{A}_S=\mathcal{B}(\Hil_S)$ relationally in the gauge-invariant algebra $\mathcal{A}_{\rm pn}=\mathcal{B}(\Hil_{\rm pn})$. More precisely, 
\begin{equation}\label{relobs}
        O_{f_S | R}^{\bm{g}} \coloneqq \mathcal{R}_R^{\bm{g} \dagger} \,f_S \,\mathcal{R}_R^{\bm{g}}=\Pi_{\rm pn}\left(\ket{\bm{g}}\!\bra{\bm{g}}_R\otimes f_S\right)\Pi_{\rm pn}
    \end{equation}
constitutes the relational observable in $\mathcal{A}_{\rm pn}$ measuring $f_S$ conditional on $R$ being in orientation $\bm{g}\in \bm{G}$. When $R$ is complete, every element in $\mathcal{A}_{\rm pn}$ can be written as such a relational observable \cite{delaHamette:2021oex}; indeed, $\mathcal{R}^{\bm{g}}_R$ is unitary on $\Hil_\mrm{pn}$.

In general, for a given composite quantum system, there will be many possible choices of splits between a QRF $R$ and its complement $S$, and one can transform between these choices, yielding QRF transformations between the different relational descriptions \cite{delaHamette:2021oex,Hoehn:2023ehz,Vanrietvelde:2018pgb,AliAhmad:2021adn,Araujo-Regado:2025ejs}.

Let us now turn to the question of how QRFs can be realized in lattice gauge theory. Here, we will follow the exposition in \cite{Araujo-Regado:2025ejs}, where this has been spelled out in depth for general lattice gauge theories (including non-Abelian ones) and to which we refer the reader for further detail. In what follows, we will only use the gauge field to build a QRF for $\bm{G}$, though in principle one could also use appropriately transforming matter degrees of freedom. In line with section~\ref{sec:LGT}, we further restrict to a closed lattice and so must   take into account that global gauge transformations act trivially on the gauge field. This means that a gauge-field-built QRF can only deparametrize $G^{\times N_V-1}$, i.e.\ gauge transformations at each vertex, except one (which suffices when no matter is present). Since $\bm{G}$ acts independently at different vertices, this means that we need to build a lattice field of QRFs $R_v$, one for each  vertex to deparametrize the structure group $G$ there, except at one vertex that henceforth we denote by $v_0$. 

A simple way to build such a frame is to choose some spanning tree rooted at $v_0$.\footnote{
A tree is a connected subgraph without loops, and a spanning tree is one that includes all vertices of the lattice. In lattice gauge theory, this is often referred to as a maximal tree and corresponds to a choice of links on which gauge fixing is imposed.    
} We call the tree $R$ in what follows; it is clear that in general there will be many such choices of tree.
We then obtain a split as in Eq.~\eqref{eq_HkinRSQRF}, where $R$ comprises the subset of links in the tree and $S$ comprises the links in the complement of the tree, as well as any matter. Hence,
\begin{equation} 
\Hil_R=\bigotimes_{\ell\in R} \Hil_\ell\,,\qquad\qquad\Hil_S=\bigotimes_{\ell\in S}\Hil_\ell\otimes\Hil_{\rm matter}\,.
\end{equation}

Our next task is to build suitable orientation states for the tree at each $v\neq v_0$, which must be labeled by group variables. To achieve this, it is convenient to refactorize the QRF Hilbert space $\Hil _R$ as a product over all the Wilson lines in the tree rooted at $v_0$ \cite{Araujo-Regado:2025ejs}.\footnote{Below we take the opposite convention to \cite{Araujo-Regado:2025ejs}, where the Wilson lines in the tree are oriented away from $v_0$, in contrast to here.} That is, we perform a nonlocal unitary change of basis from edge-wise to cumulative node-wise variables in terms of Wilson lines
$\Hil_{R}\simeq\bigotimes_{v\in\mathcal{V}, v\neq v_0}\Hil_{R_v}$, where $\Hil_{R_v}\simeq L^2(G)$ is the Hilbert space associated with the Wilson line \begin{equation}\label{eq_RWilson}
 W_{R_v}^\chi\coloneqq W^\chi_{\gamma_R[v,v_0]}= \bigotimes_{\ell \in \gamma_R[v, v_0]} W_{\ell}^{\chi^{\sigma(\ell, v)}}\,,
 \end{equation}
where $\gamma_R[v,v_0]$ denotes the unique path in the spanning tree from $v$ to the root $v_0$. Here, $\sigma(\ell, v)$ encodes the relative orientation of the path $\gamma_R$ and link $\ell$, so $\sigma(l,v)=1$ if they are equally oriented and $\sigma(l,v)=-1$ otherwise.
 Denoting the remaining vertices by $v_1,\ldots,v_{{N_V}-1}$,
this unitary basis change takes the form
    \begin{equation}
      \ket{\bm{g}}_R :=  \ket{ g_{v_1} , \cdots , g_{v_{{N_V}-1}} }_R \coloneqq \bigotimes_{\ell \in R} \ket{g_{\ell}}, \qquad 
        g_v = \prod_{\ell \in \gamma_R[v, v_0]}g_{\ell}^{\sigma(\ell, v)}\,.
    \end{equation}
Clearly, we have that Eq.~\eqref{eq_ideal} is realized, so the QRF associated with the spanning tree is ideal.

Let us now determine the exponentiated electric fields associated with the Wilson lines. For $v$ either the final vertex in a branch of the tree or a nearest neighbor vertex of $v_0$, we have (suppressing identities on other links for notational simplicity)
 \begin{equation} \label{eq:URv1}
 U_{R_v}^g = U_{\ell_v}^g\,,
 \end{equation}
 with $\ell_v$ the tree link in $\gamma_R[v,v_0]$ incident on $v$, respectively, while for all other vertices we have
 \begin{equation} \label{eq:URv2}
 U_{R_v}^g=\left( \bigotimes_{\ell \in \mathcal{L}_{\text{out}}(v)\cap\gamma_R[v,v_0]}U_{\ell}^{g}\right) \otimes \left( \bigotimes_{\ell \in \mathcal{L}_{\text{in}}(v)\cap\gamma_R[v,v_0]}U_{\ell}^{g^{-1}}    \right)\,.
 \end{equation}
It may be checked that 
 \begin{equation}
 \label{QRFcommutator}
 [U_{R_v}^g,W_{R_{v'}}^\chi] = \delta_{v,v'}(\chi(g)-1)\,W_{R_v}^\chi U_{R_v}^g\,,
 \end{equation}
 so they generate a ``canonical'' algebra, and  generators associated with different Wilson lines in the tree commute. Accordingly, we obtain $\mathcal{B}(\Hil_{R_v})=\langle U^g_{R_v},W_{R_v}^\chi\,|\,g\in G, \chi\in\hat G\rangle$ \cite{Araujo-Regado:2025ejs}, and so
 \begin{equation}
      \mathcal{A}_R=\mathcal{B}(\Hil_R)=\big\langle  U_{R_v}^{g}, W_{R_v}^\chi\,\big|\,g\in G, \chi\in\hat{G}, v\in\mathcal{V}\setminus \{v_0\}\big\rangle\,.
 \end{equation}

This observation also means that gauge transformations 
$U_R^{\bm{g}}=\prod_{v\neq v_0}U^g_v$ with trivial action at $v_0$ act \emph{covariantly} on $R$:
    \begin{equation}\label{eq_QRFgaugetransf}
       U_R^{\bm{g}} \ket{ \bm{g'}}_R 
       = \prod_{v\neq v_0} U_{R_v}^{g_v}\ket{\bm{g}'}_R=\ket{ g_{v_1} g'_{v_1}, \cdots , g_{v_{N_V-1}}g'_{v_{N_V-1}} }_R
       := \ket{\bm{gg'} }_R ,
    \end{equation}
which is node-wise and thus realizes the final crucial property of QRF orientation states in Eq.~\eqref{eq_completeQRF}. Gauge transformations for the gauge field at $v_0$, on the other hand, are not independent of those at the remaining vertices. We can thus express them in terms of those in Eq.~\eqref{eq_QRFgaugetransf}.

We therefore have all the properties we need for a QRF on the lattice. Specifically, we can use the tree orientation states $\ket{\bm{g}}_R$ both to gauge fix and to build gauge-invariant relational observables via Eq.~\eqref{relobs}. The resulting Wilson line dressing formulae have been investigated in detail for pure gauge theories in \cite{Araujo-Regado:2025ejs}. There, it was also explained why it does not matter for constructing gauge-invariant observables that $R$ cannot deparametrize at $v_0$. This is because in closed lattices one always needs ``even-sided'' dressings, i.e.,\ dressings that extend from $v_0$ in an even number of directions, thereby canceling any transformations at $v_0$. Below, we will also discuss relational observables including bosonic and fermionic matter degrees of freedom.

\subsection{The QECC/QRF correspondence}\label{ssec_QECC/QRF}

One of the linchpins of this article is the natural dictionary that exists between QECCs and QRF setups for gauge systems, which we will use to construct codes from compact Abelian LGTs.
We established this dictionary in its general form in \cite{Carrozza:2024smc} and worked out its detailed instantiation for Pauli stabilizer QECCs as a concrete example.
Let us now briefly review the entries of the dictionary that are most relevant for the present work.
We avoid going into too much detail here and refer instead to \cite{Carrozza:2024smc}, in part because that reference contains a general overview section and because some of the results that we quote from it were proven there specifically for Pauli stabilizer codes; we will generalize these results to LGT settings in the following sections. 

The core pillar of the general QECC/QRF dictionary is to identify the code subspace \eqref{eq:sec2Hcode} of a QECC, $\Hil_\mrm{code}$, with the perspective-neutral Hilbert space \eqref{eq:HpnSec2} within a gauge QRF setup, $\Hil_\mrm{pn}$.
The kinematical spaces of both constructs coincide, hence our earlier naming conventions in Sec.~\ref{ssec:stabilizer_codes}.
Already from this identification and before specializing to Pauli stabilizer codes, we obtain structural insights into the deep connection between QECCs and QRFs.
In particular, the Page-Wootters reduction \eqref{eq_PW}, $\mathcal{R}_R^{\bm{g}}$, and its adjoint, $(\mathcal{R}_R^{\bm{g}})^\dagger$, implement covariant decoding and encoding maps, respectively, in the QECC.
Furthermore, the relational observables in Eq.~\eqref{relobs}, which preserve $\Hil_\mrm{pn}$, are precisely the encoded logical operations in the QECC. This has to do with the fact that gauge QRFs define a split between redundant and non-redundant physical data, and thus constitute the gauge theory counterpart to encodings and decoding. While there had been folklore before about a structural link between QECCs and gauge systems, gauge QRFs provided the missing ingredient to make this correspondence precise.

If the identification $\Hil_\mrm{pn} \equiv \Hil_\mrm{code}$ is the structural core of the dictionary, then its operational core is that distinct choices of QRFs are in one-to-one correspondence with distinct equivalence classes of maximal correctable error sets.
More precisely, given a stabilizer QECC and a maximal correctable error set $\mathcal{E}$, it follows that the equivalence class $[\mathcal{E}]$ defines a unique split of $\Hil_\mrm{kin}$ into a frame space, $\Hil_R$, and a system space, $\Hil_S$, such that (the code restriction of) errors drawn from members of $[\mathcal{E}]$ and the code's stabilizers act trivially on $\Hil_S$; that is, $E = E_R \otimes I_S$ and $U = U_R \otimes I_S$ for any $E \in \mathcal{E}$, $U \in \mathcal{G}$.
Conversely, given an ideal QRF $R$ such that the code's stabilizers act trivially on its complementary system $S$, there is a unique equivalence class of maximal correctable error sets such that the action of errors on $S$ is also trivial.
This correspondence was made precise for Pauli stabilizer codes in \cite[Thm.~4.13]{Carrozza:2024smc}, and we generalize it to LGTs throughout Sec.~\ref{sec:general}.

The pairing of a QRF with a maximal correctable error set $\mathcal{E}$ further defines a frame-dependent tensor product factorization\footnote{In Ref.~\cite{Carrozza:2024smc}, what we here call $\Hil_{\tilde{R}}$ is called ``$\Hil_\mrm{gauge}$'', but we avoid this designation for obvious clarity reasons. Later, $\Hil_{\tilde{R}}$ will be isomorphic to the link degrees of freedom, $\Hil_R$, on the spanning tree $R$.}
\begin{equation}
    \Hil_\mrm{kin} \simeq \Hil_\mrm{pn} \otimes_R \Hil_{\tilde{R}}
\end{equation}
with properties that are adapted to $\mathcal{E}$.
First, $\Hil_{\tilde{R}}$ comes with a basis that is labeled by the characters $\bm{\chi}$ of the gauge group's Pontryagin dual, $\hat{\bm{\bm{G}}}$.
Momentarily specializing to Pauli stabilizer codes, if $\mathcal{E}$ consists of Pauli errors, then its elements can also be consistently labeled by $\bm{\chi}$, such that \cite[Thm.~4.17]{Carrozza:2024smc}
\begin{equation}
    E_\mrm{\bm{\chi}} (\ket{\psi}_\mrm{pn} \otimes_R \ket{\bm{1}}_{R}) ~ \propto ~ \ket{\psi}_\mrm{pn} \otimes_R \ket{\bm{\chi}}_{\tilde{R}} \ ,
\end{equation}
where $\bm{1}$ denotes the trivial character (for each tree Wilson line), and the $\ket{\bm{\chi}}_{\tilde R}$ basis is related to the group basis $\ket{\bm{g}}_{\tilde R}$ by a group Fourier transform (see Eq.~\eqref{groupvselectricbases}).
In other words, the unitary Pauli error $E_{\bm \chi}$ maps $\Hil_\mrm{code}$ into the error sector $\Hil_{\bm \chi} = \Hil_\mrm{pn} \otimes_R \ket{\bm{\chi}}_{\tilde{R}}$.
In \cite{Carrozza:2024smc}, we called such Pauli errors ``electric errors'', which could be thought of mapping $\Hil_\mrm{pn}$---the zero-charge sector---to the nontrivial charge sector $\Hil_{\bm{\chi}}$, with $\bm{\chi}$ playing the role of the error's syndrome.
Maximal sets of correctable errors then naturally consist of one error per charge label $\bm{\chi}$.
As we will shortly see, this nomenclature remains appropriate for the LGT codes that we will construct, and analogous intuition will apply.
In these settings, $\bm{\chi}$ will label configurations of charge defects in Gauss law codes and configurations of unfrozen charge in vacuum codes.

For completeness, we note that \cite[Sec.~5]{Carrozza:2024smc} revealed a novel \emph{error duality}. Electric charge excitation errors are dual (in the Pontryagin sense of the Weyl-type relation \eqref{eq:weyl} and \cite[App.~E]{Carrozza:2024smc}) to  errors that correspond to gauge-fixings of the frame orientation, i.e.\ to conditioning on $R$'s orientation via projectors such as $\ket{\bm{g}}\!\bra{\bm{g}}_R$. While these gauge-fixing errors appear projective at first sight, they can be implemented unitarily and can be interpreted as ``magnetic charge'' excitations.
It is perhaps unsurprising that gauge-fixing errors would be correctable; after all, fixing a gauge breaks gauge invariance, but no physical information is lost in the process.
Although such errors would be somewhat natural to consider in an LGT context, for a lack of space, we will not go into further detail here nor in the subsequent sections.
However, we note that it would be straightforward to construct dual magnetic error sets for the LGT codes that we will develop.

\section{Quantum error correcting codes from general Abelian lattice gauge theories}
\label{sec:general}
Let us now explain two ways in which Abelian lattice gauge theories give rise to stabilizer quantum error correcting codes. In the first, we directly identify the kinematical and gauge-invariant Hilbert spaces of the theory with the physical and code spaces of the corresponding quantum code, respectively. This also means that we identify the gauge group with the code's stabilizer group. We shall refer to such codes as \emph{Gauss law codes}, as the Gauss law will constitute the syndrome for the errors of the code. This will extend previous work on aligning gauge-invariant Hilbert spaces with code spaces \cite{Carrozza:2024smc,Stryker:2018efp,Rajput:2021trn,Spagnoli:2024mib,Spagnoli:2026qni,Yao:2025cxs,errorzooGauss}. In particular, errors in such codes map out of the gauge-invariant Hilbert space into other subspaces of the kinematical Hilbert space and are thus not gauge-invariant. Such codes encompass Abelian theories with or without matter and may be useful for quantum simulations in which lattice gauge theories are simulated in quantum systems that do not feature a gauge symmetry \cite{Rajput:2021trn,Spagnoli:2024mib,Spagnoli:2026qni,Yao:2025cxs,Carrozza:2024smc}.

By contrast, the second way to extract quantum error correcting codes from lattice gauge theories is entirely gauge-invariant, including its errors. 
In the formulation that we will develop, it will involve the presence of matter in the theory and identifies the matter vacuum subspace of the gauge-invariant Hilbert space as the code space, and the entire gauge-invariant Hilbert space as the physical space, and matter excitations as correctable errors. 
As such, we shall refer to these codes as \emph{vacuum codes}. Their stabilizer groups are gauge-invariant and constituted by the action of the gauge transformations restricted to the individual matter species only. In this case, the matter charges of each particle type, or equivalently its occupation numbers, comprise the error syndromes. 

While these two types of codes may at first appear as very different codes, we will show that, in the case that $G$ is finite, vacuum codes and pure gauge Gauss law codes are, in fact, unitarily equivalent. 
We further demonstrate that this unitary equivalence carries over to the case that $G$ has continuous components, provided one suitably coarse-grains the vacuum code. 
Accordingly, Gauss law and (possibly coarse-grained) vacuum codes may be viewed as different realizations of the same abstract code with vacuum codes offering a gauge-invariant version of Gauss law codes. 
In section~\ref{ssec_hybrid}, we also briefly consider hybrids of vacuum and Gauss law codes.

In all of the below, we impose periodic boundary conditions for specificity, although a similar discussion follows for open boundary conditions. 

\subsection{Gauss law codes}

The defining feature of a Gauss law code is that we identify $\mathcal{G}$, the group of gauge transformations, with the stabilizer group of the code and the gauge-invariant (perspective-neutral)
space with the code subspace:
\begin{equation}\label{eq_codepn}
    \mathcal{H}_{\text{code}} = \mathcal{H}_{\text{pn}} = \{ \ket{\Psi} \in \mathcal{H}_{\text{kin}} \thinspace \mid \thinspace U_v^g \ket{\Psi} = \ket{\Psi} \thinspace \forall v \in \mathcal{V}, g \in G     \}\,.
\end{equation}
From this perspective, violations of Gauss's law play the role of detectable error syndromes: operators that fail to commute with some $U_v^g$ create  ``charges'' at vertices and are therefore outside the stabilizer-preserving algebra. Logical operators, on the other hand, are those operators that commute with all $U_v^g$
but act nontrivially within $\mathcal{H}_{\text{code}}$. In gauge theory language, these are the gauge-invariant, typically nonlocal operators. 
Both pure lattice gauge theories and those coupled to matter can be written as Gauss law codes. For simplicity, we begin with the pure gauge case, before subsequently adding bosonic and fermionic matter.

\subsubsection{Gauss law codes from the pure gauge sector} \label{sssec_pureGauss}
In the pure gauge case, the kinematical space, interpreted as the physical space of the code, is 
\begin{equation}\label{eq_Hkinpure}
    \mathcal{H}_{\text{kin}} = \bigotimes_{\ell\in\mathcal{L}} \mathcal{H}_{\ell}, \qquad \mathcal{H}_{\ell} = L^2(G)\,,
\end{equation}
while the gauge transformations in the form
\begin{equation}
        \label{eq_Gausslawpure}U^{\bm{g}} \coloneqq \prod_{v \in \mathcal{V}} U_v^{g_v}, \qquad U_v^{g_v} = \left( \bigotimes_{\ell \in \mathcal{L}_{\text{out}}(v)}U_{\ell}^{g_v}\right) \otimes \left( \bigotimes_{\ell \in \mathcal{L}_{\text{in}}(v)}U_{\ell}^{g_v^{-1}}    \right)
    \end{equation}
comprise the stabilizers of the pure-gauge Gauss law code.

The main task in this subsection is to identify the logical operators and to characterize maximal correctable error sets. Invoking the QECC/QRF correspondence \cite{Carrozza:2024smc}, logical operators can be written as relational observables relative to a QRF and correctable errors sets too are closely related to choices of QRFs. Employing the ideal lattice QRFs introduced via spanning trees in section~\ref{ssec_QRF}, let us now elucidate how this works in the context of lattice gauge theories.

The kinematical Hilbert space and gauge transformations then split according to
\begin{equation}
    \mathcal{H}_{\text{kin}} = \mathcal{H}_R \otimes \mathcal{H}_S, \qquad U^{\bm{g}} = U_R^{\bm{g}} \otimes U_S^{\bm{g}}\,,
\end{equation}
where $R$ and $S$ comprise the subsets of links in the tree and the tree complement, respectively, so $\Hil_R=\bigotimes_{\ell\in R} \Hil_\ell$ and $\Hil_S=\bigotimes_{\ell\in S}\Hil_\ell$. 
As explained in section~\ref{ssec_QRF}, we can equivalently refactorize the QRF Hilbert space as a product over all the Wilson lines in the tree rooted at $v_0$, i.e.\ $\Hil_{R}\simeq\bigotimes_{v\in\mathcal{V}, v\neq v_0}\Hil_{R_v}$. The kinematical algebra then factorizes as $\mathcal{A}_{\rm kin}=\mathcal{A}_R\otimes\mathcal{A}_S$, where $\mathcal{A}=\mathcal{B}(\Hil)$ and 
 \begin{equation}
 \label{eq_kinRS}
 \mathcal{A}_R=\big\langle  U_{R_v}^{g}, W_{R_v}^\chi\,\big|\,g\in G, \chi\in\hat{G}, v\in\mathcal{V}\setminus \{v_0\}\big\rangle\,,\qquad\mathcal{A}_S=\big\langle U_\ell^g,W_\ell^\chi\,\big|\, g\in G, \chi\in\hat G, \ell\in S\big\rangle\,.
 \end{equation}
 
\subsubsubsection{Dual code word bases and relational observables as logical operators}\label{sssec_codespacepure}
Given this structure, let us briefly explain how to parametrize the gauge-invariant (perspective-neutral) Hilbert space in terms of the two dual bases. As this will be our code space, they will provide the two dual bases for code words.

To this end, we exploit again the QRF associated with the spanning tree $R$ and invoke its Page-Wootters reduction map {$\mathcal{R}^{\bm{e}}_R$}, whose inverse defines an encoding map \cite{Carrozza:2024smc} (cf.~section~\ref{ssec_QECC/QRF}): $\mathcal{R}^{\bm{e}\dag}_R:\mathcal{H}_S\to\mathcal{H}_{\rm pn}$. 
This permits us to embed the ``magnetic'' (or group) and ``electric'' (or dual group) bases of the system $S$ into the perspective-neutral space:
 \begin{eqnarray}
            \ket{\bm{g}_S}_{\text{pn}} &\coloneqq& \mathcal{R}_R^{\bm{e} \dagger} \ket{\bm{g}_S}_S = \int \dee^{N_V-1} \bm{g'} \ket{\bm{g'}}_R \otimes U_S^{\bm{g'}} \ket{\bm{g}_S}_S\label{magbas}\\ \ket{\bm{\chi}_S}_{\rm pn}&\coloneqq&\mathcal{R}_R^{\bm{e} \dagger} \ket{\bm{\chi}_S}_S=\int \left( \prod_{\ell \in S} \dee g_{\ell} \bar{\chi}_{\ell} (g_{\ell}) \right)\ket{\bm{g}_S}_{\text{pn}}\,,\label{elecbas}
            \end{eqnarray}
where $\bm{g}_S=(g_\ell)_{\ell\in S}$ and $\bm{\chi}_S=(\chi_\ell)_{\ell\in S}$. 
This makes explicit that the tree degrees of freedom serve as a control register that coherently applies all gauge transformations to the subsystem $S$, thereby producing a gauge-invariant encoding of the logical data $\ket{\bm{g}_S}_S$ (and similarly $\ket{\bm{\chi}_S}_S$). 
Eqs.~\eqref{magbas} and~\eqref{elecbas} thus constitute Pontryagin dual magnetic and electric bases of code words, respectively. 

This terminology is justified as these bases diagonalize the magnetic and electric generators in the algebra $\mathcal{A}_{\rm pn}=\mathcal{B}(\Hil_{\rm pn})$ of gauge-invariant observables. 
To see this, we make use of the encoding of logical (i.e.\ $S$) operators as relational observables on $\Hil_{\rm pn}$ in Eq.~\eqref{relobs}. 
 It may be checked that for the elementary link operators on $\ell \in S$ this yields\footnote{See also the derivation of the QRF dressing formulae for lattice gauge theories in \cite[App.~C]{Araujo-Regado:2025ejs} (see especially \cite[Eq.~(C22)]{Araujo-Regado:2025ejs}) which encompass these results when projected with $\Pi_{\rm pn}$.} 
    \begin{equation}
        \label{relobspure}O_{U^h_{\ell}|R}^{\bm{g}} = U_{\ell}^h\Pi_{\text{pn}}, \qquad O_{W_{\ell}^{\chi}|R}^{\bm{g}} = \bar{\chi}_\ell(g_{v_i}g^{-1}_{v_f})\,H_{\ell}^{\chi}\Pi_{\text{pn}}\,,\qquad\qquad\ell\in S\,,
    \end{equation}
where
    \begin{equation}\label{eq_Hdef}
        H_{\ell}^{\chi} := 
        W_{\ell}^{\chi}W_{\gamma_R[v_0, v_i]}^{\chi}W_{\gamma_R[v_0, v_f]}^{\bar{\chi}} 
    \end{equation}
is the holonomy (Wilson loop) resulting from dressing $W_{\ell}^{\chi}$ with the unique Wilson lines in the tree $R$ connecting the initial and final vertices $v_i$, $v_f$ of $\ell$ with $v_0$ (see Fig.~\ref{fig:latticetree}).  

\begin{figure}[h]
\centering
\includegraphics[width=0.55\textwidth]{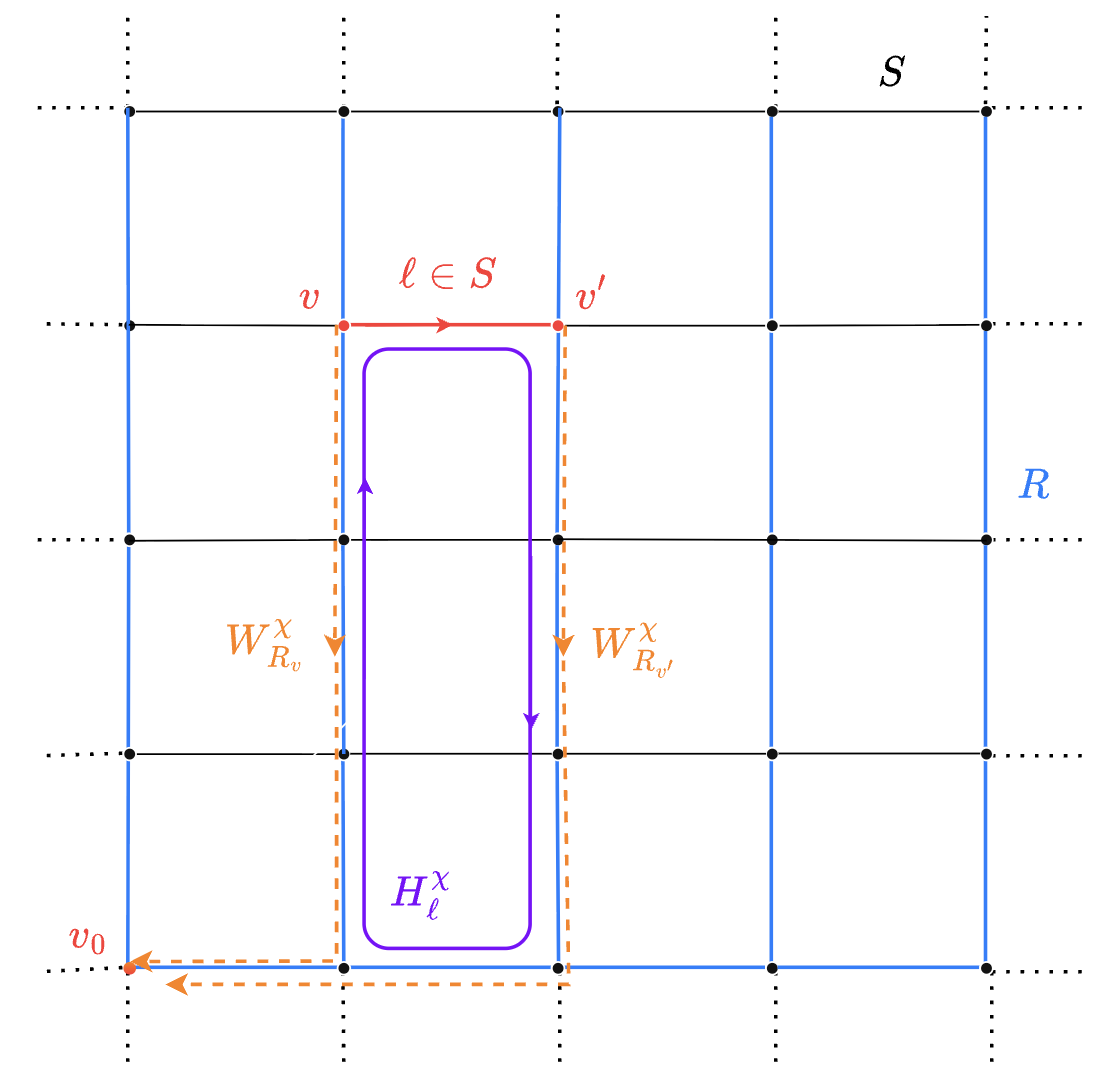}
\caption{Spanning tree $R$ (blue) for a $4 \times 4$ square lattice with periodic boundary conditions. The complement $S$ appears in black. The Wilson line along a system link $\ell \in S$ may be dressed with the tree Wilson lines $W_{R_{v}}^{\bar{\chi}}=(W_{R_v}^{\chi})^{\dagger}$, $W_{R_{v'}}^{\chi}$, connecting its endpoints with the root of the tree, $v_0$, on the bottom left. This results in the Wilson loop $H_{\ell}$. 
}
\label{fig:latticetree}
\end{figure}

Clearly, these relational observables are gauge-invariant. 
It is now straightforward to verify that the magnetic holonomy operators $H^\chi_\ell$ and the electric operators $U_\ell^h$ constituting the relational observables are diagonalized in the magnetic and electric basis of the code space, respectively,
\begin{equation}\label{eq_mageleceigenbasis}
    H_{\ell}^{\chi} \ket{\bm{g}_S}_{\text{pn}} = \bar{\chi}(g_{\ell})\ket{\bm{g}_S}_{\text{pn}}\,,\qquad\qquad U_{\ell}^h\ket{\bm{\chi}_S}_{\text{pn}} = \chi_{\ell}(h) \ket{\bm{\chi}_S}_{\text{pn}}\,,\qquad\qquad\ell\in S\,.
\end{equation}

Furthermore,
\begin{equation}
  H^\chi_\ell U^h_\ell=\bar\chi(h) U^h_\ell H^\chi_\ell\,,
\end{equation}
making the Weyl structure of the logical algebra generated by them manifest. 

This algebra is, in fact, the full algebra of bounded operators on the perspective-neutral space,
    \begin{equation}
       \label{Apnpure} \mathcal{A}_{\rm loops}\coloneqq\mathcal{A}_{\text{pn}}  = \big\langle U_{\ell}^g\Pi_{\text{pn}}, H_{\ell}^{\chi}\Pi_{\text{pn}} \,\big| \, g \in G, \chi \in \hat{G},\ell\in S  \big\rangle\,,
    \end{equation}
which for later purpose we shall also call the (Wilson) loop algebra. 
This follows from the fact that the kinematical observables $U_\ell^h,W^\chi_\ell$, $\ell\in S$, entering the relational observables in Eq.~\eqref{relobspure} generate the entire system algebra $\mathcal{A}_S=\mathcal{B}(\Hil_S)$ and that
 $R$ is a complete QRF for the gauge group $\bm{G} = G^{N_V-1}$ in the pure gauge case; indeed, relational observables associated with generators of the system algebra generate the full perspective-neutral algebra when the QRF is complete \cite{delaHamette:2021oex}. 
 
Thus, Eq.~\eqref{Apnpure} constitutes the logical operator algebra of the Gauss law code in the pure gauge case. 
Note that this algebra includes also other loops than the above $H_\ell^\chi$ with $\ell\in S$. 
Generally, there  exist loops entirely supported in $S$ or supported on more links in $S$ than just a single one. 
However, in the present Abelian case, these can be obtained from products of the generating loops and are thus not independent (see Fig.~\ref{fig:latticeloops}).
These dependence relations are instances of the Mandelstam constraints which encode relations among loops in a general lattice gauge theory \cite{Giles:1981ej,Gambini:1986ew,Loll:1991mh}.
 
\begin{figure}[h]
\centering
\includegraphics[width=0.6\textwidth]{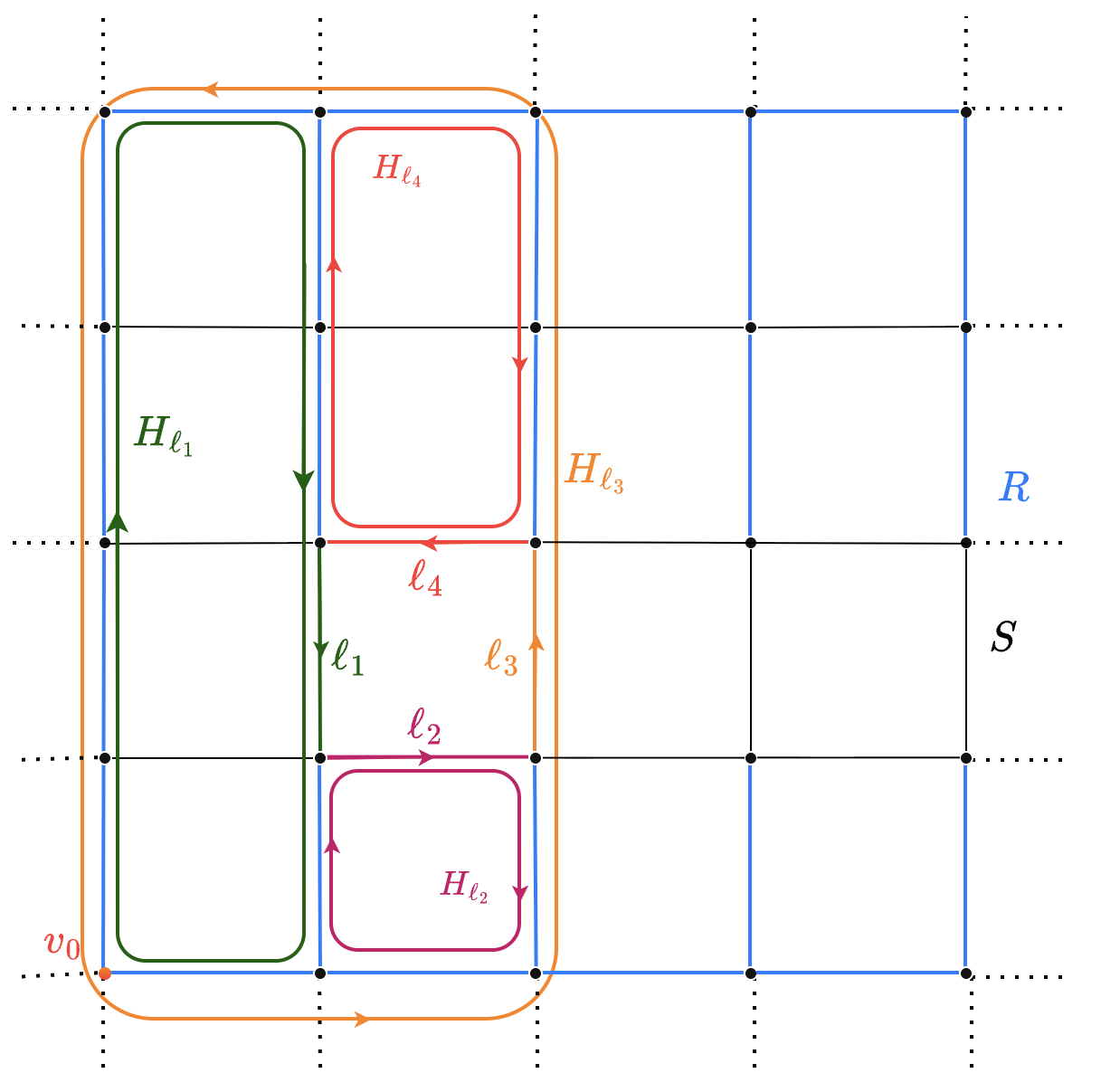}
\caption{System plaquette as a product of generating tree Wilson loops on a $5 \times 5$ square lattice with periodic boundary conditions. The spanning tree $R$ is highlighted in blue, and its complement $S$ appears in black. Multiplying the system holonomies $H_{\ell_4}H_{\ell_3}H_{\ell_2}H_{\ell_1}$, every part outside the plaquette $\{\ell_1, \ell_2, \ell_3, \ell_4 \}$ cancels, as they support either no Wilson lines or two Wilson lines with opposite orientations, yielding a Wilson loop supported only on $S$.}
\label{fig:latticeloops}
\end{figure}

The electric basis offers a convenient way to parametrize code words that will become relevant later when discussing correctable error sets. 
To this end, we introduce what we shall refer to as the \emph{Gauss law map},\footnote{Despite the notation, this is not a boundary map (as used in the context of surface codes in \cite{Carrozza:2024smc}), which would map from power sets of vertices to power sets of edges. Rather, mapping in the opposite direction, it can be regarded, in a sense, as an inverse of a boundary map.}
    \begin{equation}
        \label{boundarymap}    \partial: \hat{G}^{\times N_L} \rightarrow \hat{G}^{\times N_V - 1}, \qquad \partial \bm{\chi} = \left( \prod_{\ell \in \mathcal{L}_{\text{out}}(v)} \chi_{\ell} \prod_{\ell' \in \mathcal{L}_{\text{in}}(v)} \bar{\chi}_{\ell'} \right)_{v \in \mathcal{V} \setminus \{ v_0 \}},
        \end{equation}
which is nothing but the exponential of the discrete divergence of the electric flux on the lattice, as we shall illustrate explicitly in section 4. 
Thus, the Gauss law map maps the collection or electric link data (characters) in the lattice to the collection of electric data that appears in the Gauss law for each vertex (except the root $v_0$). 

Crucially, we can understand the Gauss law map as a discrete\footnote{For a compact Abelian $G$, its Pontryagin dual $\hat G$ is a discrete group. For example, $\widehat{\rm{U}(1)}=\mathbb{Z}$.} fiber bundle with $\partial$ its projection onto the base space $\hat{G}^{\times N_V-1}$. We refer to this as Gauss law bundle. Indeed, each electric link configuration $\bm{\chi}=(\chi_\ell)_{\ell\in\mathcal{L}}$ yields one vertex charge configuration $\partial\bm{\chi}$, but, conversely, for each vertex charge configuration the compatible electric link configurations can be parametrized by the discrete fiber $F:=\hat{G}^{\times(N_L-N_V+1)}$.

This can be identified with the electric configuration space of the system $S$, i.e.\ the complement of the spanning tree QRF $R$. To see this, consider any fixed vertex charge configuration $\partial\bm{\chi}$. 
Now choose the electric data for the links in $S$ arbitrarily and denote by $\chi^v_S$ the total contribution of $S$ at vertex $v\neq v_0$. 
As there are $N_L-N_V+1$ such links, the possible choices are parametrized by $\hat{G}^{\times(N_L-N_V+1)}$. 
Then the total vertex charge configuration $\partial\bm{\chi}$ can be achieved by setting the electric datum of the frame Wilson line hitting $v$ to $\chi_{R_v}=(\partial\chi)_v\bar\chi_S^v$.
Hence, for each electric system and vertex charge configuration, there is a \emph{unique} electric frame configuration compatible with it and so $F$ indeed constitutes a discrete fiber for the Gauss law map.

In particular, gauge-invariant (perspective-neutral) states are characterized by the vertex charge configuration $\partial\bm{\chi}=\bm{1}$ and the fiber $F=\hat{G}^{\times(N_L-N_v+1)}$ above it parametrizes all the possible electric code words compatible with it in terms of the electric system data, as indeed done in Eq.~\eqref{elecbas}. 
Below, we shall use sections of the Gauss law map to characterize maximal sets of correctable errors.

\subsubsubsection{Aligning the tensor product structure with the Gauss law map}\label{ssssec_GausTPS}
For our subsequent error analysis, it will turn out  convenient to first align the kinematical subsystem partition with the Gauss law map. Let us explain this here.

For each assignment of vertex charges $\partial\bm{\chi}  \in \hat{G}^{\times N_V - 1}$ via the Gauss law map, Eq.~\eqref{boundarymap}, consider the charged subspace\footnote{Recall that, for periodic boundary conditions, the vertex charge at the tree root $v_0$ is a product of the charges at all other vertices.}
\begin{equation}
    \mathcal{H}_{\partial\bm{\chi}} 
    = \big\{ \ket{\Psi} \in \mathcal{H}_{\text{kin}} \, \big|\, U^{g}_v \ket{\Psi} = (\partial{\bm{\chi}})_v(g)\ket{\Psi}, \forall \,v\in \mathcal{V}\setminus\{v_0\}, g \in G  \big\}\,.
\end{equation}
This is the simultaneous eigenspace of all Gauss law operators with eigenvalues prescribed by $\partial\bm{\chi}$. In other words, it is the sector carrying fixed vertex charges. The total kinematical space can be decomposed into these charge sectors:
\begin{equation}\label{eq_chargedecomp}
    \Hil_{\rm kin}=\bigoplus_{\partial\bm{\chi}\in\hat{G}^{N_V-1}}\Hil_{\partial\bm{\chi}}\,.
\end{equation}
The direct sum thus runs over the base space of the Gauss law bundle. Clearly, $\Hil_{\partial\bm\chi=\bm1}=\Hil_{\rm pn}$. For later purpose, we emphasize that the decomposition only runs over the gauge charges at vertices \emph{other} than the root $v_0$ (which is determined by those elsewhere).

To describe errors, it will be convenient to invoke the ``dressing frame fields'' associated with the QRF $R$ \cite[Sec.~4.5]{Carrozza:2024smc}, which here have a compelling lattice interpretation. A frame field with charge $\partial\bm{\chi}$ is 
an isometry
\begin{equation}
     R_{\partial\bm{\chi}}: \mathcal{H}_{\text{pn}} \rightarrow \mathcal{H}_{\partial\bm{\chi}
            }\,,
\end{equation}
explicitly given by (cf.~Eq.~\eqref{eq_RWilson})
 \begin{equation}
         \label{eq_framefield}   
         R_{\partial\bm{\chi}} = \prod_{v \in \mathcal{V} \setminus  \{ v_0 \} } W_{R_v}^{\partial\chi_v}\,,
        \end{equation}
and transforms gauge-covariantly: $R_{\partial\bm{\chi}}\to U_v^{g}R_{\partial\bm{\chi}} U_v^{g\dag}=\partial\chi_v(g)R_{\partial\bm{\chi}}$. Each $W_{R_v}^{\partial\chi_v}$ is the Wilson line along the unique tree path from $v$ to the root $v_0$ in representation $\partial\chi_v$. 
Acting with it shifts the Gauss law eigenvalue at $v$ by $\partial\chi_v$ 
and leaves all other vertex charges invariant because it commutes with the Gauss law elsewhere except at $v_0$ (see also the discussion around Eq.~\eqref{QRFcommutator}). 
The frame field is an isometry because each Wilson line is a unitary on $\Hil_{\rm kin}$.  
Since the tree connects every vertex uniquely to the root, the family $\{ R_{\partial\bm{\chi}}\}_{\partial\bm{\chi} \in \hat{G}^{N_V - 1}}$ encompasses all charge sectors of $\Hil_{\rm kin}$. 
In the language of \cite{Carrozza:2024smc}
this family forms a complete set of dressing frame fields: it generates every Gauss law charge sector from the trivial, i.e.\ perspective-neutral one.

For us, the key aspect of the frame fields is that they induce a new $R$-dependent  tensor product structure $\Hil_{\rm kin}=\Hil_{R}\otimes_{R}\Hil_{\rm pn}$ on the kinematical space which enjoys special properties that will simplify the subsequent analysis. Abstractly, it is defined via the bilinear map \cite[Sec.~4.5]{Carrozza:2024smc}
\begin{equation}\label{eq_newTPS}
    \otimes_R: \Hil_R\times\mathcal{H}_{\text{pn}} \rightarrow \mathcal{H}_{\text{kin}}\,,
\end{equation}
which in terms of the electric frame basis and arbitrary perspective-neutral states reads
\begin{equation}
\ket{\partial\bm{\chi}}_R \otimes_R\ket{\psi}_{\text{pn}}\coloneqq R_{\partial\bm{\chi}}\ket{\psi}_{\text{pn}}\,.
    \label{frameTPS}
\end{equation}
One special property of this new factorization is that gauge transformations act trivially on the second factor:
\begin{equation}\label{eq_trivialGauss}
    U^{\bm{g}} = U_R^{\bm{g}}\otimes_R\mathbbm{1}_{\text{pn}}  , \qquad \bm{g}\in G^{\times N_V-1}\,.
\end{equation}
Indeed, for all $v\in\mathcal{V}\setminus\{v_0\}$,
\begin{equation}\label{eq_elecbaserefac}
    U^{g}_v R_{\partial\bm{\chi}}\ket{\psi}_{\rm pn}=\partial\chi_v(g)R_{\partial\bm{\chi}}\ket{\psi}_{\rm pn}=\partial\chi_v(g)\ket{\partial\bm\chi}_R\otimes_R\ket{\psi}_{\rm pn}=U_{R_v}^g\ket{\partial\bm\chi}_R\otimes_R\ket{\psi}_{\rm pn}\,.
\end{equation}
Thus, one tensor factor carries the protected, gauge-invariant information, while the other keeps track of how the state transforms under gauge transformations. This tensor product structure is also naturally associated with the Gauss law bundle: the $R$-factor runs over the base space, while the $\rm{pn}$-factor runs over the fiber.

Let us clarify this further in terms of algebras. Similarly to Eq.~\eqref{eq_elecbaserefac}, for any tree Wilson line $W_{R_v}^\eta$, we have 
\begin{equation}\label{eq_treeWilsonlinenewfac}
    W_{R_v}^\eta R_{\partial\bm\chi}\ket{\psi}_{\rm pn} = W_{R_v}^\eta\ket{\partial\bm\chi}_R\otimes_R\ket{\psi}_{\rm pn}\,.
\end{equation}
Since $U^g_{R_v},W^\eta_{R_v}$ on the r.h.s.\ of the equations generate the frame algebra $\mathcal{A}_R$ (cf.~Eq.~\eqref{eq_kinRS}), this informs us that the generators of the frame algebra $\mathcal{A}_{R'}$ in the \emph{new} tensor product structure can be written in terms of  the \emph{old} kinematical data via the corresponding expressions on the l.h.s., namely as the \emph{full} Gauss law operators (cf.~Eq.~\eqref{eq_Gausslawpure}), and the tree Wilson lines, so that:
\begin{equation}
    \mathcal{A}_{R'}=\big\langle  U_{v}^{g}, W_{R_v}^\chi\,\big|\,g\in G, \chi\in\hat{G}, v\in\mathcal{V}\setminus \{v_0\}\big\rangle\,.
\end{equation}
We equip the $R'$ on the left with a prime to distinguish this representation of the frame algebra from Eq.~\eqref{eq_kinRS}.

To elucidate the algebra generating the new $\rm{pn}$-factor in terms of the old kinematical structure, let us work backwards, i.e.\ from right to left. 
The algebra on $\Hil_{\rm pn}$ is given by the loop algebra in Eq.~\eqref{Apnpure}. 
Now for the electric generators we have
\begin{equation}
    \ket{\partial\bm\chi}_R\otimes_R U_\ell^g\Pi_{\rm pn}\ket{\psi}_{\rm pn}=\ket{\partial\bm\chi}_R\otimes_R U_\ell^g\ket{\psi}_{\rm pn}=R_{\partial\bm\chi}U^g_{\ell}\ket{\psi}_{\rm pn}=U_\ell^g\,R_{\partial\bm\chi}\ket{\psi}_{\rm pn}
\end{equation}
since $[U^g_\ell,R_{\partial\bm\chi}]=0$ for $\ell\in S$. 
The same argument applies to the magnetic generators $H_\ell^\chi\Pi_{\rm pn}$ of the loop algebra. 
Hence, in terms of the old kinematical data we can write the algebra generating the $\rm{pn}$-factor of the new tensor product structure as
\begin{equation}
    \mathcal{A}_{S'}=\big\langle U^g_\ell,H^\chi_\ell\,\big|\,g\in G, \chi\in\hat G,\ell\in S\big\rangle\,,
\end{equation}
i.e.\ essentially as the loop algebra in Eq.~\eqref{Apnpure}, except without the projectors to the perspective-neutral space.

Now clearly, we have that the new frame and system algebras commute, $[\mathcal{A}_{R'},\mathcal{A}_{S'}]=0$, are (Type I) factors,\footnote{A factor is an algebra with a trivial center: the only elements commuting with all other elements in the algebra are multiples of the identity. That both algebras are Type I factors follows from the fact that both are isomorphic to the full algebra of bounded operators on a Hilbert space, $\mathcal{B}(\Hil)$.} and they generate all of $\mathcal{A}_{\rm kin}=\mathcal{B}(\Hil_{\rm kin})$, $\mathcal{A}_{\rm kin}=\mathcal{A}_{R'}\vee\mathcal{A}_{S'}$. This implies that they induce the tensor product structure $\otimes_R$ on $\Hil_{\rm kin}$ \cite{Zanardi:2001zz,Zanardi:2004zz,takesaki2001theory}.\footnote{This provides a lattice gauge theory incarnation of arguments linking abstract \cite[Thm.~4.17]{Carrozza:2024smc} and algebraic results \cite[Thm.~4.13]{Carrozza:2024smc} that establish how any Pauli error set in Pauli stabilizer codes induces a new tensor product structure in which errors and stabilizers act trivially on the encoded logical data.}

In conclusion, every kinematical state can be decomposed uniquely into components with definite vertex charges, and within each such sector the gauge-invariant degrees of freedom are identical. The charge label therefore plays the role of an independent degree of freedom that can be factored out in the new Gauss law aligned tensor product structure.

\subsubsubsection{Maximal correctable error sets as sections of the Gauss law map}\label{sec:pgmaxsets}
Having identified the logical algebra of the Gauss law code in terms of relational observables and aligned the subsystem partition to the Gauss law map, we now turn to the question of error correction. The task is to determine which operators act nontrivially on the kinematical Hilbert space, while remaining correctable on the code subspace, and to characterize maximal sets of such correctable errors.

The Weyl structure of the link algebra provides the guiding intuition. The Wilson line operators $W_{\ell}^{\chi}$ and the electric operators $U_{\ell}^g$ form a generalized Pauli pair: the former play the role of $X$-type operators, while the latter act as generalized  $Z$-type operators. Crucially, the electric operators are gauge-invariant and therefore belong to the logical algebra. They do not generate detectable errors on the code subspace. By contrast, open Wilson line operators are not gauge-invariant, translate between different vertex charge sectors and thus constitute the natural candidates for correctable errors. From the error correction perspective, these are the analogs of Pauli $X$-errors. 

It is therefore natural to restrict attention to error sets generated by general products of Wilson lines:
\begin{equation}\label{prodWilsonlineswithchis}
            W^{\bm{\chi}} = \bigotimes_{\ell \in \mathcal{L}} W_{\ell}^{\chi_{\ell}}.
\end{equation} 
Our goal is to characterize the maximal correctable sets of Wilson line errors. 
Maximal means that no further linearly independent error can be added to the set without violating the Knill-Laflamme condition \eqref{eq_KLstabcode}. 
In our setting this will entail that each maximal set of correctable Wilson line errors will encompass the full spectrum of possible vertex charge configurations, which will also constitute the syndrome of these errors.

The Gauss law aligned tensor product structure makes this analysis straightforward. Indeed, the key observation is that the code projector takes an almost trivial form in it:
\begin{lem}\label{lem_pipn}
   In the Gauss law aligned tensor product structure, we have: $\Pi_{\rm pn}=\ket{\bm{1}}\!\bra{\bm{1}}_R\otimes_R\mathbbm{1}_{\rm pn}$.
\end{lem}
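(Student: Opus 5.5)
The plan is to verify the identity on a spanning set of $\Hil_{\rm kin}$ adapted to the new factorization, namely the vectors $\ket{\partial\bm\chi}_R\otimes_R\ket{\psi}_{\rm pn}=R_{\partial\bm\chi}\ket{\psi}_{\rm pn}$. By construction of the frame fields, Eq.~\eqref{eq_framefield}, and the charge decomposition Eq.~\eqref{eq_chargedecomp}, these vectors span $\Hil_{\rm kin}$, with $R_{\partial\bm\chi}\Hil_{\rm pn}=\Hil_{\partial\bm\chi}$. Both sides of the claimed identity are bounded operators, so agreement on this set suffices.

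First, I will apply $\Pi_{\rm pn}=\int_{\bm G}\dee\bm g\,U^{\bm g}$ from Eq.~\eqref{Pipn}, with $\bm G=G^{\times N_V-1}$ in the pure gauge case. I then use Eq.~\eqref{eq_trivialGauss}, equivalently Eq.~\eqref{eq_elecbaserefac}, which gives $U^{\bm g}R_{\partial\bm\chi}\ket{\psi}_{\rm pn}=\prod_{v\neq v_0}\partial\chi_v(g_v)\,R_{\partial\bm\chi}\ket{\psi}_{\rm pn}$. Integrating over $\bm g$, the character orthogonality relation $\int_G\dee g\,\chi(g)=\delta_{\chi,1}$, applied vertex by vertex, yields
\[
\Pi_{\rm pn}\bigl(\ket{\partial\bm\chi}_R\otimes_R\ket{\psi}_{\rm pn}\bigr)=\delta_{\partial\bm\chi,\bm 1}\,\ket{\bm 1}_R\otimes_R\ket{\psi}_{\rm pn}.
\]
The normalization conventions of the paper make this hold both for finite $G$ (normalized counting measure) and for continuous $G$ (normalized Haar measure). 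Since $\hat G$ is discrete, this is a genuine Kronecker delta.

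Second, on the right-hand side, $\ket{\bm 1}\!\bra{\bm 1}_R\otimes_R\mathbbm 1_{\rm pn}$ acts on the same vector as $\braket{\bm 1}{\partial\bm\chi}_R\,\ket{\bm 1}_R\otimes_R\ket{\psi}_{\rm pn}$. The electric frame basis is orthonormal, so this equals $\delta_{\partial\bm\chi,\bm 1}\,\ket{\bm 1}_R\otimes_R\ket{\psi}_{\rm pn}$, which matches the first step. I also note the consistency check that $R_{\bm 1}=\mathbbm 1$, so $\ket{\bm 1}_R\otimes_R\ket\psi_{\rm pn}=\ket\psi_{\rm pn}$ and the right-hand side indeed projects onto $\Hil_{\rm pn}$.

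The main point requiring care is that $\otimes_R$ is a genuine tensor product structure, i.e.\ that the bilinear map in Eq.~\eqref{frameTPS} extends to a unitary $\Hil_R\otimes\Hil_{\rm pn}\to\Hil_{\rm kin}$. This needs the sectors $\Hil_{\partial\bm\chi}$ to be mutually orthogonal, which holds because they are distinct joint eigenspaces of the unitaries $U^g_v$. It also needs each $R_{\partial\bm\chi}$ to restrict to a unitary $\Hil_{\rm pn}\to\Hil_{\partial\bm\chi}$. That restriction is an isometry since $R_{\partial\bm\chi}$ is unitary on $\Hil_{\rm kin}$, and it is surjective since $R_{\partial\bm\chi}^\dagger=R_{\overline{\partial\bm\chi}}$ maps $\Hil_{\partial\bm\chi}$ back to $\Hil_{\rm pn}$. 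I would state this briefly, relying on the algebraic argument already given via $\mathcal A_{R'}$ and $\mathcal A_{S'}$.
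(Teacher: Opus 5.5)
Your proof is correct and follows essentially the same route as the paper. Both arguments rest on the diagonal action of $U^{\bm g}$ on the frame electric basis, Eqs.~\eqref{eq_trivialGauss}/\eqref{eq_elecbaserefac}, together with character orthogonality $\int_{G^{N_V-1}}\dee\bm g\,\partial\bm\chi(\bm g)=\delta_{\partial\bm\chi,\bm 1}$. The paper works at the operator level, inserting $\mathbbm 1_R=\sum_{\partial\bm\chi}\ket{\partial\bm\chi}\!\bra{\partial\bm\chi}_R$ into $\int\dee\bm g\,U_R^{\bm g}\otimes_R\mathbbm 1_{\rm pn}$, whereas you check the action on the vectors $R_{\partial\bm\chi}\ket{\psi}_{\rm pn}$; this is equivalent, and your remarks on $\otimes_R$ being a genuine unitary tensor product add care that the paper takes for granted.
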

\begin{proof}
    Using Eq.~\eqref{eq_trivialGauss}, we have  
    \begin{equation}
       \Pi_{\rm pn}=\int_{G^{N_V-1}}{\dee}{\bm{g}}\, U^{\bm{g}}=\int_{G^{N_V-1}}\dee {\bm{g}}\,U^{\bm{g}}_R\otimes_R\mathbbm{1}_{\rm pn} \,.
\end{equation} 
Now multiplying 
\begin{equation}
\mathbbm{1}_R=\sum_{\partial\bm\chi\in\hat{G}^{\times N_V-1}}\ket{\partial\bm\chi}\!\bra{\partial\bm\chi}_R
\end{equation}
with $U^{\bm{g}}_R$ yields 
\begin{equation}U^{\bm{g}}_R = \sum_{\partial\bm\chi\in\hat{G}^{\times N_V-1}}\partial\bm\chi(\bm{g})\ket{\partial\bm\chi}\!\bra{\partial\bm\chi}_R\,,
\end{equation}
where $\partial\bm\chi(\bm{g})\coloneqq\prod_{v\in\mathcal{V}\setminus\{v_0\}}\,\partial\chi_v(g_v)$. From the character orthogonality relations, it follows that
\begin{equation}
   \int_{G^{N_V-1}}{\dee}\bm{g} \, \partial\bm{\chi}(\bm{g})=\delta_{\partial\bm\chi,\bm{1}}\,.
\end{equation} 
In conjunction, this proves the result.
\end{proof}

Armed with this result, checking for correctability  of Wilson line error sets is now simple.
\begin{prop}\label{prop_KLpure}
The Knill-Laflamme condition for general Wilson line product errors reads
\begin{equation}\label{eq_KLpure}
  \Pi_{\rm pn}\,W^{\bar{\bm\eta}}\,W^{\bm\chi}\,\Pi_{\rm pn}=H^{\bar{\bm\eta}\bm{\chi}_S}\delta_{\partial\bm\eta,\partial\bm\chi}\,\Pi_{\rm pn}\,,  
\end{equation}
where $H^{\bar{\bm\eta}\bm{\chi}_S}\coloneqq\prod_{\ell\in S} H_{\ell}^{\bar{\eta}_\ell\chi_\ell}$ is a product of system loops, and $\partial\bm\chi$ is the Gauss law map, Eq.~\eqref{boundarymap}.
\end{prop}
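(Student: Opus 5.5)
The plan is to reduce $W^{\bar{\bm\eta}}W^{\bm\chi}$ to a single Wilson line product and then decompose it into tree and system parts. In the Abelian setting, $W^{\bar{\bm\eta}}W^{\bm\chi}=W^{\bar{\bm\eta}\bm\chi}$, so it suffices to compute $\Pi_{\rm pn}W^{\bm\xi}\Pi_{\rm pn}$ for a general $\bm\xi=\bar{\bm\eta}\bm\chi$. Since $\partial$ is a homomorphism, $\partial\bm\xi=\overline{\partial\bm\eta}\,\partial\bm\chi$, so $\partial\bm\xi=\bm 1$ iff $\partial\bm\eta=\partial\bm\chi$.

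The key step is to rewrite $W^{\bm\xi}$ as a product of a frame field and system holonomies. For each $\ell\in S$, Eq.~\eqref{eq_Hdef} gives $W_\ell^{\xi_\ell}=H_\ell^{\xi_\ell}\,W_{\gamma_R[v_0,v_i]}^{\bar\xi_\ell}W_{\gamma_R[v_0,v_f]}^{\xi_\ell}$, i.e.\ a system loop times tree Wilson lines $W_{R_v}^{\pm}$. The tree links in $W^{\bm\xi}$ can likewise be re-expressed, via the nonlocal change of basis to the cumulative node-wise variables $R_v$, as a product of tree Wilson lines $W_{R_v}^{\zeta_v}$. Collecting everything, I expect
\[
W^{\bm\xi}=H^{\bm\xi_S}\prod_{v\neq v_0}W_{R_v}^{\kappa_v}
\]
for some characters $\kappa_v$. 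Since $H^{\bm\xi_S}$ is gauge invariant, comparing transformation behaviour under $U_v^g$ with Eq.~\eqref{eq:gauge-trans} should force $\kappa_v=(\partial\bm\xi)_v$. Hence $W^{\bm\xi}=H^{\bm\xi_S}R_{\partial\bm\xi}$, up to orientation and conjugation conventions that must be checked against Eq.~\eqref{boundarymap}. This bookkeeping is the main obstacle. The cleanest route is probably not to track paths explicitly but to argue as follows: $W^{\bm\xi}(H^{\bm\xi_S})^{\dagger}$ is supported only on tree links, hence is a product of $W_{R_v}$'s, and its charges are read off from its commutation with the Gauss operators $U_v^g$.

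With this decomposition the conclusion follows from the Gauss law aligned factorization. By the argument preceding Lemma~\ref{lem_pipn}, $H^{\bm\xi_S}\in\mathcal{A}_{S'}$ acts as $\mathbbm 1_R\otimes_R H^{\bm\xi_S}\Pi_{\rm pn}$. By Eq.~\eqref{eq_treeWilsonlinenewfac}, $R_{\partial\bm\xi}$ acts only on the $R$ factor and shifts $\ket{\bm 1}_R\mapsto\ket{\partial\bm\xi}_R$. Using Lemma~\ref{lem_pipn}, we get
\[
\Pi_{\rm pn}W^{\bm\xi}\Pi_{\rm pn}=\ket{\bm1}\!\braket{\bm1}{\partial\bm\xi}\!\bra{\bm1}_R\otimes_R H^{\bm\xi_S}\Pi_{\rm pn}=\delta_{\partial\bm\xi,\bm1}\,H^{\bm\xi_S}\Pi_{\rm pn}.
\]
Substituting $\bm\xi=\bar{\bm\eta}\bm\chi$ and $H^{\bm\xi_S}=\prod_{\ell\in S}H_\ell^{\bar\eta_\ell\chi_\ell}$ gives Eq.~\eqref{eq_KLpure}.

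A possible phase ambiguity in splitting $W_\ell$ into $H_\ell$ times tree lines does not arise, because all operators involved are diagonal in the group basis and multiply exactly. One should also note that $H^{\bm\xi_S}$ commutes with $\Pi_{\rm pn}$, so the right-hand side is unambiguous.
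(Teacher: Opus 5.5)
Your proposal is correct and follows the paper's own route. Like the paper, you write $W^{\bm\chi}$ as system holonomies $H^{\bm\chi_S}$ times tree Wilson lines, identify the tree part with the frame field shifting $|\partial\bm\eta\rangle_R\mapsto|\partial\bm\chi\,\partial\bm\eta\rangle_R$ in the Gauss-law-aligned factorization, and conclude with Lemma~\ref{lem_pipn}. The one cosmetic difference is that you fix the exponents $\kappa_v=(\partial\bm\xi)_v$ by matching gauge charges, whereas the paper gets them directly by splitting $W^{\eta}_{\gamma_R[v,v']}=W^{\eta}_{R_v}W^{\bar\eta}_{R_{v'}}$ and collecting exponents vertex by vertex.
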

\begin{proof}
The set of Wilson line products is closed under multiplication. It thus suffices to evaluate $\Pi_{\rm pn}\,W^{\bm\chi}\,\Pi_{\rm pn}$. The first step is to separate tree and non-tree contributions. 
Writing
\begin{equation}
    W^{\bm{\chi}} = \left(  \bigotimes_{\ell \in R}W_{\ell}^{\chi_{\ell}}  \right) \otimes \left(  \bigotimes_{\ell' \in S} W_{\ell'}^{\chi_{\ell'}}   \right),
\end{equation}
we recall that for every link $\ell \in S$, the combination of $\ell$ with the unique tree path $\gamma_R[v_f, v_i]$, where $v_i,v_f$ are the initial and final vertices of $\ell$, forms a closed loop, so that
\begin{equation}\label{WandH}
    W_{\ell}^{\chi_{\ell}} 
    = 
    H_{\ell}^{\chi_\ell} W_{\gamma_R[v_i, v_f]}^{\chi_{\ell}}\,,\qquad\qquad\ell\in S\,.
\end{equation}
On the other hand, for tree links we have $W_\ell^{\chi_\ell}=W^{\chi_\ell}_{\gamma_R[v_i,v_f]}$, $\ell\in R$, and since loops commute with $\Pi_{\rm pn}$, this yields
\begin{equation}\label{eq_treenontree}
 \Pi_{\rm pn}\,W^{\bm\chi}\,\Pi_{\rm pn}=\prod_{\ell'\in S} H_{\ell'}^{\chi_{\ell'}}\,\Pi_{\rm pn}\,\prod_{\ell\in\mathcal{L}}W^{\chi_\ell}_{\gamma_R[v_i,v_f]}\,\Pi_{\rm pn}=H^{\bm\chi_S}\,\Pi_{\rm pn}\,\prod_{\ell\in\mathcal{L}}W^{\chi_\ell}_{\gamma_R[v_i,v_f]}\,\Pi_{\rm pn}\,.   
\end{equation}
Now tree Wilson lines between $v$ and $v'$ can be decomposed as 
\begin{equation}
    W_{\gamma_R[v, v']}^{\eta} = W_{\gamma_R[v,v_0]}^{\eta} W_{\gamma_R[v', v_0]}^{\bar{\eta}}
    =W^\eta_{R_v}\,W^{\bar\eta}_{R_{v'}}\,. 
\end{equation}
As $W_{R_v}^{\eta}\,R_{\partial\bm\chi}=R_{\partial\bm\chi'}$ with $\partial\bm\chi'=(\partial\chi_{\tilde{v}}\eta^{\delta_{v,\tilde{v}}})_{\tilde{v}\in\mathcal{V}\setminus\{v_0\}}$, Eq.~\eqref{eq_treeWilsonlinenewfac} implies
\begin{equation}\label{Wsonframepuregauge}
\prod_{\ell\in\mathcal{L}}W^{\chi_\ell}_{\gamma_R[v_i,v_f]}=   \sum_{\partial\bm{\eta} \in \hat{G}^{N_V - 1}} \ket{\partial\bm{\chi}\partial\bm\eta}\!\bra{\partial\bm{\eta}}_R \otimes_R \mathbbm{1}_{\text{pn}}\,,
\end{equation}
 where $\partial\bm\chi$ is the Gauss law map, Eq.~\eqref{boundarymap}, and $\partial\bm\chi\partial\bm\eta=\left( \partial\chi_v\partial\eta_v \right)_{v\in\mathcal{V}\setminus\{v_0\}}$. Hence, invoking Lemma~\ref{lem_pipn} gives
\begin{equation}
     \Pi_{\rm pn}\,W^{\bm\chi}\,\Pi_{\rm pn} = H^{\bm\chi_S}\,\delta_{\partial\bm\chi,\bm{1}}\,\Pi_{\rm pn}\,,
 \end{equation} 
which entails the Knill-Laflamme condition in the stated form.
\end{proof}

In other words, a product of Wilson line operators survives conjugation by the code projector if and only if it preserves Gauss's law. In that case, its residual action is purely logical, given by the holonomies $H^{\bm\chi_S}$, which would infringe correctability if they were nontrivial. Thus, two Wilson line product errors are jointly correctable if their Gauss law data $\partial\bm\chi$ differs. 
Different values of $\partial \bm{\chi}$ correspond to orthogonal charge sectors in the redundancy factor and are therefore perfectly distinguishable by the code. 
The structural consequence is immediate: if a set $\mathcal{E}_{\mathcal{I}}=\{W^{\bm\chi}\}_{\bm\chi\in\mathcal{I}\subset\hat{G}^{\times N_L}}$ of Wilson line products is to form a correctable error set, then no two distinct elements in $\mathcal{I}$ may share the same Gauss law data $\partial\bm\chi$. Otherwise, the Kronecker delta in Eq.~\eqref{eq_KLpure} clicks, and one picks up a nontrivial logical action $H^{\bm{\bar{\eta}\chi}_S}$
inside the code, in violation of the Knill-Laflamme condition.\footnote{Of course, the Knill-Laflamme condition would still be obeyed if also $\bar{\bm\eta}\bm{\chi}_S=\bm{1}$ when $\partial\bm\chi=\partial\bm\eta$. However, it follows from the discussion around the Gauss law map in section~\ref{sssec_codespacepure} that under this condition not only the electric data in $S$ coincides, but also on the frame tree Wilson loops. That is, the two errors coincide.} 
In terms of the Gauss law map, $\partial:\hat{G}^{\times N_L}\to\hat{G}^{\times N_V-1}$ in Eq.~\eqref{boundarymap}, this means that $\mathcal{E}_\mathcal{I}$ is correctable if and only if $\mathcal{I}$ contains at most one representative in each fiber $F$ of the Gauss law map.

Maximality of $\mathcal{E}_\mathcal{I}$ enforces the converse requirement: for every Gauss law configuration $\partial\bm{\chi} \in \hat{G}^{N_V - 1}$, $\mathcal{I}$ must contain exactly one representative in each fiber $F$. In other words, a maximal correctable set is obtained by choosing a global section of the Gauss law bundle. 
Concretely, let 
\begin{equation}\label{eq_sectionspureGauss}
    s: \hat{G}^{N_V - 1} \rightarrow \hat{G}^{N_L}, \qquad \partial(s(\partial\bm{\chi})) = \partial\bm{\chi}
\end{equation}
be a global section. The associated maximal set is 
\begin{equation}\label{eq_puremaxset}
    \mathcal{E}_s \coloneqq \{  W^{s(\partial\bm{\chi})} \big| \partial\bm{\chi} \in \hat{G}^{\times N_V - 1}   \}.
\end{equation}

Why are these sets maximal? 
First, we explained above why one cannot add two or more representatives from a given fiber into any $\mathcal{E}_{\mathcal{I}}$ without violating correctability. 
Second, for each $\mathcal{E}_s$, the $W^{s(\partial\bm\chi)}$ exhaust all the possible orthogonal error spaces $\Hil_{\partial\bm\chi}$ in the decomposition in Eq.~\eqref{eq_chargedecomp}. 

An immediate example of a maximal set of correctable Wilson line errors is constituted by the dressing frame fields in Eq.~\eqref{eq_framefield}:
\begin{equation}
    \mathcal{E}_s=\{R_{\partial\bm\chi}\,|\,\partial\bm\chi\in \hat{G}^{\times N_V-1}\}\,.
\end{equation}
This set is distinguished by a trivial action on the tree complement, i.e.\ the system $S$, and so it is the only maximal one that contains no loops. However, there are many other choices. Indeed, our discussion implies that different maximal sets of correctable Wilson line errors are in one-to-one correspondence with different choices of section of $\partial$. Such maximal sets are thus naturally labeled by global sections of the Gauss law map. Changing the section -- and thereby the maximal set -- amounts to multiplying each Wilson line error by an element in the kernel of $\partial$. For example, this could be a Wilson loop, which is a logical operator and yields a fiber translation. 

Lastly, it is also clear what the error syndromes are in this case, namely the Gauss law data $\partial\bm\chi$, i.e.\  the electric vertex charge configurations. Thus, to measure the syndrome, one may measure any set of generators $\{U^g_v\}_{g\in\langle G\rangle}$ of the exponentiated Gauss law at each vertex $v\in\mathcal{V}\setminus\{v_0\}$, where $\langle G\rangle$ denotes any set of group generators of $G$. In summary, in an error context, the input of the Gauss law map are error labels and its output are their syndromes.

\subsubsubsection{Code parameters}\label{ssssec_codeparampure}
To summarize, we have identified the kinematical space and perspective-neutral subspace of the pure gauge theory with the physical space and code subspace of the associated Gauss law code, respectively. On a lattice with $N_L$ links and $N_V$ vertices, the former consists of $N_L$ kinematical $G$-registers living on the links, while the latter arises after imposing the $N_V - 1$ independent Gauss law constraints. This yields $N_L - N_V + 1$ logical $G$-registers, which corresponds to the fibers $F$ in the Gauss law map. These quantities provide the natural generalizations of the code parameters $n$ and $k$ in standard stabilizer codes, corresponding to the number of physical and logical qubits (in our case $G$-registers).

The structure of the stabilizer group $\mathcal{G}$ -- the group of gauge transformations -- is particularly simple: it is generated entirely by $U$-type operators (the $G$-analogs of Pauli $Z$). In particular, there are no $W$-type (Pauli $X$-like) stabilizers. The code is therefore a generalized CSS code \cite{Calderbank:1995dw,Steane:1995vv} with a trivial $W$-stabilizer sector. Equivalently, it behaves as a classical code: only $W$-type errors can be detected, while $U$-type errors act within the code space. This asymmetry is reflected in the code distances. The usual (symmetric) distance $d$, defined as the minimal weight of any nontrivial logical operator, is not a meaningful figure of merit in this setting. Indeed, since a single-link operator $U_\ell^g$ commutes with all Gauss law constraints, it already lies in the logical algebra $\mathcal{A}_{\text{pn}}$ (cf. Eq.~\eqref{Apnpure}). It follows that $d=1$, and more specifically the $U$-distance is $d_U = 1$, reflecting the fact that $U$-type errors cannot be detected or corrected. By contrast, nontrivial logical operators built from $W$-type operators must form closed loops in order to commute with all Gauss law constraints. The $W$-distance $d_W$ is therefore given by the minimal number of links in a non-contractible loop. This depends on the lattice geometry: for instance, a square (cubic) lattice yields $d_W = 4$, while a triangular lattice gives $d_W = 3$. More generally, one always has $d_W \geq 2$. We therefore conclude that the pure gauge sector of an Abelian lattice gauge theory realizes a $G$-generalized $[N_L, N_L - N_V +1, d_W]$ classical bit-flip code, with the code parameters uniquely determined by the lattice parameters and geometry.

\subsubsection{Gauss law codes with bosonic matter as subsystem codes}\label{sssec_bosGauss}
Let us now add bosonic matter on the vertices of the Gauss law code. As reviewed in section \ref{sec:LGT},the Gauss law constraint at a vertex $v$
\begin{equation}
    U^g_v\,\ket{\psi}_{\rm pn}=\ket{\psi}_{\rm pn}
\end{equation}equates the gauge charge flowing out of the vertex from the adjacent links with the charge carried by the matter located at
$v$. The notion of charge is again naturally described by the irreducible unitary representations of $G$, i.e.\ by its characters. In terms of characters (i.e.\ the eigenvalues of electric eigenstates) the Gauss law at $v$ thus reads
\begin{equation}
   \label{eq_bosonicGauss}
   \partial{\chi}_v= 
   \left( \prod_{\ell \in \mathcal{L}_{\text{out}}(v)} \chi_{\ell}   \right) \left( \prod_{\ell \in \mathcal{L}_{\text{in}}(v)} \bar{\chi}_{\ell}     \right) = \bar{\rho}_v\,,
\end{equation}
where the left hand side is the Gauss law map, Eq.~\eqref{boundarymap}, as before, and $\rho_v$ denotes the matter charge. But how do we decompose $\rho_v$ into different matter species contributions? Different species, as well as their particles and antiparticles, will be distinguished by the characters of the group that specify the charge they carry.

\subsubsubsection{Bosonic vertex Hilbert space}\label{ssssec_bosHil}
At this stage, we will deviate from the general discussion and make an assumption that will simplify our subsequent analysis, yet is sufficiently general to encompass a large class of relevant theories, such as, for example, scalar QED. The left hand side of Eq.~\eqref{eq_bosonicGauss} can take arbitrary values in the character group $\hat {G}$. 
To simplify the solution structure of the Gauss law (and thus the permitted code words), we would like there to always be a compensating matter charge configuration on the right hand side. 

This can be achieved by considering a set of $r$ generators $\{\rho_i\}_{i\in\mathbb{Z}_r}$ of the character group $\hat {G}$ and henceforth assuming that we have one bosonic particle type for each generator. 
The type includes not only the species, but also  the distinction between particles and antiparticles of a given species, a split we shall clarify shortly.
Each particle of type $i$ thus contributes a factor of $\rho_i$ to the total (exponentiated) matter charge at $v$
\begin{equation}\label{eq_rhototbos}
    {\rho}_v=\prod_{i=0}^{r-1}{\rho}_i^{n_{v,i}}\,,
\end{equation}
where $n_{v,i}$ denotes the occupation number of type $i$ at $v$. This yields an arbitrary element of $\hat{G}$, provided we permit the occupation number to take any value in $\{0,1,\ldots,D_i-1\}$, where $D_i$ is the order of the corresponding character. 
We distinguish between generators of finite and infinite order. Recall that the order of a character $\rho \in \hat{G}$ is the smallest positive integer $D$ such that $\rho^D = 1$, if such an integer exists; otherwise the character is said to have infinite order.  
Without loss of generality we label the generators such that the first $0\leq p< r$ generators have finite order. Since for a generator $\rho_i$ of finite order the corresponding charge can only take values in the finite cyclic subgroup generated by $\rho_i$, it is indeed natural to represent the associated matter degree of freedom by a $D_i$-level system. Generators of infinite order, on the other hand, are represented by harmonic oscillators.

This permits us to distinguish particle species, as well as particles and antiparticles of a given species. 
A given particle species is characterized by the charge $\rho_i$ it carries, while its antiparticle carries the opposite charge $\bar\rho_i$. 
In our case, since by assumption all $\rho_i$ are generators of $\hat G$, we have that all particle species are nontrivially charged, $\rho_i\neq 1$, $i\in\mathbb{Z}_r$.\footnote{Note that there are ways in which we can think of distinct particle species as having ``the same'' electric charge, despite their character charge thus differing $\rho_i\neq\rho_j$ as elements of $\hat{G}$. For example, when $\hat{G}=\hat{G}_1\times \hat{G}_1\times\cdots$ contains several copies of the same group $\hat{G}_1$, there will exist $i,j\in\mathbb{Z}_r$ such that $\rho_i,\rho_j$ correspond to the same element of $\hat{G}_1$, but different elements of $\hat {G}$. The antiparticle $\bar\rho_j$ therefore cannot annihilate $\rho_i$ (somewhat similarly to how positrons cannot annihilate muons, despite carrying opposite electric charge, because their lepton numbers do not match).} 
This means that each species comes with a pair of nontrivial charges $(\rho_i,\rho_{\bar{i}}\coloneqq\bar\rho_i)$, and so both particle and antiparticle, for otherwise one could not generate the inverse of each group element. 
The question is whether particle and antiparticle correspond to  distinct types $i,\bar i\in\mathbb{Z}_r$. 
This is where a key difference between particle species of finite and infinite character order arises: in the finite case, we have $\bar\rho_i=\rho_i^{D_i-1}$, so that the antiparticle is not an independent type, but corresponds to $D_i-1$ particles (and vice versa). By contrast, in the infinite case, we necessarily have that $\rho_i$ and $\bar\rho_i$ are independent generators, so that particle and antiparticle have to be treated as distinct types. In that case, we henceforth denote the type of the particle by $i$ and the type of the antiparticle by $\bar i$.

As we will illustrate in section~\ref{ssec_scalarQED}, scalar QED obeys these assumptions, where a complex scalar field provides harmonic oscillator modes carrying opposite $\rm{U}(1)$ charges, allowing positive and negative electric charges to be balanced locally. In particular, the character group $\widehat{\rm{U}(1)}=\mathbb{Z}$ indeed has two generators, one for positive and one for negative charge, and they correspond precisely to particle and antiparticle. The construction above simply generalizes this feature to any compact Abelian group. Thus, we can now write the vertex Hilbert space as $\mathcal{H}_v =\bigotimes_{i=0}^{r-1}\Hil_{v,i}$
where $\Hil_{v,i}$ is the Hilbert space of type $i\in\mathbb{Z}_r$ at $v$. The above also implies that to describe the matter states, it is convenient to work in the occupation number basis. This basis is defined by the simultaneous eigenvectors of the number operators $N_{v,i}$, which count the number of quanta of type $i$ at vertex $v$. The vertex Hilbert space may thus also be written as
\begin{equation} \label{eq:bosonic_vertexH}
     \mathcal{H}_v = \text{Span} \left\{ \ket{\bm{n_v}}_v \big| \bm{n_v} \in \mathcal{N}    \right\},
\end{equation}
where the multi-index $\bm{n_v} = (n_{v,0}, \cdots, n_{v, r-1})$ records the occupation numbers of all types. The allowed values are 
 \begin{equation}
     \mathcal{N} = \left\{ \bm{n} = (n_i)_{i=0}^{r-1} \big| n_i \in \mathcal{N}_i  \right\},\qquad\qquad
     \mathcal{N}_i=\begin{cases}
         \mathbb{Z}_{D_i}\,,&D_i<\infty\\
         \mathbb{N}_0\,,&\text{otherwise}\,,
     \end{cases}
 \end{equation}
and the number operators act in the usual way,
  \begin{equation}\label{eq_numop}
     N_{v,i} \ket{\bm{n_v}}_v = n_{v,i} \ket{\bm{n_v}}_v.
 \end{equation}
This puts us finally into the position to specify the unitary representation $u_v$ of gauge transformations on the matter more explicitly. The matter fields carry a representation $\rho_i$ of the gauge group. Eqs.~\eqref{eq_bosonicGauss} and~\eqref{eq_rhototbos} imply 
\begin{equation}\label{eq_gaugegrouprepbos}
     u_v^g = \prod_{i=0}^{r-1} \rho_i^{N_{v,i}}(g)\,,\qquad\qquad \forall\, g\in G\,.
\end{equation}
This expression simply states that each particle of type $i$ contributes a factor $\rho_i(g)$  to the gauge transformation. Consequently, the total phase acquired by a state is determined by the total charge stored at the vertex. This representation naturally decomposes the vertex Hilbert space into charge sectors (irreps). Indeed, the action of $u_v^g$ diagonalizes the space into subspaces labeled by characters $\rho \in \hat{G}$:
  \begin{equation}
     \mathcal{H}_v 
     = \bigoplus_{\rho \in \hat{G}} \mathcal{H}_v^{\rho}.
 \end{equation}
The subspace $\mathcal{H}_v^{\rho}$ consists of all states that transform with character $\rho$, 
\begin{equation}
    \mathcal{H}_v^{\rho} = \left\{  \ket{\varphi}_v \in \mathcal{H}_v  \big| u_v^g \ket{\varphi}_v = {\rho}(g)\ket{\varphi}_v \forall g \in G     \right\}.
\end{equation}
In the occupation basis, this condition has a simple interpretation: the total charge of the configuration must equal $\rho$.  Accordingly,
\begin{equation}  
  \mathcal{H}_v^{\rho}=  \text{Span} \left\{ \ket{\bm{n_v}}_v \big| \prod_{i=0}^{r-1} \rho_i^{n_{v, i}} = \rho  \right\}.
\end{equation}
For example, the right hand side of Eq.~\eqref{eq_bosonicGauss} corresponds to a state from $\Hil_v^{\bar\rho_v}$.

Creation and annihilation operators move states between these charge sectors. Acting on an occupation state, they change the number of quanta of the corresponding species. The annihilation operator for type $i$ acts as
\begin{equation}
    a_{v, i} \ket{\bm{n_v}}_v = C^{-}_i(n_{v,i}) \ket{\bm{n_v} - \bm{e_i}}\,,\qquad\qquad
    (\bm{e_i})_j=\delta_{i,j}\,,
\end{equation}
where the coefficients take the familiar harmonic oscillator form,
\begin{equation}
  C^-_i(n_{v,i})=\sqrt{n_{v,i}}\,,\qquad p<i<r\,,\qquad\qquad  C^{-}_i(n_{v,i}) = \begin{cases}
        \sqrt{n_{v,i}} & 0\leq n_{v,i}\leq D_i-1 \\
        0 & \text{otherwise}
    \end{cases}\,,\qquad 0\leq i\leq p\,,
\end{equation}
except that in the finite order case, the operator is truncated at level $D_i$. 
In that case, it may also be written as
\begin{equation}\label{eq_finiteanihil}
    a_{v,i} = \sqrt{D_i-1}\ket{D_i-2}\!\bra{D_i-1}_{v,i}+\cdots+\sqrt{m}\ket{m-1}\!\bra{m}_{v,i}+\cdots+\ket{0}\!\bra{1}_{v,i}=\sum_{n=0}^{D_i-2}\sqrt{n+1}\ket{n}\!\bra{n+1}_{v,i}\,.
\end{equation}
The creation operators are simply their Hermitian conjugates, acting as
\begin{equation}
    a_{v, i}^{\dagger} \ket{\bm{n_v}}_v = C^{+}_i(n_{v,i}) \ket{\bm{n_v} + \bm{e_i}}\,,\qquad\qquad C_i^+(n_{v,i})\coloneqq C_i^-(n_{v,i}+1)  \,.
\end{equation}
In both the finite and infinite order case, we can then write the number operator in Eq.~\eqref{eq_numop} as usual
\begin{equation}
    N_{v,i}=a_{v,i}^\dag\,a_{v,i}\,,
\end{equation}
but the basic commutation relations differ in the two cases
\begin{equation}\label{eq_aadagcommutator}
    [a_{v,i},a_{v,i}^\dag]=\mathbbm{1}\,,\qquad p\leq i<r\,,\qquad\qquad [a_{v,i},a_{v,i}^\dag]=\mathbbm{1}-(D_i-1)\ket{D_i-1}\!\bra{D_i-1}_{v,i}\,,\qquad 0\leq i< p\,.
\end{equation}
The latter identity is, of course, a manifestation of the absence of canonical commutation relations in finite dimensions. From the right relation, it may seem at first that in the finite order case the algebra generated by $a_{v,i},a_{v,i}^\dag$ does not include the identity. However, noting that 
\begin{equation}\label{eq_highestweight}
   \ket{D_i-1}\!\bra{D_i-1}\propto (a_{v,i}^\dag)^{D_i-1}\,a_{v,i}^{D_i-1}  \,,
\end{equation}
we see that also in the finite case this algebra is unital.

From the definition of $u_v^g$, one immediately obtains how the ladder operators transform. They carry definite gauge charge: 
    \begin{align}
        a_{v, i} &\to U^{\bm{g}}\,a_{v, i}\,U^{\bm{g} \dagger} = \bar{\rho_i}{(g_v)}\,a_{v, i}\nonumber\\
        a_{v, i}^{\dagger} &\to U^{\bm{g}}\,a_{v, i}^{\dagger}\,U^{\bm{g} \dagger} = \rho_i(g_v)\,a_{v, i}^{\dagger}\,.\label{eq_laddertrans}
    \end{align}
Thus, $a_{v,i}$ lowers the $\rho_i$-charge at $v$, while $a_{v,i}^{\dagger}$ raises it, in line with the above discussion.

For the purpose of constructing the gauge-invariant matter algebra later, let us briefly clarify the status of creation and annihilation operators for pairs of particle $i$ and antiparticle $\bar i$. 
Since they carry opposite charge, they come in pairs of oppositely transforming ladder operators 
\begin{align}
    (a_{v,i},a_{v,\bar{i}})&\to \left(\bar{\rho}_i(g_v)\,a_{v,i},\rho_i(g_v)\,a_{v,{\bar i}}\right)\nonumber\\
    (a^\dag_{v,i},a^\dag_{v,\bar{i}})&\to \left({\rho}_i(g_v)\,a^\dag_{v,i},\bar\rho_i(g_v)\,a^\dag_{v,\bar i}\right)\,.\label{eq_partantiparttransform}
\end{align}
As noted above, when the particle species carry character charge of infinite order (as in, e.g., in scalar QED), we have that $\rho_i$ and $\rho_{\bar i}\coloneqq\bar\rho_i$ are independent generators in $\hat{G}$, which also means that $a_{v,i},a_{v,\bar i}$ are independent operators, and likewise for their creation counterparts. 

By contrast, when the character charge is of finite order, due to $\rho_{\bar i}=\bar{\rho}_i = \rho_i^{D_i-1}$, we have that the antiparticle is not independent, but corresponds to $D_i-1$ particles. 
Similarly, due to $\bar\rho_i^{n_{v,\bar i}}=\rho_i^{D_i-n_{v,\bar i}}$, $n_{v,\bar i}$ antiparticles coincide with $D_i-n_{v,\bar i}$ particles (and vice versa). 
Hence, in analogy to Eq.~\eqref{eq_finiteanihil}, the antiparticle annihilation operator for finite order type $i$ reads (in the occupation basis of the \emph{particle} of type $i$)
\begin{equation}
    a_{v,\bar i}=\ket{0}\!\bra{D_i-1}_{v,i}+\sqrt{2}\ket{D_i-1}\!\bra{D_i-2}_{v,i}+\cdots+\sqrt{D_i-1}\ket{2}\!\bra{1}_{v,i}=\sum_{m=1}^{D_i-1}\sqrt{D_i-m}\ket{m+1}\!\bra{m}_{v,i}\,,
\end{equation}
where $\ket{D}_{v,i}\equiv\ket{0}_{v,i}$. 
After a tedious exercise and invoking Eq.~\eqref{eq_highestweight}, it may be checked that
\begin{equation}\label{eq_finiteantianil}
    a_{v,\bar i}=\sum_{k=1}^{D_i-2}\sum_{m=1}^k\sqrt{\frac{D_i-m}{m!(m+1)!}}\frac{(-1)^{k-m}}{(k-m)!} \left(a_{v,i}^\dag\right)^{k+1}\,a^k_{v,i}+\frac{1}{\sqrt{(D_i-1)!}}\,a_{v,i}^{D_i-1}\,.
\end{equation}
That is, the annihilation operator of the antiparticle is a power series of the creation and annihilation operators of the particle, so that it is not independent.
As usual, the creation operator of the antiparticle is simply its Hermitian conjugate $a_{v,\bar i}^\dag$, and we now have $[a_{v,\bar i},a_{v,\bar i}^\dag]=\mathbbm{1}_{v,i}-(D_i-1)\ket{1}\!\bra{1}_{v,i}$. Using a similar argumentation as above, we note that the algebra generated by $a_{v,\bar i},a_{v,\bar i}^\dag$ is also unital.

It is then simple to check that the particle and antiparticle number operators obey 
\begin{equation}
   N_{v,i}+ N_{v,\bar i}=D_i\left(\mathbbm{1}_{v,i}-\ket{0}\!\bra{0}_{v,i}\right)=0\mod D_i\,,\qquad\qquad N_{v,\bar i}\coloneqq a^\dag_{v,\bar i}\,a_{v,\bar i} \,,
\end{equation}
making explicit the statement that $n_{v,\bar i}$ antiparticles correspond to $D_i-n_{v,\bar i}$ particles, and vice versa (when $n_{v,\bar i}>0$). 
In particular, the addition of an antiparticle corresponds to the addition $\!\!\!\mod D_i$ of $D_i-1$ particles: $a^\dag_{v,\bar i}\ket{n_{v,i}}_{v,i}\propto\ket{n_{v,i}+D_i-1\mod D_i}_{v,i}$.
Crucially, for the dressing formulae later, it is easy to see that $a_{v,\bar i}$ in Eq.~\eqref{eq_finiteantianil} (and thus also $a_{v,\bar i}^\dag$) transforms in the correct way given in Eq.~\eqref{eq_partantiparttransform} because each of the summands does.

\subsubsubsection{Code words and logical algebra from relational observables}\label{ssssec_boscodespace}
Let us now specify the structure of the code space, Eq.~\eqref{eq_codepn}, for this bosonic Gauss law code in more detail. We will then complete the discussion with an exposition of the corresponding (encoded) logical algebra, before turning to errors in the subsection thereafter. 

We will make use of several insights from our previous discussion of pure gauge Gauss law codes. For example, we will use the same spanning tree $R$ to define a lattice QRF. We begin by invoking the Gauss law aligned and frame-induced tensor product structure of section~\ref{ssssec_GausTPS} for the $\rm{gauge}$ part of the kinematical Hilbert space in Eq.~\eqref{eq_Hkinbos}
  \begin{equation}\label{eq_Hkinbosfac}
        \mathcal{H}_{\text{kin}} = \mathcal{H}_{\text{loops}} \otimes_R \mathcal{H}_R \otimes \mathcal{H}_{\text{matter}},
    \end{equation}
where $\mathcal{H}_{\text{loops}}$ coincides with the perspective-neutral subspace of the pure gauge theory, except that we now label it by `$\rm{loops}$' for distinction.  This factorization separates the gauge-invariant loop degrees of freedom from the QRF and the matter excitations. 
As we now have gauge-variance only to the right of $\otimes_R$, this already tells us that the new code space will have a subsystem code structure \cite{kribs2006, Poulin_2005} that we further discuss below.

As before, $\Hil_{\rm loops}$ is spanned by a gauge-invariant magnetic and electric basis, Eqs.~\eqref{magbas} and~\eqref{elecbas}, obeying Eq.~\eqref{eq_mageleceigenbasis} with only the label change $\rm{pn}\to\rm{loops}$. Invoking also the electric frame and matter occupation bases, we may expand an arbitrary kinematical state in the basis
\begin{equation}\label{eq_bosbasis}
\ket{\bm{g}_S, \partial\bm{\chi}, {\bf n}} = \ket{\bm{g}_S}_{\text{loops}} \otimes_R \ket{\partial\bm{\chi}}_R \otimes \ket{\bf{n}}_{\text{matter}}\,,\qquad\qquad\ket{\bf{n}}_{\rm matter}=\otimes_{v\in\mathcal{V}}\ket{\bm{n}_v}_v\,.
\end{equation}
Gauge transformations are diagonal in this basis.
For a vertex $v \neq v_0$, the action reads  \begin{equation}
    U_v^g \ket{\bm{g}_S, \partial\bm{\chi}, {\bf n}} =  \partial\chi_v(g)\,\rho_v(g)\,\ket{\bm{g}_S, \partial\bm{\chi}, {\bf n}}\,,\label{gauge_at_root_for_bosons}
\end{equation}
where $\rho_v$ is the total matter charge in Eq.~\eqref{eq_rhototbos}.

The transformation associated with the root $v_0$ requires an additional comment. The reason is simply that, in the presence of matter, the chosen frame $R$ is incomplete: it parametrizes all gauge redundancies except the one at $v_0$. This is not a physical complication but merely a bookkeeping issue. The missing transformation can be expressed in terms of the remaining local transformations together with a global gauge transformation that acts uniformly across the lattice. Concretely, one may write
\begin{equation}
    U_{v_0}^g = U_{\text{global}}^g \left( \prod_{v \in \mathcal{V} \setminus \{ v_0 \} } U_v^{g^{-1}}   \right).
\end{equation}
Since the global transformation acts trivially on the frame degrees of freedom, its action on the basis states can be evaluated straightforwardly:
\begin{eqnarray}\label{bosonicv0transf}
        U_{v_0}^{g} \ket{\bm{g}_S, \partial\bm{\chi}, {\bf n}}=\left( \prod_{v \in \mathcal{V}} {\rho}_v(g)  \right) \!\left(  \prod_{v \in \mathcal{V} \setminus \{ v_0  \}} \partial\bar{\chi}_v(g) \bar{\rho}_v(g)   \right) \ket{\bm{g}_S, \partial\bm{\chi}, {\bf n}}.
\end{eqnarray}
Thus, gauge invariance at $v\neq v_0$ entails the Gauss law in the form Eq.~\eqref{eq_bosonicGauss} with Eq.~\eqref{eq_rhototbos},  while invariance under the transformation at the root imposes an additional global constraint: the total matter charge across the lattice must vanish,
\begin{equation}\label{eq_bostotalchargezero}
    \prod_{v \in \mathcal{V}} \rho_v = 1.
\end{equation}
Since the generators $\rho_i$ are independent, this also means that we have a neutrality condition per species 
\begin{equation}
\sum_{v\in\mathcal{V}}(n_{v,i}-n_{v,\bar i})=0\,,\qquad p\leq i<r\,,\qquad\qquad\sum_{v\in\mathcal{V}} n_{v,i}=0\mod D_i\,,\qquad 0\leq i< p\,.
\end{equation}
Thus, for types of finite character order, this leads to distinct charge sectors labeled by $m$, all of which are consistent with global neutrality:
\begin{equation}\label{eq_chargesector}
    \sum_{v\in\mathcal{V}}n_{v,i}=m D_i\,,\qquad\qquad m=0,1,\ldots,M\,,
\end{equation}
where $M$ depends on the total number of vertices and is finite when that number is finite.

Taken together, these conditions determine the structure of the perspective-neutral (code) subspace:
\begin{equation}
    \mathcal{H}_{\text{pn}} = \text{Span} \left\{ \ket{\bm{g}_S}_{\text{loops}} \otimes_R \ket{\bf{n}}^{\text{dr}}_{\text{matter}} \Big| \bm{g}_S \in G^{N_L - N_V + 1}, {\bf n} \in \mathcal{N}^{N_V}, \prod_{v \in \mathcal{V}} \prod_{i=0}^{r-1} \rho_i^{n_{v,i}} = 1     \right\}.
\end{equation}
The dressed matter states appearing here combine the gauge field variables with the matter occupations in a gauge-invariant way. Concretely, with $\bf{n}$ subject to the global charge neutrality condition,
\begin{equation}\label{partialchin}
    \ket{\bf{n}}^{\text{dr}}_{\text{matter}} \coloneqq \ket{\partial\bm{\chi}^{(\bf{n})}}_R \otimes \ket{\bf{n}}_{\text{matter}}\,,\qquad\qquad{\bf n}\in\widetilde{\mathcal{N}} = \left\{ {\bf n} \in \mathcal{N}^{N_V} \;\middle|\; \prod_{v \in \mathcal{V}} \prod_{i=0}^{r-1} \rho_i^{n_{v,i}} = 1 \right\}\,,
\end{equation}
where the gauge charge $\partial\chi_v^{({\bf n})}$ at $v$ is determined by the matter configuration via Eqs.~\eqref{eq_bosonicGauss} and~\eqref{eq_rhototbos}. Hence, as anticipated, the code space factorizes into a loop sector and a dressed matter sector,
\begin{equation}\label{eq_subsystemcode}
    \mathcal{H}_{\text{pn}} = \mathcal{H}_{\text{loops}} \otimes_R \mathcal{H}^{\text{dr}}_{\text{matter}}\,,
\end{equation}
yielding a subsystem code structure that we explore further in the context of errors below. 

This subsystem code structure is, of course, reflected in the encoded logical algebra, i.e.\ the perspective-neutral one, which too factorizes
\begin{equation}
    \mathcal{A}_{\text{pn}} = \mathcal{A}_{\text{loops}} \otimes_R \mathcal{A}^{\text{dr}}_{\text{matter}}\,,
\end{equation}
where the loop algebra is given in Eq.~\eqref{Apnpure}.\footnote{With the subtle difference that now $\Pi_{\rm pn}=\int_{\mathcal{G}}{\dee}{\bm g}\,U^{\bm g}$ with $U^{\bm g}$ given in Eq.~\eqref{eq_gaugetrbos}, i.e.\ now including the matter.} 
On the other hand, we show in appendix~\ref{app_bosalgebra} that the QRF-dressed bosonic matter algebra is given by
\begin{equation}\label{eq_drmatteralg}
\mathcal{A}^{\text{dr}}_{\text{matter}}=\Big\langle a_{v,\bar{j}}^{\dagger} W_{\gamma_R[v,v']}^{\rho_j} a_{v',{j}}^\dagger\Pi_{\rm pn}\,,a_{v,j} W_{\gamma_R[v,v']}^{\rho_j} a_{v',j}^{\dagger}\Pi_{\rm pn}
\,,
a_{v,\bar{j}}^{\dagger} W_{\gamma_R[v,v']}^{\rho_j} a_{v',\bar{j}}\Pi_{\rm pn}\,,\text{h.c.}\,\Big|\, v,v'\in\mathcal{V}\,,j\in\mathbbm{Z}_r\Big\rangle\,,
\end{equation}
where h.c.\ means that we also include the Hermitian conjugates of the exhibited generators.\footnote{$\mathcal{A}^{\rm dr}_{\rm matter}$, like all our algebras, is an algebra of bounded operators. Some of the generators in Eq.~\eqref{eq_drmatteralg} are not bounded, however, when they include infinite order characters. E.g., the number operator for species described by oscillators is not bounded. In that case, the algebra includes bounded functions of the generators.} 
In words, the dressed matter algebra is generated by all net-zero charge \emph{pairs} of creation and annihilation operators for all species, including their antiparticles, concatenated by the tree Wilson line linking the pertinent vertices in the appropriate representation. 
Their gauge-invariance follows directly from Eq.~\eqref{eq_partantiparttransform}. 
Note that the algebra includes gauge-invariant pairs of creation and annihilation operators at each vertex $v\in\mathcal{V}$, such as, e.g., number operators $N_{v,i}$, by setting $v=v'$ above, which leads to a trivial Wilson line, $W^\eta_{\gamma_R[v,v]}=\mathbbm{1}$. 

Furthermore, also for finite order species, the Wilson line dressed particle-antiparticle pairs are independent of the dressed particle-particle pairs. For example, this can be seen from the fact that all particle-particle pairs involve annihilation operators and so annihilate the vacuum, whereas particle-antiparticle pairs, such as $a^\dag_{v,\bar j}W^{\rho_j}_{\gamma_R[v,v']}a_{v',j}^\dag$ map the vacuum, i.e.\ the $m=0$ charge sector in Eq.~\eqref{eq_chargesector}, to the $m=1$ charge sector of species $j$, while maintaining global net neutrality. When the state of $\mathcal{H}_{v,j}$ is not in the vacuum, the same operator preserves the charge sector, however. Similarly, its conjugate $a_{v,\bar j}W^{\bar\rho_j}_{\gamma_R[v,v']}a_{v',j}$ changes the charge sector $m\to m-1$ when the state in $\Hil_{v,j}$ is the highest occupied particle state $\ket{D_i-1}$ and leaves it invariant otherwise.

\begin{figure}[t]
\centering
\includegraphics[width=0.4\textwidth]{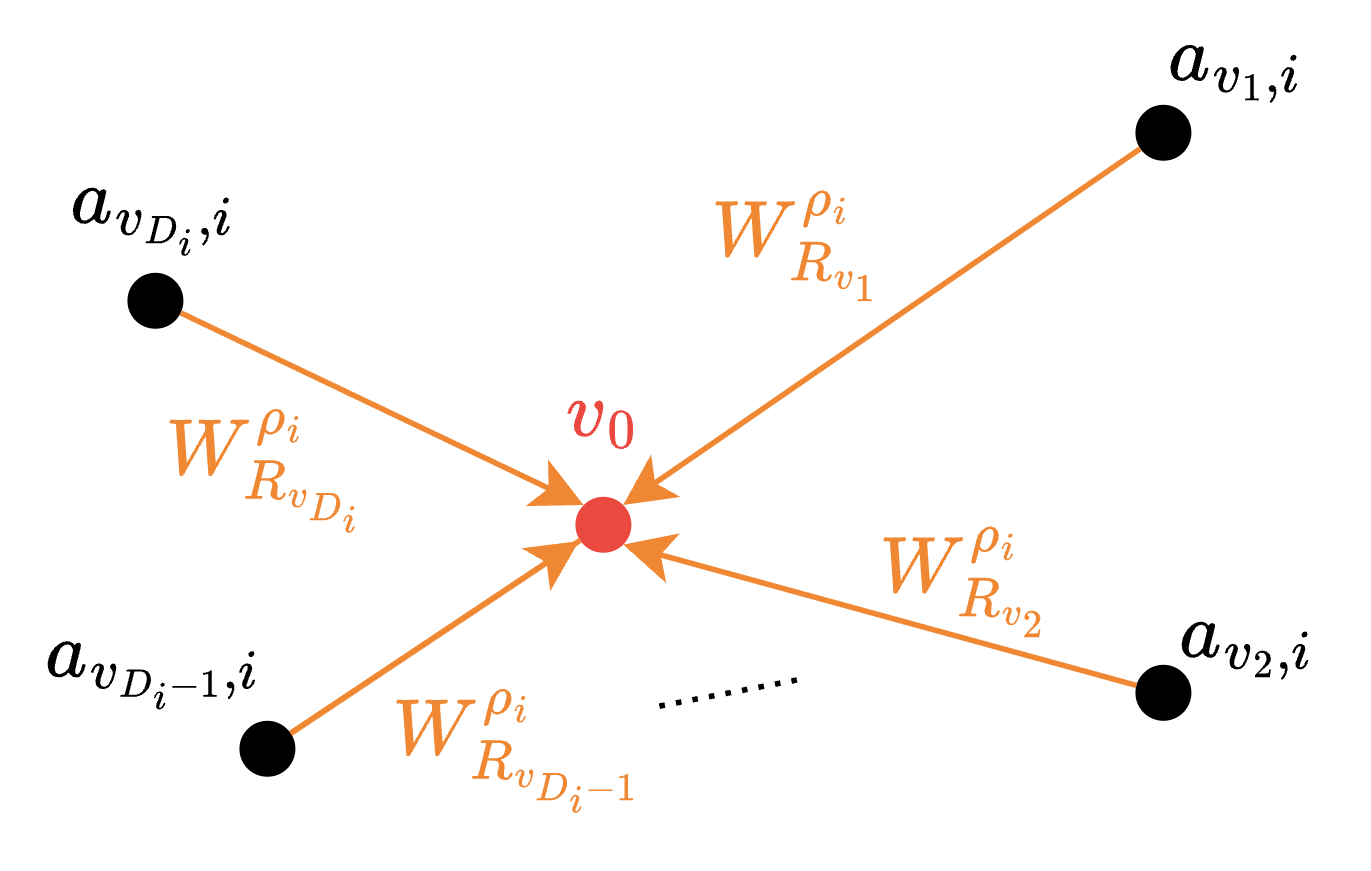}
\caption{Graphical depiction of the gauge invariant star operator $S_{\{ v_k \}}^{(i)}$ in Eq.~\eqref{eq_star} for a particle type $i$ carrying character-valued charge of finite order $D_i$.} 
\label{fig:star}
\end{figure}
It is also worth remarking that, only in the finite order case, one can generate gauge-invariant \emph{star operators} centered at any vertex $v$ from the bipartite generators, such as the following example at $v_0$ (see Fig.~\ref{fig:star}):
\begin{equation}\label{eq_star}
   S_{\{v_k\}}^{(i)} \coloneqq \prod_{k=1}^{D_i} a_{v_k,i}W_{R_{v_k}}^{\rho_i}\,.
\end{equation}
Using $\rho_i^{D_i}=1$ it can be readily checked that this object is invariant. In this example, the $v_k$ may also coincide and any $a_{v_k,i}$ may be replaced by $a_{v_k,\bar i}^\dag$. Similarly, one can connect many of such star operators along the tree graph.

\subsubsubsection{Maximal correctable error sets as sections of the extended Gauss law map}\label{sec:bosmaxsets}
Having identified the structure of the logical operator algebra, we can now turn to the question of errors. In the present setting, there are two natural types to consider: operators built from Wilson lines, which act on the connection degrees of freedom, and undressed operators that act on the matter sector. Understanding how these two classes behave under conjugation with the projectors onto the perspective–neutral subspace will allow us to complete our interpretation of the theory as a quantum error correcting code by identifying the maximal correctable sets.

Of course, our error analysis from the pure gauge Gauss law code remains valid for Wilson line operators. In particular, Proposition~\ref{prop_KLpure}, establishing the Knill-Laflamme condition for Wilson line errors, continues to hold. At this stage, the factorization of the code space in Eq.~\eqref{eq_subsystemcode} becomes important. Since the holonomy operator $H^{\bm{\bar{\chi} \eta }_S}$ appearing on the right hand side of Eq.~\eqref{eq_KLpure} acts entirely within the loops sector, it leaves the dressed matter degrees of freedom unaffected. 
Accordingly,
\begin{equation}
    \Pi_{\text{pn}} W^{\bar{\bm{\chi}}} W^{\bm{\eta}} \Pi_{\text{pn}} \in \left( \mathcal{B}(\mathcal{H}_{\text{loops}}) \otimes_R \mathbbm{1}^{\text{dr}}_{\text{matter}}   \right) \Pi_{\text{pn}}\,,
\end{equation}
which obeys the conditions of a \emph{subsystem stabilizer code} \cite{kribs2006, Poulin_2005}.

Indeed, we may view the loop sector as the ``gauge subsystem'' of the subsystem code.\footnote{We emphasize that the notion of \emph{gauge} subsystem in subsystem codes \cite{kribs2006, Poulin_2005} is not to be confused with the notion of gauge in our lattice gauge theory context.} Wilson line errors $W^{\bm{\chi}}$ act nontrivially on this ``gauge subsystem'', but they act trivially on the dressed matter sector. In the language of subsystem quantum error correction, the dressed matter degrees of freedom are therefore perfectly correctable with respect to Wilson line errors modulo the ``gauge subsystem'' $\mathcal{H}_{\text{loops}}$. This means that we may regard the dressed matter sector as a code on its own and store information in it, ignoring the loops factor as a redundancy. Abelian lattice gauge theories with bosonic matter thus naturally realize the structure of a subsystem stabilizer code.

Both to treat the dressed matter part as a code in its own right and to characterize maximal sets of correctable errors (when we $\Hil_{\rm pn}$ as the full code space), we must enlarge the discussion to include error candidates acting on the matter degrees of freedom. 
Our aim is to describe them by suitably generalized Pauli operators. For bosonic matter types of finite character order, this follows standard qudit stabilizer code constructions, e.g.\ see \cite{Rains:1997uh,Gottesman:1998se,Ashikhmin:2000ylq,eczoo_qudit_stabilizer,Wang:2020ife,Spagnoli:2026qni}. For infinite order type, the situation is slightly more subtle. 

To start, we recall that Pauli errors in ordinary Pauli stabilizer codes produce error syndromes valued in the (Pontryagin dual) character group of the stabilizer group, permitting one to label these errors by dual group elements, e.g.\ see \cite{Carrozza:2024smc}. 
Recalling that in the bosonic Gauss law codes the stabilizer group is given by gauge transformations Eq.~\eqref{eq_gaugetrbos}, this means that, when translated into the matter sector, we look for generalized Pauli errors for any boson of type $i$ at any vertex $v$ that transforms under the gauge transformations acting on $\Hil_{v,i}$ given by $u_{v,i}^g=\rho_i^{N_{v,i}}(g)$ (cf.~Eq.~\eqref{eq_gaugegrouprepbos}). 
Specifically, the generalized Pauli $X$-errors for bosons of type $i$ shall be labeled by elements in the character subgroup generated by $\rho_i,\bar\rho_i$, which also comprise their syndromes. The explicit realization of these operators depends on whether the corresponding charge sector has finite or infinite order. 

For matter species whose charges have finite order $D_i$, $0 \leq i < p$, the local Hilbert space at each vertex is $D_i$-dimensional. In this case the shift and phase operators are given by the generalized Pauli operators \cite{Rains:1997uh,Gottesman:1998se,Ashikhmin:2000ylq,eczoo_qudit_stabilizer,Wang:2020ife,Spagnoli:2026qni} 
\begin{equation}\label{eq_finorderX}
    X_{v,i} \ket{n_{v,i}} = \ket{n_{v,i} + 1}, \qquad Z_{v,i} \ket{n_{v,i}} = e^{\frac{2\pi i}{D_i} n_{v,i}} \ket{n_{v,i}},
\end{equation}
where the addition in the first equation is understood modulo $D_i$. Importantly, these are unitary and $X_{v,i}$ constitutes a representation of the generator $\rho_i$ of the Pontryagin dual $\hat{G}_i\simeq\mathbb{Z}_{D_i}$, while $Z_{v,i}$ is a representation of a generator $g_i$ of the gauge transformations $u_{v,i}^g$ on $\Hil_{v,i}$, which too are isomorphic to $G_i\simeq\mathbb{Z}_{D_i}$. Indeed, $\rho_i(g_i)=e^{\frac{2\pi i}{D_i}}\in\rm{U}(1)$ is the dual pairing between $\rho_i$ and $g_i$, while $n_{v,i}$ is in one-to-one correspondence with the elements of $\hat{G}_i$, thanks to $\rho_i^{n_{v,i}}$, $n_{v,i}\in\{0,1,\ldots,D_i-1\}$, parametrizing that group. 
In particular, the left equation provides a unitary realization of $W_{v,i}^{\rho_i}\ket{\rho_i^{n_{v,i}}}_{v,i}=\ket{\rho_i\rho_i^{n_{v,i}}}_{v,i}$, cf.~Eq.~\eqref{eq_Pdualrep}.

For the remaining species ( $p \leq j < r$), corresponding to characters of infinite order, the spectrum of the charge operator is unbounded and the shift operator is no longer unitary. In this case, the appropriate operators are
\begin{equation}
     X_{v,j} = a_{v,j}^{\dagger}(N_{v,j} + 1)^{-\frac{1}{2}}, \qquad Z_{v,j}(\theta_{v,j}) = e^{i\theta_{v,j}N_{v,j}},\label{eq_inforderX}
\end{equation}
with $\theta_{v,j} \in [0, 2\pi)$. In the occupation number basis, they act as
\begin{equation}
    X_{v,j} \ket{n_{v,j}} = \ket{n_{v,j} + 1}, \qquad Z_{v,j}(\theta_{v,j}) \ket{n_{v,j}} = e^{i \theta_{v,j}n_{v,j}} \ket{n_{v,j}}.
\end{equation}
While $Z_{v,j}$ generates a unitary representation of $G_j=\rm{U}(1)$ on $\Hil_{v,j}$ (to which the action of gauge transformations $G$ on $\Hil_{v,j}$ is isomorphic), $X_{v,j}$ does not by itself generate a representation of its dual $\hat{G}_j=\mathbb{Z}$. This is in line with the fact that, in the infinite order case, $\rho_i$ alone does not generate a subgroup of $\hat{G}$, for which its independent counterpart $\bar\rho_i$ is needed. 

However, there is a sense in which $X_{v,j}$ together with its antiparticle version $X_{v,\bar j}$ generates a unitary representation of $\mathbb{Z}$, albeit on a distinct ``coarse-grained'' Hilbert space of occupation number equivalence classes. For what follows, it is nevertheless instructive to briefly consider the corresponding construction. Defining
\begin{equation}\label{eq_coarsegrain}
 \ket{\Delta_{j\bar j} n}_{v,j\bar j}\coloneqq \big[\ket{n_{v,j},n_{v,\bar j}}_{v,j,\bar j}\in\Hil_{v,j}\otimes\Hil_{v,\bar j}\,\big|\, n_{v,j}-n_{v,\bar j}=\Delta_{j\bar j}n\big] \,,
\end{equation}
where the latter denotes the equivalence class of occupation number eigenstates such that the difference between particle and antiparticle occupation is precisely $\Delta_{j\bar j}n$. These equivalence classes are comprised of the eigenstates of $N_{v,j}-N_{v,\bar j}$ with fixed eigenvalue.  This space of equivalence classes has a linear structure and can be turned into a Hilbert space. Clearly, $\Delta_{j\bar j}n\in\hat{G}_j=\mathbb{Z}$ and, defining,
\begin{equation}
    W^{\rho_j}_{j\bar j}\ket{\Delta_{j\bar j} n}_{v,j\bar j}\coloneqq \big[X_{v,j}\ket{n_{v,j},n_{v,\bar j}}_{v,j,\bar j}\big]=\ket{\Delta_{j\bar j}n+1}_{v,j\bar j}\,,
\end{equation}
and similarly for $W^{\bar\rho_j}_{j\bar j}$ in terms of $X_{v,\bar j}$, which yields a shift $\Delta_{j\bar j}n\to\Delta_{j\bar j}n-1$, we have the ingredients for a unitary representation of the generators $\rho_i,\bar\rho_i$ of $\hat{G}_j=\mathbb{Z}$ (subject to building an appropriate inner product on the space of equivalence classes and completing in norm). This will be dual to a unitary representation of $\rm{U}(1)$ generated by $Z_{v,j\bar j}=e^{i(N_{v,j}-N_{v,\bar j})}=Z_{v,j}\otimes Z_{v,\bar j}$ on this space, provided we set $\theta_{v,j}=-\theta_{v,\bar j}$.

In what follows, we return to our previous ``fine-grained'' Hilbert space, on which we have to contend with the fact that infinite order characters do not admit a unitary representation, and Eq.~\eqref{eq_inforderX} is the best we can achieve. The above digression is useful to understand the structural obstructions to their unitary representation on our Hilbert spaces and will become relevant later. Effectively, our matter Hilbert space keeps track of ``too much'' information for the representation to be unitary.

Both sets of operators, Eqs.~\eqref{eq_finorderX} and~\eqref{eq_inforderX}, satisfy the Weyl relations,
\begin{equation}\label{eq:lattice_weyl_finite}
    Z_{v,i}X_{v,i} = e^{ \frac{2 \pi i}{D_i}}X_{v,i}Z_{v,i}, \qquad 0 \leq i < p\,,
\end{equation}
for the finite order charges, and 
\begin{equation}\label{eq:lattice_weyl_infinite}
    Z_{v,j}(\theta_{v,j})X_{v,j} = e^{i \theta_{v,j}} X_{v,j}Z_{v,j}(\theta_{v,j}), \qquad p \leq j < r
\end{equation}
for the infinite order ones. This relation expresses the dual pairing between translations of the occupation number and the action of gauge transformations on the bosonic matter. 

The operators $X$ implement shifts of the occupation number
$n$, while the operators 
$Z$ realize phases that depend on $n$. 
In both cases, although these phases are defined with respect to the occupation number variable, they will shortly be related to the characters of the gauge group through Gauss’s law, Eqs.~\eqref{eq_bosonicGauss} and~\eqref{eq_rhototbos}, which ties the matter charges to the gauge generators. In this way, the resulting error syndromes become character-valued, as desired. Specifically,  the $Z_{v,i}$ may be used in syndrome measurements.

Let us consider again matter of infinite character order. While we noted that we cannot implement $X_{v,j}$ as a unitary operator, its Hermitian conjugate
\begin{equation}
    X_{v,j}^{\dagger} = (N_{v,j} + 1)^{-\frac{1}{2}} a_{v,j}
\end{equation}
acts as a lowering operator and annihilates the matter vacuum. Nevertheless, $X_{v,j}$ is an isometry,
\begin{equation}
    X_{v,j}^{\dagger}X_{v,j} = \mathbbm{1}_{v,j}, \qquad X_{v,j}X_{v,j}^{\dagger} = \mathbbm{1}_{v,j} - \ket{0}\bra{0}_{v,j}.
\end{equation}
Operators with this property are familiar in bosonic quantum error correction, where photon-loss noise is generated by powers of the annihilation operator $a$ \cite{Michael_2016,Albert_2018}. 

We may then use $X_{v,j}^\dag$ as a stand in for the missing $X_{v,j}^{-1}$. Indeed, in the present setting, the oscillator degrees of freedom may be interpreted as physical systems used to simulate the matter fields of the gauge theory. In the simulation it is not strictly necessary to encode the vacuum of type $j$ of the simulated gauge theory also in terms of the vacuum of the physical oscillator. One may simply encode the vacuum as $\ket{n_0}$ with $n_0>0$ in the physical system. An application of $X^{\dagger}$ on the physical system then lowers the occupation from $n_0$ to $n_0 - 1$, without annihilating the state of the oscillator. Since the kernel of $X^{\dagger}$ occurs only at the vacuum of the oscillator used for the simulation, it lies outside the relevant subspace for the simulation, which is spanned by states with $n\geq n_0$. In this manner, one may view $X^\dag$ legitimately as an error in the simulation, motivating to include both $X_{v,j}$ and $X_{v,j}^{\dagger}$ among the candidate matter errors. 

In conjunction, it is convenient to assemble these candidate matter errors into strings acting on all vertices and matter species,
\begin{equation}
  Z(\bm{z},\bm{\theta}) X^{\bm{x}} = \prod_{v \in \mathcal{V}} \left(   \prod_{0 \leq i < p} Z_{v,i}^{z_{v,i}}X_{v,i}^{x_{v,i}}    \right)
    \left(  \prod_{p \leq j < r} Z_{v,j}(\theta_{v,j}) X_{v,j}^{x_{v,j}}    \right)
\end{equation}
where
\begin{equation}
    \bm{x} \in \mathcal{X} \coloneqq \{ \bm{x} \in \mathbb{Z}^{N_V r} \mid x_{v,i} \in \mathbb{Z}_{D_i}, 0 \leq i < p   \},
\end{equation}
\begin{equation}
    \bm{z} \in \mathcal{Z} \coloneqq \{ \bm{z} \in \mathbb{Z}^{N_V p} \mid z_{v,i} \in \mathbb{Z}_{D_i}    \}, \qquad \bm{\theta} \in [0, 2\pi)^{N_V(r-p)}.
\end{equation}
For clarity, whenever $x_{v,j}$ is negative, we mean $X_{v,j}^{x_{v,j}}=(X_{v,j}^\dag)^{|x_{v,j}|}$.

Together with the Wilson line operators discussed above, these provide the natural candidates for errors in the theory. The most general operator we need to consider therefore takes the form $Z(\bm{z},\bm{\theta}) X^{\bm{x}}\,W^{\bm{\chi}}$. In App.~\ref{app_bosKL}, we show that any pair of such composite errors $E_a = Z(\bm{z}_a,\bm{\theta}_a)X^{\bm{x}_a} W^{\bm{\chi}_a}$, $E_b = Z(\bm{z}_b,\bm{\theta}_b)X^{\bm{x}_b} W^{\bm{\chi}_b}$ satisfies
\begin{equation}\label{bosonicKL}
    \Pi_{\text{pn}}E_a^{\dagger}E_b \Pi_{\text{pn}} \propto Z(\bm{z}_b - \bm{z}_a, \bm{\theta}_b - \bm{\theta}_a) X_{\rm dr}^{\bm{x}_b - \bm{x}_a} \,H^{(\bm{\bar{\chi}_a\chi_b})_S} \delta_{(\partial\bm{\chi}_a)\bm{\rho}[\Delta\bm{x}_a],(\partial\bm{\chi}_b)\bm{\rho}[\Delta\bm{x}_b]}\Pi_{\text{pn}}\,.
\end{equation}
In this expression, we have the extension of the Gauss law map, Eq.~\eqref{boundarymap}, to the inclusion of bosonic matter (and $v_0$)\footnote{Note that, unlike Eq.~\eqref{boundarymap}, the right hand side takes value in $\hat{G}^{N_V}$ since matter charges at $v_0$ are independent of those at other vertices when the state is not gauge-invariant.}
\begin{align}
  &  \partial_B:\hat{G}^{N_L+N_V}\to\hat{G}^{N_V}\nonumber\\
    &\partial_B(\bm{\chi},\bm{\rho})=(\partial\bm\chi)\bm{\rho}=((\partial\chi_v)\rho_v)_{v\in\mathcal{V}}\,,\label{eq_bosGausslawmap}
\end{align}
collecting the Gauss law data, Eq.~\eqref{eq_bosonicGauss}, across the vertices, and
\begin{equation}\label{eq_speciescontrib}
    \bm{\rho}[\Delta\bm{x}]=\left(\prod_{\alpha\in A}\rho_\alpha^{(\Delta_\alpha \bm{x})_v}\right)_{v\in\mathcal{V}}
\end{equation}
is the collection of $X$-error induced matter excitation charges $\rho_v$ in Eq.~\eqref{eq_rhototbos} across the vertices. Here, we take the product over the set of \emph{species}, $A = \{  \alpha = i, (j, \bar{j}) \mid 0 \leq i < p, p \leq j, \bar{j} < r  \}$, combining particle and antiparticle as a joint index for infinite order matter type, and the operator $\Delta$, which already appeared in the equivalence classes in Eq.~\eqref{eq_coarsegrain}, extracts the particle number data relevant for the Gauss law, distinguishing finite and infinite order charges:
 \begin{equation}\label{eq_Delta}
   (\Delta_i \bm{x})_v = x_{v,i} , \qquad 0 \leq i < p\,,\qquad\qquad(\Delta_{j \bar{j}} \bm{x})_v = x_{v,j} - x_{v, \bar{j}}, \qquad p \leq j, \bar{j} < r\,.
\end{equation}
Furthermore, we have the dressed $X$-operators
\begin{equation} \label{eq:fatrho}
    X_{\rm dr}^{\bm{x}_b - \bm{x}_a} \coloneqq \sum_{\bf{n}\in\widetilde{\mathcal{N}}} \ket{{\bf n} + \bm{x}_b - \bm{x}_a} \!\bra{{\bf n}}^{\text{dr}}_{\text{matter}}.
\end{equation}
This provides the  extension of Eq.~\eqref{eq_KLpure} to the inclusion of bosonic matter. With this formula in hand, the structure of correctable error sets can be read off. From  Eq.~\eqref{bosonicKL}, we see that the conjugation with the code projector vanishes unless the two errors induce the same Gauss law violation: 
\begin{equation}\label{eq_sameGaussviobos}
   \partial\bm{\chi}_a\bm{\rho}[\Delta\bm{x}_a]=\partial\bm{\chi}_b\bm{\rho}[\Delta\bm{x}_b]\,.
\end{equation}
To avoid nontrivial logical operations in that case, we would need to have that $(\bar{\bm{\chi}}_a\bm{\chi}_b)_S=1_S$, $\bm{x}_a=\bm{x}_b$, as well as $\bm{z}_a=\bm{z}_b$ and $\bm{\theta}_a=\bm{\theta}_b$. In analogy to the pure gauge theory case, the conjunction of these conditions means, however, that the errors coincide, $E_a=E_b$. 
Thus, the only way for two distinct errors in the set $\{Z(\bm{z},\bm{\theta}) X^{\bm{x}}\,W^{\bm{\chi}}\}$ to obey the Knill-Laflamme condition and so be correctable is if their respective violations of the Gauss laws differ, so that the Kronecker delta vanishes.

Let us consider the structure of correctable error sets in terms of the extended Gauss law map in Eq.~\eqref{eq_bosGausslawmap}. Just like the pure gauge Gauss law map, Eq.~\eqref{boundarymap}, $\partial_B$ defines a discrete fiber bundle with base space $\hat{G}^{N_V}$ and fibers $F\simeq\hat{G}^{N_L}$.\footnote{This can be seen as follows. Consider any fixed electric data configuration $\partial_B(\bm\chi,\bm\rho)=\bm{q}$. At $v\neq v_0$ we can choose the electric data for the links in $S$ and the total matter charge $\rho_v$ arbitrarily. The desired electric configuration $q_v$ at $v$ is then obtained by fixing the frame Wilson line hitting $v$ to the unique solution $\chi_{R_v}=q_v\bar{\bm\chi}_S^v\bar\rho_v$. After this step, the electric link data at $v_0$ is fixed, so that also the matter charge is then fixed to $\rho_{v_0}=q_{v_0}(\partial\bar{\chi})_{v_0}$ to obtain the desired $q_{v_0}$. In conjunction, since there are $N_L-N_V+1$ links in the system and $N_V-1$ vertices where  the matter charge can be freely chosen, we have that for each fixed electric charge configuration $\bm{q}$ the redundancy in obtaining it is parametrized by $F\simeq\hat{G}^{N_L}$.} However, this bundle structure is too coarse to be applied to errors of bosonic Gauss law codes and to parametrize  maximal sets of correctable errors because it does not encode how the total error induced matter charge contribution $\rho_v$ at any vertex decomposes into species, as in Eq.~\eqref{eq_rhototbos}. Let us thus fine grain the bundle by appropriately extending the fibers to include information about $X$- and $Z$-errors
\begin{align}
   & \partial_B':\mathcal{Q}\to\hat{G}^{N_V}\qquad\qquad\qquad\qquad\mathcal{Q}\coloneqq\hat{G}^{N_L}\times\mathcal{Z}\times [0,2\pi)^{N_V(r-p)}\times\mathcal{X}\nonumber\\
   &\partial_B'(\bm\chi,\bm{z},\bm{\theta},\bm{x})\coloneqq(\partial\bm\chi)\bm\rho[\Delta\bm{x}]\,.\label{eq_extbosGaussbundle}
\end{align}
This again defines a fiber bundle with the same base space, where the projection $\partial_B'$ forgets about the $Z$-error data and maintains only the $X$-error data passed through the $\Delta$ operator. The fibers of this bundle are isomorphic to 
\begin{equation} 
F'\simeq \hat{G}^{N_L-N_V}\times\mathcal{Z}\times[0,2\pi)^{N_V(r-p)}\times\mathcal{X}\simeq F\times\mathcal{Z}\times[0,2\pi)^{N_V(r-p)}\times \mathbb{Z}^{N_V(r-p)/2}\,.
\end{equation}
The right hand side clarifies how the fibers $F$ of $\partial_B$ become extended by the entirely redundant $Z$-error data and the $X$-error data that becomes eliminated by $\Delta$. Indeed, for any fixed $v$, when the total matter charge $\rho_v$ is fixed in Eq.~\eqref{eq_speciescontrib}, the contributions from the finite order bosons are uniquely fixed too, while the contribution for each infinite order species is fixed only up to a redundancy parametrized by $\mathbb{Z}$.

This extended Gauss law bundle $\partial_B'$ finally has sufficient structure to parametrize maximal sets of correctable errors. Consider thus an error set $\mathcal{E}_{\mathcal{I}}=\{Z(\bm{z},\bm{\theta}) X^{\bm{x}}\,W^{\bm{\chi}}\}_{(\bm\chi,\bm{z},\bm\theta,\bm{x})\in \mathcal{I}\subset\mathcal{Q}}$. Our discussion below Eq.~\eqref{eq_sameGaussviobos} implies that this set is correctable if and only if $\mathcal{I}$ contains at most one representative in each fiber $F'$ of the extended Gauss law map $\partial_B'$. Similarly, for $\mathcal{E}_{\mathcal{I}}$ to be maximal, $\mathcal{I}$ must contain exactly one representative in each fiber $F'$. Maximal correctable sets are therefore obtained by choosing a global section  of $\partial_B'$, 
\begin{equation}
    s_B': \hat{G}^{N_V}\to \mathcal{Q}
\end{equation}
\begin{equation}
  s_B'(\bm{q}) = (\bm{\chi}_B(\bm{q}),\bm{z}_B(\bm{q}), \bm{\theta}_B(\bm{q}), \bm{x}_B(\bm{q}))\,.
\end{equation}
The corresponding maximal set is 
\begin{equation}
    \mathcal{E}_{s'_B} \coloneqq \{Z( \bm{z}_B(\bm{q}), \bm{\theta}_{B}(\bm{q}))X^{\bm{x}_{B}(\bm{q})}W^{\bm{\chi}_{B}(\bm{q})} \mid \bm{q} \in \hat{G}^{N_V} \}\,.
\end{equation}
Such sets are maximal because they exhaust all possible violations of the Gauss law, Eq.~\eqref{eq_bosonicGauss}, across the lattice, thereby encompassing maps to all possible charge spaces of the gauge transformations, similarly to Eq.~\eqref{eq_chargedecomp}.
Changing the section -- and thus the maximal set -- amounts to multiplying by elements in the kernel of $\partial_B'$, which correspond to logical operators. Thus, in analogy to the pure gauge theory, maximal correctable sets in the bosonic theory are naturally labeled by global sections of the extended Gauss law map $\partial_B'$.

Finally, it is clear that, akin to the pure gauge case, the syndrome of each error is the total (gauge and matter) electric charge at each vertex $v\in\mathcal{V}$, which can be measured through any generating set of the gauge transformations $U^{\bm{g}}$. This information can only distinguish errors associated within a given section.
Thus, we have once more that, in an error context, the input data of the extended Gauss law map are error labels, while its output are the corresponding syndromes.

\subsubsubsection{Code parameters}\label{sssec:boscodepars}
In contrast to the pure gauge case, the bosonic Gauss law code contains two types of information-carrying registers: $G$-registers on links and $\mathcal{N}$-registers on vertices. It is therefore natural to express the code parameters as ordered pairs referring to these two sectors. The total number of physical registers is $n = (N_L, N_V)$. The logical degrees of freedom follow from the code space factorization together with the constraint Eq.~\eqref{eq_bostotalchargezero}. This removes one independent matter degree of freedom, leaving $N_V - 1$ logical $\mathcal{N}$-registers, while the loop sector contributes $N_L - N_V + 1$ logical $G$-registers. Accordingly, $k = (N_L - N_V + 1,\, N_V - 1)$.

We distinguish between $X$-type errors---matter shift operators and bare Wilson lines---and $Z$-type errors---matter phase operators and electric operators $U_\ell^g$, and denote the corresponding distances by $d_{X'}$ and $d_{Z'}$ to emphasize that both matter and gauge contributions are included. Since single-site $Z$-type operators already lie in the logical algebra, one has $d_{Z'} = 1$. By contrast, nontrivial logical $X$-type operators arise either from closed Wilson loops or from dressed $X$-operators in Eq.~\eqref{eq:fatrho}, so that $d_{X'}$ is given by the minimal weight among these operators. This depends on the specific geometry of the lattice, but we always have $d_{X'} = 2$ or $3$. Finally, since the stabilizer is generated entirely by $Z$-type operators, there are no $X$-type stabilizers. The code is therefore a generalized $[(N_L, N_V), (N_L - N_V + 1, N_V - 1), d_{X'}]$ (classical) CSS code, with $d_{Z'} = 1$.

\subsubsection{Gauss law codes with fermionic matter as subsystem codes}\label{sssec_fermGauss}
We now repeat this analysis for a theory with fermionic matter, focusing on the features that differ compared with the bosonic case. Consider again an Abelian lattice gauge theory with compact structure group $G$ in $d+1$ dimensions and periodic boundary conditions, but now introduce a single species of staggered fermions \cite{Kogut:1974ag,Susskind:1976jm} on the lattice.\footnote{Arguments of this section and section~\ref{sec:fermi-vac} can be extended to other types of lattice fermions such as na\"ive fermions, Wilson fermions \cite{Wilson:1974sk} and overlap fermions \cite{Neuberger:1997fp} (See \cite{Hayata:2023zuk,Yamamoto:2023qaa} for construction in operator formalism) up to some technical modifications. Compared with the staggered fermions, the other types of lattice fermions have vanishing $c_v$ defined in \eqref{eq:offset}. For the case of na\"ive fermions and overlap fermions, this point is only the relevant difference from the staggered fermions here while they have different Hamiltonian and symmetries not discussed in this paper. Wilson fermions have a different number of degrees of freedom on the lattice from the staggered fermions depending on dimensions. Note also that continuum limits generally depend on various data such as the type of lattice fermion, Hamiltonian, and the shape of the lattice mainly due to the doubling problem of chiral fermions \cite{Nielsen:1981hk}. For the case of the staggered fermions in $d$-dimensional square lattice, its continuum limit has $2^d$ complex degrees of freedom corresponding to a single Dirac fermion for $d=1$ and two Dirac fermions for $d=2,3$.} This class includes the $\mathbb{Z}_N$ lattice gauge theory coupled to a fermion discussed for prime $N$ in \cite{Spagnoli:2026qni,Spagnoli:2024mib}.

This means that we have one species of fermions and antifermions that sit at alternating lattice sites, which, due to the periodic boundary conditions, requires $N_V$ to be even.\footnote{Note that in the special case $G=\mathbb{Z}_2$ we have that each fermion is its own antifermion.} The paradigmatic example of such a theory is lattice QED whose $1+1$-dimensional case corresponds to the so-called Schwinger model \cite{Schwinger:1962tp} and has been recently studied much in the context of digital quantum simulation (e.g.~\cite{Martinez:2016yna,Muschik:2016tws,Klco:2018kyo,Kokail:2018eiw,Chakraborty:2020uhf}). We will investigate the $2+1$-dimensional lattice QED in more detail in section~\ref{ssec_Schwinger}.

\subsubsubsection{Fermionic vertex Hilbert spaces}
At each vertex $v$ we place a two–dimensional Hilbert space $\mathcal{H}_v = \mathbb{C}^2$. We interpret this space as the occupation space of a single fermionic mode and choose the $Z$-basis as the occupation number basis,
\begin{equation}
    \mathcal{H}_v = \text{Span} \{ \ket{n}_v \mid n=0,1   \}.
\end{equation}
The corresponding number operator is
    \begin{equation}
        N_v = \frac{\mathbbm{1}_v - Z_v}{2}\,,
    \end{equation}
so that the eigenvalues $n=0,1$ represent an empty or occupied site, respectively.

To couple these modes to the gauge field, we follow the usual staggered fermion construction \cite{Kogut:1974ag,Susskind:1976jm} familiar from the lattice Schwinger model \cite{Banks:1975gq,Hamer:1997dx}. 
The gauge group acts locally on each vertex through a unitary representation 
(cf.~Eqs.~\eqref{generalGTs} and~\eqref{eq_gaugegrouprepbos})
    \begin{equation}\label{eq_fermgaugetr}
        u_v^g = {\chi_F}^{N_v - c_v}(g)\,,\qquad\qquad g\in\,G\,,
    \end{equation}
where $\chi_F\in\hat{G}$ is some character in the Pontryagin dual of $G$ determining the electric charge of the fermions. When $G$ is cyclic and the fermion is minimally charged as in the Schwinger model, $\chi_F$ will be a generator of $\hat{G}$, though in general this is not strictly necessary. The offset
\begin{equation}
 c_v = \frac{1 - (-1)^{|v|}}{2}
\label{eq:offset}
\end{equation}
distinguishes even and odd sites of the lattice. Here, $|v|$ denotes the lattice parity of the vertex. In the case of a cubic lattice, writing the position vector $\bm{v}$ of a vertex $v$ in the basis of primitive lattice vectors $\{ \bm{e_{\mu}}  \}_{\mu = 1}^{d}$, we have
\begin{equation}\label{cubicparity}
|v| \coloneqq \sum_{\mu = 1}^d \frac{\bm{v} \cdot \bm{e}_{\mu}}{|\bm{e}_{\mu}|},
\end{equation}
which simply counts how many lattice steps separate the vertex from the origin. Even values correspond to one sublattice and odd values to the other. 

As a result, even vertices host fermions carrying charge $\chi_F$, whereas odd vertices carry antifermions with charge $\bar{\chi_F}$. Local creation and annihilation operators are introduced in the usual way from Pauli operators,
\begin{equation}\label{psiPauli}
    \psi_v = \frac{X_v + i Y_v}{2}, \qquad
    \psi_v^{\dagger} = \frac{X_v -iY_v}{2}.
\end{equation}
These operators move between the two occupation states and therefore change the charge stored at the vertex, and we have $N_v=\psi_v^\dag\,\psi_v$, as usual. 

To obtain the correct fermionic statistics, however, it is important that the vertex Hilbert spaces are not combined with an ordinary tensor product. Instead, the full matter Hilbert space is the $\mathbb{Z}_2$-graded tensor product \cite{Saberi_grading, Bravyi_grading,Szalay:2020xbz}, which we review in appendix~\ref{app_fermions},
\begin{equation}\label{eq_fermHmatter}
    \mathcal{H}_{\text{matter}} = \widehat{\bigotimes_{v \in \mathcal{V}}} \,\mathcal{H}_v.
\end{equation}
The grading keeps track of fermion parity. On each site the parity operator is defined by
\begin{equation}
    P_v = (-1)^{N_v},
\end{equation}
so empty states with even occupation have parity $+1$ and states with odd occupation have parity $-1$. An operator 
$A_v$ acting on $\mathcal{H}_v$
is called even or odd if it transforms under this parity operator with a definite sign, that is
\begin{equation}\label{eq_parity}
    P_vA_vP_v = (-1)^{|A_v|}A_v,
\end{equation}
where $|A_v| = 0,1$ is the fermion parity of the operator. Operators with $|A_v| = 0$ are even and preserve the fermion number $\!\!\!\!\mod 2$, while operators with $|A_v| = 1$ are odd and change it.

The graded tensor product encodes the rule that odd operators anticommute across different sites. Indeed, if $A_v$ and $B_{v'}$ are operators with definite fermion parities $|A_v|$ and $|B_{v'}|$, then for $v \neq v'$,
\begin{equation}\label{eq_oddevencommute}
    A_v B_{v'} = (-1)^{|A_v||B_{v'}|} B_{v'}A_v.
\end{equation}
Since the operators $\psi_v$ and $\psi_{v'}^{\dagger}$ are odd with respect to this grading, the graded tensor product immediately implies the canonical anticommutation relations: 
\begin{equation}\label{eq_fermanticommute}
    \{ \psi_v, \psi^{\dag}_{v'}  \} = \delta_{v,v'}\,, \qquad \{ \psi_v, \psi_{v'}   \} = \{ \psi_v^{\dagger}, \psi_{v'}^{\dagger}    \} = 0\,.
\end{equation}

The action of gauge transformations on these operators is particularly simple. Conjugating with the on-site representation $u_v^g$ in Eq.~\eqref{eq_fermgaugetr} gives 
\begin{equation}
    \psi_v \to u_v^g \psi_v u_v^{g \dagger} = \bar{\chi}_F(g)\psi_v
\end{equation}
\begin{equation}
    \psi_v^{\dagger} \to u_v^g \psi_v^{\dagger} u_v^{g \dagger} = \chi_F(g) \psi_v^{\dagger}\, .
\end{equation}
Thus, the creation operator carries charge $\chi_F$ and the annihilation operator carries the conjugate charge, exactly as expected for fermionic matter fields. In particular, this means that $\psi_v$ and $\psi^\dag_v$ act as annihilation and creation operators, respectively, at even sites, and vice versa at odd sites. A consequence of this feature is that the occupation number does not directly measure the charge at $v$, which instead is measured by $n_v-c_v$, cf.~\eqref{eq_fermgaugetr}. For example, a peculiarity of the staggered model is that the total vacuum with zero charge is given by $n_v=0$ at all even and $n_v=1$ at all odd sites, meaning maximal occupation of antifermions and zero occupation of fermions across the lattice.

\subsubsubsection{Code space and logical algebra from relational observables}
As in the bosonic case, the total kinematical Hilbert space is again of the form in Eq.~\eqref{eq_Hkinbosfac}, except that $\Hil_{\rm matter}$ is now given by the graded Hilbert space in Eq.~\eqref{eq_fermHmatter}. In analogy to Eq.~\eqref{eq_bosbasis}, we may thus work with the product basis
\begin{equation}
    \ket{\bm{g}_S, \partial\bm{\chi}, {\bf n}} = \ket{\bm{g}_S}_{\text{loops}} \otimes_R \ket{\partial\bm{\chi}}_R \otimes \ket{\bf{n}}_{\text{matter}}\,,\qquad\qquad\ket{\bf{n}}_{\rm matter}=\widehat{\otimes}_{v\in\mathcal{V}}\ket{{n}_v}_v\,,\qquad{\bf n}\in\mathbb{Z}^{N_V}_2\,,
\end{equation}
in which gauge transformations are again diagonal. 

Proceeding as in the bosonic case in section~\ref{ssssec_boscodespace}, we find that gauge invariance at $v\neq v_0$ entails
\begin{equation}\label{eq_ferminvatv}
    \partial\chi_v^{(\bf n)}=\bar{\chi}_F^{n_v-c_v}\,,
\end{equation}
so an odd $v$ is charged when the site is unoccupied, and vice versa, while gauge invariance at the root $v_0$ imposes the global neutrality condition on the fermionic matter
\begin{equation}
    \prod_{v \in \mathcal{V}} \chi_F^{n_v - c_v} = 1,
\end{equation}

For periodic boundary conditions one has $\sum_{v \in \mathcal{V}} c_v = N_V/2$, and thus gauge-invariant states satisfy 
\begin{equation}
    \sum_{v \in \mathcal{V}} n_v = \frac{N_V}{2} \mod D_F\,,
    \label{global_neutrality_fermions}
\end{equation}
where $D_F$ is the order of $\chi_F$. Thus, if $\chi_F$ has infinite order, exactly half of the vertices must be occupied. Note that this could be all fermions, all antifermions, or a mix of the two; either way, the total charge vanishes because antifermion occupation is charge-neutral, whereas an empty antifermion site is negatively charged. In particular, as noted above, the total charge vacuum corresponds to full antifermion occupation and empty fermion sites. By contrast, when $D_F<\infty$, we have distinct charge sectors in analogy to Eq.~\eqref{eq_chargesector}
\begin{equation}\label{eq_fermchargesector}
\sum_{v\in\mathcal{V}}=\frac{N_V}{2}+m\,D_F\,,\qquad\qquad m=0,1,\ldots,M\,,
\end{equation} 
with $M$ finite when $|\mathcal{V}|$ is finite. Hence, in the finite order case, the total occupation number can be larger than half of the vertices.

Provided Eqs.~\eqref{eq_ferminvatv} and~\eqref{global_neutrality_fermions} hold, all local Gauss constraints have been solved explicitly, and the physical content reduces to loop degrees of freedom together with dressed matter configurations obeying the global neutrality condition.
It follows that the code space in the guise of the perspective–neutral Hilbert space again admits a subsystem decomposition:
\begin{equation}
    \mathcal{H}_{\text{pn}} = \text{Span} \left\{  \ket{\bm{g}_S}_{\text{loops}} \otimes_R \ket{\bf{n}}^{\text{dr}}_{\text{matter}} \,\Big|\, \bm{g}_S \in G^{ N_L - N_V + 1}, {\bf n} \in \mathbb{Z}_2^{ N_V}, \sum_{v \in \mathcal{V}}n_v = \frac{N_V}{2} \mod D_F \right\}\,,\label{eq_Hpnferm}
\end{equation}
where the dressed matter states are
\begin{equation}
\ket{\bf{n}}^{\text{dr}}_{\text{matter}} = \ket{\partial\bm{\chi}^{({\bf n})}}_R \otimes \ket{\bf{n}}_{\text{matter}}\,,\label{eq_fermdrstates}
\end{equation}
so that 
\begin{equation}
    \mathcal{H}_{\text{pn}} = \mathcal{H}_{\text{loops}} \otimes_R \mathcal{H}^{\text{dr}}_{\text{matter}}\,,
\end{equation}
and we have a subsystem code \cite{kribs2006, Poulin_2005} once more. Similarly to the bosonic case, the loop and dressed fermionic matter degrees of freedom furnish complementary subsystems that can assume the role of ``gauge subsystem'' for one another in the sense of subsystem quantum error correction.

As in the bosonic case, the logical algebra $\mathcal{A}_{\text{pn}}= \mathcal{B}(\mathcal{H}_{\text{pn}})$ thus factorizes as
\begin{equation}
    \mathcal{A}_{\text{pn}} = \mathcal{A}_{\text{loops}} \otimes_R \mathcal{A}_{\text{matter}}^{\text{dr}},
\end{equation}
 with the loop algebra once more given by Eq.~\eqref{Apnpure}. On the other hand, in appendix~\ref{app_fermalg}, we explain that the QRF-dressed fermionic matter algebra differs slightly between  the cases when
 \begin{itemize}
     \item[(i)] $\chi_F$ is of infinite order (and so $G$ is not finite):
\begin{equation}\label{eq_inffermalg}
    \mathcal{A}^{\text{dr}}_{\text{matter}} = \left\langle \psi_v\,{W}_{\gamma_R[v, v']}^{\chi_F} \,\psi^\dag_{v'}\Pi_{\text{pn}}\, \Big|\enspace v, v' \in \mathcal{V}      \right\rangle\,.
\end{equation}
     \item[(ii)] $\chi_F$ is of finite order, $D_F<\infty$:
\begin{equation}\label{eq_finfermalg}
    \mathcal{A}^{\text{dr}}_{\text{matter}} = \left\langle \psi_v\,{W}_{\gamma_R[v, v']}^{\chi_F} \,\psi^\dag_{v'}\Pi_{\text{pn}}\,,S_{\{v_k\}}\,\Pi_{\rm pn}\,,S^\dag_{\{v_k\}}\,\Pi_{\rm pn}\,\Big|\enspace v, v' \in \mathcal{V}      \right\rangle\,,
\end{equation}
where $S_{\{v_k\}}$ is any fixed star operator centered at $v_0$ (cf.~Eq.~\eqref{eq_star} for the bosonic counterpart)
\begin{equation}\label{eq_fermstar}
    S_{\{v_k\}}=\prod_{k=1}^{D_F}\,\psi_{v_k}\,W^{\chi_F}_{R_{v_k}}\,,
\end{equation}
and the $\{v_k\}$ are \emph{any} $D_F$ distinct vertices.\footnote{Due to $\psi_v^2=0$ the $v_k$ must be distinct, though one of them may coincide with $v_0$.} Thanks to $\chi_F^{D_F}=1$, this operator is gauge-invariant.  The star operator assumes the role that the particle-antiparticle combination $a_{v,\bar{j}}^{\dagger} W_{\gamma_R[v,v']}^{\rho_j} a_{v',{j}}^\dagger$ had for finite order species in the bosonic algebra~\eqref{eq_drmatteralg}. Indeed, without the star operator, $\mathcal{A}^{\rm dr}_{\rm matter}$ would be superselected with respect to the charge sectors in Eq.~\eqref{eq_fermchargesector} because the other generators $\psi_vW^{\chi_F}_{\gamma_R[v,v']}\psi^\dag_{v'}$ leave $\sum_v n_v$ invariant. By contrast, $S_{\{v_k\}}^\dag$ changes the total occupation number by $+D_F$ when the sites $\{v_k\}$ are empty, $n_{v_k}=0$, and $S_{\{v_k\}}$ lowers it by $D_F$ in the opposite case. 

As we explain in appendix~\ref{app_fermalg}, using elementary moves involving the bipartite  $\psi_v W^{\chi_F}_{\gamma_R[v,v']}\psi_{v'}^\dag$, one can map any such star operator into any other one, including those centered at other vertices, and that different star operators can also be connected. For this reason, the inclusion of a single star operator (and its conjugate) among the generators of Eq.~\eqref{eq_finfermalg} suffices.
\end{itemize}

\subsubsubsection{Maximal correctable error sets and syndromes} \label{sec:fermimaxsets}

We now turn to correctable errors. As before, Wilson line products are correctable by the same arguments as in the pure gauge case, so Proposition~\ref{prop_KLpure} continues to hold. In addition, we must consider matter excitations. Although the fermionic operators $\psi_v$ and $\psi_v^{\dagger}$ are not unitary, they are linear combinations of Pauli operators,
\begin{equation}
    X_v = \psi_v + \psi_v^{\dagger},
\end{equation}
\begin{equation}
    Y_v = i(\psi_v^{\dagger} - \psi_v).
\end{equation}

Acting on a single vertex, $X_v$ and $Y_v$ are individually detectable, but they are not simultaneously correctable. Indeed, their product satisfies $X_vY_v \propto Z_v$, and $Z_v$ is gauge invariant, implementing a logical phase flip on the dressed matter subsystem. Consequently, if two errors differ only by exchanging $X_v$ and $Y_v$ at a single site while agreeing everywhere else, their product acts as a logical operator and the Knill--Laflamme condition cannot be satisfied.

Before proceeding, it is important to note that the operators $X_v$ do not transform with a fixed character under gauge transformations. Acting with $X_v$ switches the local fermion occupation number, so the charge created at $v$ depends on the initial state: if the vertex is initially unoccupied the excitation carries charge $\chi_F$, while if it is initially occupied the resulting hole carries charge $\bar{\chi}_F$. Thus, the charge associated with an $X_v$ excitation is state dependent,
\begin{equation}\label{eq_fermX}
    U^{g}_v X_v U^{g \dagger}_v = \chi_F^{1 - 2N_v}(g)X_v\,.
\end{equation}
Only when $\chi_F$ has order two, so that $\chi_F=\bar{\chi}_F$, does $X_v$ create a definite charge independent of the state.

It is convenient to organize matter operators directly in terms of Pauli strings
\begin{equation}
    Z^{\bm{z}}X^{\bm{x}} \coloneqq 
    \widehat{\bigotimes_{v \in \mathcal{V}}} 
    Z_v^{z_v} X_v^{x_v},
    \qquad 
    \bm{x}, \bm{z} \in \mathbb{Z}_2^{N_V}.
\end{equation}
Following the steps for bosonic Gauss law codes, we now consider two composite operators
\begin{equation}\label{eq_fermerrors}
    E_a = Z^{\bm{z}_a}X^{\bm{x}_a}W^{\bm{\chi}_a},
    \qquad 
    E_b = Z^{\bm{z}_b}X^{\bm{x}_b}W^{\bm{\chi}_b}.
\end{equation}
Inserting them into the Knill--Laflamme condition gives
\begin{equation}\label{eq_fermionicKL}
    \Pi_{\text{pn}}E_a^{\dagger}E_b \Pi_{\text{pn}} \propto 
    Z^{\bm{z}_a \oplus \bm{z}_b}
    X_{\rm dr}^{\bm{x}_a \oplus \bm{x}_b}
    H^{(\bar{\bm{\chi}}_a\bm{\chi}_b)_S}
    \left( 
    \prod_{v \in \mathcal{V}} 
    \delta_{(\partial \chi_a)_v\chi_F^{ x_{a,v}(1-2N_v)},\, (\partial \chi_b)_v\chi_F^{ x_{b,v}(1-2N_v)}} 
    \right)
    \Pi_{\text{pn}},
\end{equation}
where $\oplus$ denotes bit-wise addition modulo $2$ and we have the dressed $X$-operator
\begin{equation}
    X_{\rm dr}^{\bm{x}_a \oplus \bm{x}_b} 
    \coloneqq 
    \sum_{{\bf n} \in \mathbb{Z}_2^{N_V}\cap\,\text{Eq.~\eqref{global_neutrality_fermions}}} 
    \!\ket{{\bf n} \oplus \bm{x}_a \oplus \bm{x}_b }
    \!\bra{\bf{n}}^{\text{dr}}_{\text{matter}}.
\end{equation}

The arguments of the Kronecker delta contain the (gauge-invariant) number operator $N_v$, reflecting the fact that the charge created by $X_v$ depends on the initial occupation of the vertex. Consequently, the delta is generally an operator that should be understood via spectral decomposition. The Knill-Laflamme condition for the \emph{full} code space can therefore only be satisfied if the delta evaluates to the same value for every occupation configuration. 
The only way for the delta to ``click'' on the \emph{entire} code space is if
\begin{equation}
    \partial\bm\chi_a=\partial\bm\chi_b\,,\qquad\qquad\text{and}\qquad\qquad \bm{x}_a=\bm{x}_b\,.
\end{equation}
In that case, the only way to avoid a nontrivial logical operation on the right hand side of Eq.~\eqref{eq_fermionicKL} is if also $({\bm\chi}_a)_S=({\bm{\chi}_b})_S$ and $\bm{z}_a=\bm{z}_b$. But this again entails that $E_a=E_b$. Hence, the only way for the Knill-Laflamme condition for distinct errors to be obeyed on the entire code space  is if the delta vanishes everywhere on it. This, in turn, is only possible if at any $v\in\mathcal{V}$ either
\begin{align}
   & x_{a,v}=x_{v,b}\qquad\qquad\text{and}\qquad\qquad (\partial{\chi}_a\partial{\bar{\chi}}_b)_v\neq1\,,\nonumber\\
   \text{or}\qquad\qquad
&x_{a,v}\neq x_{b,v}\qquad\qquad\text{and}\qquad\qquad(\partial{\chi}_a\partial{\bar{\chi}}_b)_v\neq\chi_F,\bar{\chi}_F\,.\label{eq_fermGaussdisterrors}
\end{align}
Note that in the latter case, $(\partial\chi_a)_v=(\partial\chi_b)_v$ is permitted, so the errors would only differ by whether or not an $X$-error occurred at site $v$.

This renders the error analysis more subtle than in all previous cases. As we shall now explain, he notion of maximal correctable subsets among those parametrized by Eq.~\eqref{eq_fermerrors} for fermionic Gauss law codes depends on whether and where we admit $X$-type errors to be included, though the cardinality of these sets will remain the same in some cases.  Similarly, the permissible syndrome measurements will now depend on which vertices are permitted to feature $X$-errors. 

Let us begin with the case that we admit \emph{no} $X$-errors at all, so $\bm{x}=0$ for all errors considered. Clearly, this case is covered by the pure gauge Gauss law map~\eqref{boundarymap}  (trivially extended by the $Z$-error data) and the error discussion surrounding Eq.~\eqref{eq_puremaxset}; every section of the pure gauge Gauss law map corresponds to a maximal set of correctable errors also for the fermionic Gauss law code.\footnote{For each choice of that section, we can equip each error inside it with any fixed choice of $\bm{z}$, thereby trivially increasing the choice of different maximal sets.}  In particular, the frame fields in Eq.~\eqref{eq_framefield} furnish such a set, with the corresponding syndromes exhausting all possible Gauss law violations at every vertex as they appear in the error space decomposition in Eq.~\eqref{eq_chargedecomp}.
The syndrome measurement is given by measuring generators of the gauge transformations, as before.

Now consider some vertex $v$ and suppose we admit $X_v$ into the error fray. In that case, to detect whether or not $X_v$ has happened, we are not permitted to simply measure the total charge at $v$ via a set of generators of $\{U_v^g\,|\,g\in G\}$. Thanks to Eq.~\eqref{eq_fermX}, the charge will depend on the code words, about which we are not allowed to glean information. Thus, the charge cannot be the syndrome and we must seek for a different quantity. Consider two errors $E_a$ and $E_b$, which will lead to vertex charge excitations reading 
\begin{itemize}
    \item[(i)] (a) $(\partial\chi_a)_v\chi_F^{x_{a,v}}$ and (b) $(\partial\chi_b)_v\chi_F^{x_{b,v}}$ when $n_v=0$, or
    \item[(ii)] (a) $(\partial\chi_a)_v\bar{\chi}_F^{x_{a,v}}$ and  (b) $(\partial\chi_b)_v\bar{\chi}_F^{x_{b,v}}$ when $n_v=1$,
\end{itemize} 
respectively. Code word protection demands (i) and (ii) to be indistinguishable, while error corrections requires (a) and (b) to be distinguishable via the conditions in Eq.~\eqref{eq_fermGaussdisterrors}.

Indistinguishability of (i) and (ii) entails that the charge measurement at $v$ is permitted to distinguish $(\partial\chi)_v\chi_F$ and $(\partial\chi)_v\bar{\chi}_F$ from $(\partial\chi)_v$, but not from one another. Let us therefore group charge measurement outcomes at $v$ into bins 
\begin{equation}\label{eq_bins}
   \big[(\partial\chi)_v\big]_x\coloneqq\begin{cases}
       (\partial\chi)_v&\text{if }x=0\\
       \{(\partial\chi)_v\chi_F,(\partial\chi)_v\bar{\chi}_F\}&\text{if } x=1\,.
   \end{cases}
\end{equation}
which correspond to the projective measurements
\begin{equation}\label{eq_coarsemeasure}
    \Pi^v_{\bm{x},\bm{\chi}}\coloneqq \Pi^v_{[(\partial\chi)_v]_x}=\Pi^v_{(\partial\chi)_v\chi_F^{x_{v}}}\ket{0}\!\bra{0}_v+\Pi^v_{(\partial\chi)_v\bar{\chi}_F^{x_{v}}}\ket{1}\!\bra{1}_v\,.
\end{equation}
Here $\Pi^v_{\eta}$ is the projector onto the joint eigenspace of $U^g_v$, $g\in G$, with eigenvalues $\eta(g)$. Note that $\Pi^v_\eta$ is diagonal in the occupation basis and so commutes with $\ket{n}\!\bra{n}_v$. 

Of course, the bins in Eq.~\eqref{eq_bins}, or, equivalently, the projectors in Eq.~\eqref{eq_coarsemeasure}, overlap nontrivially for different $\partial\chi$ and $x$. Indeed, projective measurements $\Pi^v_{\bm{x}_a,\bm{\chi}_a},\Pi^v_{\bm{x}_b,\bm{\chi}_b}$ in Eq.~\eqref{eq_coarsemeasure} are mutually exclusive -- and so perfectly distinguishable -- precisely if the conditions in Eq.~\eqref{eq_fermGaussdisterrors} are obeyed at $v$, taking care of the distinguishability of (a) and (b).
Furthermore, note that the $\Pi^v_{\bm{x}_a,\bm{\chi}_a},\Pi_{\rm{x}_b,\bm{\chi}_b}^{v'}$ commute -- and so can be jointly measured -- since the number operator at each vertex is even under fermion parity. Accordingly, a set $\{\Pi_{\bm{x},\bm{\chi}}^v\,|\,v\in\mathcal{V}\}$ obeying Eq.~\eqref{eq_fermGaussdisterrors} constitutes a valid set of syndrome measurements across the lattice. In particular, $\prod_{v\in\mathcal{V}}\Pi^v_{\bm{0},\bm{1}}=\Pi_{\rm pn}$ coincides with the code projector. These syndromes do not measure the charge directly when $x_v=1$, but rather conditional on the occupation number.

It is important to observe that there can be quite different maximal sets of syndrome measurements $\{\Pi^v_{\bm{x}_a,\bm{\chi}_a}\}$ obeying Eq.~\eqref{eq_fermGaussdisterrors} at a given vertex (and therefore also for the entire lattice). Here, we call such a set of projectors maximal if no further projector can be added without violating Eq.~\eqref{eq_fermGaussdisterrors}. For example, we have the two extreme cases that $x_{a,v}=0$ for all $a$, or that $x_{a,v}=1$ for all $a$, both of which comprise maximal sets, and the first condition in Eq.~\eqref{eq_fermGaussdisterrors} tells us that for either there are as many admissible bins at $v$ as there are possible Gauss law violations. By contrast, in the more generic case when $x_{a,v}=0$ for some $a$ and $x_{a,v}=1$ for the rest, we have that the number of admissible bins is one less than in the extreme cases when the group is finite because the second condition in Eq.~\eqref{eq_fermGaussdisterrors} implies that $x=0$ and $x=1$ bins are not permitted to overlap. For example, suppose $G=\mathbb{Z}_3\simeq\hat{G}$, so that there are three $(\partial\chi)_v$ levels. One may check that $[1_v]_0,[1_v]_1$ comprise a maximal set of bins~\eqref{eq_bins} for this case and that the second condition in Eq.~\eqref{eq_fermGaussdisterrors} forbids the addition of another one. 

Ignoring $Z$-errors for now, it is clear that maximal sets of correctable errors among those parametrized by Eq.~\eqref{eq_fermerrors} are in one-to-one correspondence with maximal sets of syndromes. Both are parametrized by maximal sets of solutions to Eq.~\eqref{eq_fermGaussdisterrors}, and we essentially have adapted the syndromes to the error sets. It only trivially becomes a many-to-one relation between maximal error sets and syndromes when we include $Z$-error data, which does not feature in the syndromes. Clearly, per syndrome we can choose only a single $Z$-type error to be included in the error set. 

This leads to a striking difference with the pure gauge and bosonic Gauss law codes which we considered previously. There, we had a single maximal syndrome set for all error sets under consideration. This permitted us to identify the base space in the pertinent Gauss law bundle with the set of syndromes and parametrize maximal correctable error sets via global sections of that bundle. The nontrivial fibers parametrized the multiplicity in error-set/syndrome combinations. The same picture is no longer useful for fermionic Gauss law codes, given that we now have a one-to-one correspondence between maximal error sets and syndromes when we ignore $Z$-errors. If anything, we have a separate Gauss law bundle per maximal syndrome set, except that in that case the fibers would be somewhat trivial, being parametrized exclusively by the $Z$-error data $\bm{z}$, which does not affect the syndromes. 

\subsubsubsection{Code parameters}\label{ssssec_fermcodeparam}
The kinematical degrees of freedom consist of $N_L$ $G$-registers on the links and $N_V$ qubits on the vertices. Similarly to the bosonic case, the loop sector contributes $N_L - N_V + 1$ logical $G$-registers. However, due to its fermionic nature, the matter sector requires a different treatment. Imposing the global neutrality condition Eq.~\eqref{global_neutrality_fermions} restricts the allowed configurations to a family of charge sectors labeled by $m=0,1,\cdots, M$, with total occupation number $N_V/2 + mD_F$, as discussed around Eq.~\eqref{eq_fermchargesector}, when $D_F<\infty$. The dressed matter space therefore has dimension
\begin{equation}
    \mathcal{D}^{\text{dr}}_{\text{matter}} \coloneqq \begin{cases}
    \sum_{m=0}^M {{N_V}\choose{N_V/2 + mD_F}}, \qquad D_F < \infty, \\
    {{N_V}\choose{N_V/2}}, \qquad D_F = \infty
    \end{cases}.
\end{equation}
In general, this space does not admit a tensor product decomposition into qubits. One exception occurs for $D_F = 2$, where the constraint Eq.~\eqref{global_neutrality_fermions} fixes the parity of the total occupation number. In that case, exactly half of the Hilbert space is retained, so that $\mathcal{D}^{\text{dr}}_{\text{matter}} = 2^{N_V - 1}$ and the dressed matter sector can be identified with $N_V - 1$ logical qubits. For general $D_F < \infty$, however, $\mathcal{D}^{\text{dr}}_{\text{matter}}$ is not necessarily a power of two, and such an identification is not exact. Nevertheless, the dimension $\mathcal{D}^{\text{dr}}_{\text{matter}}$ provides a natural notion of an effective number of logical qubits, given by $\log_2 \mathcal{D}^{\text{dr}}_{\text{matter}}$, which we use below as a convenient bookkeeping device to compare with standard $[n,k]$ qubit codes.

The distances follow as in the bosonic case. We distinguish between $X$-type errors---consisting of matter $X$-shifts and Wilson lines---and $Z$-type operators---consisting of matter $Z$'s and electric operators $U_{\ell}^{g}$. Since all the stabilizers are $Z$-type, any $Z$-type operator commutes with the stabilizers and is therefore undetectable, implying $d_{Z'} = 1$, where the prime indicates that both matter and connection $Z$-type errors are included. On the other hand, the $X'$ distance coincides with the smallest weight of closed Wilson loops and dressed $X$-operators. Similarly to the bosonic case, the exact value depends on the geometry of the lattice, but we can guarantee $d_{X'} = 2$ or $3$.

Since the stabilizer group is generated entirely by $Z$-type operators, the code is a generalized CSS one with a trivial $X$-stabilizer sector. Equivalently, it is a generalized classical bit-flip code, as only $X$-type operators can be detected. Altogether, in the notation of Sec.~\ref{sssec:boscodepars}, the fermionic Gauss law realizes a generalized $[(N_L, N_V), (N_L - N_V + 1, \log_2 \mathcal{D}^{\text{dr}}_{\text{matter}}), d_{X'}]$-code.

\subsection{Vacuum Codes}
In the Gauss law codes that we considered so far, we took the entire gauge theory, including its kinematical data, and interpreted it as a quantum error correcting code, comprising error spaces, code space, and ``physical qubits''. Correctable errors then violate the Gauss law and thereby gauge invariance. 
We anticipate that, besides the structural insights, the main relevance of such codes pertains to quantum \emph{simulations} of lattice gauge theories, where 
a gauge theory is simulated on a subspace of a quantum system that is not itself a gauge theory \cite{Georgescu_2014,Daley:2022eja,Rajput:2021trn,Spagnoli:2024mib,Spagnoli:2026qni,Yao:2025cxs,Carrozza:2024smc}.

However, such errors cannot be expected to occur in an actual physical system described by a gauge theory, where there is no evidence of the literal breaking of bona fide gauge invariance.\footnote{Spontaneous symmetry breaking is, of course, crucial in gauge theories but does not break local gauge symmetry.} In such systems, nothing but the trivial syndrome is physically realizable, rendering Gauss law codes somewhat unphysical beyond simulations.

Accordingly, one may wonder whether one can turn Gauss law codes including errors fully gauge-invariant in an appropriate manner, thereby providing a physical realization of them that does not invoke a qualitatively different physical system (like the quantum simulator) to make sense of. This effort will culminate in what we shall refer to as \emph{vacuum} codes, which under certain conditions are unitarily equivalent to Gauss law codes. 

\subsubsection{Basic ideas underlying vacuum codes}
\label{sssec_basicvacuum}
The basic idea is to include sufficiently many additional matter degrees of freedom on the vertices of the lattice to compensate the Gauss law violations. 
Placing them in their vacuum state\footnote{Here ``the vacuum state'' means the superselection sector on which $U_{\rm matter}^{\bm{g}}$ for any $\bm{g}$ acts as identity and physically corresponds to states with vanishing charge densities. It does not necessarily mean a ground state of a particular Hamiltonian.
}, errors will now amount to gauge-invariant matter excitations above the vacuum. 
That is, rather than having that the violation of a particular form of the Gauss law constitute a breaking of gauge invariance, it will now merely indicate the excitation of previously ``frozen'' degrees of freedom (see also the discussion in \cite[Sec.~2.3]{Carrozza:2024smc}).\footnote{While Kitaev's surface codes \cite{Kitaev:1997wr,Bravyi:1998sy} certainly have a flavor of vacuum codes, we explain in section~\ref{ssec_hybrid} why they are better regarded as hybrid Gauss-law-vacuum codes.} 
For example, the analog of this in continuum electrodynamics would be a transition from the matter vacuum form of the Gauss law to one including a charge density $\rho$,
\begin{equation}
   C= \nabla\cdot\mbf{E}=0\qquad\qquad\Rightarrow\qquad\qquad C'=\nabla\cdot\mbf{E}-\rho=0\,,
\end{equation}
thus changing the form of the gauge constraint, as opposed to violating gauge invariance.

In the case of Gauss law codes, we identified the full kinematical Hilbert space $\Hil_{\rm kin}$ in Eq.~\eqref{eq_Hkinbos} with the physical space of the code, while we took its gauge-invariant (perspective-neutral) subspace $\Hil_{\rm pn}$ in Eq.~\eqref{eq_codepn} as the code space. Accordingly, gauge transformations constituted the stabilizers and error syndromes were given by Gauss law violations.  

In contradistinction,  for vacuum codes,  we identify the full gauge-invariant (perspective-neutral) Hilbert space $\Hil_{\rm pn}$ as the physical space of the code and the pertinent dressed matter vacuum as the code subspace:
    \begin{equation}\label{eq_vacuumcodespace}
        \mathcal{H}_{\text{code}} = \mathcal{H}_{\text{vac}} = \{ \ket{\Psi}_{\text{pn}} \in \mathcal{H}_{\text{pn}}    \mid u^g_{v,i} \ket{\Psi}_{\text{pn}} = \ket{\Psi}_{\text{pn}}, \enspace \forall\, v \in \mathcal{V},\,g\in G\,,\, i\in\{\text{matter types}\}  \}\,,
   \end{equation}
where the $u^g_{v,i}=\exp\left(i\alpha(g)(N_{v,i}-c_{v,i})\right)$, for some $\alpha(g)\in\mathbb{R}$ and $c_{v,i}\in\{0,1\}$ ($c_{v,i}=0$ for bosons), are the gauge transformations at vertex $v$ on matter type $i$, cf.~Eqs.~\eqref{eq_gaugegrouprepbos} and~\eqref{eq_fermgaugetr} for our previous theories. 
In other words, stabilizer transformations are now defined in terms of the number operators. Minding the subsystem factorization of the perspective-neutral space, 
 \begin{equation}
        \mathcal{H}_{\text{pn}} = \mathcal{H}_{\text{loops}} \otimes_R \mathcal{H}^{\text{dr}}_{\text{matter}}\,,
        \label{pn_bosonic}
    \end{equation}
where $\otimes_R$ is defined in \eqref{frameTPS} and which holds more generally than in the cases we have discussed so far, the corresponding code projector reads
    \begin{equation}\label{eq_vacuumcodeproj}
        \Pi_{\text{vac}} =\prod_{v\in\mathcal{V}}\,\prod_i\,\int_G{\dee g} \, u^g_{v,i}= \mathbbm{1}_{\text{loops}} \otimes_R \ket{0}\!\bra{0}^{\text{dr}}_{\text{matter}}\,.
    \end{equation}
The last equality yielding the projector onto the dressed matter vacuum assumes that no matter type is uncharged, as in all the previous theories that we considered. Indeed, the only state of a charged matter type $i$ that is gauge-invariant on its own is its vacuum $\ket{n_{v,i}-c_{v,i}=0}_{v,i}$. Recall that in staggered fermion theories the vacuum is the charge-zero state, which for antiparticles means the occupied state.

The code space is therefore one-dimensional in the matter factor and retains the full loop degrees of freedom. Logical operators are those operators on $\mathcal{H}_{\text{pn}}$ that preserve the vacuum in the dressed matter sector. Since the code projector acts trivially on $\mathcal{H}_{\text{loops}}$, the logical algebra coincides with the loop algebra, and we have
\begin{equation}\label{eq_vacuumloops}
  \Hil_{\rm code}\simeq\Hil_{\rm loops}\,,\qquad\qquad  \mathcal{A}_{\text{code}} \simeq \mathcal{A}_{\text{loops}}.
\end{equation}

All relevant structure for the error analysis therefore arises from gauge-invariant operators that move states out of the vacuum in the dressed matter sector. 
In other words, correctable errors now have to be extracted from the dressed matter algebra $\mathcal{A}^{\rm dr}_{\rm matter}$. Since the dressed matter vacuum is defined in terms of the number operators, the error syndromes will be given by the electric charges for each matter type $i$ at each vertex. Thus, a syndrome measurement can be carried out by measuring any generating set of the stabilizers, $u^g_{v,i}$.

Regarding the relationship between Gauss law codes and vacuum codes, Eq.~\eqref{eq_vacuumloops} says that the code space of vacuum codes coincides with the code space of pure gauge Gauss law codes discussed in section~\ref{sssec_pureGauss}. 
This does not necessarily mean, however, that the full quantum error correcting codes are unitarily equivalent. 
For that to be the case, we would have to have that also the error spaces inside the perspective-neutral space $\Hil_{\rm pn}$ of the vacuum code are naturally isomorphic\footnote{When both are infinite-dimensional, there will of course exist a continuum of isomorphisms. However, the nontrivial question is whether any of them is physically natural.} to the error spaces $\Hil_{\partial\bm\chi}$ in the decomposition~\eqref{eq_chargedecomp} of the kinematical space $\Hil_{\rm kin}$ of the pure gauge Gauss law code. 
Indeed, in the pure gauge Gauss law code, errors induce some vertex charge $(\partial\chi)_v$. When there are multiple matter species, this charge may be compensated in multiple ways by matter charges. Hence, several matter excitations in vacuum codes may correspond to the same error in the pure gauge Gauss law code. 

A strict unitary equivalence between the pertinent Gauss law and vacuum codes will then generally not exist. Indeed, for bosonic codes subject to a $G$ with continuous component, we will find such a unitary equivalence only upon coarse-graining the vacuum codes  in terms of the distinguishability of their matter charge excitations; i.e.\ we impose the operational restriction that one cannot distinguish the electric charges of the individual matter types, but only the total matter charges at a vertex. By contrast, for finite $G$, in fact, we will find an exact unitary equivalence.
There is only one special class of bosonic codes encompassing cases with continuous $G$, discussed briefly in section~\ref{ssssec_Gausseqvacuum}, where no such coarse-graining is needed. The coarse-grained vacuum codes then constitute a unitarily equivalent physical implementation of pure gauge Gauss law codes and can abstractly be viewed as the same codes. In the fermionic vacuum code case, the situation will be the opposite to the bosonic case: it is now the pure gauge Gauss law code that features a richer choice of error sets than the vacuum code, though on their joint domain, the codes are again unitarily equivalent.

In what follows, we will focus our attention on bosonic and fermionic vacuum codes that are related to the pure gauge Gauss law codes of section~\ref{sssec_pureGauss}. However, it is clear that one can extend the construction of vacuum codes to encompass our bosonic and fermionic Gauss law codes discussed in sections~\ref{sssec_bosGauss} and~\ref{sssec_fermGauss}, as well as more general matter theories.

\subsubsection{Bosonic vacuum codes}
Let us begin with bosonic matter vacuum codes.
\subsubsubsection{Bosonic vacuum codes equivalent to pure gauge Gauss law codes }\label{ssssec_Gausseqvacuum}
There is a special class of bosonic theories, which for $G$ with continuous component is not entirely encompassed by our previous assumptions on bosonic matter of section~\ref{ssssec_bosHil},\footnote{In section~\ref{ssssec_bosHil}, we assumed that for continuous $G$, each generator of $\hat{G}$ is modeled by an oscillator. This does not match the $L^2(G)$ Hilbert spaces of this section; e.g.\ when $G=\rm{U}(1)$, we would have $L^2(\rm{U}(1))\simeq L^2(\mathbb{Z})$, which does not naturally support two oscillators, as needed for the two generators of $\hat{G}=\mathbb{Z}$.} which  yields vacuum codes that are unitarily equivalent to the pure gauge Gauss law codes of section~\ref{sssec_pureGauss}. 

When $G$ is continuous, this theory features a few physical oddities and does not encompass scalar QED, however, for the sake of argument, let us discuss it regardless. 
This is given by choosing a bosonic matter species such that the vertex Hilbert space in Eq.~\eqref{eq_Hbosonic_matter} is simply\footnote{For $G=\rm{U}(1)$, this is a system of a rotor matter coupled to a $\rm{U}(1)$ gauge field which appears as a low energy effective theory of $\rm{U}(1)$ gauge theory with charge 1 complex scalar and Higgs potential. The case for $G=\mathbb{Z}_N$ corresponds to a system of a clock matter to a $\mathbb{Z}_N$ gauge field. This system appears as a deep IR energy effective theory of $\rm{U}(1)$ gauge theory with charge $N$ Higgs scalar.}

\begin{equation}\label{eq_l2g}
    \Hil_v=L^2(G)\,,
\end{equation}
thus isomorphic to a link Hilbert space. 
The stabilizers $u^g_v$ are then given by the convention of choosing either the left or right regular representation of $G$ on $\Hil_v$. 
By the Peter-Weyl theorem, $\Hil_v$ contains every irrep of $G$ once and only once since $G$ is Abelian and all of these irrep spaces are one-dimensional. The trivial representation $\Hil_{v,\chi=1}$ will then define the vacuum at any vertex, $\ket{0}_v\coloneqq\ket{\chi=1}_v$, while all $\Hil_{v,\chi\neq1}$ correspond to matter excitations. 

When the group decomposes into a product $G=G_1\times G_2\times G_3\times\cdots$ we also have that the vertex Hilbert space factorizes accordingly, $\Hil_v=L^2(G_1)\otimes L^2(G_2)\otimes L^2(G_3)\otimes\cdots$, and we would be entitled to associate the different factors with distinct bosonic matter species. In what follows, let us assume, for simplicity, that $G$ is indecomposable. The decomposable case proceeds in the same manner, except that the following argumentation would have to be made for each species separately. The overall conclusion would be the same.

Treating nontrivial characters as excitations, the number operator reads $N_v=\sum_{\chi\in\hat{G}}n_\chi\ket{\chi}\!\bra{\chi}_v$ for some ordering of the $\chi\neq1$,  and, since $u^g_v$ is diagonal in $\ket{\chi}_v$, we can write $u^g_v=\exp(i\alpha(g) N_v)$ with $\alpha$ so that $\chi(g) = \exp(i\alpha(g)n_\chi)$ with $\alpha\in\mathbb{R}$. 
Since we must have that $\bar{\chi}(g)\chi(g)=1$ for all $g\in G$ and both $\chi,\bar\chi\in\hat{G}$, we see that this requires $n_\chi=-n_{\bar\chi}$ when $\hat{G}$ is of infinite cardinality (so $G$ continuous), so that the occupation number $n_\chi$ 
must be allowed to take \emph{negative} values in that case. Indeed, when $G=\rm{U}(1)$, we have that $\hat{G}=\mathbb{Z}$ and we can simply choose $n_\chi\equiv z\in\mathbb{Z}$. 
The interpretation of this is that we treat the positive occupation numbers as particles and the negative ones as antiparticles, reminiscent of the Dirac sea. This is consistent with the fact that these will correspond to positive and negative charges. When $G$ is finite, by contrast, we only need that $n_\chi+n_{\bar\chi}=0\mod|G|$.

It is straightforward to check that the code space~\eqref{eq_vacuumcodespace} is
\begin{equation}
    \Hil_{\rm code}=\Hil_{\rm vac} = \Hil_{\rm loops}\otimes_R\bigotimes_{v\in\mathcal{V}}\ket{0}_v\,
\end{equation}
and thus isomorphic to the one of the corresponding pure gauge Gauss law code in section~\ref{sssec_codespacepure}.
Crucially, we now also have that the dressed matter Hilbert space has the form
\begin{equation}
    \Hil_{\rm matter}^{\rm dr} \simeq L^2(G)^{\otimes N_V-1}\,,
\end{equation}
corresponding to $N_V-1$ copies of the vertex Hilbert spaces. The ``lost'' degree of freedom corresponds to $\Hil_{v_0}$.
This can be seen by following the same steps of section~\ref{ssssec_boscodespace} for the present bosonic theory. 
Indeed, the analog of Eq.~\eqref{eq_bostotalchargezero} implies $\sum_v n_v=0$ in the infinite order case and $\sum_v n_v=0\mod|G|$ in the finite group order case. 
In both cases, this means that the state in $\Hil_{v_0}$ is fixed through gauge invariance by the states elsewhere and so we have the basis states
\begin{equation}\label{eq_chin}
    \ket{\bf{n}}^{\rm dr}_{\rm matter}=\ket{\partial\bm\eta^{({\bf n})}}_R\otimes\ket{\bf n}_{\rm matter}\,,\qquad\qquad {\bf n}=(n_v)_{v\neq v_0}\,,
\end{equation}
spanning $N_V-1$ copies of $L^2(G)$. Here, $(\partial\eta^{({\bf n})})_v(g)=\bar{\chi}_{n_v}(g)$, where $\chi_{n_v}(g)=\exp(i\alpha(g) n_v)$ labels the matter charge corresponding to occupation $n_v$ at $v$.

This also means that the $\ket{\bf n}^{\rm dr}_{\rm matter}$ with ${\bf n}\neq{\bf 0}$ span the error spaces of the vacuum code, the syndrome being ${\bf n}$, which is equivalent to the matter charges $\bm{\chi}_{\bf n}=(\chi_{n_v})_{v\neq v_0}$. Hence, the error spaces of the vacuum code are unitarily equivalent to the error spaces $\Hil_{\partial\bm\chi}$ in Eq.~\eqref{eq_chargedecomp} of the pure gauge Gauss law code, which too only run over the gauge charge configurations at $v\neq v_0$. 

Let us now consider the errors that lead to the transitions $\ket{\bf 0}^{\rm dr}_{\rm matter}\to\ket{\bf n}^{\rm dr}_{\rm matter}$, beginning with the case that $G$ is finite. In this case, we can build particle and antiparticle creation and annihilation operators in the same way as for the finite order boson types in section~\ref{ssssec_bosHil}. Minding that we have only a single species in the present case, we denote the particle and antiparticle ladder operators as $a_v,a_v^\dag$ and $\bar{a}_v,\bar{a}_v^\dag$, respectively, for \emph{any} $v\in\mathcal{V}$. They transform as 
\begin{align}
    U^{\bm g}a_v\,U^{\bm{g}\dag}&=\bar{\chi}_1(g_v)\,a_v\,,\qquad\qquad U^{\bm g}a^\dag_v\,U^{\bm{g}\dag}={\chi}_1(g_v)\,a^\dag_v\nonumber\\
    U^{\bm g}\bar{a}_v\,U^{\bm{g}\dag}&={\chi}_1(g_v)\,\bar{a}_v\,,\qquad\qquad U^{\bm g}\bar{a}^\dag_v\,U^{\bm{g}\dag}=\bar{\chi}_1(g_v)\,a^\dag_v\,, \label{eq_ladderbos}
\end{align}
where $\chi_1$ was defined below Eq.~\eqref{eq_chin}. 
We can then build gauge-invariant bipartite operators as in Eq.~\eqref{eq_drmatteralg}. The only ones that do not annihilate the dressed matter vacuum are the tree Wilson line dressed particle-antiparticle combinations of creation operators, 
\begin{equation}
    \widetilde{W}^{\chi_1}_{\gamma_R[v,v']}\coloneqq\bar{a}_v^\dag\,W^{\chi_1}_{\gamma_R[v,v']}\,a_{v'}^\dag\,,
\end{equation}
and they generate the elementary moves
\begin{align}
\widetilde{W}^{\chi_1}_{\gamma_R[v,v']}\,\ket{\bf 0}_{\rm matter}^{\rm dr}&\propto\ket{0,\ldots,0,n_v=|G|-1,0,\ldots,0,n_{v'}=1,0,\ldots}^{\rm dr}_{\rm matter}\,,\qquad v\neq v_0\,,\nonumber\\
\widetilde{W}^{\chi_1}_{\gamma_R[v_0,v']}\,\ket{\bf 0}_{\rm matter}^{\rm dr}&\propto\ket{0,\ldots,0,n_{v'}=1,0,\ldots}^{\rm dr}_{\rm matter}\,,\label{eq_aadag0}
\end{align}
in the latter case leaving the change in the dependent variable $n_{v_0}$ implicit, cf.~Eq.~\eqref{eq_chin}.
Via Eq.~\eqref{eq_chin}, this leads to the corresponding gauge charge transition $\partial\bm\eta^{({\bf 0})}\to\partial\bm\eta^{({\bf n})}$ in the electric frame data. By combinations of such bipartite operators one can create any transition $\ket{\bf 0}^{\rm dr}_{\rm matter}\to\ket{\bf n}^{\rm dr}_{\rm matter}$ and thus any gauge charge transition $\partial\bm\eta^{({\bf 0})}\to\partial\bm\eta^{({\bf n})}$, exhausting the $\partial\bm\chi$-spectrum in Eq.~\eqref{eq_chargedecomp}. This is especially clear from the second line in Eq.~\eqref{eq_aadag0}. 

Notably, the combination of the tree Wilson lines between the ladder operators constitutes the non-loop contribution $\prod_{\ell\in\mathcal{L}}W^{\chi_\ell}_{\gamma_R[v_i,v_f]}$ inside $W^{\bm\chi({\bf n})}$ in Eq.~\eqref{eq_treenontree}. Of course, to also encompass the loop contributions $H^{\bm\chi_S}$, one could simply multiply them in from the loop algebra $\mathcal{A}_{\rm loops}$. In this way, we obtain a one-to-one correspondence between the gauge-invariant matter excitation errors $\widetilde{W}^{\bm\chi({\bf n})}$ of the vacuum code and the errors $W^{\bm\chi}$ of the pure gauge Gauss law code of section~\ref{sec:pgmaxsets}. Specifically, also the discussion of maximal sets  carries over to the vacuum code. Indeed, there are many maximal sets of correctable errors; e.g., it is easy to see that $\widetilde{W}^{\chi_1}_{\gamma_r[v,v_0]}\widetilde{W}^{\chi_1}_{\gamma_R[v_0,v']}$ and $\widetilde{W}^{\chi_1}_{\gamma_R[v,v']}$ lead to the same syndrome in terms of occupation numbers at $v\neq v_0$, although the former has $a_{v_0}^\dag\bar{a}_{v_0}^\dag$ at the root.

When $G$ is continuous, the conclusion is the same. In this case, we can again build two pairs of particle and antiparticle creation and annihilation operators, $a_v,a_v^\dag$ and $\bar{a}_v,\bar{a}_v^\dag$, respectively, where the former cover $0$ and the positive eigenvalues of $N_v$ in the usual harmonic oscillator manner, while the latter cover $0$ and the negative eigenvalues with $\bar{a}_v^\dag$ lowering and $\bar{a}_v$ raising them. Their transformation is then again given by Eq.~\eqref{eq_ladderbos}, and we can proceed as in the finite $G$ case, except that the elementary bipartite particle-antiparticle operators raise and lower occupation numbers by $\pm1$ in the analog of Eq.~\eqref{eq_aadag0}. 
The remainder of the argumentation is, however, completely analogous.

In conclusion, vacuum codes with  bosonic matter of the special form in Eq.~\eqref{eq_l2g} are unitarily equivalent to the pure gauge Gauss law codes of section~\ref{sssec_pureGauss}.

\subsubsubsection{More general bosonic vacuum codes}
Next, let us consider the more general class of bosonic matter theories described in section~\ref{ssssec_bosHil}, which encompasses scalar QED. We note that for finite $G$, this class encompasses the class of theories described in the previous subsection~\ref{ssssec_Gausseqvacuum}, in which case it will once more yield an exact unitary equivalence with pure gauge Gauss law codes. The main difference arises for continuous $G$, where such an equivalence is only obtained upon a coarse-graining of the vacuum code.

We noted below Eq.~\eqref{eq_vacuumloops} that correctable errors must be extracted from the dressed matter algebra, which in the present case is given by Eq.~\eqref{eq_drmatteralg}. Since all the particle-particle or antiparticle-antiparticle dressed Wilson lines annihilate the vacuum $\ket{\bf 0}^{\rm dr}_{\rm matter}$, we again restrict our analysis to particle-antiparticle type dressed Wilson lines errors. It is convenient to organize arbitrary products of such operators into the form 
\begin{equation}
    \widetilde{W}^{\bm{\chi}(\bm{k})} \coloneqq \prod_{\ell \in \mathcal{L}} \left( \prod_{\alpha \in A} (a_{v_i(\ell), \bar{\alpha}}^{\dagger})^{k_{\ell, \alpha}} (a_{v_i(\ell), \alpha}^{\dagger})^{k_{\ell,\bar{\alpha}}} W_{\ell}^{\rho_{\alpha}^{\Delta_{\alpha} \bm{k}_{\ell} }} (a_{v_f(\ell), \bar{\alpha}}^{\dagger})^{k_{\ell, \bar{\alpha}}}(a_{v_f(\ell),\alpha}^{\dagger})^{k_{\ell, \alpha}}   \right)\,,
\end{equation}
where we recall that $\alpha$ runs over species (cf.~the discussion below Eq.~\eqref{eq_speciescontrib}), $k_{\ell,i}, k_{\ell, \bar{i}} \in \mathbb{Z}_{D_i}$, $k_{\ell, j} \in \mathbb{N}_0$ for each $\ell \in \mathcal{L}, 0 \leq i < p \leq j < r $, and the operator $\Delta_\alpha\bm{k}$ is defined  in Eq.~\eqref{eq_Delta}.
The integer data $\bm{k}$
therefore specifies a flux configuration on the links, expressed in the chosen generating set $\{  \rho_i \}_{i=0}^{r-1}$ of $\hat{G}$. 

To evaluate the Knill-Laflamme condition, it may be checked that, similarly to Eq.~\eqref{eq_aadag0},\footnote{Note that, in contrast to Eq.~\eqref{eq_chin} of the previous subsection, ${\bf n}$ encompasses all $v\in\mathcal{V}$ in the present case, in line with the formulation around Eq.~\eqref{partialchin}.} 
\begin{align}
   \widetilde{W}^{\bm{\chi}(\bm{k})} \Pi_{\text{vac}} &\propto H^{\bm{\chi}_S(\bm{k})}\prod_{v \in \mathcal{V}} \left( \bigotimes_{i = 0}^{p-1} \ket{ D_i - \nabla( \bm{k}_i - \bm{k}_{\bar{i}} )_v \mod D_i } \!\bra{0}_{v,i}^{\text{dr}}  \right)\nonumber\\
   &\qquad\qquad\,\,\,\,\,\otimes \left( \bigotimes_{j=p}^{r-1} \ket{\sum_{\ell \in \mathcal{L}_{\text{out}}(v) } k_{\ell, \bar{j}} + \sum_{\ell' \in \mathcal{L}_{\text{in}}(v)} k_{\ell', j}} \!\bra{0}^{\text{dr}}_{v,j}  \right)\,,\label{eq_Wpivac}
\end{align}
where $\nabla$   denotes the discrete divergence across $v$,
\begin{equation}
    \nabla( \bm{k}_i)_{v} = \sum_{\ell \in \mathcal{L}_{\text{out}}(v)} {k}_{\ell,i} - \sum_{\ell' \in \mathcal{L}_{\text{in}}(v)} {k}_{\ell',i}\,,
\end{equation}
$\Pi_{\rm vac}$ is given in Eq.~\eqref{eq_vacuumcodeproj}, and
where the proportionality reflects overall normalization and phase factors originating from the creation and annihilation operators entering the dressed Wilson line. Inserting this expression into the Knill-Laflamme condition yields 
\begin{equation}
\begin{split}
\Pi_{\text{vac}}\widetilde{W}^{\bm{\chi}(\bm{k}_a) \dagger} \widetilde{W}^{\bm{\chi}(\bm{k}_b)}\,\Pi_{\text{vac}}
&\propto H^{\bar{\bm{\chi}}_S(\bm{k}_a)\bm{\chi}_S(\bm{k}_b)} \prod_{v \in \mathcal{V}} \Bigg(
\prod_{i = 0}^{p-1} 
\delta_{\nabla (\bm{k}_{a,i} - \bm{k}_{a, \bar{i}})_v, \nabla(\bm{k}_{b,i} - \bm{k}_{b,\bar{i}})_v} \\
& \prod_{j=p}^{r-1} 
\delta\!\left(
\sum_{\ell \in \mathcal{L}_{\text{out}}(v)} k_{a,\ell, \bar{j}} 
+ \sum_{\ell' \in \mathcal{L}_{\text{in}}(v)} k_{a, \ell', j},
\sum_{\ell \in \mathcal{L}_{\text{out}}(v)} k_{b,\ell, \bar{j}} 
+ \sum_{\ell' \in \mathcal{L}_{\text{in}}(v)} k_{b, \ell', j}
\right)
\Bigg)\Pi_{\text{vac}}\,.\label{eq_KLbosvacuum}
\end{split}
\end{equation}
Therefore, similarly to the pure gauge Gauss law code, any two such dressed Wilson line errors are simultaneously correctable as long as they are not related only by multiplication with Wilson loops.

Let us consider the syndromes of the vacuum code. As noted in section~\ref{sssec_basicvacuum}, syndrome measurements correspond to measuring any generating set of the $u^g_{v,i}$, or, equivalently, the matter vertex charge \emph{for each type $i$}. The integer link data $\bm{k}$ associated with a product of dressed
Wilson line operators $\widetilde{W}^{\bm{\chi}(\bm{k})}$ determines the matter vertex charge that is detected by such a
measurement (cf.~Eq.~\eqref{eq_Wpivac}):
\begin{equation}
    Q_{v,i}(\bm{k}) \coloneqq  \bar{\rho}_i^{\nabla(\bm{k}_i - \bm{k}_{\bar{i}})_v}\,,\qquad0\leq i<p\,,\qquad\qquad Q_{v,j}(\bm{k})\coloneqq\rho_j^{\sum_{\ell\in\mathcal{L}_{\rm out}(v)}k_{\ell,\bar{j}}+\sum_{\ell'\in\mathcal{L}_{\rm in}(v)}k_{\ell',j}}\,,\qquad p\leq j<r\,,\label{eq_bosvacuumsyndrome}
\end{equation}
where $\rho_i\in\hat{G}$ denotes the charge carried by particle type $i$.
The syndrome of $\widetilde{W}^{\bm{\chi}(\bm{k})}$ is therefore
the character-valued charge configuration
$\bm{Q}_i(\bm{k})=(Q_{v,i}(\bm{k}))_{v\in\mathcal{V}}$ for each type $i\in\mathbb{Z}_r$. 

Two errors $\widetilde{W}^{\bm{\chi}(\bm{k}_a)},\widetilde{W}^{\bm{\chi}(\bm{k}_b)}$ that yield the same charge profile $Q_{v,i}(\bm{k}_a)=Q_{v,i}(\bm{k}_b)$ for all $i\in\mathbb{Z}_r$ and $v\in\mathcal{V}$ are thus indistinguishable, and this corresponds precisely to the case that the Kronecker deltas in the Knill-Laflamme condition~\eqref{eq_KLbosvacuum} ``click''. This is the case when the two errors differ by a logical operation, such as a product of Wilson loops or number operators applied to any vertex. 

To entertain maximal sets of correctable errors, consider the map
\begin{align} 
   & \partial_{B,\rm{vac}}:\bigtimes_{0\leq i<p}\,\mathbb{Z}_{D_i}\times\bigtimes_{p\leq j<r}\mathbb{N}_0\to\hat{G}^{rN_V}\nonumber\\
 &   \partial_{B,\rm{vac}}(\bm{k}) = \left(Q_{v,i}(\bm{k})\right)_{i\in\mathbb{Z}_r,v\in\mathcal{V}}\,, \label{eq_vacbundle}
\end{align} 
which, in analogy to the Gauss law codes, constitutes a discrete fiber bundle. As in the case of bosonic Gauss law codes, a maximal set of correctable errors is parametrized by any section of this bundle.

\subsubsubsection{Unitary equivalence with pure gauge Gauss law codes upon coarse-graining}\label{ssssec: unitaryequivvaccodes}
Let us now investigate the relation of these bosonic vacuum codes with the pure gauge Gauss law codes of section~\ref{sssec_pureGauss}. As noted around Eq.~\eqref{eq_bosvacuumsyndrome}, the syndromes of vacuum code errors distinguishes not only the total charge at each vertex, as the syndromes of Gauss law codes, but even the individual contributions of the different boson types $i$. \emph{A priori} this appears like a finer syndrome with more information. However, this is only the case when $G$ is continuous, as we will now explain.

To relate vacuum codes to pure gauge Gauss law codes, we thus need to introduce the operational coarse-graining condition that we are henceforth no longer entitled to distinguish individual type charges at a vertex, but only the total matter charge.
Hence, rather than Eq.~\eqref{eq_bosvacuumsyndrome}, we restrict to the coarser total charge measurement
\begin{equation}\label{eq_vactotalcharge}
    Q_v(\bm{k}) \coloneqq \rho_v(\bm{k})=\prod_{0 \leq j,\bar{j} < r} {\bar{\rho}}_j^{\nabla(\bm{k}_j - \bm{k}_{\bar{j}})_v}\,.
\end{equation}
The coarse-grained syndrome of an error $\widetilde{W}^{\bm{\chi}(\bm{k})}$ is therefore
the total charge configuration
$\bm{Q}(\bm{k})=(Q_v(\bm{k}))_{v\in\mathcal{V}}$. Gauss's law~\eqref{eq_bosonicGauss} implies that this charge configuration coincides with the (exponentiated) discrete
divergence of the flux pattern induced on the links,
\begin{equation}\label{eq_Gaussvacuumcode}
    \partial\bm{\chi}(\bm{k})=\bar{\bm{Q}}(\bm{k})\,.
\end{equation}
The left hand side coincides with the syndromes of the pure gauge Gauss law codes. Note that $\bm{k}$ induces an occupation number ${\bf n}(\bm{k})$ when the corresponding $\widetilde{W}^{\bm{\chi}(\bm{k})}$ acts on $\ket{\bf 0}^{\rm dr}_{\rm matter}$, which is many-to-one, i.e.\ many $\bm{k}$ configurations yield the same ${\bf n}(\bm{k})$. For example, two errors $\widetilde{W}^{\bm{\chi}(\bm{k}_a)}, \widetilde{W}^{\bm{\chi}(\bm{k}_b)}$ may differ only by Wilson loops, which means they still have $\bm{k}_a\neq\bm{k}_b$, without generating distinct occupation number configurations. What about the relation between the total charge configuration $\bm{Q}(\bm{k})$ and ${\bf n}$?

In the case that $G$ is finite -- and so are all the generators $\rho_i$ of the different boson types -- we have that the total vertex charge $Q_v(\bm{k})$ uniquely specifies the occupation number of each type via Eq.~\eqref{eq_rhototbos} and so also the vertex charge contribution of each $i$. Thus, in this case, the finer syndrome of the vacuum code does not, in fact, encode more information than the coarser syndrome of the corresponding pure gauge Gauss law code.

By contrast, when $\hat{G}$ has infinite order generators $\rho_j$, this one-to-one correspondence is broken. This can be seen from the fact that for infinite order species $\alpha = j,\bar{j}$, the charge contribution is $\bar{\rho}_j^{\nabla(\bm{k}_j-\bm{k}_{\bar j})_v}=Q_{v,j}(\bm{k})Q_{v,\bar{j}}(\bm{k})$, which depends on the \emph{difference} of the occupation numbers $n_{v,j}-n_{v,\bar{ j}}$. As these are independent in the infinite order case, there is an infinite degeneracy with multiple distinct ${\bf n}$ configurations yielding the same $Q_{v}(\bm{k})$. In this case, the finer syndrome of the vacuum code \emph{does} contain more information about which error occurred.

Let us now clarify what we mean by ``coarse-graining the bosonic vacuum code into a pure gauge Gauss law code''. We have seen in section~\ref{sssec_basicvacuum} that the code spaces of the two codes are naturally isomorphic. The discussion of the syndromes, on the other hand, tells us that the error spaces of the two types of codes are only naturally isomorphic when $G$ is finite, and otherwise the charge coarse-graining provides a means to relate the error spaces. We shall now specify more precisely what this means technically.

The perspective-neutral Hilbert space of the vacuum code factorizes as in Eq.~(\ref{eq_subsystemcode}), where the dressed matter factor is spanned by the states in Eq.~\eqref{partialchin}. Recalling that $\partial\bm\chi^{({\bf n})}$ is uniquely determined by ${\bf n}$ via Gauss's law, the dressing construction therefore induces a linear map\footnote{This is essentially a Page-Wootters conditioning~\eqref{eq_PW} on the non-invariant occupation state $\bra{\bf n}_{\rm matter}$.}
\begin{equation}
\Pi_R:\mathcal{H}^{\text{dr}}_{\text{matter}} \to \mathcal{H}_R , \qquad 
\ket{\partial\bm{\chi^{({\bf n})}}}_R \otimes \ket{\bf{n}}_{\text{matter}} \mapsto \ket{\partial\bm{\chi^{({\bf n})}}}_R \,,
\end{equation}
which associates to each dressed matter configuration the flux profile that it induces on the 
links. The left hand side corresponds to the states spanning the error spaces of the vacuum code, the right hand side are the states which span the error spaces of the pure gauge Gauss law code via Eqs.~\eqref{eq_chargedecomp} and~\eqref{frameTPS}. The map thus provides the link between the error spaces. It corresponds to discarding the microscopic particle content and 
retaining only the induced gauge flux. It is surjective, since every flux configuration arises from some occupation assignment. 
However, as we noted above, it is not injective when $G$ has a continuous component, in which case distinct matter configurations ${\bf n, n'}$ induce the same flux
\begin{equation}
\partial\bm{\chi^{({\bf n})}}=\partial\bm{\chi^{({\bf n'})}},
\end{equation}
and the corresponding dressed states differ by a vector in the kernel of $\Pi_R$. In fact, the kernel is precisely the span of such differences. Passing to the quotient by this kernel removes the microscopic degeneracy. By the first isomorphism theorem, the induced map
\begin{equation}\label{eq_tildePiR}
\tilde{\Pi}_R :
\mathcal{H}^{\text{dr}}_{\text{matter}}/
\ker(\Pi_R)
\;\longrightarrow\;
\mathcal{H}_R ,
\qquad
\tilde{\Pi}_R\big([\ket{\bf{n}}^{\text{dr}}_{\text{matter}}]\big)
=
\Pi_R \ket{\bf{n}}^{\text{dr}}_{\text{matter}}
\end{equation}
is a linear isomorphism. Extending trivially on the loop sector yields the canonical identification
\begin{equation}
\mathcal{H}_{\text{pn}}/
\ker(\mathbbm{1}_{\text{loops}}\otimes\Pi_R)
\;\simeq\;
\mathcal{H}_{\text{loops}} \otimes \mathcal{H}_R ,
\end{equation}
between the coarse-grained physical space of the vacuum code, on the left hand side, and the kinematical Hilbert space of the pure gauge theory, on the right hand side.

The quotient becomes unnecessary when all matter species have finite-dimensional Hilbert 
spaces. In that case the decomposition of characters in terms of the generators $\rho_i$ 
is unique, so distinct occupation configurations necessarily induce distinct flux 
configurations. The map $\Pi_R$ is therefore already injective and the kernel is trivial. 
The quotient construction nevertheless remains formally valid, but reduces to the identity 
in this case.

Under the quotient identification, a dressed Wilson line product operator $\widetilde{W}^{\bm{\chi}(\bm{k})}$ descends to the bare Wilson line operator $W^{\bm{\chi}(\bm{k})}$ in Eq.~\eqref{prodWilsonlineswithchis}:
\begin{equation}
    \Theta(\widetilde{W}^{\bm{\chi}(\bm{k})}) \coloneqq \tilde\Pi_R\,\widetilde{W}^{\bm\chi(\bm{k})}\,\tilde\Pi_R^\dag=W^{\bm{\chi}(\bm{k})}\,.
\end{equation}
Thanks to the Gauss law in Eq.~\eqref{eq_Gaussvacuumcode}, the induced operator has the same character-valued divergence as the matter charge created by $\widetilde{W}^{\bm{\chi}(\bm{k})}$. Thus, $\Theta$ preserves the coarse-grained syndrome and maps vacuum code errors to pure gauge Gauss law code errors with identical charge profile. 

Finally, let us come to the discussion of maximal sets of correctable errors. To this end, note that the total charge map in Eq.~\eqref{eq_vactotalcharge} once more defines a discrete fiber bundle
\begin{align}
    {\bm Q}:\mathbb{Z}^{r N_L}\to \hat{G}^{N_V-1}\,,\qquad \bm{k}\mapsto \bm{Q}(\bm{k})\,.
\end{align}
since, as noted earlier, many $\bm{k}$ configurations yield the same total charge configuration. Now, let 
\begin{equation}
    s_{\text{vac}}: \hat{G}^{N_V -1} \to \{ \bm{k} \in \mathbb{Z}^{N_Lr} | k_{\ell, i}  \in \mathbb{Z}_{D_i} \enspace \forall \ell \in \mathcal{L}, i \in \mathbb{Z}_r  \}, \qquad \bm{Q}(s_{\text{vac}}(\bm{q})) = \bm{q}, \enspace \forall\, \bm{q} \in \hat{G}^{N_V - 1}
\end{equation}
be one of its sections, indexing a maximal correctable set in the coarse-grained vacuum code,
\begin{equation}
    \mathcal{E}_{s_{\text{vac}}}^{\text{vac}} = \{ \widetilde{W}^{s_{\text{vac}}(\bm{q})}    \mid \bm{q} \in \hat{G}^{N_V - 1}  \} \,,
\end{equation}
and define:
\begin{equation}
    s_{\text{GL}}: \hat{G}^{N_V - 1} \to \hat{G}^{N_L}, \qquad s_{GL}(\bm{q}) = {\bm{\chi}}(s_{\text{vac}}(\bm{q}))\,.
\end{equation}
Then, via Eqs.~\eqref{eq_Gaussvacuumcode} and the pure gauge Gauss law map~\eqref{boundarymap}, we have
\begin{equation}
    \partial(s_{\text{GL}}(\bm{q})) = \bar{\bm{q}}\,,
\end{equation}
so $s_{\text{GL}}$ maps the coarse-grained vacuum code syndrome to the pure gauge Gauss law code syndrome. The corresponding maximal correctable set~\eqref{eq_puremaxset} in the pure gauge Gauss law code is therefore
\begin{equation}
    \mathcal{E}_{s_{\text{GL}}} = \{ W^{s_{\text{GL}}(\bm{q})} \mid \bm{q} \in \hat{G}^{N_V - 1}    \},
\end{equation}
and by construction
\begin{equation}
    \Theta(\mathcal{E}_{s_{\text{vac}}}^{\text{vac}}) = \mathcal{E}_{s_{\text{GL}}}.
\end{equation}
Thus, every maximal correctable set in the vacuum code induces a maximal correctable set in the pure gauge Gauss law code. This relation is many-to-one: two vacuum code sections $s_{\text{vac}}, s_{\text{vac}}'$ give rise to the same Gauss law code section if and only if they induce the same gauge flux on the vertices
\begin{equation}
    \partial\bm{\chi}^{({\bf n}(s_{\text{vac}}(\bm{q})))} = \partial\bm{\chi}^{({\bf n}(s_{\text{vac}}'(\bm{q})))} \qquad \forall \,\bm{q} \in \hat{G}^{N_V - 1}.
\end{equation}
In terms of maximal sets, we therefore declare 
\begin{equation}
    \mathcal{E}_{s_{\text{vac}}}^{\text{vac}} \sim \mathcal{E}_{s_{\text{vac}}'}^{\text{vac}} \iff  \Theta(\mathcal{E}^{\text{vac}}_{s_{\text{vac}}})=\Theta(\mathcal{E}^{\text{vac}}_{s_{\text{vac}}'}).
\end{equation}
With this equivalence relation, the induced map $\Theta$ becomes bijective. Therefore, maximal correctable sets in the Gauss law code are in one-to-one correspondence with equivalence classes of maximal correctable sets in the vacuum code. The equivalence identifies different lifts that differ only by occupation number degeneracies invisible to the pure gauge theory data, whenever $G$ has a continuous component. 

This equivalence relation implements the operational restriction that we may distinguish total vertex charges, however, not the individual species contributions to it. Under this restriction, the vacuum code is unitarily equivalent to the pure gauge Gauss law code also when $G$ has a continuous component, and the map $\Theta$ establishes a correspondence between their maximal correctable sets after coarse-graining. Whenever $G$ is finite, the two codes are unitarily equivalent without coarse-graining.

\subsubsection{Fermionic vacuum codes}
\label{sec:fermi-vac}
Lastly, we turn to  codes based on fermionic matter residing in a gauge-invariant vacuum state. To describe the fermionic degrees of freedom, we once more invoke the staggered construction of section~\ref{sssec_fermGauss}, except that we now use the dressed matter vacuum as the code space and the perspective-neutral space~\eqref{pn_bosonic}, described in Eq.~\eqref{eq_Hpnferm} as the code's physical space.
As noted in section~\ref{sssec_basicvacuum}, correctable errors must be extracted from the dressed fermionic matter algebra, which is given in Eqs.~\eqref{eq_inffermalg} and~\eqref{eq_finfermalg} for the cases of infinite and finite cardinality of $G$, respectively. Here, we will only focus on the differences with the bosonic case.

The main modification arises from the staggered realization of the matter gauge transformations in Eq.~\eqref{eq_fermgaugetr}, which constitute the stabilizers of the fermionic vacuum code, cf.~\eqref{eq_vacuumcodespace}.

In the vacuum, one has $n_v = c_v$, so the state is gauge-invariant. The code space is thus of the form (cf.~\eqref{eq_fermdrstates})
\begin{equation}
    \Hil_{\rm vac}=\Hil_{\rm loops}\otimes_R\ket{\bf c}^{\rm dr}_{\rm matter}\,,
\end{equation}
where
\begin{equation}
\ket{\bf c}^{\rm dr}_{\rm matter}=\ket{\bf 1}_R\otimes\ket{{\bf n=c}}_{\rm matter}\,,\qquad\qquad \ket{\bf n=c}_{\rm matter}=\widehat{\otimes}_{v\in\mathcal{V}}\ket{n_v=c_v}_v\,.
\end{equation}

Errors are given by combinations of the dressed Wilson lines of the form $\psi_v\,W^{\chi_F}_{\gamma_R[v,v']}\,\psi^\dag_{v'}$ and, only in the $D_F<\infty$ case, the star operators in Eq.~\eqref{eq_fermstar}. These errors must shift the occupation numbers away from their vacuum values. Because of the staggered filling, the allowed shifts are fixed,
\begin{equation}
    \Delta N_v = \begin{cases}
        + 1, & v \text{ even} \\
        -1, & v \text{ odd}
    \end{cases}\,,
\end{equation}
so only odd sites can be hit by a $\psi_v$ and only even ones by a $\psi_v^\dag$. Accordingly, a dressed Wilson line acts nontrivially on the vacuum only if the shift in vertex charge obeys
\begin{equation}\label{eq_fermvacadmiss}
    (\partial \bm{\chi})_v \in \{1, \bar{\chi}_F^{\Delta N_v} \}
\end{equation}
with $\Delta N_v$ given above. If this condition fails at some vertex, the operator necessarily annihilates the vacuum and therefore cannot constitute a correctable error. 

Let $\mathcal{F}_L \subset \hat{G}^{N_L}$ denote the set of character-valued link data $\bm{\chi}$ satisfying this condition and $\mathcal{F}_{V} \coloneqq \partial(\mathcal{F}_L) \subset \hat{G}^{N_V-1}$ its image under the Gauss law map.\footnote{As we remain gauge-invariant in vacuum codes, it suffices to check the syndrome at only $N_V-1$ vertices due to the global neutrality constraint Eq.~\eqref{global_neutrality_fermions}.} The syndrome of a dressed Wilson line error $\widetilde{W}^{\bm{\chi}}$ with $\bm{\chi} \in \mathcal{F}_L$ is given by $\partial \bm{\chi} \in \mathcal{F}_V$. Accordingly, the structure of maximal sets of correctable dressed Wilson line errors is encoded in the map
\begin{equation}
    \partial_{F, \text{vac}}: \mathcal{F}_L \to \mathcal{F}_V, \qquad \partial_{F, \text{vac}}(\bm{\chi}) \coloneqq \partial \bm{\chi},
\end{equation}
that is, the Gauss law map Eq.~\eqref{boundarymap} extended to all vertices and restricted to the admissible flux configurations on the links. In analogy with the pure gauge Gauss law code, this map constitutes a discrete fiber bundle, and maximal correctable sets of dressed Wilson lines are parametrized by its sections.

Comparing with the pure gauge Gauss law code, whose maximal correctable sets of Wilson lines are parametrized by sections of $\partial: \hat{G}^{N_L} \to \hat{G}^{N_V-1}$, restriction to $\mathcal{F}_L$ identifies \emph{a priori} different sections of $\partial$. Consequently, there are multiple maximal sets in the pure gauge Gauss law code that correspond to the same maximal set in the fermionic vacuum code, whereas going in the opposite direction, we obtain submaximal correctable sets from every maximal set. At the level of individual errors, however, the correspondence becomes one-to-one on $\mathcal{F}_L$: for every $\bm{\chi} \in \mathcal{F}_L$, the Wilson line $W^{\bm{\chi}}$ admits a unique admissible dressing $\widetilde{W}^{\bm{\chi}}$ that does not annihilate the vacuum, and one can similarly undress every such admissible $\widetilde{W}^{\bm{\chi}}$ by simply removing the matter operators $\psi_v$ and $\psi_{v}^{\dagger}$.

This structural equivalence should be compared with the bosonic case. There, unitary equivalence with the pure gauge Gauss law code is obtained only after coarse-graining the syndrome, reflecting the fact that distinct matter configurations may induce the same charge profile. In the fermionic case, the charge is uniquely determined by the local occupation, so this ambiguity does not arise and no coarse-graining is required. Instead, the admissibility condition Eq.~\eqref{eq_fermvacadmiss} restricts the set of allowed Wilson lines and consequently, the correspondence does not preserve maximal sets.

\subsection{Hybrid Gauss-law-vacuum codes}\label{ssec_hybrid}
So far, we have only considered codes that are either Gauss law or vacuum codes. However, it is clear that one can also mix the structures, i.e.\ have codes which are hybrids of the two types. There are many ways in which this may be done. 

For example, one may place \emph{some}, but not all matter degrees of freedom in a dressed vacuum state. For the vacuum part of the code one uses gauge-invariant matter excitations as errors, for the remainder Gauss law violations. 
Another option is to extend the error sets in our vacuum codes above by still taking the dressed matter vacuum as the code space, but the total kinematical Hilbert space as the physical space of the code, as in the case of Gauss law codes. Also in that case, there will be errors that are gauge-invariant and ones that violate Gauss's law.

In fact, the paradigmatic example of a hybrid Gauss-law-vacuum code is the toric code, and more generally the surface codes \cite{Kitaev:1997wr,Bravyi:1998sy}. Their code space is defined by two independent sets of stabilizers, the plaquette and star operators. In the standard gauge theory interpretation of these codes, the former correspond to Gauss's law and thus impose gauge invariance, while the former implement a flatness condition.\footnote{In the context of the QECC/QRF correspondence, reviewed in section~\ref{ssec_QECC/QRF}, surface codes have also been explored from the perspective of treating \emph{both} sets of stabilizers as gauge \cite[Sec.~6]{Carrozza:2024smc}.} Furthermore, both sets of operators sit inside a Hamiltonian, and the code spaces corresponds exactly to its ground  or vacuum state space.
Hence, errors that violate the flatness condition, but leave the Gauss law invariant, are gauge-invariant and may be viewed as vacuum code errors, while errors that violate the star operators are therefore Gauss law code errors. It should be noted, however, that the excitations beyond the vacuum correspond to emergent quasi-particles and not standard matter as in the codes discussed here. Surface codes are also often viewed as emergent gauge theories, where the gauge symmetry is not fundamental and only arises in a subset of physical states, in contrast to the bona fide gauge theories explored here. This highlights that the notion of hybrid code is a more general one and does not necessitate the inclusion of matter, nor a genuine gauge theory.

\section{Examples}
\label{sec:examples}
In this section, we illustrate the general constructions of Section~\ref{sec:general} in a series of concrete
settings, focusing on specific lattice gauge theories in which the underlying error-correction structure can be made fully explicit. We begin from the pure $\mathbb{Z}_2$ gauge theory on a triangular lattice, which already realizes a Gauss law code in a minimal setting and reduces to the three-qubit repetition code, so that the bit flip errors that would naturally arise in a simulation of the gauge dynamics align with the set of errors that the code corrects. Introducing bosonic matter in this system then leads to a corresponding vacuum code, which we relate explicitly to its Gauss law counterpart in the pure gauge sector. The same structure can be developed for scalar QED, which provides a continuum-inspired realization of Gauss law and vacuum codes. Finally, we  elaborate QED with staggered fermions on a two-dimensional square lattice as a further example within the same framework.

\subsection{Gauss law code from the pure gauge sector of a 1+1 dimensional $\mathbb{Z}_2$ theory}
\label{ssec:3qubit_puregauge}
This first example illustrates the general construction of Section~\ref{sssec_pureGauss} with a concrete and well-known code in QEC: the three-qubit repetition code. We consider a $\mathbb{Z}_2$ lattice gauge theory on a one-dimensional periodic lattice with three vertices (namely, $G=\mathbb{Z}_2 =\hat{G}$ and $N_V =3$). 
We label the vertices
\begin{equation}
\mathcal{V} = \{0,1,2\},
\end{equation}
and the oriented links
\begin{equation}
\mathcal{L} = \{[0,1],[1,2],[2,0]\},
\end{equation}
where the link $[i,i+1]$ is oriented from the vertex $i$ to the vertex $i+1$ (modulo $3$), as shown in Fig.~\ref{fig:triangle}.
\begin{figure}[h]
\centering    \includegraphics[width=0.3\textwidth]{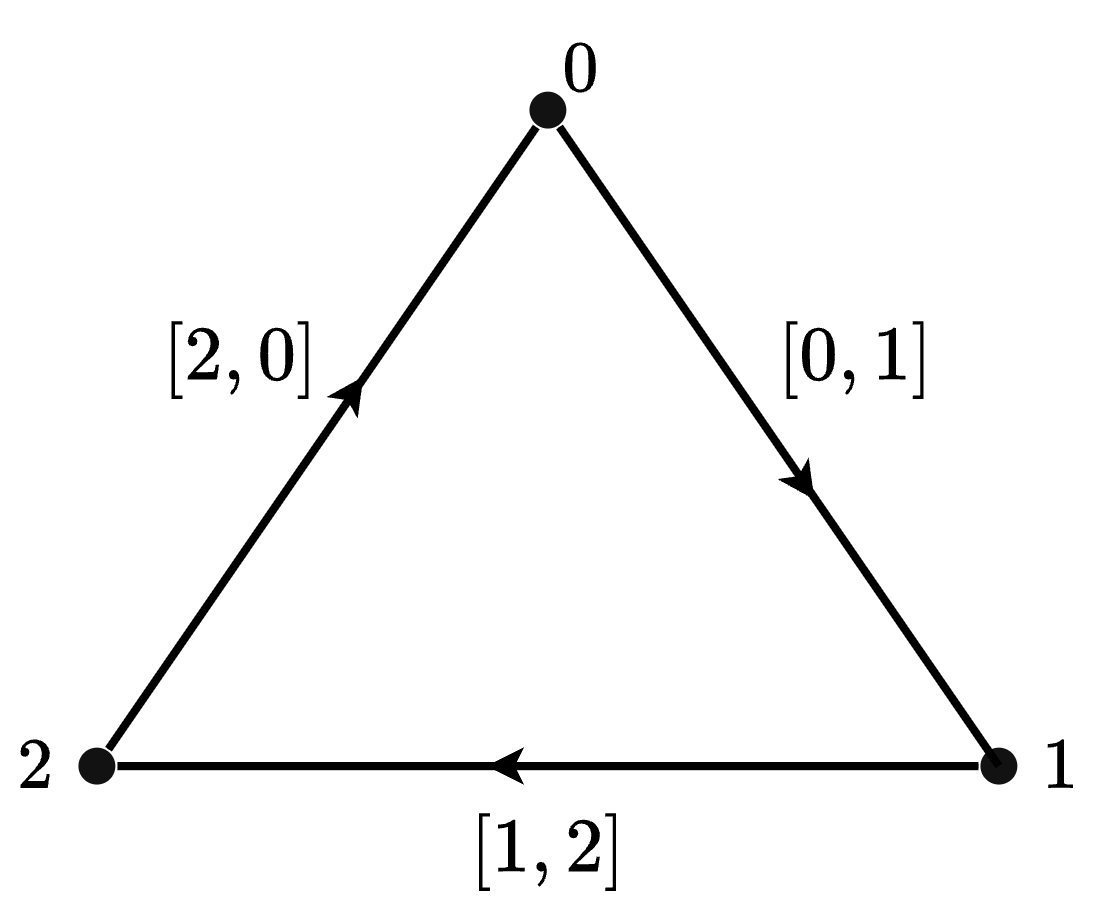}
\caption{$\mathbb{Z}_2$ gauge theory on a triangular lattice}
\label{fig:triangle}
\end{figure}

We first restrict attention to the pure-gauge sector, which we use to illustrate the Gauss law code construction, before turning to codes with bosonic matter in the next subsections.
The kinematical Hilbert space in Eq.~\eqref{eq_Hkinpure} specializes to 
\begin{equation}\label{eq_Z2Hgauge}
\mathcal{H}_{\text{gauge}}
=
\mathcal{H}_{[0,1]}
\otimes
\mathcal{H}_{[1,2]}
\otimes
\mathcal{H}_{[2,0]},
\end{equation}
where each link Hilbert space satisfies $\mathcal{H}_\ell \simeq \mathbb{C}^2$ and where we have identified $\Hil_\mrm{kin}$ with $\Hil_\mrm{gauge}$. The link algebra is therefore generated by Pauli operators
\begin{equation}
\mathcal{A}_\ell
=
\mathcal{B}(\mathcal{H}_\ell)
=
\langle X_\ell , Z_\ell \rangle,
\qquad
\ell\in\mathcal{L}.
\end{equation}
The eigenbases of $X_\ell$ and $Z_\ell$ correspond to the dual  bases for $\Hil_\ell$ from Section~\ref{sssec_codespacepure}. The magnetic (group) basis consists of states $\ket{\pm}_{\ell}$ labeled by the elements of the multiplicative group $G = \{+,-\} \simeq \mathbb{Z}_2$, with
\begin{equation}
X_\ell \ket{\pm}_\ell = \pm \ket{\pm}_\ell, \qquad Z_{\ell}\ket{\pm}_{\ell} = \ket{\mp}_{\ell}, 
\end{equation}
so that $Z_{\ell}$ acts as the group shift. The electric (dual) basis, by contrast, diagonalizes $Z_{\ell}$ and shifts under $X_{\ell}$,
\begin{equation}
Z_\ell \ket{\varepsilon}_\ell
=
(-1)^{\varepsilon}\ket{\varepsilon}_\ell,
\qquad
X_\ell \ket{\varepsilon}_\ell
=
\ket{\varepsilon \oplus 1}_\ell,
\qquad 
\varepsilon \in \{0,1\},
\end{equation}
where $\oplus$ denotes addition modulo $2$. These are precisely the group and (Pontryagin) dual bases of Eq.~\eqref{groupvselectricbases}, and the Pauli operators $(Z_{\ell}, X_{\ell})$ realize the Weyl pair $(U_{\ell}^-, W_{\ell}^{\chi_1} )$ of Eq.~\eqref{eq_kinRS}, with $\chi_1$ the nontrivial character of $\mathbb{Z}_2$, defined as $\chi_1(\pm) = \mp$.
(The remaining operators $U_\ell^+$ and $W_\ell^{\chi_0}$ are both equal to the identity operator on link $\ell$.)

\subsubsection{Stabilizers, code subspace and logical operators}
The structure group being $\mathbb{Z}_2$, gauge transformations at vertex $i$ (cf.~Eq.~\eqref{eq_Gausslawpure}) furnish a faithful representation of $\mathbb{Z}_2$ on $\mathcal{H}_{\text{gauge}}$:
\begin{equation}
    U_i^+ = \mathbbm{1}_{\text{gauge}}, \qquad U_i^- = Z_{[i-1,i]}Z_{[i, i+1]}.
\end{equation}
Gauss's law then demands that physical states be invariant under all gauge transformations,
\begin{equation}
    U_i^{\pm} \ket{\Psi}_{\text{pn}}= \ket{\Psi}_{\text{pn}}, \qquad \forall ~ i = 0,1,2,
\end{equation}
which is nontrivially imposed by the $U_i^-$. The operators $\{ U_i^- \}_{i=0}^2$ thus generate the stabilizer group of this Gauss law code,
\begin{equation}
    \mathcal{G} = \left\langle  Z_{[01]}Z_{[12]}, Z_{[12]}Z_{[20]}  \right\rangle \simeq \mathbb{Z}_2 \times \mathbb{Z}_2,
\end{equation}
where the third generator $U_2^- = Z_{[20]}Z_{[01]}$ is redundant due to the periodic boundary conditions,
\begin{equation}
    \prod_{i=0}^2 U_i^- = \prod_{i=0}^2 Z_{[i-1,i]}Z_{i,i+1} = \mathbbm{1}_{\text{gauge}}.
\end{equation}

The Gauss law code construction identifies $\mathcal{H}_{\text{gauge}}$ as the physical Hilbert space of the code and its perspective-neutral subspace,
\begin{equation}\label{eq_Z2PGHpn}
    \mathcal{H}_{\text{pn}} \coloneqq \{ \ket{\Psi} \in \mathcal{H}_{\text{gauge}} ~ | ~ U_i^- \ket{\Psi} = \ket{\Psi}, \quad \forall i = 0,1,2  \},
\end{equation}
as the code subspace. In particular, we recognize $\mathcal{G}$ as the usual stabilizer group of a three-qubit repetition code, and the code space is spanned by the logical codewords
\begin{equation}\label{eq_Z2PGCodewords}
\ket{\bar{0}}
=
\ket{0}_{[01]}
\otimes
\ket{0}_{[12]}
\otimes
\ket{0}_{[20]},
\end{equation}
\begin{equation}
\ket{\bar{1}}
=
\ket{1}_{[01]}
\otimes
\ket{1}_{[12]}
\otimes
\ket{1}_{[20]} .
\end{equation}
 
Next, we examine the logical operators of the code.
Since both codewords have uniform electric-field configurations, any single-link operator $Z_\ell$ has the same action on the code space. For instance,
\begin{equation}
Z_{[01]} \ket{\bar{0}} = \ket{\bar{0}},
\qquad
Z_{[01]} \ket{\bar{1}} = -\ket{\bar{1}},
\end{equation}
and similarly for the other links. We therefore identify the logical operator
\begin{equation}
\bar{Z} = Z_{\ell} \Pi_{\text{pn}},
\end{equation}
where the code projector $\Pi_{\text{pn}}$ was defined in Eq.(\ref{Pipn}) and may be explicitly rewritten in terms of the stabilizer generators as 
\begin{equation}
\Pi_{\text{pn}}
=\prod_{i=0}^2 \frac{\mathbbm{1}_{\text{gauge}}  +Z_{[i-1,i]} Z_{[i,i+1]}}{2} .
\end{equation}
The Wilson loop operator $X_{[01]} X_{[12]} X_{[20]}$ flips all three links and exchanges the two codewords,
\begin{equation}
X_{[01]}X_{[12]}X_{[20]}\ket{\bar{0}}=\ket{\bar{1}},
\qquad
X_{[01]}X_{[12]}X_{[20]}\ket{\bar{1}}=\ket{\bar{0}} .
\end{equation}
This operator commutes with every stabilizer but is not itself a stabilizer, and therefore acts as the logical bit-flip operator
\begin{equation}\label{eq_Z2logX}
\bar{X}
=
X_{[01]}X_{[12]}X_{[20]} \Pi_{\text{pn}} .
\end{equation}

\subsubsection{Gauss law map, error syndromes and maximal correctable sets }
We now examine how Pauli errors are detected by the stabilizers. Up to an overall phase, a general Pauli error (whether correctable or not) can be written as
\begin{equation}
E(\bm{k},\bm{z})
=
X^{\bm{k}} Z^{\bm{z}}
=
\bigotimes_{\ell\in\mathcal{L}}
X_\ell^{k_\ell} Z_\ell^{z_\ell},
\end{equation}
with $\bm{k},\bm{z}\in(\mathbb{Z}_2)^3$. The syndrome of such an error is determined by its commutation relations with the nontrivial stabilizers $U_i^-$. Since these operators are $Z$-type, the syndrome depends only on the $X$-part, $\bm{k}$. One finds
\begin{equation} \label{eq:3qubitsyndrome}
U_i^- E(\bm{k},\bm{z})
=
(-1)^{(\nabla \bm{k})_i}
E(\bm{k},\bm{z}) U_i^-,
\end{equation}
where
\begin{equation}\label{eq_nabla}
    (\nabla \bm{k})_i = k_{[i-i, i]} \oplus k_{[i,i+1]}
\end{equation}
gives the discrete divergence of $\bm{k}$ across vertex $i$. This expression may be understood as a concrete realization of the Gauss law map 
\begin{equation}
\partial: \hat{G}^{\times N_L}=(\mathbb{Z}_2)^3 \rightarrow \hat{G}^{\times N_V - 1} =(\mathbb{Z}_2)^2
\end{equation}
in Eq.~\eqref{boundarymap} extended to every vertex. 
In the present setting, characters are labeled by $\bm{k} \in \{0,1\}^3 \simeq (\mathbb{Z}_2)^3$ and act on group elements $\bm{g}~\in~\{\pm\}^3~\simeq~(\mathbb{Z}_2)^3$ according to 
\begin{equation}\label{eq_coordtochar}
    \bm{\chi}[\bm{k}](\bm{g}) = \prod_{\ell} g_{\ell}^{k_{\ell}}.
\end{equation}
Here, $\bm{\chi}[\bm{k}]$ should be understood as assigning an irreducible representation of $\mathbb{Z}_2$ to each link of the lattice, rather than as a character of the gauge group acting at the vertices. 
The Gauss law map $\partial$ then combines this link data into a character of the gauge group acting on the vertices. Concretely, one finds
\begin{equation}\label{eq_partialtonablapure}
    \partial \bm{\chi}[\bm{k}] = \bm{\rho}[\nabla \bm{k}] \qquad \text{with} \qquad
    \begin{aligned}
    (\bm{\rho}[\nabla \bm{k}])_i ~ : ~ \pm ~ \mapsto ~ (\pm 1)^{(\nabla \bm{k})_i},
    \end{aligned}
\end{equation}
where $\bm{\rho}[\nabla \bm{k}]$ is the vertex-supported character obtained from the link representation data $\bm{\chi}$ via the Gauss law map.
Thus, in this example, $\partial$ is  effectively replaced by the discrete divergence $\nabla$, and characters $\bm{\chi}[\bm{k}]$, by elements of the additive group $\bm{k} \in \{0,1\}^3$.

Per Eq.~\eqref{eq:3qubitsyndrome}, the quantity $\nabla \bm{k}$ therefore records the pattern of Gauss-law violations created by the error---in other words, the error syndrome. In gauge-theory language, it specifies the $\mathbb{Z}_2$ charge configuration that would need to be inserted at the vertices to restore Gauss’s law. Because the lattice is periodic,
\begin{equation}\label{eq_Z2neutral}
\sum_{i=0}^2 (\nabla \bm{k})_i = 0 \mod 2,
\end{equation}
so charges can only be created in pairs. The image of the discrete divergence is therefore
\begin{equation} \label{eq:range_nabla}
\nabla((\mathbb{Z}_2)^3)
=
\{
(0,0,0),
(1,1,0),
(1,0,1),
(0,1,1)
\},
\end{equation}
corresponding to the four possible syndrome sectors. Two errors share the same syndrome precisely when their $X$ components differ by an element of the kernel of $\nabla$. A direct computation shows
\begin{equation}\label{eq_kernabla}
\ker(\nabla)
=
\{(0,0,0),(1,1,1)\}.
\end{equation}
The single nontrivial kernel element corresponds to the Wilson loop operator $\bar{X}$: a logical operator and a fiber translation in the Gauss law bundle with the base space $\hat{G}^{\times N_V -1}=(\mathbb{Z}_2 )^2$ and the fiber $\hat{G}^{\times N_L-N_V +1}=\mathbb{Z}_2 $.
This structure directly instantiates Proposition~\ref{prop_KLpure} and the discussion above Eq.~\eqref{eq_sectionspureGauss}: two Wilson line errors $X^{\bm{k}_a}$ and $X^{\bm{k}_b}$ are jointly correctable if $\nabla \bm{k}_a \neq \nabla \bm{k}_b$, i.e.\ they lie in distinct fibers of $\nabla$.
A maximal correctable set must contain exactly one representative from each fiber $\nabla^{-1}(\bm{q})$ with $\bm{q}\in\hat{G}^{\times N_V}=(\mathbb{Z}_2 )^3$. 
Choosing such representatives is equivalent to specifying a section of the divergence map,
\begin{equation} \label{eq:sGLZ2}
s_{\nabla} :
\nabla((\mathbb{Z}_2)^3)
\to
(\mathbb{Z}_2)^3,
\qquad
\nabla(s_{\nabla}(\bm{q}))=\bm{q},
\end{equation}
which induces a maximal correctable set (c.f.~Eq.\eqref{eq_puremaxset})\footnote{Note that, since the $Z$-operators do not affect the syndrome, each non-identity, $X$-error in $\mathcal{E}_{s_{\nabla}}$ may be multiplied with any $Z$-type error without affecting the syndrome, trivially augmenting the number of these possible maximal sets.}
\begin{equation}\label{eq_Z2puremaxset}
    \mathcal{E}_{s_{\nabla}}^{\text{GL}} \coloneqq \{ X^{s_{\nabla}(\bm{q})} \mid \bm{q} \in \nabla((\mathbb{Z}_2)^3)    \}.
\end{equation}
For example, choosing
\begin{equation}
    \begin{array}{c|c|c}
        \bm{q} & s_\nabla(\bm{q}) & X^{s_\nabla(\bm{q})} \\
        \hline
        (0,0,0) & (0,0,0) & \mathbbm{1} \\
        (1,1,0) & (1,0,0) & X_1 \\
        (0,1,1) & (0,1,0) & X_2 \\
        (1,0,1) & (0,0,1) & X_3
    \end{array}
\end{equation}
reproduces the standard correctable error set for a three-qubit repetition code.
From the perspective of a syndrome $\bm{q}$ that signals ``missing charge'' at vertices $v_i$ and $v_{i+1}$, this choice amounts to dressing the charge with an erroneous $X$ operator on the link $[i,i+1]$.
However, one could alternatively choose as the section
\begin{equation}\label{eq_blablabla}
    \begin{array}{c|c|c}
        \bm{q} & s_\nabla(\bm{q}) & X^{s_\nabla(\bm{q})} \\
        \hline
        (0,0,0) & (0,0,0) & \mathbbm{1} \\
        (1,1,0) & (0,1,1) & X_2 X_3 \\
        (0,1,1) & (1,0,1) & X_1 X_3 \\
        (1,0,1) & (1,1,0) & X_1 X_2
    \end{array}
\end{equation}
This choice also yields a correctable, albeit nonstandard error set consisting of double bit flips.
Here, charge at vertices $v_i$ and $v_{i+1}$ gets dressed with $X$ operators on the links $[i-1,i]$ and $[i+1,i+2]$. 

Finally, each section determines a recovery channel acting on the gauge Hilbert space. Let $P_{\bm q}^{\text{GL}}$ denote the projector onto the stabilizer eigenspace with syndrome $\bm q$. The recovery channel associated with the section $s_{\nabla}$ is
\begin{equation}\label{eq_Z2purerec}
\mathcal{R}_{s_{\nabla}}^{\text{GL}}(\rho)
=
\sum_{\bm q \in \nabla((\mathbb{Z}_2)^3)}
X^{s_{\nabla}(\bm q)}
\,P_{\bm q}^{\text{GL}}\,
\rho\,
P_{\bm q}^{\text{GL}}\,
X^{s_{\nabla}(\bm q)} .
\end{equation}
This channel first resolves the syndrome sector and then applies the correction operator associated with the representative chosen by the section.
Different sections therefore correspond to different choices of correctable error set, each of which entails its own recovery operation.

\subsubsection{Code parameters}
The code is classical, since it cannot correct $Z$-type errors. More precisely, it realizes a classical bit-flip code, namely the three-qubit repetition code. It has parameters $[n,k,d_X] = [3,1,3]_X$, with $n=3$ physical qubits (links) and $k=1$ encoded logical degree of freedom. The $X$-distance $d_X$ is the minimal weight of a nontrivial logical $X$ operator, and the $Z$-distance $d_Z$ is the minimal weight of a nontrivial logical $Z$ operator. The logical operator $\bar{X}$ has weight three, so $d_X=3$, whereas  any single-link $Z_\ell \Pi_{\mrm{pn}}$ already implements a logical $\bar{Z}$, implying $d_Z=1$.

\subsection{Bosonic Gauss law code in a $1+1$ dimensional $\mathbb{Z}_2$ theory}
\label{ssec:3qubit_gauss}
We now extend the previous construction by including bosonic matter degrees of freedom on the lattice's vertices. To that end, let each vertex of the lattice in Fig.~\ref{fig:triangle} support a two-dimensional Hilbert space,
\begin{equation}
    \mathcal{H}_{\text{matter}} = \mathcal{H}_0 \otimes \mathcal{H}_1 \otimes \mathcal{H}_2 ,
\end{equation}
equipped with the corresponding Pauli operators $X_i$ and $Z_i$.
We let the $Z$-basis label the occupation number of the site: the state $\ket{0}$ corresponds to an empty vertex while $\ket{1}$ represents the presence of a particle. In terms of the general language introduced in Sec.~\ref{sssec_bosGauss}, $Z_i$ measures the $\mathbb{Z}_2$ charge located at vertex $i$. The kinematical Hilbert space of the gauge theory is therefore
\begin{equation}
    \mathcal{H}_{\text{kin}} =
    \mathcal{H}_{\text{gauge}} \otimes \mathcal{H}_{\text{matter}}
    \simeq (\mathbb{C}^2)^{\otimes 6}.
\end{equation}
\subsubsection{Stabilizers, code subspace and logical operators}
Here, we identify $\Hil_\mrm{kin}$ with the physical (i.e., computational) space of the Gauss law code, while we declare the code space to be its perspective-neutral (gauge-invariant) subspace,
\begin{equation}\label{eq_Hpn_bosQED}
   \mathcal{H}_{\text{pn}} = \{ \ket{\Psi} \in \mathcal{H}_{\text{kin}} \mid U_i^{-} \ket{\Psi} = \ket{\Psi}   \},
\end{equation}
where the nontrivial gauge transformations
\begin{equation}
    U_i^{-} \coloneqq Z_{[i-1, i]}Z_{[i, i+1]}Z_i
\end{equation}
now extend to the matter degrees of freedom (cf. Eq.~\eqref{generalGTs}).
Also note that we omit the trivial gauge transformations $U_i^+ = \mathbbm{1}$ from Eq.~\eqref{eq_Hpn_bosQED}.
These operators generate the stabilizer group of the code,
\begin{equation}
    \mathcal{G} = \left\langle   U_0^-, U_1^-, U_2^- \right\rangle \simeq \mathbb{Z}_2 \times \mathbb{Z}_2 \times \mathbb{Z}_2,
\end{equation}
and the code space in Eq.~\eqref{eq_Hpn_bosQED} is their joint $+1$ eigenspace. In other words, states in $\mathcal{H}_{\text{pn}}$ satisfy Gauss's law at every vertex, which relates the $\mathbb{Z}_2$ flux on adjacent links to the matter charge located at the common vertex. In particular, if the flux values on the two links adjacent to a vertex differ, the vertex must carry a particle so that Gauss's law is satisfied. A useful consequence of this constraint is that once the flux configuration $\bm{\varepsilon} = (\varepsilon_{[i,i+1]})_{i=0}^2$ on the links is specified, the matter configuration is uniquely fixed. Indeed, one may verify that the states
\begin{equation}
    \ket{\bm{\varepsilon}}_{\text{pn}}
    \coloneqq
    \ket{\bm{\varepsilon}}_{\text{gauge}}
    \otimes
    \ket{\nabla \bm{\varepsilon}}_{\text{matter}},
    \label{Z2_basis}
\end{equation}
with $\nabla$ defined in Eq.~\eqref{eq_nabla}, form an orthonormal basis of the full gauge-invariant Hilbert space $\mathcal{H}_{\text{pn}}$. Thus, the gauge-invariant Hilbert space is isomorphic to that of three qubits,
$\mathcal{H}_{\text{pn}} \simeq (\mathbb{C}^2)^{\otimes 3}$.

This reduction in dimension is expected: Gauss's law imposes three independent constraints on the six qubits of the kinematical Hilbert space. The structure of the physical operator algebra can be read off directly from this description. It is generated by the flux operators $Z_\ell$ on the links together with the dressed Wilson line operators
\begin{equation}
    \tilde{X}_{[i,i+1]} = X_i X_{[i,i+1]} X_{i+1},
\label{eq:dressed_Z2boson}
\end{equation}
which create a unit of flux on the link while simultaneously producing a pair of matter excitations at its endpoints so that Gauss's law remains satisfied. These operators therefore represent the gauge-invariant Pauli operators of the theory, and so the logical algebra of the code is generated by
\begin{equation}
    \mathcal{A}_{\text{pn}}
    =
    \langle
    \tilde{X}_{[i,i+1]} \Pi_{\text{pn}},
    Z_{[i,i+1]} \Pi_{\text{pn}} \mid i = 0,1,2
    \rangle ,
\end{equation}
where $\Pi_{\text{pn}}$ denotes the projector onto $\mathcal{H}_{\text{pn}}$, now including the matter part of the gauge transformations. In terms of stabilizer generators, it is explicitly given by
\begin{equation}
\Pi_{\text{pn}}
=\prod_{i=0}^2 \frac{\mathbbm{1}  +Z_{[i-1,i]} Z_{[i,i+1]}Z_i }{2} .
\end{equation}
Although the operators $Z_i$ acting on the matter sites are themselves gauge invariant, they are not independent within the physical subspace. Indeed, Gauss's law implies the operator identity
\begin{equation}\label{eq_Z2bosGausslaw}
    Z_{[i-1,i]} Z_{[i,i+1]} \Pi_{\text{pn}}
    =
    Z_i \Pi_{\text{pn}},
\end{equation}
so the matter charge at each vertex is fully determined by the adjacent link fluxes and, in particular $Z_i \Pi_{\text{pn}} \in \mathcal{A}_{\text{pn}}$.

\subsubsection{Code space refactorization and subsystem structure}
Before turning to more general errors, it is useful to isolate the structure already present at the level of bare Wilson line errors, in other words, $X$-type errors on the links alone. This will allow us to make contact with the subsystem code structure outlined in Sec~\ref{sec:bosmaxsets}.

Eq.~\eqref{Z2_basis} provides an orthonormal basis for $\mathcal{H}_{\text{pn}}$ labeled by flux configurations $\bm{\varepsilon}$, with the matter occupation numbers fixed by Gauss's law as $\nabla \bm{\varepsilon}$. The discrete divergence does not distinguish configurations that differ by the constant shift $(1,1,1)$, so for each value of $\nabla \bm{\varepsilon}$ there are exactly two compatible flux configurations. Writing
\begin{equation}
    \omega(\bm{\varepsilon}) = \varepsilon_{[01]} \oplus \varepsilon_{[12]} \oplus \varepsilon_{[20]},
\end{equation}
the pair $(\omega, \nabla \bm{\varepsilon})$ uniquely specifies $\bm{\varepsilon}$. It follows that we have an isomorphism $\mathcal{H}_{\text{pn}} \simeq \mathcal{H}_{\text{loop}} \otimes_R \mathcal{H}^{\text{dr}}_{\text{matter}}$, where $\mathcal{H}_{\text{loop}} \simeq \mathbb{C}^2$ and $\mathcal{H}^{\text{dr}}_{\text{matter}} \simeq (\mathbb{C}^2)^2$ are spanned by orthonormal states labeled by the possible values of $\omega$ and $\nabla \bm{\varepsilon}$, respectively.\footnote{Recall that the range of $\nabla$ is only 4-dimensional here; see Eq.~\eqref{eq:range_nabla}.}
Concretely, the isomorphism maps
\begin{equation} \label{eq:3qubit_otimesR}
\ket{\bm{\varepsilon}}_{\text{pn}} \to \ket{\omega(\bm{\varepsilon})}_{\text{loop}} \otimes_R \ket{\nabla \bm{\varepsilon}}^{\text{dr}}_{\text{matter}}.
\end{equation}
We use the symbol $\otimes_R$ to emphasize that this tensor product is different from the local tensor product structure intrinsic to $\mathcal{H}_{\text{pn}}$.

We now examine the Knill-Laflamme condition Eq.~\eqref{eq_KLstabcode} for the bare Wilson line operators $X^{\bm{k}} = \otimes_{\ell \in \mathcal{L}} X_\ell^{k_\ell}$.
These operators are ``bare'' in the sense that they are not accompanied by any charge excitations at the vertices, and thus map codewords out of the code subspace when $\bm{k} \notin \{(0,0,0),(1,1,1)\}$.
Since 
\begin{equation}
    X^{\bm{k}} \ket{\bm{\varepsilon}}_{\text{gauge}} = \ket{\bm{\varepsilon} \oplus \bm{k}}_{\text{gauge}},
\end{equation}
one finds
\begin{equation}
    \Pi_{\text{pn}} X^{\bm{k}_a}X^{\bm{k}_b} \Pi_{\text{pn}} = \sum_{\bm{\varepsilon} \in (\mathbb{Z}_2)^3} \ket{\bm{\varepsilon} \oplus \bm{k}_a \oplus \bm{k}_b} \bra{\bm{\varepsilon}}_{\text{pn}} \delta_{\nabla \bm{k}_a, \nabla \bm{k}_b} \Pi_{\text{pn}}.
\end{equation}
When $\nabla \bm{k}_a \neq \nabla \bm{k}_b$, the right-hand side vanishes and the Knill-Laflamme condition is satisfied for collections of such errors.
If instead $\nabla \bm{k}_a = \nabla \bm{k}_b$, then $\bm{k}_a \oplus \bm{k}_b \in \ker (\nabla) = \{(0,0,0), (1,1,1)   \}$.
For the nontrivial case in which $\bm{k}_a \neq \bm{k}_b$ and hence $\bm{k}_a \oplus \bm{k}_b = (1,1,1)$, then $X^{\bm{k}_a} X^{\bm{k}_b} \Pi_{\mrm{pn}}$ is equal to the Wilson loop $(X_{[0,1]}X_{[1,2]}X_{[2,0]}) \Pi_{\mrm{pn}}$, resulting in a violation of the Knill-Laflamme condition.
However, since $\nabla(\bm{k}_a \oplus \bm{k}_b) = 0,$ this operator leaves $\nabla \bm{\varepsilon}$ invariant and therefore acts trivially on the dressed matter factor.
Explicitly, with respect to the $\otimes_R$ factorization of Eq.~\eqref{eq:3qubit_otimesR}, we may write
\begin{equation}
    \Pi_\mrm{pn} (X_{[0,1]}X_{[1,2]}X_{[2,0]}) \Pi_{\mrm{pn}} = \bar{X}_\mrm{loop} \otimes_R \mathbbm{1}_\mrm{matter}^\mrm{dr} \; \Pi_\mrm{pn} \, .
\end{equation}
Hence, the Knill-Laflamme condition \emph{is} satisfied in the subsystem code sense of Eq.~\eqref{eq_KLsubsystem}.
In particular, the dressed-matter factor is perfectly correctable with respect to bare Wilson line errors, and the loop sector plays the role of the ``gauge subsystem'' \footnote{Not to be confused with ``gauge'' in the context of gauge theories.} in the language of subsystem codes \cite{aly2006subsystemcodes}.

\subsubsection{Maximal correctable sets as sections of the extended Gauss law map}
Next, we identify the maximal correctable error sets that consist of Pauli errors on both the links and vertices. We begin by computing the syndrome introduced by a composite error consisting of both vertex and link Pauli $X$ operators,
\begin{equation}\label{eq_Z2mattererrorsnoZ}
    E(\bm{k}, \bm{x}) \coloneqq \prod_{i=0}^2 (X_{[i,i+1]})^{k_{[i,i+1]}}(X_{i})^{x_i},
\end{equation}
ignoring $Z$ operators, as they commute with the stabilizers and thus do not contribute to the syndrome. Acting with this operator on the basis states from Eq.~\eqref{Z2_basis} gives
\begin{equation}
    E(\bm{k}, \bm{x}) \ket{\bm{\varepsilon}}_{\text{pn}} = \ket{\bm{\varepsilon} \oplus \bm{k}}_{\text{gauge}} \otimes \ket{ \nabla \bm{\varepsilon} \oplus \bm{x}}_{\text{matter}} ,
\end{equation}
and a subsequent gauge transformation at vertex $i$ yields
\begin{eqnarray}\label{eq_Z2bossyndromes}
    U_i^{-} E(\bm{k}, \bm{x})\ket{\bm{\varepsilon}}_{\text{pn}} &=& (-1)^{ (\nabla \bm{k})_i \oplus x_i} E(\bm{k}, \bm{x}) \ket{\bm{\varepsilon}}_{\text{pn}}.
\end{eqnarray}
Hence, the error $E(\bm{k}, \bm{x})$ produces a syndrome $\{ (-1)^{(\nabla \bm{k})_i \oplus x_i} \}_{i=0}^2$, which is controlled by the quantity $\nabla \bm{k} \oplus \bm{x}$.

With this motivation, we introduce the map
\begin{equation}
    \nabla_B: (\mathbb{Z}_2)^3 \times (\mathbb{Z}_2)^3 \to (\mathbb{Z}_2)^3,
\end{equation}
\begin{equation}
    \nabla_B(\bm{k}, \bm{x}) \coloneqq \nabla \bm{k} \oplus \bm{x},
\end{equation}
which returns the error syndrome corresponding to the error determined by $\bm{k}, \bm{x}$ in the parametrization in Eq.~\eqref{eq_Z2mattererrorsnoZ}. This is the coordinate realization of the bosonic Gauss law map $\partial_B$ from Eq.~\eqref{eq_bosGausslawmap}. To see this, recall that $\partial_B$ acts on pairs of characters $\bm{\chi} \in \hat{G}^{N_L}$, $\bm{\rho} \in \hat{G}^{N_V}$. In the present $\mathbb{Z}_2$ model, both the (matter) characters with support on the vertices and the (gauge) characters with support on the links are labeled by elements of $\{ 0, 1  \}^3$. Thus, we may replace $\bm{\chi}$ and $\bm{\rho}$ in the arguments of $\partial_B$ with $\bm{\chi}[\bm{k}]$ and $\bm{\rho}[\bm{x}]$, as defined in Eq.~\eqref{eq_coordtochar} and Eq.~\eqref{eq_partialtonablapure}, respectively. A direct evaluation then yields
\begin{equation} \label{eq:3qubit_gmap_boson}
    \partial_B(\bm{\chi}[\bm{k}], \bm{\rho}[\bm{x}]) = \bm{\rho}[\nabla \bm{k} \oplus \bm{x}] = \bm{\rho}[\nabla_B(\bm{k}, \bm{x})].
\end{equation}
Similarly to the pure gauge case, the periodic boundary conditions impose the constraint
\begin{equation} \label{eq:fluxconstraint}
    \sum_{i=0}^2 (\nabla \bm{k})_i = 0 \mod 2.
\end{equation}
In particular, the image of $\nabla$ is a strict subspace of $(\mathbb{Z}_2)^3$, and violations of Gauss's law arising from $\bm{k}$ alone occur in pairs. The inclusion of the matter contribution $\bm{x}$ extends the set of attainable syndromes to the full space $(\mathbb{Z}_2)^3$. This extension is captured by $\nabla_B$, which provides an explicit realization of the Gauss law bundle with bosonic matter described below Eq.~\eqref{eq_sameGaussviobos} (without the $Z$-errors that we are currently ignoring).

In this formulation, errors are organized into fibers labeled by charge configurations. In our example, these fibers are given by 
\begin{equation}
    \nabla_B^{-1}(\bm{q}) = \{ (\bm{k}, \bm{x}) \mid \nabla \bm{k} \oplus \bm{x} = \bm{q}    \}, \qquad \bm{q} \in (\mathbb{Z}_2)^3.
\end{equation}
The maximal sets of correctable errors are those that contain exactly one representative of each fiber. This is already apparent from the fact that two elements of the same fiber are related by elements in the kernel of $\nabla_B$,
\begin{equation}
    \ker(\nabla_B) = \{ (\bm{k}, \bm{x}) \mid \nabla \bm{k} = \bm{x} \} \, .
\end{equation}
For a pair  $(\bm{k}_a, \bm{x}_a), (\bm{k}_b, \bm{x}_b) \in \nabla^{-1}_B(\bm{q})$ that belongs to the same fiber, the errors that they parametrize rotate the basis states in Eq.~\eqref{Z2_basis} into one another,
\begin{equation}
    E(\bm{k}_a, \bm{x}_a)^{\dagger} E(\bm{k}_b, \bm{x}_b) \ket{\bm{\varepsilon}}_{\text{pn}} = \ket{\bm{\varepsilon} \oplus \bm{k}_a \oplus \bm{k}_b}_{\text{pn}},
\end{equation}
so they will not satisfy the Knill-Laflamme condition Eq.~\eqref{eq_KLstabcode}. Conversely, if $(\bm{k}_a, \bm{x}_a), (\bm{k}_b, \bm{x}_b)$ lie in different fibers of $\nabla$, the combination $E(\bm{k}_a, \bm{x}_a)^{\dagger}E(\bm{k}_b, \bm{x}_b)$ produces a nontrivial syndrome per Eq.~\eqref{eq_Z2bossyndromes}, and thus vanishes upon conjugation by the code projector. Indeed, a simple calculation verifies
\begin{equation}\label{eq_Z2bosKL}
    \Pi_{\text{pn}}E(\bm{k}_a, \bm{x}_a)^{\dagger}E(\bm{k}_b, \bm{x}_b) \Pi_{\text{pn}} = \sum_{\bm{\varepsilon} \in (\mathbb{Z}_2)^3} \ket{\bm{\varepsilon} \oplus \bm{k}_a \oplus \bm{k}_b}\bra{\bm{\varepsilon}}_{\text{pn}} \delta_{\nabla_B(\bm{k}_a, \bm{x}_a), \nabla_B(\bm{k}_b, \bm{x}_b)} \Pi_{\text{pn}},
\end{equation}
so there are only two ways in which two such errors may satisfy the Knill-Laflamme condition: either the Kronecker delta vanishes, which in turn implies that $(\bm{k}_a, \bm{x}_a)$, $(\bm{k}_b, \bm{x}_b)$ belong to different fibers of $\nabla_B$; or the delta ``clicks'' and $(\bm{k}_a, \bm{x}_a) = (\bm{k}_b, \bm{x}_b)$, for otherwise there would be a nontrivial logical operator on the right hand side of Eq.~\eqref{eq_Z2bosKL}. Hence, by the same reasoning above Eq.~\eqref{eq:sGLZ2}, maximal sets of correctable errors are in one-to-one correspondence with sections of $\nabla_B$. Concretely, let $s_{\nabla_B}$ be a section of $\nabla_B$ i.e., a map
\begin{equation}
    s_{\nabla_B}: (\mathbb{Z}_2)^3 \to (\mathbb{Z}_2)^2 \times (\mathbb{Z}_2)^3,
\end{equation}
\begin{equation}
    s_{\nabla_B}(\bm{q}) = (\bm{k}_{\nabla_B}(\bm{q}), \bm{x}_{\nabla_B}(\bm{q}) ) 
\end{equation}
satisfying
\begin{equation}
    \nabla_B(s_{\nabla_B} (\bm{q}) ) = \bm{q}, \qquad \forall \bm{q} \in (\mathbb{Z}_2)^3.
\end{equation}
In other words, $s_{\nabla_B}$ assigns to each possible syndrome $\bm{q}$ a pair of error parameters $(\bm{k}_{\nabla_B}(\bm{q}), \bm{x}_{\nabla_B}(\bm{q}))$ such that $E(\bm{k}_{\nabla_B}(\bm{q}), \bm{x}_{\nabla_B}(\bm{q}))$ in Eq.~\eqref{eq_Z2mattererrorsnoZ} produces that syndrome. The associated maximal correctable set is therefore
\begin{equation}
    \mathcal{E}_{s_{\nabla_B}} \coloneqq \{ E(\bm{k}_{\nabla_B}(\bm{q}), \bm{x}_{\nabla_B}(\bm{q})) \mid \bm{q} \in (\mathbb{Z}_2)^3 \}.
\end{equation}
As in the pure gauge case, the physical interpretation of a correctable error is still a violation of the Gauss law.
In contrast to the pure gauge case, however, here a violation of the Gauss law is not a discontinuity in the electric flux across a vertex, but rather a mismatch between the discontinuity in electric flux and the amount of charge located at the vertex.

Each choice of section (and hence correctable error set) also determines a recovery operation. Let $P_{\bm{q}}^B$ denote the projector onto the stabilizer eigenspace with syndrome $\bm{q}$. The recovery channel associated with the section $s_{\nabla_B}$ is then given by
\begin{equation}
    \mathcal{R}_{s_{\nabla_B}}(\rho) = \sum_{\bm{q} \in (\mathbb{Z}_2)^3} E(\bm{k}_{\nabla_B} (\bm{q}), \bm{x}_{\nabla_B}(\bm{q}))^{\dagger} P_{\bm{q}}^B \rho P_{\bm{q}}^B E(\bm{k}_{\nabla_B}(\bm{q}), \bm{x}_{\nabla_B}(\bm{q})).
\end{equation}
As in the pure gauge case, one can of course append $Z$-type operators to the non-identity correctable errors $E(\bm{k},\bm{x})$ within a correctable set of errors.
Physically, however, all this amounts to is asserting that whenever a given bit flip error occurs, a corresponding, fixed, and known $Z$-type operator is also applied.

\subsubsection{Code parameters}
Of course, the code still cannot correct isolated phase flips, and is therefore classical. It involves $n=6$ physical (computational) qubits: three associated with the links and three with the vertices. The three independent stabilizer generators reduce this to $k=3$ encoded logical degrees of freedom, corresponding to the gauge-invariant sector of the theory. To determine the $X$-distance, we examine the logical $X$ operators. Both the Wilson loop operator $\bar{X}$ and the dressed Wilson line operators $\tilde{X}_{[i,i+1]} = X_i X_{[i,i+1]} X_{i+1}$ act nontrivially on the code space and have weight three. Since no nontrivial logical $X$ operator has smaller weight, it follows that $d_X=3$. By contrast, $Z$-type errors commute with all stabilizers and cannot be detected, so $d_Z=1$. The code therefore has parameters $[n,k,d_X] = [6,3,3]$.

\subsection{Bosonic vacuum code in a $1+1$ dimensional $\mathbb{Z}_2$ theory}
\label{ssec:3qubit_vacuum}

We now focus on the vacuum sector of the theory, defined as the subspace of physical states in which no matter excitations are present. Concretely, this corresponds to the states for which every vertex lies in the $+1$ eigenspace of the vertex operators $Z_i$,
\begin{equation}
    \mathcal{H}_{\text{vac}}
    =
    \{
    \ket{\Psi} \in \mathcal{H}_{\text{pn}}
    \mid
    Z_i \ket{\Psi} = \ket{\Psi}
    \ \forall i
    \} ,
\end{equation}
for the same $\Hil_\mrm{pn}$ as defined in Eq.~\eqref{eq_Hpn_bosQED}.
To obtain a vacuum code, we let $\Hil_\mrm{pn}$ play the role of the physical (computational) Hilbert space, and we identify $\Hil_\mrm{vac}$ as the code subspace.

\subsubsection{Stabilizers, code subspace and logical operators}
Using the Gauss law relation in Eq.~\eqref{eq_Z2bosGausslaw}, the condition $Z_i \ket{\Psi} = \ket{\Psi}$ can equivalently be expressed purely in terms of link operators.
The vacuum sector is therefore, equivalently, the $+1$ eigenspace of the stabilizer group
\begin{equation}
    \mathcal{S}
    =
    \langle
    Z_{[01]} Z_{[12]} \Pi_{\text{pn}},
    Z_{[12]} Z_{[20]} \Pi_{\text{pn}}
    \rangle
    \simeq
    \mathbb{Z}_2 \times \mathbb{Z}_2 ,
\end{equation}
and thus naturally constitutes a stabilizer code.
In fact, it is once again immediately recognized as the familiar three-qubit repetition code. 
In the basis \eqref{Z2_basis} the vacuum projector takes the simple form (cf.~\eqref{eq_vacuumcodeproj})
\begin{equation}
    \Pi_{\text{vac}}
    =
    \ket{000}\bra{000}_{\text{pn}}
    +
    \ket{111}\bra{111}_{\text{pn}},
\end{equation}
so the code space is two-dimensional, as expected since the stabilizer imposes two independent constraints. A convenient choice of logical basis is therefore
\begin{equation}
    \ket{0}_{\text{vac}} = \ket{000}_{\text{pn}},
    \qquad
    \ket{1}_{\text{vac}} = \ket{111}_{\text{pn}} .
\end{equation}

\subsubsection{Maximal correctable sets as sections of the discrete divergence}
The logical algebra of the code consists of those gauge-invariant operators that preserve the vacuum sector, or equivalently, those operators in $\mathcal{A}_{\text{pn}}$ that commute with all the $Z_i$ when restricted to $\mathcal{H}_{\text{pn}}$. 
Detectable errors in this setting correspond to gauge-invariant processes that create matter excitations. These may be constructed from products of dressed Wilson lines $\tilde{X}_{\ell}$ defined in Eq.~\eqref{eq:dressed_Z2boson}. For any $\bm{k} \in (\mathbb{Z}_2)^3$, we write 
\begin{equation}
    \tilde{X}^{\bm{k}}
    \coloneqq
    \left(
    \prod_{\ell \in \mathcal{L}}
    X_{v_i(\ell)}^{k_\ell}
    X_{\ell}^{k_\ell}
    X_{v_f(\ell)}^{k_\ell}
    \right),
\end{equation}
where $v_i(\ell)$ and $v_f(\ell)$ denote the initial and final vertices of the link $\ell$. Similarly to the bosonic Gauss law code in the previous section, $Z$ operators commute with the vacuum projector, so we may ignore them for the present error analysis. In contrast, the $X$ operators create matter excitations and thus annihilate the vacuum sector unless they combine either into the identity or into a closed Wilson loop around the triangle. Explicitly, 
\begin{eqnarray}
    \Pi_{\text{vac}} \tilde{X}^{\bm{k}} \Pi_{\text{vac}}
    &=&
    \delta_{\bm{k},0} \Pi_{\text{vac}}
    +
    \left(
    \prod_{\ell \in \mathcal{L}} \delta_{k_\ell,1}
    \right)
    X_{[01]} X_{[12]} X_{[20]} \Pi_{\text{vac}} \\
    &=&
    \bar{X}^{k_{[01]}}
    \delta_{\nabla \bm{k},0}
    \Pi_{\text{vac}},
\end{eqnarray}
with $\bar{X}$ defined in Eq.~\eqref{eq_Z2logX}. One may then directly evaluate the Knill-Laflamme condition for two dressed Wilson lines $\widetilde{X}^{\bm{k}_a}$, $\widetilde{X}^{\bm{k}_b}$, obtaining (cf.~\eqref{eq_KLbosvacuum})
\begin{equation}
    \Pi_{\text{vac}} \widetilde{X}^{\bm{k}_a \dagger} \widetilde{X}^{\bm{k}_b} \Pi_{\text{vac}}
    =
    \bar{X}^{k_{a,[01]} \oplus k_{b,[01]}}
    \delta_{\nabla \bm{k}_a, \nabla \bm{k}_b}
    \Pi_{\text{vac}} .
\end{equation}
We see that two errors violate the Knill-Laflamme condition whenever they generate the same charge pattern,
$\nabla \bm{k}_a = \nabla \bm{k}_b$,
while differing in their flux components. This observation shows that correctable sets of errors are naturally organized by the map that assigns to every operator the charge pattern it produces. In the present case, this is once again given the discrete divergence $\nabla$ defined in Eq.~\eqref{eq_nabla}. The set of attainable charges is $\nabla((\mathbb{Z}_2)^3) \subset (\mathbb{Z}_2)^3$, as determined in Eq.~\eqref{eq:range_nabla}. From the perspective of error correction, the role of $\nabla$ is to assign a syndrome to each error operator: two errors are indistinguishable within the vacuum sector precisely when they produce the same charge pattern $\bm{q} = \nabla \bm{k}$. A correctable set must therefore contain at most one representative error for each such syndrome.

Maximal correctable sets are obtained by making a consistent choice of such representatives for all possible syndromes. Mathematically, this corresponds to choosing a section $s_{\nabla}$ of the discrete divergence, exactly as in Eq.~\eqref{eq:sGLZ2} from the 3-qubit Gauss law code. The corresponding maximal correctable set is then \footnote{Note that, once again, the representatives in $\mathcal{E}_{s_{\nabla}}^{\text{vac}}$ may be multiplied by any string of $Z$-operators, as this doesn't affect the syndrome. This yields an immediate enlargement of the number of maximal correctable sets, which factorizes into a choice of section $s_{\nabla}$ for the $X$-component and an unconstrained $Z$-component.}
\begin{equation}\label{eq_Z2vacmaxset}
    \mathcal{E}_{s_\nabla}^{\text{vac}}
    =
    \{
    \widetilde{X}^{s_\nabla(\bm{q})}
    \mid
    \bm{q} \in \nabla((\mathbb{Z}_2)^3)
    \}.
\end{equation}

The section $s_\nabla$ selects, for every admissible charge configuration $\bm{q}$, a canonical, gauge-invariant error operator that produces it. Physically, this amounts to choosing a representative operator for each possible pattern of matter excitations created from the vacuum. Any other error that generates the same charge configuration differs from this representative by an operator that acts nontrivially within the vacuum sector and therefore cannot belong to the same correctable set. As we already saw with the 3-qubit Gauss law codes, different choices of section correspond to different ways of selecting these representatives, and hence to different maximal correctable sets of errors. 

Given such a section, one may construct a recovery operation that corrects all errors in the corresponding set $\mathcal{E}_{s_\nabla}^{\text{vac}}$. The recovery begins by measuring the charge configuration created by the error. For each admissible pattern $\bm q \in \nabla((\mathbb{Z}_2)^3)$, let $P_{\bm{q}}^{\text{vac}} = \widetilde{X}^{s_\nabla(\bm{q})}\Pi_{\text{vac}}\widetilde{X}^{s_{\nabla}(\bm{q})}$
denote the projector onto the error space $\mathcal{H}_{\bm{q}} = \widetilde{X}^{s_{\nabla}(\bm{q})}(\mathcal{H}_{\text{vac}})$. Operationally, a measurement of these projectors determines the syndrome associated with the error. Once the charge pattern $\bm q$ has been identified, the recovery applies the inverse of the representative error chosen by the section $s_\nabla$. The resulting recovery channel can therefore be written as
\begin{equation}\label{eq_Z2vacrec}
\mathcal{R}_{s_\nabla}^{\text{vac}}(\rho)
=
\sum_{\bm q \in \nabla( (\mathbb{Z}_2)^3 ) }
\widetilde{X}^{s_{\nabla}(\bm{q}) \dagger}
P_{\bm q}^{\text{vac}}\,\rho\,P_{\bm q}^{\text{vac}}\widetilde{X}^{s_{\nabla}(\bm{q})}.
\end{equation}

\subsubsection{Exact unitary equivalence with pure gauge Gauss law code}
Finally, we identify a unitary equivalence with the Gauss law code from the pure gauge sector in Section~\ref{ssec:3qubit_puregauge}. The computational space of that Gauss law code is the kinematical space $\mathcal{H}_{\text{gauge}}$ in Eq.~\eqref{eq_Z2Hgauge}, while the computational space of the present vacuum code is the full perspective-neutral space $\mathcal{H}_{\text{pn}}$ of the bosonic theory. As shown in Eq.~\eqref{Z2_basis}, $\mathcal{H}_{\text{pn}}$ admits a basis $\{ \ket{\bm{\varepsilon}}_{\text{pn}} \}$ labeled by flux configurations $\bm{\varepsilon} \in (\mathbb{Z}_2)^3$. This labeling coincides with that of the electric basis of $\mathcal{H}_{\text{gauge}}$ and therefore induces a canonical identification between the two computational spaces. Concretely, we define a unitary map
\begin{equation}
    \mathcal{T}: \mathcal{H}_{\text{pn}} \to \mathcal{H}_{\text{gauge}},
\end{equation}
by
\begin{equation}
    \mathcal{T} \ket{\bm{\varepsilon}}_{\text{pn}} = \ket{\bm{\varepsilon}}_{\text{gauge}}.
\end{equation}
By construction, $\mathcal{T}$ implements an isomorphism between the computational spaces. Moreover, it preserves the action of gauge invariant operators. In particular, dressed Wilson lines are mapped to their pure gauge counterparts,
\begin{equation}
    \mathcal{T}\widetilde{X}^{\bm{k}} \mathcal{T}^{\dagger} = X^{\bm{k}}.
\end{equation}
The code spaces are obtained by restricting to configurations with $\nabla \bm{\varepsilon} = 0$. Since $\mathcal{T}$ maps each configuration $\bm{\varepsilon}$ in $\mathcal{H}_{\text{pn}}$ to the same configuration in $\mathcal{H}_{\text{gauge}}$, it restricts to an isomorphism between the two code subspaces. Under this map, the maximal correctable sets associated with a section $s_{\nabla}$ (c.f.~Eq.~\eqref{eq_Z2puremaxset} and Eq.~\eqref{eq_Z2vacmaxset}) are related by
\begin{equation}
    \mathcal{E}_{s_{\nabla}}^{\text{GL}} = \mathcal{T} \mathcal{E}_{s_{\nabla}}^{\text{vac}} \mathcal{T}^{\dagger},
\end{equation}
and similarly for the corresponding recovery maps in Eq.~\eqref{eq_Z2purerec} and Eq.~\eqref{eq_Z2vacrec},
\begin{equation}
\mathcal{R}_{s_{\nabla}}^{\text{GL}} ( \mathcal{T} \rho \mathcal{T}^{\dagger}) = \mathcal{T}\mathcal{R}_{s_{\nabla}}^{\text{vac}}(\rho)\mathcal{T}^{\dagger}, \qquad \rho \in \mathcal{S}(\mathcal{H}_{\text{vac}}).
\end{equation}
This equivalence shows that the error correction structure of the vacuum code is entirely inherited from the pure gauge sector. In particular, the classification of maximal correctable sets in terms of sections of the discrete divergence, as well as the associated recovery procedures, are preserved under $\mathcal{T}$. Consequently, the vacuum code may be viewed as a physical embedding of the pure gauge code, with the unitary map $\mathcal{T}$ making this identification explicit.

\subsection{$1+1$ dimensional scalar QED as a vacuum code}\label{ssec_scalarQED}
As our next example, we consider the vacuum code corresponding to quantum electrodynamics on a 1D periodic lattice with $N$ sites and a single species of scalar matter of charge $\pm 1$. The structure group is $G = U(1)$, and the degrees of freedom of the theory consist of gauge variables associated with the links and matter fields located on the vertices of the lattice. We begin by introducing the gauge sector and the algebra it generates on the lattice, and
then subsequently incorporate the matter degrees of freedom. The first part of the discussion does not depend on the number of spatial dimensions; we specialize to one spatial dimension from Sec.~\ref{sssec:1DU1-pn} onwards.

\subsubsection{Scalar QED on a periodic lattice }\label{sssec:latticeQED}
In lattice gauge theory, the fundamental observable associated with a link is the parallel transporter along that link, i.e., the Wilson line operator. For $G = U(1)$ in the continuum, the Wilson line along a path $\ell$ is given by the path-ordered exponential of the vector potential,
\begin{equation}
W_\ell
=
\exp\!\left(i\int_\ell \bm{A}(\bm{x}) \cdot \bm{\dee x}\right).
\end{equation}
Physically, this operator measures the phase acquired by a charged particle transported along the link.
Going to the lattice, we let $\ell$ once again label one of its oriented edges.
For a sufficiently small lattice spacing $a$, the gauge field varies little along a single link and the integral may be approximated by the value of the vector potential at the midpoint of the link. Consequently, the Wilson line may be written in the form
\begin{equation}
W_\ell = e^{i\theta_\ell},
\end{equation}
where the lattice variable $\theta_\ell \approx a\,\bm{A}(x) \cdot \bm{e}_{\mu(\ell)}$ represents the discretized vector potential along the link, which runs in the direction $\bm{e}_{\mu(\ell)}$.

The gauge degree of freedom on each link is therefore characterized by a group element of $U(1)$. As usual, the Hilbert space describing this degree of freedom is the space of square–integrable functions on the group, $\mathcal{H}_\ell \simeq L^2(U(1))$. The group basis on each link consists of generalized eigenstates $\ket{\theta}_\ell$, labeled by the group element $\theta\in[0,2\pi)$, which satisfy
\begin{equation} \label{eq:lightWilsonG}
W_\ell \ket{\theta}_\ell = e^{i\theta}\ket{\theta}_\ell
\end{equation}
and have continuum normalization $\braket{\theta}{\theta'}_\ell = \delta(\theta-\theta')$ (cf. Eq.~\eqref{eq_Pdualrep}). The full kinematical Hilbert space of the gauge field, $\Hil_\mrm{gauge}$ in Eq.~\eqref{eq:Hkingauge}, is then obtained by taking the tensor product over all links.

Besides the Wilson line, the gauge algebra also contains the operator conjugate to the group coordinate. This is exactly the electric field operator $E_\ell$, which generates translations of the group coordinate on the link. Concretely, the translation operators $U_\ell^\alpha = e^{i\alpha E_\ell}$ act on the group basis according to (cf. Eq.~\eqref{eq:Uoperator})
\begin{equation}
U_\ell^\alpha \ket{\theta}_\ell
=
\ket{\theta+\alpha}_\ell .
\end{equation}

A complementary description of the same Hilbert space is obtained by diagonalizing the electric field. This leads to the electric basis, consisting of eigenstates of $E_\ell$,
\begin{equation} \label{eq:Ebasis}
E_\ell \ket{\varepsilon}_\ell
=
\varepsilon \ket{\varepsilon}_\ell,
\qquad
\varepsilon \in \mathbb{Z}.
\end{equation}
The integer $\varepsilon$ labels the quantized electric flux carried by the link. The group and electric bases therefore provide dual representations of the same gauge degree of freedom, related by a Fourier transform (cf. Eq.~\eqref{groupvselectricbases}). In particular,
\begin{equation}
\braket{\theta}{\varepsilon}_\ell
=
\frac{1}{\sqrt{2\pi}}e^{i\varepsilon\theta},
\end{equation}
and, in the electric basis, the Wilson line acts as a raising operator
\begin{equation} \label{eq:lightWilsonE}
W_\ell \ket{\varepsilon}_\ell
=
\ket{\varepsilon+1}_\ell ,
\end{equation}
It also follows that $U_\ell^\alpha$ acts as
\begin{equation}
    U_\ell^\alpha \ket{\epsilon}_\ell = e^{-i\alpha \epsilon} \ket{\epsilon}_\ell \, .
\end{equation}

To make complete contact with Sec.~\ref{sec:LGT}, note that we have been suppressing the character label $\chi \in \hat{G} $ when writing the Wilson line $W_{\ell}$. For $\theta \in G = U(1)$, $\hat{G} \simeq \mathbb{Z}$, the characters are labeled by integers. In detail, denoting $\chi[k]$ the character corresponding to $k \in \mathbb{Z}$, it acts as $\chi[k](\theta) = e^{-ik \theta}$. In other words, here we abbreviate $W_{\ell}^{\chi[1]} \equiv W_{\ell}$;  Eqs.~\eqref{eq:lightWilsonG} and \eqref{eq:lightWilsonE} are thus just the specialization of Eq.~\eqref{eq_Pdualrep} to the present example, and Eq.~\eqref{eq:Ebasis} is the character basis.
Happily, it furthermore follows that $W_\ell^{\chi[k]} = (W_\ell)^k$.

Having reviewed the gauge sector, we now introduce matter degrees of freedom on the vertices. At each vertex $v$, we place two harmonic oscillators describing particle and antiparticle modes of charge $+1$ and $-1$, respectively. The matter Hilbert space \eqref{eq_Hbosonic_matter} therefore takes the form
\begin{equation}
    \mathcal{H}_{\text{matter}} = \bigotimes_{v \in \mathcal{V}} \mathcal{H}_{v, +} \otimes \mathcal{H}_{v,-} \; .
\end{equation}
A convenient description of the local Hilbert space is provided by the occupation number basis (cf. Eq.~\eqref{eq:bosonic_vertexH}),
\begin{equation}
    \mathcal{H}_{v, +} \otimes \mathcal{H}_{v,-} = \text{Span} \{  \ket{\bm{n_v}} \mid n_{v,+}, n_{v,-} \in \mathbb{N}_0   \},
\end{equation}
where $\bm{n_v} = (n_{v,+},n_{v,-})$ labels the joint eigenstates of the number operators, $N_{v, +}$ and $N_{v,-}$, associated with the particle and antiparticle modes, respectively. We denote the corresponding annihilation operators $a_v$ and $b_v$. They satisfy the canonical commutation relations
\begin{equation}
[a_v,a_{v'}^\dagger]=\delta_{vv'}, \qquad
[b_v,b_{v'}^\dagger]=\delta_{vv'},
\end{equation}
with all remaining commutators vanishing. The corresponding number operators are then
\begin{equation}
N_{v,+}=a_v^\dagger a_v, \qquad
N_{v,-}=b_v^\dagger b_v \, ,
\end{equation}
and they count the number of particle and antiparticle excitations residing at vertex $v$. From these operators we define the local matter charge (cf.~Eq.~\eqref{eq_rhototbos})
\begin{equation}
Q_v := N_{v,+}-N_{v,-},
\end{equation}
which measures the net electric charge carried by the matter degrees of freedom at the vertex.

We now describe how gauge transformations act on the combined gauge and matter sectors, i.e., on $\Hil_\mrm{kin} = \Hil_\mrm{gauge} \otimes \Hil_\mrm{matter}$. Local gauge transformations act independently with respect to each vertex and are parametrized by angles $\bm{\alpha}=\{\alpha_v\}_{v\in\mathcal V} \in [0, 2\pi)^{N_V}$. The action of a local gauge transformation factorizes into a part acting on the gauge field and a part acting on the matter sector,
\begin{equation}
    U^{\bm{\alpha}} = U^{\bm{\alpha}}_{\text{gauge}} \otimes u^{\bm{\alpha}}_{\text{matter}}.
\end{equation}
On the gauge sector, they are generated by the discrete divergence of the electric field at every vertex (cf. Eq.~\eqref{eq_Gausslawpure}),
\begin{eqnarray}
    U^{\bm{\alpha}}_{\text{gauge}} &=& \prod_{v \in \mathcal{V}} \left(  \bigotimes_{\ell \in \mathcal{L}_{\text{out}}(v)}U_{\ell}^{\alpha_v} \right) \otimes \left(  \bigotimes_{\ell' \in \mathcal{L}_{\text{in}}(v)} U_{\ell'}^{\alpha_{v} \dagger}   \right) \\
    &=& e^{-i \bm{\alpha} \cdot \nabla \bm{E}},
\end{eqnarray}
with
\begin{equation}
    (\nabla \bm{E})_v \coloneqq \sum_{\ell \in \mathcal{L}_{\text{out}}(v)} E_{\ell} - \sum_{\ell' \in { \mathcal{L}_{\text{in}}(v)}} E_{\ell'}.
\end{equation}
On the matter sector, the gauge transformations are generated by the charge operators, $Q_v$, and take the form (cf.~Eq.~\eqref{eq_gaugegrouprepbos})
\begin{equation}
u_{\text{matter}}^{\bm{\alpha}}
=
\exp \left(i\sum_{v\in\mathcal V}\alpha_v Q_v\right).
\label{transformations_matter_scalar_QED}
\end{equation}
Their adjoint action rotates the phases of the matter modes (cf.~Eq.~\eqref{eq_partantiparttransform}) ,
\begin{align}\label{eq_aGTQED}
u_{\text{matter}}^{\bm{\alpha}} a_v (u_{\text{matter}}^{\bm{\alpha}})^\dagger &= e^{-i\alpha_v} a_v,\\
u_{\text{matter}}^{\bm{\alpha}} b_v (u_{\text{matter}}^{\bm{\alpha}})^\dagger &= e^{i\alpha_v} b_v .
\end{align}
The two oscillators therefore transform with opposite phases under a local gauge rotation and can be interpreted as particle and antiparticle modes carrying charges $+1$ and $-1$. The gauge field on the links transforms consistently with these local phase rotations. If a link $\ell$ is oriented from vertex $v_i$ to $v_f$, the Wilson line operator transforms according to
\begin{equation}
U^{\bm{\alpha}} W_\ell (U^{\bm{\alpha}})^\dagger
=
e^{i(\alpha_{v_i}-\alpha_{v_f})} W_\ell .
\end{equation}
Therefore, the parallel transporter picks up the relative phase between the endpoints of the link.
On a periodic lattice, a gauge transformation with constant parameter $\alpha_v=\alpha$ leaves all Wilson line operators invariant. The gauge-field sector therefore transforms trivially under such global transformations. On the matter sector, however, the same transformation acts as
\begin{equation}
u^{\alpha}_{\text{global}}=
\exp\!\left(i\alpha\sum_{v\in\mathcal V}Q_v\right).
\end{equation}
Gauge-invariant states must be invariant under this transformation as well, which implies the constraint (cf.~Eq.~\eqref{eq_bostotalchargezero})
\begin{equation}
\sum_{v\in\mathcal V} Q_v = 0 .
\end{equation}
Hence, physical states must carry vanishing total matter charge.

The total local gauge transformations are generated by operators that combine the electric flux degrees of freedom on neighboring links with the matter charge localized at a given vertex. These operators constitute the Gauss law constraints and, at a vertex $v$, they assume the form 
\begin{equation}\label{QEDGauss}
G_v =
\sum_{\ell\in\mathcal{L}_{\text{out}}(v)}E_\ell
-
\sum_{\ell'\in\mathcal{L}_{\text{in}}(v)}E_{\ell'}
-
Q_v .
\end{equation}
$G_v$ measures the mismatch between the divergence of the electric field and the charge located at vertex $v$; states invariant under all gauge transformations satisfy Gauss's law,
\begin{equation}\label{eq_QEDGausslaw}
G_v\ket{\psi}=0 \qquad \forall v \in \mathcal{V}  .
\end{equation}
Exponentiating these generators gives a gauge transformation everywhere on the lattice (cf. Eq.~\eqref{generalGTs}),
\begin{equation}
    U^{\bm{a}} = \prod_{v \in \mathcal{V}} U_v^{\alpha_v} = \prod_{v \in \mathcal{V}} e^{i \alpha_v G_v},
\end{equation}
whence the perspective-neutral space \eqref{eq:HpnSec2} is
\begin{equation}
    \mathcal H_{\text{pn}}
=
\{\ket{\psi}\in\mathcal H_{\text{gauge}}\otimes\mathcal H_{\text{matter}}
\mid
U_v^\alpha \ket{\psi}=0\ \forall v \in \mathcal{V}, \alpha \in U(1) \}.
\end{equation}

Equivalently, $\Hil_\mrm{pn}$ is also the kernel of the constraints $G_v$:
\begin{equation}
\mathcal H_{\text{pn}}
=
\{\ket{\psi}\in\mathcal H_{\text{gauge}}\otimes\mathcal H_{\text{matter}}
\mid
G_v\ket{\psi}=0\ \forall v \in \mathcal{V} \}.
\end{equation}
This space constitutes the physical (computational) Hilbert space of the vacuum code. To further characterize this space and the corresponding algebra, it is convenient to make a choice of QRF.

\subsubsection{Perspective-neutral subspace and physical operator algebra}\label{sssec:1DU1-pn}

We now specialize to a one-dimensional periodic lattice and solve the Gauss constraints by expressing the gauge field degrees of freedom relative to a chosen quantum reference frame. We do this by selecting a spanning tree within the lattice, in line with the discussion in Sec.~\ref{sec:general}. Let us number the vertices and links as
\begin{equation}
    \mathcal{V} = \{ v_0, v_1, \cdots, v_{N-1}       \},
\end{equation}
\begin{equation}
    \mathcal{L} = \{ \ell_0, \ell_1, \cdots, \ell_{N-1}     \} 
\end{equation}
and assume an orientation where link $\ell_i$ is oriented from $v_i$ to $v_i + 1$.
In the current setting, a spanning tree is simply a subset of $N-1$ adjacent links in $\mathcal{L}$. Without loss of generality, we pick the tree $R$ consisting of the first $N-1$ links and denote its complement by $S$:
\begin{equation}
    R = \{   \ell_0, \ell_1, \cdots, \ell_{N-2}   \},
\end{equation}
\begin{equation}
    S = \{ \ell_{N-1}   \}. 
\end{equation}

In Sec.~\ref{ssssec_GausTPS}, we saw that such a choice of QRF induces a refactorization of the kinematical space of the gauge field into a Wilson loop sector and a frame sector,
\begin{equation}
    \mathcal{H}_{\text{gauge}} \simeq \mathcal{H}_{\text{loops}} \otimes \mathcal{H}_{R}.
\end{equation}
On a 1D periodic lattice, there is precisely one Wilson loop---the loop that goes around the full lattice itself,
\begin{equation}
    H = \bigotimes_{i = 0}^{N-1}W_{\ell_{i}}.
\end{equation}
We thus expect that $\mathcal{H}_{\text{loops}} \simeq L^2(U(1))$, and this is indeed that case:
in general, we defined the loop Hilbert space as the $+1$ eigenspace of the restriction of the gauge transformations to the gauge field sector, i.e. the perspective-neutral subspace of the corresponding pure-gauge theory (see discussion below Eq.~\eqref{eq_Hkinbosfac}).
Here, this restriction imposes $N-1$ independent constraints on the $N$ group variables $\theta_\ell$. In terms of the group coordinate operators $\hat{\theta}_{\ell}$, the Wilson loop $H$ reads
\begin{equation}
    H = \exp \left( i \sum_{i = 0 }^{N-1} \hat{\theta}_{\ell_i}  \right),
\end{equation}
which makes it explicit that $H$ measures the total holonomy accumulated along the chain.
Its eigenstates provide a convenient basis for the loop Hilbert space,
\begin{equation}
    H \ket{\phi}_{\text{loops}} = e^{i \phi} \ket{\phi}_{\text{loops}}, \qquad \phi \in [0, 2\pi)
\end{equation}
which makes the isomorphism $\mathcal{H}_{\text{loops}} \simeq L^2(U(1))$ explicit. 

At a more physical level, this refactorization admits a simple interpretation. In one spatial dimension, imposing Gauss' law removes all local gauge-redundant degrees of freedom from the gauge sector, leaving behind a single global degree of freedom together with relational information along the chain. The loop sector $\mathcal{H}_{\text{loops}}$ captures the total holonomy around the periodic lattice, which constitutes the single global gauge-invariant degree of freedom of the gauge field. In the conjugate electric basis, this degree of freedom corresponds to the total electric flux that is threaded through the chain, so it may be thought of as setting a uniform background field. The frame sector $\mathcal{H}_R$ then describes the remaining gauge field degrees of freedom relative to this global quantity.

In the group representation, $R$ parametrizes the phases accumulated along Wilson lines connecting each vertex to the root, thereby encoding how the gauge field is distributed along the lattice with respect to the chosen reference frame. We can make this even more explicit by passing from the canonical description in terms of link variables $\theta_{\ell_0}, \cdots, \theta_{\ell_{N-2}}$ to a collection of \emph{vertex} phases, defined by the variables
\begin{equation}
    \Theta_{v_i} = -\sum_{j = 0}^{i-1} \theta_{\ell_j} \mod 2\pi,
\end{equation}
which keeps track of the change of phase that a charged particle undergoes when transported from $v_i$ to $v_0$ along the path $\gamma_R[v_i,v_0] = \ell_{0}^{-1} \circ \ell_{1}^{-1} \circ \cdots \circ \ell_{i-1}^{-1}$ (see Fig.~\ref{fig:SQED}). We then refactorize $\mathcal{H}_R$ with respect to these non-local degrees of freedom by mapping $\ket{\bm{\theta}}_R \to \ket{\bm{\Theta}}_R$ where $\bm{\theta} = (\theta_{\ell_i})_{i=0}^{N-2}$ and $\bm{\Theta} = (\Theta_{v_i})_{i=1}^{N-1}$ are each a collection of $N-1$ phases, but defined on links and vertices, respectively.

\begin{figure}[h]
\centering
\includegraphics[width=0.95\textwidth]{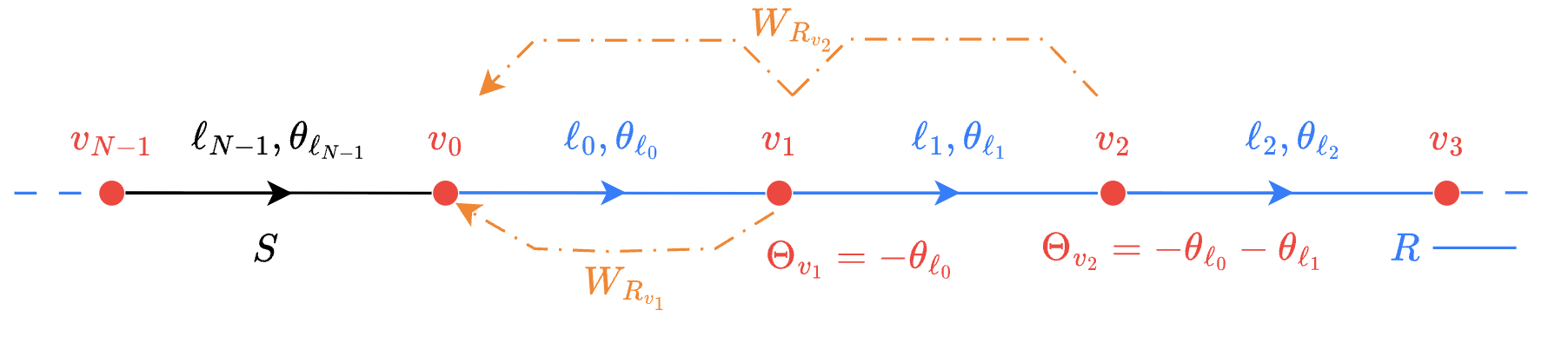}
\caption{Collective vertex variables (red) in relation to link $U(1)$ variables. $\Theta_v$ keeps track of the phase acquired by a charged particle when transported along the tree Wilson lines $W_{R_v}$ (orange). The choice of spanning tree $R$ is shown in blue and its complement, $S$, in black. }
\label{fig:SQED}
\end{figure}

In this way, a gauge transformation at vertex $v$ with parameter $\alpha_v$ shifts the vertex variable at $v$ as $\Theta_v \to \Theta_v + \alpha$, which is the usual action of the symmetry group on QRF orientation variables. In contrast, tree Wilson lines connecting each vertex to the root act diagonally, producing phases at the nontrivial endpoints,
\begin{equation}
    W_{{R_v}} \ket{\bm{\Theta}}_R = e^{i \Theta_v} \ket{\bm{\Theta}}_R,
\end{equation}
where $W_{R_v} = W_{\gamma_R[v,v_0]}$ (c.f. Eq.\eqref{eq_RWilson}). In this particular theory, taking into account the labeling of the vertices, tree Wilson lines take the form 
\begin{equation}
    W_{R_{v_i}} = \bigotimes_{j = 0}^{i-1}W_{\ell}^{\dagger}, \qquad 0 \leq i <  N,
\end{equation}
where we use $W_{\ell}^{\dagger}$ instead of $W_{\ell}$ because the lattice links' orientations are opposite to the directions of the paths $R_{v_i}$.

A complementary interpretation emerges in the electric basis, obtained via a Fourier transform of the group (magnetic) variables. In this representation, the frame degrees of freedom describe how the electric flux departs from the uniform background fixed by the loop sector, with the vertex labels tracking the location and magnitude of flux discontinuities along the chain:
\begin{equation} \label{eq:U1nonlocalEbasis}
    \ket{\bm{q}}_R = \int \frac{\dee ^{N-1} \bm{\Theta}}{(2 \pi )^{\frac{N-1}{2}}} e^{i \bm{q} \cdot \bm{\Theta} } \ket{\bm{\Theta}}_R, \qquad q_v \in \mathbb{Z}.
\end{equation}
In this basis, Wilson lines act by translation of the conjugate coordinate,
\begin{eqnarray}
    W_{R_{v_i}} \ket{\bm{q}}_R &=&  \int \frac{\dee^{N-1} \bm{\Theta}}{(2 \pi)^{\frac{N-1}{2}}} e^{i \bm{q} \cdot \bm{\Theta}} e^{i \Theta_{v_i}} \ket{\bm{\Theta}}_R \\
    &=& \int \frac{\dee ^{N-1}\bm{\Theta}}{(2 \pi)^{\frac{N-1}{2}}} e^{i (\bm{q} + \bm{e}_{v_i} ) \cdot \bm{\Theta} } \ket{\bm{\Theta}}_R \\
    &=& \ket{\bm{q} + \bm{e}_{v_i} }_R,
\end{eqnarray}
where $(\bm{e}_{v_i})_{v_j}=\delta_{ij}$. In contrast, gauge transformations at $v_i \neq v_0$ (Eqs.~\eqref{eq:URv1} and \eqref{eq:URv2}) act diagonally, multiplying the electric basis states by a phase
\begin{equation}
    U_{R_{v_i}}^{\alpha_{v_i}} \ket{\bm{q}}_R = e^{i \alpha_{v_i}q_{v_i}} \ket{\bm{q}}_R.
\end{equation}
Gauge transformations at $v_0$, the root of the spanning tree, must be treated differently because they are not covered by the frame orientations, but they may be expressed in terms of the remaining independent transformations by exploiting the invariance of the gauge sector under a global transformation:
\begin{eqnarray}
    U_{R_{v_0}}^{\alpha_{v_0}} = \left(  \prod_{v' \in \mathcal{V}}U_{R_{v'}}^{\alpha_{v_0}} \right) \left( \prod_{v \in \mathcal{V} \setminus \{ v_0  \} } U_{R_v}^{\alpha_{v_0} \dagger}  \right)
\end{eqnarray}

To recap, starting from the gauge field degrees of freedom on the lattice, we let the $N^\text{th}$ link constitute the system, $S$, for a frame $R$ that consists of the first $N-1$ links.
We then passed to a collective magnetic basis for $R$, whose labels consist of a phase $\Theta_{v_i}$ for each of the last $N-1$ vertices $v_i$.
It now follows that the corresponding collective electric basis for $R$, whose labels consist of an integer $q_{v_i}$ for each vertex $v_i$, transforms in nearly the same way as the occupation number eigenstates for the bosonic \emph{matter} living at the vertices.

Compare the action of the gauge transformation
\begin{equation}
    U^{\bm{\alpha}}_\mrm{gauge} = \prod_{i = 0}^{N-1} U_{R_{v_i}}^{\alpha_{v_i}} \, ,
\end{equation}
expressed with respect to vertex variables, to that of the matter-supported gauge transformation $u_\mrm{matter}^{\bm{\alpha}}$ in Eq.\eqref{transformations_matter_scalar_QED}.
According to the analysis in Sec.~\ref{ssssec_boscodespace}, it follows that the gauge invariant subspace of the joint frame-matter sector is generated by states $\ket{-\bm{q}(\bf{n})}_R \otimes \ket{\mathbf{n}}_{\text{matter}}$,  where $q_v(\mathbf{n}) \coloneqq  n_{v, +} - n_{v, -}$ and the occupation numbers satisfy the global neutrality condition $\sum_{v \in \mathcal{V}} n_{v,+} - n_{v,-} = 0$. For clarity we denote the set of such occupation numbers by $\tilde{\mathcal{N}}$,
\begin{equation}
    \tilde{\mathcal{N}} \coloneqq \left\{ \vphantom{\sum_{v \in \mathcal{V}}} \mathbf{n} \in (\mathbb{N}_0)^{2N} ~ \right| \left. ~ \sum_{v \in \mathcal{V}} n_{v, +} - n_{v, -} = 0     \right\}.
\end{equation}
Therefore, the perspective-neutral space has a subsystem structure $\mathcal{H}_{\text{pn}} \simeq \mathcal{H}_{\text{loops}} \otimes \mathcal{H}^{\text{dr}}_{\text{matter}}$, with
\begin{equation}\label{drmatterSQED}    \mathcal{H}^{\text{dr}}_{\text{matter}} = \text{Span} \{ \ket{-\bm{q}(\mathbf{n})}_R \otimes \ket{\mathbf{n}}_{\text{matter}} \in \mathcal{H}_R \otimes \mathcal{H}_{\text{matter}}  \mid \mathbf{n} \in \tilde{\mathcal{N}}  \}.
\end{equation}

Having obtained this decomposition, the structure of the physical algebra follows directly: It factorizes into loop and dressed-matter factors, $\mathcal{A}_{\text{pn}} = \mathcal{A}_{\text{loops}} \otimes \mathcal{A}^{\text{dr}}_{\text{matter}}$. Let us focus first on the loop sector. By construction, $\mathcal{H}_{\text{loops}}$ is isomorphic to the perspective-neutral space of Maxwell theory. As argued in Sec.~\ref{sssec_codespacepure}, its corresponding algebra is the unital $*$-algebra generated by the relational observables describing the system $S$ relative to the frame $R$. In Eq.~\eqref{relobspure}, we explicitly showed how these operators may be constructed from $S$-observables through conjugation by the Page-Wootters reduction map Eq.~\eqref{eq_PW}. For QED, this yields the algebra
\begin{equation}
    \mathcal{A}_{\text{loops}} = \left\langle H\Pi_{\text{pn}}, U_{\ell_{N-1}}^{\alpha}\Pi_{\text{pn}} \mid \alpha \in [0, 2\pi)     \right\rangle,
\end{equation}
where the perspective-neutral projector averages uniformly over gauge transformations involving both the electromagnetic field and matter sectors,
\begin{equation}
    \Pi_{\text{pn}} \coloneqq \int \frac{\dee ^{N-1} \bm{\alpha}}{(2 \pi)^{\frac{N-1}{2}}} U^{\bm{\alpha}}.
\end{equation}
Similarly, in App.~\ref{app_bosalgebra}, we follow the same procedure to characterize the dressed-matter algebra in terms of dressed Wilson lines along tree paths $\gamma_R[v,v']= \gamma_R[v',v_0]^{-1} \circ \gamma_R[v, v_0]$, with creation and annihilation operators at their endpoints. In the present discussion, there are three possible ways that one could carry out such a dressing with the available matter fields, giving rise to operators $b^{\dagger}_{v} W_{\gamma_R[v,v']} a_{v'}^{\dagger}$, $a_v W_{\gamma_R[v,v']} a_{v'}^{\dagger}$, $b_{v}^{\dagger} W_{\gamma_R[v,v']} b_{v'}$ and their Hermitian conjugates, which are manifestly gauge invariant and generate the complete dressed matter algebra 
\begin{equation}
    \mathcal{A}^{\text{dr}}_{\text{matter}} = \left\langle a_vW_{\gamma_R[v,v']}a_{v'}^{\dagger} \Pi_{\text{pn}}, \thinspace b_{v}^{\dagger}W_{\gamma_R[v,v']}b_{v'} \Pi_{\text{pn}} , \thinspace  b_v^{\dagger}W_{\gamma_R[v,v']}a_{v'}^{\dagger} \Pi_{\text{pn}}, \thinspace \text{h.c.} \mid v, v' \in \mathcal{V}     \right\rangle.
    \label{drmatalgQED}
\end{equation}

\subsubsection{Code subspace, correctable errors and recovery}

We now turn to the identification of the code subspace within $\mathcal{H}_{\text{pn}}$. In line with the discussion above, this is obtained by restricting to the sector of trivial matter content---the matter vacuum---and regarding the remaining gauge-invariant degrees of freedom as carrying the logical information. More precisely, we define
\begin{equation}
    \mathcal{H}_{\text{code}} = \mathcal{H}_{\text{vac}} \coloneqq \{ \ket{\Psi} \in \mathcal{H}_{\text{pn}} \mid N_{v, +} \ket{\Psi} = N_{v, -} \ket{\Psi} = 0  \}.
\end{equation}
This coincides with the $+1$ eigenspace of the group generated by the exponentials of the local occupation number operators,
\begin{equation}\label{sgenvacQED}
    \mathcal{S} \coloneqq \left\langle e^{i\alpha N_{v, +}} \Pi_{\text{pn}}, \thinspace e^{-i \alpha N_{v,-}} \Pi_{\text{pn}}  \mid v \in \mathcal{V}, \alpha \in [0, 2\pi)  \right\rangle,
\end{equation}
so the vacuum sector indeed has the structure of a $U(1)$-stabilizer code. The code projector takes the simple form
\begin{equation}
\Pi_{\text{vac}} = \mathbbm{1}_{\text{loops}} \otimes_R \ket{0}\bra{0}^{\text{dr}}_{\text{matter}},
\label{Pi_vac_QED}
\end{equation}
which is formally identical to the perspective-neutral projector of the pure-gauge theory in Lemma 3.1 upon replacing the \emph{dressed-matter} labels with a $R$ labels.
It thus follows that the present code space is isomorphic to the code subspace of the Gauss-law code obtained from the pure gauge sector. This will shortly allow us to relate maximal sets of correctable errors for both codes as we did in Sec.~\ref{ssssec: unitaryequivvaccodes}; however, first we construct the algebra of logical operators.
These are operators that preserve the vacuum, and from Eq.(\ref{Pi_vac_QED}), it is clear that this is simply $\mathcal{A}_{\text{vac}} = \mathcal{A}_{\text{loops}} \otimes_R \mathbbm{1}^{\text{dr}}_{\text{matter}} \simeq \mathcal{A}_{\text{loops}}$. Therefore, the logical operators are exactly the Wilson loops across the lattice.

Prior to performing the error analysis, we first isolate the subset of dressed matter operators that take states out of the vacuum sector and hence give rise to detectable errors. This identification is straightforward: among the dressed Wilson line operators appearing in Eq.~(\ref{drmatalgQED}), the only ones that fail to annihilate the vacuum are those involving both $a^{\dagger}$ and $b^{\dagger}$. For subsequent analysis, it is convenient to arrange any product of such operators into the form
\begin{eqnarray}
    \widetilde{W}^{\bm{k}} &\coloneqq& \prod_{\ell \in \mathcal{L}} \left( b_{v_i(\ell)}^{\dagger} W_{\ell}a_{v_f(\ell)}^{\dagger}     \right)^{k_{\ell, +}} \left( a_{v_i(\ell)}^{\dagger}W_{\ell}^{\dagger} b_{v_f(\ell)}^{\dagger}  \right)^{k_{\ell, -}} \\
    &=& \prod_{i = 0}^{N-1} \left( b^{\dagger}_{v_i}W_{\ell_i}a_{v_{i+1}}^{\dagger}   \right)^{k_{\ell_i, +}} \left( a_{v_i}^{\dagger} W_{\ell_i}^{\dagger} b_{v_{i+1}}^{\dagger}   \right)^{k_{\ell_{i}, -}}
\end{eqnarray}
indexed by integer link data $\bm{k} \in (\mathbb{N}_0)^{2N}$.
While the notation is delicate, the intuition is plain. In 1D, one can add a unit of flux to the electric field at the link $\ell_i$ by creating a negatively-charged antiparticle at $v_i$ and a positively-charged particle at $v_{i+1}$ $(b^{\dagger}_{v_i}W_{\ell_i}a_{v_{i+1}}^{\dagger})$.
Similarly, one can subtract a unit of flux by first creating a particle at $v_i$, followed by an antiparticle at $v_{i+1}$ $(a_{v_i}^{\dagger} W_{\ell_i}^{\dagger} b_{v_{i+1}}^{\dagger})$.

This parametrization encompasses the complete set of operators capable of generating matter excitations from the vacuum while preserving gauge invariance. When acting on the vacuum, $\widetilde{W}^{\bm{k}}$ yields a specific occupation-number configuration, fully determined by the number of particle and antiparticle creation operators associated with each vertex. We have
\begin{eqnarray}
\widetilde{W}^{\bm{k}}\Pi_{\text{vac}} ~ \propto ~ H^{(k_{\ell_{N-1}, +}) - (k_{\ell_{N-1}, -})} \prod_{i = 0}^{N-1} \left(  \ket{n_{v_i, +} (\bm{k})} \bra{0}_{v_i, +} \otimes \ket{ n_{v_i, -}(\bm{k})   } \bra{0}_{v_i, -}  \right) \Pi_{\text{pn}},
\end{eqnarray}
where
\begin{equation}
    n_{v_i, +} (\bm{k}) = k_{\ell_{i-1}, +} + k_{\ell_{i}, -}, \qquad n_{v_i, -}(\bm{k}) =  k_{\ell_{i}, +} + k_{\ell_{i-1}, -},
\end{equation}
and the subscript $\pm$ specifies whether the occupation number in the ket (bra) refers to a particle ($+$) or to an antiparticle ($-$). Inserting this expression into the Knill-Laflamme condition for two Wilson lines $\widetilde{W}^{\bm{k}}$, $\widetilde{W}^{\bm{k}'}$ yields 
\begin{equation}\label{KLvacQED}
    \Pi_{\text{vac}}\widetilde{W}^{\bm{k} \dagger}\widetilde{W}^{\bm{k}'} \Pi_{\text{vac}} \propto \delta_{\mbf{n}(\bm{k}), \mbf{n}(\bm{k}')} H^{(k'_{\ell_{N-1}, +}) - (k'_{\ell_{N-1},-}) - (k_{\ell_{N-1}, +}) + (k_{\ell_{N-1}, -})} \Pi_{\text{vac}}  .
\end{equation}
In other words, any two such Wilson line errors are simultaneously correctable as long as they are not related by multiplication with Wilson loops alone.

The natural syndrome is given by the individual particle and antiparticle number, as expected from the form of the stabilizer generators in Eq.(\ref{sgenvacQED}):
\begin{equation}
    e^{\pm i (N_{v_i, \pm})} \widetilde{W}^{\bm{k}} \ket{0}^{\text{dr}}_{\text{matter}} = n_{v_i, \pm} (\bm{k}) \widetilde{W}^{\bm{k}} \ket{0}^{\text{dr}}_{\text{matter}}.
\end{equation}
This defines a map (cf.~Eq.~\eqref{eq_vacbundle})
\begin{equation}
    \partial_{B,\text{vac}}: (\mathbb{N}_0)^{2N} \to \tilde{\mathcal{N}}, \quad \partial_{B, \text{vac}}(\bm{k}) = \mbf{n}({\bm{k}}),
\end{equation}
which endows the space of error parameters with the structure of a discrete fiber bundle over the set of admissible occupation number configurations, with fibers given by all $\bm{k} \in (\mathbb{N}_0)^{2N}$ that produce the same syndrome. Eq.~\eqref{KLvacQED} shows that two distinct dressed Wilson line errors can only satisfy the Knill-Laflamme condition Eq.~\eqref{eq_KLstabcode} if they lie in distinct fibers. Indeed, if $\mathbf{n}(\bm{k}) \neq \mathbf{n}(\bm{k}')$, the expression vanishes, as required. By contrast, if $\mathbf{n}(\bm{k}) = \mathbf{n}(\bm{k}')$, the right hand side of Eq.~\eqref{KLvacQED} is non-vanishing and acquires a nontrivial loop contribution unless $\bm{k} = \bm{k}'$. It follows that a correctable set can contain at most one representative from each fiber of $\partial_{B,\text{vac}}$. Maximal correctable sets are therefore obtained by selecting exactly one parameter $\bm{k}$ in each fiber. Equivalently, they are parametrized by sections $s_{\text{vac}}$ of $\partial_{B, \text{vac}}$ and take the form
\begin{equation}
    \mathcal{E}_{s_{\text{vac}}} = \{ \widetilde{W}^{s_{\text{vac}}(\mathbf{n})} \mid \mathbf{n} \in \tilde{\mathcal{N}}   \}.
\end{equation}

A recovery operation for such a set is obtained by measuring the syndrome $\mathbf{n}$, which identifies the unique representative $\widetilde{W}^{s_{\text{vac}}(\mathbf{n})}$, and then undoing it. This leads to the channel
\begin{equation}
\mathcal{R}_{s_{\text{vac}}}(\rho)
=
\sum_{\mathbf{n} \in \tilde{\mathcal{N}}}
\Pi_{\text{vac}}
\widetilde{W}^{s_{\text{vac}}(\mathbf{n})\dagger}
P_{\mathbf{n}}^{\text{vac}} \rho  P_{\mathbf{n}}^{\text{vac}}
\widetilde{W}^{s_{\text{vac}}(\mathbf{n})}
\Pi_{\text{vac}},
\end{equation}
where the projectors onto the syndrome sectors are given by
\begin{equation}
P_{\mathbf{n}}^{\text{vac}}
\propto
\widetilde{W}^{s_{\mathrm{vac}}(\mathbf{n})}
\Pi_{\text{vac}}
\widetilde{W}^{s_{\text{vac}}(\mathbf{n})\dagger}.
\end{equation}
In particular, each syndrome sector is obtained by acting with the corresponding representative error on the vacuum subspace. 

\subsubsection{Unitary equivalence with pure gauge Gauss law code upon coarse-graining }
To make contact with the discussion on Gauss law codes, let us introduce an additional restriction and consider a scenario in which our syndrome measurements are limited to the net matter charge $Q_v = N_{v, +} - N_{v,-}$ at every vertex. This amounts to a coarse-graining of the syndrome, where the individual particle and antiparticle numbers are no longer resolved separately. Concretely, we introduce an equivalence relation on the vertex Hilbert space $\mathcal{H}_v = \mathcal{H}_{v, +} \otimes \mathcal{H}_{v, -}$, and identify occupation states whenever they carry the same matter charge:
\begin{equation}
   \ket{n_{v,+}, n_{v, -}}_v \sim \ket{m_{v,+}, m_{v, -}}_v \quad \iff \quad n_{v, +} - n_{v,-} = m_{v,+} - m_{v, -}.
\end{equation}
We denote the equivalence classes by 
\begin{equation}
    \ket{q_v}_v = \left[ \ket{n_{v, +}, n_{v,-}}_v \in \mathcal{H}_{v, +} \otimes \mathcal{H}_{v,-} \mid n_{v, +} - n_{v,-} = q_v    \right],
\end{equation}
which comprise eigenstates of $Q_v$ with eigenvalue $q_v$. The set of equivalence classes, $\mathcal{H}_v / \sim$ has a linear structure and admits a well-defined inner product $\braket{q_v}{q'_v}_v = \delta_{q_v q'_v}$, so we may view it as a Hilbert space.

We denote the tensor product over all coarse-grained vertex spaces by $\mathcal{H}_{\text{charge}}$ and write $\ket{\bm{q}}_{\text{charge}} = \bigotimes_{v \in \mathcal{V}} \ket{q_v}_v$.  As we briefly mentioned in Sec.~\ref{sec:bosmaxsets} (see the discussion around Eq.~\eqref{eq_coarsegrain}), this space carries a unitary representation of both the gauge group $U(1)^{N}$ and its Pontryagin dual $\widehat{U(1)}{}^N = \mathbb{Z}^N$:
\begin{equation}
    U_{\text{charge}}^{\bm{\alpha}} \ket{\bm{q}}_{\text{charge}} = e^{i\bm{\alpha} \cdot \bm{q}} \ket{\bm{q}}_{\text{charge}}.
\end{equation}
\begin{equation}
    {\mathcal{W}}^{\bm{k}} \ket{\bm{q}}_{\text{charge}} = \ket{\bm{q} - \bm{k}}_{\text{charge}},
\end{equation}
This is the key ingredient that will shortly allow us to establish a unitary relation with the Gauss law code that arises in the pure gauge sector.

Extending this coarse-graining to the full perspective-neutral space of the bosonic theory produces the space
\begin{equation}
    \mathcal{H}_{\text{pn}}^{\text{CG}} = \mathcal{H}_{\text{loops}} \otimes_R \mathcal{H}^{\text{dr}}_{\text{matter}},
\end{equation}
where $\mathcal{H}^{\text{dr}}_{\text{charge}}$ consists of joint frame and matter charge configurations where the flux on the links compensates the matter charge in accordance with Gauss's law. These are spanned by orthonormal states $\ket{\bm{q}}^{\text{dr}}_{\text{charge}}=\ket{-\bm{q}}_R \otimes \ket{\bm{q}}_{\text{charge}}$, which are manifestly invariant under the action of the gauge group and carry definite matter charges. At this point, it becomes clear that we have an isomorphism $\mathcal{T}: \mathcal{H}_{\text{pn}}^{\text{CG}} \to \mathcal{H}_{\text{gauge}}$ from the coarse-grained space of the bosonic theory to the kinematical space of the pure-gauge sector \footnote{In terms of the map in Eq.~\eqref{eq_tildePiR}, we have $\mathcal{T} = \mathbbm{1}_{\text{loops}} \otimes_R \tilde{\Pi}_R$.}. Explicitly,
\begin{equation}\label{eq_QEDT}
    \mathcal{T} ( \ket{\theta}_{\text{loop}} \otimes_R \ket{\bm{q}}^{\text{dr}}_{\text{charge}} ) = \ket{\theta}_{\text{loop}} \otimes_R \ket{-\bm{q}}_R,
\end{equation}
where we have exploited the fact that states in $\mathcal{H}^{\text{dr}}_{\text{charge}}$ are in total charge-neutral to label frame configurations by $\bm{q} \in \nabla (\mathbb{Z}^N) = \{ \bm{q'} \in \mathbb{Z}^N \mid \sum_{v \in \mathcal{V}} q_v' = 0 \}$, even though $q_{v_0}$ is redundant.

Under the above coarse-graining, the code space is promoted from the matter vacuum of $\mathcal{H}_{\text{pn}}$ to the charge zero sector of $\mathcal{H}_{\text{pn}}^{\text{CG}}$:
\begin{equation}
    \mathcal{H}_{\text{code}}^{\text{CG}} = \{ \ket{\Psi}^{\text{CG}} \in \mathcal{H}_{\text{pn}}^{\text{CG}} \mid Q_v \ket{\Psi}^{\text{CG}} = 0 \enspace \forall v \in \mathcal{V}   \}.
\end{equation}
Upon noting that $\mathcal{H}_{\text{code}}^{\text{CG}}$ coincides with the +1 eigenspace within $\mathcal{H}_{\text{pn}}^{\text{CG}}$ of the stabilizer group
\begin{equation}
    \mathcal{S}^{\text{CG}} = \left\langle e^{-i \alpha_v Q_v} \Pi_{\text{pn}}^{\text{CG}} \mid v \in \mathcal{V} \setminus \{ v_0  \}, \alpha_v \in [0, 2\pi)  \right\rangle \simeq U(1)^{N-1},
\end{equation}
with
\begin{equation}
    \Pi_{\text{pn}}^{\text{CG}} \coloneqq \int \frac{\dee ^{N-1} \bm{\alpha}}{(2 \pi)^{\frac{N-1}{2}}} \left(  \mathbbm{1}_{\text{loops}} \otimes_R  U_R^{\bm{\alpha}} \otimes U_{\text{charge}}^{\bm{\alpha}}    \right),
\end{equation}
we may interpret the coarse-grained construction as a $U(1)$-stabilizer code. Since states in $\mathcal{H}_{\text{pn}}^{\text{CG}}$ satisfy Gauss's law, Eqs.~\eqref{QEDGauss}-\eqref{eq_QEDGausslaw}, and $Q_v = (\nabla \bm{E})_v$ at every vertex, the above stabilizer group coincides with the group of gauge transformations $\{ e^{-i \alpha_v  (\nabla \bm{E})_v} \mid  v \in \mathcal{V} \setminus \{ v_0  \}, \alpha_v \in [0, 2\pi) \}$ in Maxwell theory, which stabilizes the corresponding Gauss law code. Furthermore, the code spaces of both constructions are related through the unitary map $\mathcal{T}$ in Eq.~\eqref{eq_QEDT}:
\begin{eqnarray}
    \mathcal{T} (\mathcal{H}_{\text{code}}^{\text{CG}}) &=& \{ \ket{\Psi} \in \mathcal{H}_{\text{gauge}} \mid e^{-i \alpha_v (\nabla \bm{E})_v} \ket{\Psi} = \ket{\Psi}, \quad \forall v \in \mathcal{V} \setminus \{ v_0  \}, \alpha_v \in [0, 2\pi)  \} \\
    &=& \mathcal{H}_{\text{code}}^{\text{GL}}
\end{eqnarray}
The Gauss law code from Maxwell theory and the coarse-grained vacuum code from scalar QED may therefore be regarded as distinct realizations of the same abstract stabilizer code.

This correspondence is not limited to the code spaces, but also extends to correctable errors.
In Sec.~\ref{sssec_codespacepure}, we saw that maximal sets of correctable errors in the pure gauge Gauss law code were indexed by sections of the Gauss law bundle 
$\partial: \hat{G}^{N_L} \to \hat{G}^{N_V - 1}$, which assign to each Wilson line product $W^{\bm{\chi}}$ the character-valued gauge charge configuration $\partial \bm{\chi}$ that it excites on the frame degrees of freedom located at the vertices.
In the present discussion, $\hat{G} = \mathbb{Z}$, so the projection map of the Gauss-law bundle is simply the (exponential of the) discrete divergence on the lattice.
For the $U(1)$ example at hand, $\nabla : \mathbb{Z}^N \rightarrow \mathbb{Z}^N$ acts as
\begin{equation}
    (\nabla\bm{k})_i = k_{\ell_i} - k_{\ell_{i-1}}
\end{equation}
and its range is $\nabla(\mathbb{Z}^N) = \{ \bm{q} \in \mathbb{Z}^{N} \mid \sum_{v \in V} q_v = 0 \}$.
This operator assigns a syndrome $\nabla \bm{k}$ to $W^{\bm{k}} = \prod_{\ell \in \mathcal{L}} W_\ell^{k_\ell}$,
which we can also rewrite in the basis \eqref{eq:U1nonlocalEbasis} as
\begin{equation}
    W^{\bm{k}} = H^{k_{\ell_{N-1}}} \sum_{\bm{q} \in \mathcal{Q}} \ket{\bm{q} + \nabla \bm{k}}\bra{\bm{q}}_R, \qquad H^{k_{\ell_{N-1}}} = \prod_{\ell \in \mathcal{L}}W_{\ell}^{k_{\ell_{N-1}}}.
\end{equation}
(Note also the congruence with Eqs.~\eqref{WandH} and \eqref{Wsonframepuregauge}.)
Conversely, sections $s_{\nabla}: \nabla(\mathbb{Z}^N) \to \mathbb{Z}^N $ of $\nabla$ take us in the opposite direction, and, for each charge configuration $\bm{q} \in \nabla (\mathbb{Z}^N)$, they return a Wilson line $W^{s_{\nabla}(\bm{q})}$ that shifts the frame orientations by $ \bm{q}$.
This determines the error set
\begin{equation}
    \mathcal{E}_{s_{\nabla}}^{\text{GL}} = \{ W^{s_{\nabla}(\bm{q})}  \mid \bm{q} \in \nabla(\mathbb{Z}^N)   \}.
\end{equation}
To interpret this construction, we note that the action of an error $W^{\bm{k}}$ on the code space $\mathcal{T}(\mathcal{H}_{\text{code}}^{\text{CG}})$ is completely determined by its syndrome $\nabla \bm{k}$.
In particular, if $\nabla \bm{k} = \nabla \bm{k}'$, then $W^{\bm{k}}$ and $W^{\bm{k}'}$ are related by a holonomy, which is a logical operator in the Gauss law code, so they are not simultaneously correctable.
The set $\mathcal{E}_{s_{\nabla}}^{\text{GL}}$ is correctable because it contains at most one representative of each syndrome, and it is maximal because it contains exactly one.

We now arrive at the precise point of contact between the two constructions.
The operators $\mathbb{W}^{\bm{k}} \coloneqq W^{\bm{k}}\mathcal{W}^{\nabla \bm{k}}$ implement, within the coarse-grained vacuum code, the same organization of errors by charge configuration that characterizes the Gauss law code.
Their action on the dressed charge sector is completely determined by the induced charge $\nabla \bm{k}$, so that whenever $\nabla \bm{k} = \nabla \bm{k}'$, $\mathbb{W}^{\bm{k}}$ and $\mathbb{W}^{\bm{k}'}$ differ by a Wilson loop.
As a result, selecting a section $s_{\nabla}: \nabla (\mathbb{Z}^N) \to \mathbb{Z}^N $ determines a maximal correctable set 
\begin{equation}
    \mathcal{E}_{s_{\nabla}}^{\text{CG}} \coloneqq \{ \mathbb{W}^{s_{\nabla}(\bm{q})}  \mid \bm{q} \in \nabla (\mathbb{Z}^N) \},
\end{equation}
in direct parallel with the construction of $\mathcal{E}_{s_{\nabla}}^{\text{GL}}$.
The choice of section $s_{\nabla}$ also canonically determines recovery operations for both codes.
In each case, recovery proceeds by measuring the syndrome and applying the inverse of the chosen representative within the corresponding fiber.
For the Gauss law code, writing $P_{\bm{q}}^{\text{GL}}$ for the projector onto the subspace with charge configuration $\bm{q}$ in $\mathcal{H}_{\text{gauge}}$, the recovery channel is
\begin{equation}
\mathcal{R}_{s_{\nabla}}^{\text{GL}}(\rho)
=
\sum_{\bm{q} \in \mathcal{Q}}
W^{s_{\nabla}(\bm{q})\dagger}
P_{\bm{q}}^{\text{GL}} \rho P_{\bm{q}}^{\text{GL}} 
W^{s_{\nabla}(\bm{q})}.
\end{equation}
Similarly, for the coarse-grained vacuum code, writing $P_{\bm{q}}^{\text{CG}}$ for the projector onto the dressed charge sector with configuration $\bm{q}$, we define
\begin{equation}
\mathcal{R}_{s_{\nabla}}^{\text{CG}}(\rho) =
\sum_{\bm{q} \in \nabla(\mathbb{Z}^N)}
\mathbb{W}^{s_{\nabla}(\bm{q})\dagger}
 P_{\bm{q}}^{\text{CG}} \rho
 P_{\bm{q}}^{\text{CG}} 
\mathbb{W}^{s_{\nabla}(\bm{q})}.
\end{equation}

The unitary map $\mathcal{T}$ identifies these two pictures by mapping dressed charge configurations to frame configurations and relates both the errors and their organization into correctable sets,
\begin{equation}
    \mathcal{T}\mathbb{W}^{\bm{k}} \mathcal{T}^{\dagger} = W^{\bm{k}}, 
    \qquad 
    \mathcal{T} \mathcal{E}_{s_{\nabla}}^{\text{CG}} \mathcal{T}^{\dagger} = \mathcal{E}_{s_{\nabla}}^{\text{GL}},
\end{equation}
as well as the associated recovery operations,
\begin{equation}
\mathcal{R}_{s_{\nabla}}^{\text{GL}}(\mathcal{T} \rho \mathcal{T}^{\dagger}) = \mathcal{T} \mathcal{R}_{s_{\nabla}}^{\text{CG}}(\rho) \mathcal{T}^{\dagger}, \qquad \rho \in \mathcal{S}(\mathcal{H}_{\text{code}}^{\text{CG}}).
\end{equation}
Thus, the classification of errors by charge configurations, the role of sections of $\nabla$, and the resulting maximal correctable sets coincide in the two constructions. The vacuum code after coarse-graining, and the Gauss law code for the pure gauge sector therefore furnish equivalent realizations of the same stabilizer structure, differing only in their physical interpretation.

\subsection{$2+1$-dimensional fermionic QED as a Gauss law code}
\label{ssec_Schwinger}
We now specialize the general construction of Gauss law codes with fermionic matter that we developed in Sec~\ref{sssec_fermGauss} to quantum electrodynamics with one species of staggered fermion (in $1+1$ dimensions, this corresponds to the lattice Schwinger model \cite{Kogut:1974ag}). 
Building directly on our discussion in Sec~\ref{sssec:latticeQED}, we consider a $U(1)$
lattice gauge theory in 2+1 dimensions with periodic boundary conditions. As usual, the gauge degrees of freedom are associated with links, and described by Wilson line operators $W_{\ell} = e^{i \hat{\theta}_{\ell}}$ together with electric field operators $E_{\ell}$. 

\subsubsection{Code structure and logical operators}
To incorporate fermionic matter, we adopt the staggered fermion formulation. At each vertex, we place a single fermionic mode with local Hilbert space $\mathcal{H}_v \simeq \mathbb{C}^2$, spanned by occupation number states $\ket{n_v}_v$, $n_v = 0,1$. The corresponding number operator is $N_v = \psi_v^{\dagger}\psi_v$, where an explicit expression for the creation and annihilation operators $\psi_v$, $\psi_v^{\dagger}$ in terms of $X$ and $Y$ Pauli operators may be found in Eq.(\ref{psiPauli}). This immediately guarantees the correct anticommutation relations at single vertices but, in order to satisfy anticommutation at different vertices, it is necessary to define the matter Hilbert space via a graded tensor product across the vertices:
\begin{equation}
    \mathcal{H}_{\text{matter}} = \widehat{\bigotimes_{v \in \mathcal{V}}} \mathcal{H}_v.
\end{equation}
The graded structure, which we cover in detail in App~\ref{app_fermions}, ensures that operators with odd fermion parity, like $\psi_v$ and $\psi_v^{\dagger}$, anticommute when acting on different tensor factors, so we recover the canonical anticommutation relations:
\begin{equation}
    \{ \psi_v, \psi_{v'}^{\dagger} \} = \delta_{vv'},
\end{equation}
with all remaining anticommutators vanishing.

The staggered structure is implemented by assigning a value $|v| = 0,1$ to each vertex $v$, which we call its parity, in such a way that its nearest neighbors have opposite parity.
Eq.(\ref{cubicparity}) defines $|v|$ on a general cubic lattice; on the two-dimensional lattice in this example, this assignment partitions the lattice into two sublattices arranged in a checkerboard pattern (see Fig~\ref{fig:staggered}), consisting of what we shall refer to as even ($|v| = 0$), and odd ($|v| = 1$) vertices.
\begin{figure}[h]
\centering
\includegraphics[width=0.4\textwidth]{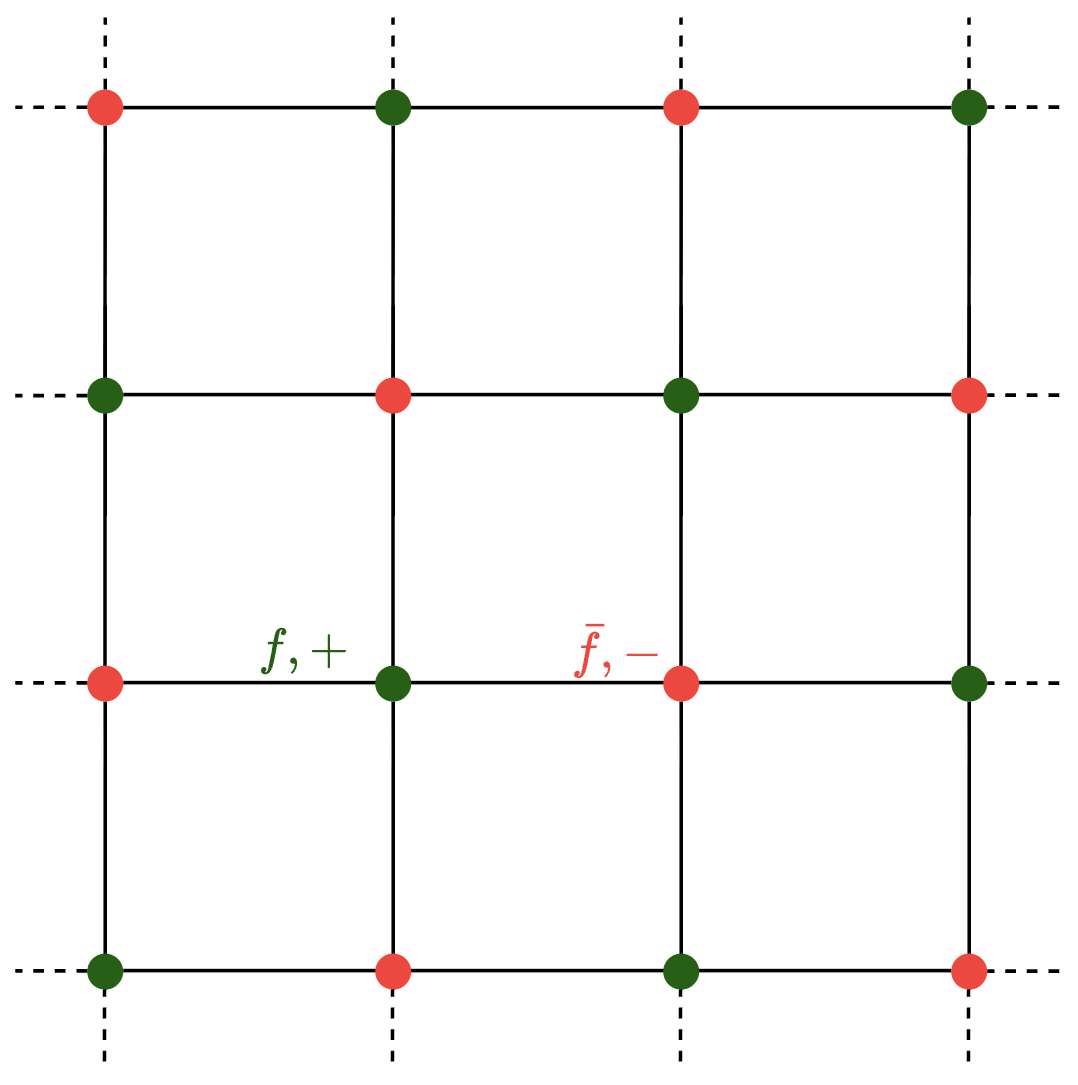}
\caption{Staggered fermions on a square lattice. Even sites (green) support positively charged fermions, whereas odd sites (red) support negatively charged antifermions.}
\label{fig:staggered}
\end{figure}
The matter charge operator at vertex $v$ is then given by
\begin{equation}
    Q_v = N_v - c_v, \qquad c_v = \frac{1 - (-1)^{|v|}}{2}.
\end{equation}
With this assignment, even sites ($c_v = 0$) carry fermions of charge $+1$, while odd sites ($c_v = 1$) correspond to antifermionic degrees of freedom in which an empty site carries charge $-1$ and an occupied site is neutral. The Gauss constraints therefore take the form
\begin{equation}
    G_v = (\nabla \cdot \bm{E})_v - Q_v,
\end{equation}
so that the discrete divergence of the electric field is locally balanced by the staggered charge. These operators generate gauge transformations at each vertex according to 
\begin{equation}
    U_v^{\alpha} = e^{-i \alpha G_v}, \qquad \alpha \in [0, 2\pi),
\end{equation}
under which the annihilation operator $\psi_v$ transforms with charge $-1$, in direct analogy with the bosonic case in Eq.~\eqref{eq_aGTQED}.

As in the scalar case, Gauss's law defines a perspective-neutral subspace which admits a subsystem decomposition into loop and dressed matter degrees of freedom that is analogous to the one above Eq.(\ref{drmatterSQED}) but with the frame part of dressed occupation states defined in terms of the staggered charge:
\begin{equation}
\ket{\mathbf{n}}^{\text{dr}}_{\text{matter}} \coloneqq \ket{\mathbf{c} - \mathbf{n}}_R \otimes \ket{\mathbf{n}}_{\text{matter}},
\end{equation}
where $\bm{c} = (c_v)_{v \in \mathcal{V}}$ parametrizes the staggering and $R$ is now a spanning tree on the two-dimensional lattice, $\mathcal{L}$. As always for Gauss law codes, we interpret the invariant subspace $\Hil_\mrm{pn}$ as the code space, and similarly, we identify the logical algebra of the code with the corresponding algebra of bounded operators. 
This algebra inherits the subsystem structure and factorizes as 
\(\mathcal{A}_{\text{pn}} = \mathcal{A}_{\text{loops}} \otimes \mathcal{A}^{\text{dr}}_{\text{matter}}\).
The loop factor, given in Eq.~\eqref{Apnpure}, is generated by elementary Wilson loops
\begin{equation}
    H_{\ell} \coloneqq W_{\ell}W_{R_{v_i(\ell)}} W_{R_{v_f(\ell)}}^{\dagger}, \qquad \ell \in S,
\end{equation}
where $W_{R_v}$ denotes the unique tree Wilson line from $v_0$ to $v$, and $v_i(\ell)$, $v_f(\ell)$ are the initial and final vertices of $\ell$;
together with link operators \(U_{\ell}^g\) on the complement of the tree \(S = \mathcal{L} \setminus R\). The dressed matter factor, on the other hand, is generated by matter-dressed tree Wilson lines 
$\psi_v W_{\gamma_R[v,v']} \psi_{v'}^{\dagger}$, as we show in App~\ref{app_fermalg}.

All of these operators are gauge invariant and therefore map codewords to codewords. In contrast, bare Wilson lines along the links and Pauli $X$ and $Y$ operations on the vertices produce Gauss law defects, and so carry the interpretation of detectable errors.

\subsubsection{Distinguishability of errors}
We now turn to the characterization of maximal correctable sets of composite errors arising from products of these elementary errors. This is a simple task if we only allow products of Wilson lines to be included in the error set. In that case, the matter parts of the gauge transformations leave the Knill-Laflamme condition unaffected and Prop.~\ref{prop_KLpure} continues to hold in the form 
\begin{equation}
    \Pi_{\text{pn}}W^{\bm{k_a} \dagger}W^{\bm{k_b}} \Pi_{\text{pn}} = H^{\bm{k}_{S,b} - \bm{k}_{S,a} } \delta_{\nabla \bm{k}_a, \nabla \bm{k}_b} \Pi_{\text{pn}},
\end{equation}
where $\bm{k}_S$ denotes the restriction of the link data $\bm{k}$ to $S$ and we have replaced the Gauss law map $\partial$ with the discrete divergence $\nabla$, reflecting the fact that the characters of $U(1)$ are labeled by integers. Comparing with Eq.~\eqref{eq_KLsubsystem} and noting $H^{\bm{k}_{S,a} - \bm{k}_{S,b}} \in \mathcal{A}_{\text{loops}} \otimes \mathbbm{1}^{\text{dr}}_{\text{matter}}$, it becomes clear that the perspective-neutral subspace of this theory has the structure of a subsystem code \cite{kribs2006} in which the dressed-matter sector is perfectly correctable with respect to Wilson line products.

The analysis becomes more subtle once we allow Pauli errors on the matter sector. The new difficulty arises from the fact that the charges excited by these operators are no longer codeword-independent. Indeed, a gauge transformation introduces a phase that depends on the local occupation number,
\begin{equation}
    U_v^{\alpha}X_vU_v^{\alpha \dagger} = e^{i \alpha (1 - 2N_v)} X_v,
\end{equation}
so simply measuring the vertex charge through a generating set $\{ U_v^{\alpha_v}| \alpha_v \in [0, 2\pi), v \in \mathcal{V} \}$, as one usually does for stabilizer codes, no longer constitutes an admissible syndrome measurement because it reveals information about the protected codewords. Instead, as we explained in Section~\ref{sec:fermimaxsets}, we need to adapt the syndrome measurements to the subset of vertices that may support $X$-excitations in our error model. To this end, consider the composite operators in Eq.~\eqref{eq_fermerrors}, which in QED read
\begin{equation}\label{eq_QED_composite}
    E_a = Z^{\bm{z}_a}X^{\bm{x_a}}W^{\bm{k}_a}, \qquad E_b = Z^{\bm{z}_b} X^{\bm{x}_b} W^{\bm{k}_b},
\end{equation}
with $\bm{x}_a, \bm{x}_b, \bm{z}_a, \bm{z}_b \in \mathbb{Z}_2^{N_V}$ and $\bm{k}_a, \bm{k}_b \in \mathbb{Z}^{N_L}$. Inserting them into the Knill-Laflamme condition yields
\begin{equation}
    \Pi_{\text{pn}}E_a^{\dagger}E_b \Pi_{\text{pn}} \propto Z^{\bm{z}_a \oplus \bm{z}_b}X_{\text{dr}}^{\bm{x}_a \oplus \bm{x}_b} H^{\bm{k}_{S,b} - \bm{k}_{S,a} } \left( \prod_{v \in \mathcal{V}} \delta_{(\nabla \bm{k}_a)_v + x_{a,v}(1-2N_v), (\nabla \bm{k}_b)_v + x_{b,v}(1 - 2N_v)}   \right) \Pi_{\text{pn}},
\end{equation}
which is nothing but Eq.~\eqref{eq_fermionicKL} specialized to the present model of QED. The Kronecker delta that we used throughout Sections~\ref{sec:pgmaxsets} and ~\ref{sec:bosmaxsets} to identify the bundle structure of the code now depends on the occupation numbers and has therefore been promoted to a logical operator. It follows that the only way for $E_a$ and $E_b$ to satisfy the Knill-Laflamme condition is if the delta evaluates to the same number on the whole code space: there are two possibilities. On the one hand, we may have that the delta clicks on the entire code space. This is only possible if the two terms involving $N_v$ are equal at every vertex, i.e. $\bm{x}_a = \bm{x}_b$ and $\nabla \bm{k}_a = \nabla \bm{k}_b$. In this case, the Knill-Laflamme condition requires $\bm{k}_{S,a} = \bm{k}_{S,b}$, $\bm{z}_a = \bm{z}_b$, and these conditions together imply $E_a = E_b$. Hence, the only way that two distinct errors parametrized by Eq.~\eqref{eq_QED_composite} may be simultaneously correctable is if the delta vanishes everywhere on the code space. In terms of the error parameter, this condition requires that at every vertex one of the following is true (cf.~Eq.~\eqref{eq_fermGaussdisterrors}):
\begin{align}
& x_{a,v}=x_{b,v}\qquad\qquad\text{and}\qquad\qquad (\nabla \bm{k}_a)_v \neq (\nabla \bm{k}_b)_v\,,\nonumber\\
   \text{or}\qquad\qquad
&x_{a,v}\neq x_{b,v}\qquad\qquad\text{and}\qquad\qquad (\nabla \bm{k}_a)_v \neq (\nabla \bm{k}_b)_v \pm 1\,.\label{eq_fermQEDerrors}
\end{align}
This condition admits a direct interpretation in terms of the electric flux violations at each vertex. The quantity $(\nabla \bm{k})_v$ encodes the charge created by Wilson line errors, while an $X_v$ operation shifts this charge by $\pm 1$, depending on the local occupation. The second line of Eq.~\eqref{eq_fermQEDerrors} therefore expresses that this intrinsic $\pm 1$ ambiguity must be accounted for when distinguishing errors. 

\subsubsection{Syndrome measurement and recovery}
To proceed, we now construct syndrome measurements adapted to this structure. At each vertex $v$, rather than resolving the Gauss law violation $G_v$ exactly, we group together values that differ only by the shift induced by an $X$-error. Concretely, we introduce the bins
\begin{equation}
\big[(\nabla \bm{k})_v\big]_{x_v} \coloneqq
\begin{cases}
(\nabla \bm{k})_v & \text{if } x_v = 0,\\
\{(\nabla \bm{k})_v + 1, (\nabla \bm{k})_v - 1 \} & \text{if } x_v = 1,
\end{cases}
\end{equation}
which identify charge configurations that cannot be distinguished without accessing the occupation number at $v$. These bins define projective measurements built from the projectors onto the eigenspaces of $G_v$. Denoting by $\Pi^v_q$ the projector onto the eigenspace of $G_v$ with eigenvalue $q \in \mathbb{Z}$, we set 
\begin{eqnarray}
    \Pi^v_{\bm{x}, \bm{k}} &\coloneqq& \Pi^v_{[(\nabla \bm{k})_v]_{x_v}} \\
    &=& \Pi^v_{(\nabla \bm{k})_v + x_v} \ket{0}\bra{0}_v + \Pi^v_{(\nabla \bm{k})_v - x_v} \ket{1}\bra{1}_v,
\end{eqnarray}
which is the QED specialization of Eq.~\eqref{eq_coarsemeasure}.
In particular, these projectors act conditionally on the local occupation number, are diagonal in the occupation basis, and therefore commute with the logical algebra. Two such projectors at a given vertex are mutually exclusive precisely when the conditions in Eq.~\eqref{eq_fermQEDerrors} are satisfied. Moreover, since the number operators commute across vertices, the family $\{\Pi^v_{\bm{x},\bm{k}} \mid v \in \mathcal{V}, \bm{x}, \bm{k} \in \text{Eq.~\eqref{eq_fermQEDerrors}} \}$ can be measured jointly, yielding a well-defined syndrome across the lattice, just as discussed below Eq.~\eqref{eq_coarsemeasure}. 

It is now clear that maximal sets of correctable errors are again in one-to-one correspondence with maximal sets of admissible bins. In particular, we have the two extreme cases that $x_{a,v} = 0$ for all $a$, or that $x_{a,v} = 1$ for all $a$. In both cases, Eq.~\eqref{eq_fermQEDerrors} ensures that no additional bins can be included without violating the distinguishability condition, so these choices define maximal sets. Moreover, in either case, there are as many admissible bins at $v$ as there are possible values of $(\nabla \bm{k})_v$, since for fixed $x_v$ each such value defines a distinct bin and Eq.~\eqref{eq_fermQEDerrors} guarantees that these bins are mutually distinguishable. Ignoring $Z$-errors, which do not enter the syndrome, we therefore obtain a one-to-one correspondence between maximal sets of correctable errors and maximal sets of admissible bins, specified by solutions to Eq.~\eqref{eq_fermQEDerrors}. Including $Z$-error data only introduces a trivial multiplicity, since for each admissible syndrome one may choose a single representative $Z$-configuration.
This leads to the same qualitative picture as in the general fermionic construction in Section~\ref{sec:fermimaxsets}. In contrast to the pure gauge and bosonic cases discussed in Sections~\ref{sec:pgmaxsets} and~\ref{sec:bosmaxsets}, there is no longer a single universal set of syndromes. Instead, the admissible syndrome measurements depend on the choice of error set, and maximal sets of correctable errors are in one-to-one correspondence with maximal sets of such measurements, as characterized by Eq.~\eqref{eq_fermQEDerrors}. As a result, the bundle picture discussed previously no longer applies in a uniform way: rather than a single structure capturing all maximal sets, one effectively obtains a distinct structure for each choice of admissible syndromes, with the remaining freedom residing only in the $Z$-error data. 

In particular, fixing a maximal set of admissible bins determines both a set of correctable errors and a compatible syndrome measurement, and therefore specifies a recovery procedure. Let $\mathcal{E}'$ denote a maximal set of parameters satisfying Eq.~\eqref{eq_fermQEDerrors}. For each $(\bm{x},\bm{k}) \in \mathcal{E}'$, we define the global syndrome projector
\begin{equation}
    \Pi_{\bm{x},\bm{k}} \coloneqq \prod_{v \in \mathcal{V}} \Pi^v_{\bm{x},\bm{k}},
\end{equation}
so that the family $\{ \Pi_{\bm{x},\bm{k}} \}_{(\bm{x},\bm{k}) \in \mathcal{E}'}$ implements the corresponding coarse-grained syndrome measurement. For each outcome, we fix a representative error operator $E_{\bm{x},\bm{k}} = Z^{\bm{z}} X^{\bm{x}} W^{\bm{k}}$,
where $\bm{z}$ is arbitrary, since $Z$-errors do not affect the syndrome. The recovery map is then given by
\begin{equation}
    \mathcal{R}(\rho) = \sum_{(\bm{x},\bm{k}) \in \mathcal{E}'} E_{\bm{x},\bm{k}}^{\dagger} \Pi_{\bm{x},\bm{k}} \rho  \Pi_{\bm{x},\bm{k}} E_{\bm{x},\bm{k}}.
\end{equation}
For any fixed maximal set $\mathcal{E}'$, the map above implements the corresponding recovery procedure by first resolving the coarse-grained syndrome and then applying the inverse of a fixed operator $E_{\bm{x},\bm{k}}$ for each admissible outcome $(\bm{x},\bm{k})$. The nontrivial feature of the fermionic setting is thus entirely encoded in the structure of the admissible pairs $(\bm{x},\bm{k})$ satisfying Eq.~\eqref{eq_fermQEDerrors}, which determine both the allowed error sets and the compatible syndrome measurements. Once this structure has been fixed, however, the recovery itself proceeds exactly as in the usual setting.

\section{Connections with Quantum Simulations} \label{sec:disc}

In this section, we briefly discuss possible implications of our results in Secs.~\ref{sec:general} and \ref{sec:examples} for practical aspects of quantum computation.

In Sec.~\ref{sec:general}, we used generic compact Abelian lattice gauge theories with matter to construct quantum error correcting codes, and in Sec.~\ref{sec:examples}, we constructed codes for specific lattice gauge theory models. Since we effectively showed how to interpret a broad class of physical systems as quantum error correcting codes, our results in principle provide a pathway for experimentally realizing error correction using such (analog) systems. For this, vacuum codes are more relevant, since a vacuum code's physical space aligns with the space of physical states of the system.
 
On the other hand, we may also ask how our results might be useful for digital simulation using quantum computers with error correction.
We begin with a general comment regarding the practical relevance of the errors we have described in LGTs. While differing between Gauss law and vacuum codes, the correctable error sets that we explored in either case are natural to consider from a gauge theory point of view. A separate question is whether in an actual physical realization of the code, such as a quantum simulation of the pertinent LGT, the actual errors occurring in the simulation align with our errors in the LGT. If so, our error analysis would be of immediate practical relevance. While we do not answer this question conclusively at this time, the evidence suggests that the gauge theory errors may well be relevant in the physical implementation of  the respective LGT code. For example, we have seen between Eqs.~\eqref{eq_Z2puremaxset} and~\eqref{eq_blablabla} that natural gauge theory errors align with the bit flip errors in the three-qubit repetition code interpreted as a LGT. We suspect that such an error alignment between implemented code and gauge theory occurs more generally, though we leave an exploration of this question to future work.

For digital simulation, Gauss law codes may be more relevant for simulating a gauge system on one that does not itself feature a gauge symmetry.
In the spirit of previous work on lattice gauge theory and quantum error correction \cite{Rajput:2021trn,Carena:2024dzu,Spagnoli:2024mib,Yao:2025cxs,Spagnoli:2026qni}, our codes can be expected to reduce the computation overhead required for error correction when compared to a non-bespoke code, due to combining Gauss-law checks within the simulation with error correction steps.\footnote{While Ref.~\cite{Rajput:2021trn} also discussed the $\mathbb{Z}_2$ gauge theories with and without a fermion, the codes studied there are different from those here in the sense that Ref.~\cite{Rajput:2021trn} first doubled a part of links in lattice and studied properties of the system as repetition codes. On the other hand, we have studied properties of original lattice gauge theories as codes.
}
For example, Ref.~\cite{Spagnoli:2024mib} argued that $\mathbb{Z}_2$ lattice gauge theory with a fermion (or its generalization to a $\mathbb{Z}_N$ gauge group \cite{Spagnoli:2026qni}) can be interpreted as a bit-flip code, and that its concatenation with a phase-flip code is able to achieve fault-tolerant Hamiltonian simulation. This last observation raises a crucial point, however, which is that one must not only take care to protect states from errors, but also be able to implement logical gates with error correction, fault tolerantly.

In kind, our results in Sec.~\ref{sec:general} amount to showing how to protect states from errors in a general Abelian lattice gauge theory, while we have not so far discussed the implementation of quantum algorithms with error correction.
Let us discuss the prospects for such computational aspects next. In order to perform Hamiltonian simulation, one must rewrite the Hamiltonian in terms of logical operators  \cite{Spagnoli:2024mib,Spagnoli:2026qni}. When the simulated Hamiltonian is any gauge-invariant Hamiltonian involving only the pure gauge sector (e.g., the Maxwell theory on a lattice), we may use gauge sector Gauss law codes and vacuum codes. It is always possible to rewrite a pure gauge Hamiltonian in terms of logical operators for either code, although the kinematical Hilbert spaces are different.
For the Gauss law code, this is simply because the Hamiltonian is always a sum of gauge-invariant operators, and gauge-invariant operators are logical operators.
For the vacuum codes, while gauge-invariant operators are not necessarily logical operators,\footnote{
For instance, the standard kinetic terms for matter with the link variable are gauge-invariant but regarded as errors in vacuum codes.
} those without support on matter degrees of freedom are always logical operators.

When the simulated Hamiltonian also includes matter (in a gauge-invariant way), we may use Gauss law codes with the same field content and hybrid codes whose code subspaces are the same as the physical Hilbert space of the system to be simulated.\footnote{In a vacuum code, any matter species in excess of those being simulated are fixed to their vacuum sector when constructing the code subspace.}

Next, we consider the implementation of logical gates required for Hamiltonian simulation.\footnote{
See also \cite{Haruna:2025piy} for detailed constructions of logical gates in $\mathbb{Z}_2$ lattice gauge theories.
} In \cite{Spagnoli:2024mib,Spagnoli:2026qni}, it was shown how to implement time evolution for the Gauss law code of $\mathbb{Z}_N$ gauge theory with fermions using both Quantum Signal Processing (QSP) \cite{Low:2016sck,Low:2016znh} and Trotterization.
To implement both algorithms fault-tolerantly, all non-Clifford operations\footnote{The QSP case uses the linear combination of unitaries (LCU) algorithm \cite{Childs:2012gwh}, which requires Toffoli gates in the so-called SELECT operation. The Trotterization applies exponentiations of Pauli operators in the Hamiltonian, which are typically approximated using many $T$ gates.
} can be performed on ancilla qubits encoded in a 7-qubit Steane code \cite{Steane:1996va}, so that one must apply only Clifford operations to the qubits of the lattice gauge theory. Moreover, these Clifford operations can be implemented transversally by exploiting the fact that the code is a special case of a Calderbank–Shor–Steane (CSS) code \cite{Calderbank:1995dw,Steane:1995vv}, in which each stabilizer consists exclusively of either $X$ or $Z$ operators.

A natural question is whether this structure extends to our codes. To address this, we must determine which of our codes can be identified as CSS qudit codes.
In fact, as noted in sections~\ref{ssssec_codeparampure}, \ref{sssec:boscodepars}, and~\ref{ssssec_fermcodeparam}, both the Gauss law codes and vacuum codes can be regarded as CSS codes when their kinematical Hilbert spaces $\mathcal{H}_{\rm kin}$ are finite.
This is because, for finite-dimensional $\mathcal{H}_{\rm kin}$, there exist bases such that all the operators appearing in the Gauss law are either $X$'s or only $Z$'s. Thus, in principle, we can perform Clifford operations transversally for our codes as well for finite-dimensional $\mathcal{H}_{\rm kin}$. We leave the infinite-dimensional case, as well as further connections to quantum simulation, to future work.

\section{Conclusion} \label{sec:conc}

Exploiting tools from quantum reference frames (QRFs), we showed in this work that Abelian lattice gauge theories give rise to two natural families of quantum error correcting codes, namely \emph{Gauss law codes} and \emph{vacuum codes}. These results expand on the general correspondence between QECCs and gauge theories formulated in \cite{Carrozza:2024smc} using QRFs as a mediator. Both constructions are rooted in the same underlying Gauss law constraint, which simultaneously enforces gauge invariance, defines the stabilizer structure of Gauss law codes, and selects the computational space of vacuum codes. Despite this common origin, the two families differ in their error models and physical interpretations, though admit a clear physical relation. In either case, QRFs make a systematic discussion of both the logical and error structure of the codes particularly straightforward, underscoring the power of the QRF toolkit also in the context of quantum error correction.

Gauss law codes are defined on the kinematical space $\mathcal{H}_{\text{kin}}$, and encompass fermionic and bosonic matter, with the code subspace given by the perspective-neutral Hilbert space $\mathcal{H}_{\text{pn}}$. Their error structure is organized by the Gauss law map Eq.~\eqref{boundarymap}, and---except for the fermionic case described below---maximal correctable sets can be parametrized systematically with global sections of an associated discrete fiber bundle. This provides a unified and detailed understanding of error sets across a wide class of Abelian lattice gauge theories (we only demand that $G$ is compact), significantly extending and generalizing previous constructions in the literature \cite{Spagnoli:2026qni, Rajput:2021trn, Carena:2024dzu}. From a practical perspective, Gauss law codes are particularly relevant for quantum simulations of gauge theories, where noise processes may violate a fictitious gauge symmetry. 

Vacuum codes, by contrast, are defined on $\mathcal{H}_{\text{pn}}$ and their code subspace coincides with the matter vacuum sector $\mathcal{H}_{\text{vac}}$, again for bosonic and fermionic matter. In this setting, errors correspond to genuine gauge-invariant matter excitations. While both code constructions are unitarily equivalent---either exactly or upon an appropriate coarse-graining depending on the matter content---the vacuum codes retain additional microscopic information lacking in the Gauss law code, reflecting the fact that matter configurations generally carry more information than local total charge alone.

A central outcome of our analysis is that, in all cases except for the fermionic Gauss law code, the Gauss law constraint determines a unique syndrome structure given by the generators of the group of gauge transformations $\{ U^g_v  \}_{v \in \mathcal{V}}$, and maximal sets of correctable errors are parametrized accordingly. The fermionic Gauss law code provides a notable exception: here, the stabilizer generators do not fix a unique syndrome measurement, which should instead be adapted to a particular choice of admissible errors. This contrasts with the standard paradigm of quantum error correction, where the stabilizer structure determines the syndrome measurement independently of the error model.

Throughout, we have focused on natural generalizations of Pauli errors to the (possibly infinite-dimensional) lattice gauge theory setting, including Wilson line and matter $X$-shift operators, as well as their electric and $Z$-type counterparts. All resulting codes are CSS and, in fact, classical in the sense that they protect against bit-flip-like errors associated with charge and flux excitations. This makes them particularly well suited for mitigating physically relevant errors in quantum simulations, where such excitations arise naturally, or in situations where device noise is heavily biased toward $X$-type or $Z$-type errors. At the same time, this also points to a clear direction for future work: how should phase-flip protection be incorporated in this setting?

A straightforward approach would be to concatenate these constructions with a phase-flip code as originally suggested in \cite{Rajput:2021trn}. In a similar vein, since any code other than a lattice $\mathbb{Z}_2$ code requires working with qudits, optimizing how one encodes qudits into hardware qubits is another opportunity for improved error correction.
In principle, both concatenating with phase-flip codes and invoking small local qudit codes need not be homogeneous over the lattice, and so one could envision adapting the local error correction to device noise. Nevertheless, it remains to be understood whether conferring phase-flip protection can be done in a more informed way that exploits the underlying structure of the gauge theory.

Finally, in the context of quantum simulations of LGTs, our analysis effectively assumes a direct geometric identification between the lattice of the gauge theory and the underlying hardware layout. This assumption is not essential. One may instead consider implementations in which the two are not aligned, thereby gaining additional freedom in how the degrees of freedom are arranged and how the gauge constraints are enforced. Understanding how our characterization of correctable errors extends to such settings is an interesting direction for future work.


\section*{Acknowledgements}
\noindent
A.~C.-D. thanks Stephon Alexander and ChunJun (Charles) Cao for helpful discussions during the preparation of this manuscript.
M.~H.~would like to thank Christian Bauer, Junichi Haruna, Yoshimasa Hidaka, Yutaro Iiyama, Neel Modi, Lento Nagano, Koji Terashi and Nobuyuki Yoshioka for discussions on related works.
M.~H.~is supported by JST CREST Grant Number JPMJCR24I3, JSPS Grant-in-Aid for Transformative Research Areas (A) ``Extreme Universe'' JP21H05190 [D01], JSPS KAKENHI Grant Number JP22H01222 and the Royal Society grants ICA/R2/242058 and IEC/R3/243103.
This work was made possible through the support of the ID\# 62312 grant from the John Templeton Foundation, as part of the project \href{https://www.qiss.fr/}{``The Quantum Information Structure of Spacetime'' (QISS)} and through the \href{https://withoutspacetime.org}{``WithOut SpaceTime project'' (WOST)}, led by the Center for Spacetime and the Quantum (CSTQ), and funded by Grant ID\# 63683 from the John Templeton Foundation. The opinions expressed in this work are those of the authors and do not necessarily reflect the views of the John Templeton Foundation.

\appendix
\section{On the tensor product structure of fermionic systems}\label{app_fermions}
In this appendix, we review the tensor product structure of fermionic systems based on \cite{Szalay:2020xbz}.
For the usual tensor product $\mathcal{H}_A \otimes \mathcal{H}_B$ for Hilbert spaces $\mathcal{H}_A ,\mathcal{H}_B $, operators satisfy the relation
\begin{\eq}
\left( \mathcal{O}_A \otimes \mathcal{O}_B \right) \left( \mathcal{O}_{A'} \otimes \mathcal{O}_{B'} \right)
= \mathcal{O}_A \mathcal{O}_{A'} \otimes \mathcal{O}_B \mathcal{O}_{B'} ,
\end{\eq}
where $\mathcal{O}_{A,A'}$ ($\mathcal{O}_{B,B'}$) are operators on $\mathcal{H}_A$ ($\mathcal{H}_B$).
However, for fermionic systems, the canonical anti-commutation relation for creation and annihilation operators implies that fermionic operators at different points do not commute with each other. Thus, operators are not simply given by tensor products of operators with support in the relevant subsystem and the identity operator in its complement generically.

\subsection{Mapping between qubits and fermionic oscillators}
Let us consider an $L$-qubit system.
For each qubit labeled by the index $j=1,\cdots ,L$, we have a two dimensional Hilbert space $\mathcal{H}_j$:
\begin{\eq}
\mathcal{H}_j := {\rm Span}_{\mathbb{C}} \left\{ |\nu_j \rangle \ \vert \ \nu\in \{ 0,1\} \right\} ,
\end{\eq}
where the bases satisfy the orthonormal relation
\begin{\eq}
\langle \mu_j |\nu_j \rangle = \delta^{\mu ,\nu} .
\end{\eq}
We also have the algebra $\mathcal{A}_j$ on $\mathcal{H}_j$ with the standard basis:
\begin{\eq}
\left\{ E_j^{\nu ,\nu'} := |\nu_j \rangle\langle \nu_j' | \in \mathcal{A}_j \ \vert \  \nu ,\nu' \in \{ 0,1\}  \right\} ,
\end{\eq}
which is orthonormal: 
\begin{\eq}
( E_j^{\mu ,\mu'} |  E_j^{\nu ,\nu'} ) = \delta^{\mu ,\mu'} \delta^{\nu ,\nu'} ,
\end{\eq}
with 
the Hilbert-Schmidt inner product
\begin{\eq}
(A|B) := {\rm Tr} [ A^\dag B ] .
\end{\eq}
Let us also introduce the creation and annihilation operators as
\begin{\eq}
\psi_j^\dag := |1_j \rangle \langle 0_j | = \frac{X_j -iY_j}{2} ,\quad  
\psi_j := |0_j \rangle \langle 1_j | =\frac{X_j +iY_j}{2} .
\end{\eq}

Now let us consider a subsystem of the $L$ qubit system with the Hilbert space
\begin{\eq}
\mathcal{H}_Y := \bigotimes_{j\in Y} \mathcal{H}_j \quad  Y\subseteq \{ 1,\cdots , L\} ,
\end{\eq}
which has the standard basis
\begin{\eq}
\left\{ | \vec{\nu}_Y \rangle := \bigotimes_{j\in Y} |\nu_j \rangle \right\} ,
\end{\eq}
with 
\begin{\eq}
\langle \vec{\mu}_Y  | \vec{\nu}_Y  \rangle =  \delta^{\vec{\mu}  ,\vec{\nu} } .
\end{\eq}
We would like to extend the operators in $\mathcal{A}_j$ to the algebra ($ \simeq \bigotimes_{j\in Y} \mathcal{A}_j $) on $Y$.
One might consider the following na\"ive map:
\begin{\eq}
\Gamma_Y : E_j^{\nu ,\nu'} \ \mapsto\ 
\left( \bigotimes_{k\in Y , k<j} I_k \right) \otimes E_j^{\nu ,\nu'}\otimes \left( \bigotimes_{k\in Y , k>j} I_k \right) ,
\end{\eq}
which is simply the inclusion map.
However, the operators given by this map of the creation and annihilation operators $(\psi_j^\dag ,\psi_j)$ do not satisfy the canonical anti-commutation relation for fermions.\footnote{Rather they satisfy the commutation relations for so-called hardcore bosons.}
The latter implies that in fermionic systems, operators in disjoint subsystems generically do not commute with each other. 
Therefore, the operator algebra of fermionic systems is not simply a usual tensor product of those of subsystems. 

Instead let us consider the following map:
\begin{\eq}
\tilde{\Gamma}_Y : E_j^{\nu ,\nu'} \ \mapsto\ 
\left( \bigotimes_{k\in Y , k<j} (Z_k )^{\nu +\nu'} \right) \otimes E_j^{\nu ,\nu'}\otimes \left( \bigotimes_{k\in Y , k>j} I_k \right) ,
\end{\eq}
where
\begin{\eq}
Z_k :=   |0_k \rangle \langle 0_k | - |1_k \rangle \langle 1_k | .
\end{\eq}
Indeed, the creation and annihilation operators $(\tilde{\psi}_{j,Y}^\dag ,\tilde{\psi}_{j,Y} )$ defined as
\begin{\eq}
\tilde{\psi}_{j,Y}^\dag := \tilde{\Gamma}_Y (\psi_j^\dag ) ,\quad \tilde{\psi}_{j,Y} := \tilde{\Gamma}_Y (\psi_j ) ,
\end{\eq}
satisfy the canonical anti-commutation relation for fermions:
\begin{\eq}
\{ \tilde{\psi}_{j,Y} , \tilde{\psi}_{k,Y} \} = \{ \tilde{\psi}_{j,Y}^\dag , \tilde{\psi}_{k,Y}^\dag \} =0 ,\quad
\{ \tilde{\psi}_{j,Y} , \tilde{\psi}_{k,Y}^\dag \} = \delta_{j,k} .
\end{\eq}
The operators $(\tilde{\psi}_{j,Y}^\dag ,\tilde{\psi}_{j,Y} )$ are the Jordan-Wigner representations of the fermionic creation and annihilation operators \cite{Jordan:1928wi}.

\subsection{Fermionic tensor product}
A natural question is how the operator algebra of a composite fermionic system is related to those of subsystems.
To see this, let us define the following two bases of $ \bigotimes_{j\in Y} \mathcal{A}_j $ using $\Gamma_Y$ and $\tilde{\Gamma}_Y$: 
\begin{\eq}
\left\{ E_Y^{\vec{\nu} ,\vec{\nu}'} :=  \prod_{j\in Y} \Gamma_Y (E_j^{\nu_j ,\nu_j'} ) \in  \bigotimes_{j\in Y} \mathcal{A}_j  \right\}
\end{\eq}
and
\begin{\eq}
\left\{ \tilde{E}_Y^{\vec{\nu} ,\vec{\nu}'} :=  \prod_{j\in Y}^\rightarrow \tilde{\Gamma}_Y (E_j^{\nu_j ,\nu_j'} ) \in  \bigotimes_{j\in Y} \mathcal{A}_j  \right\} ,
\end{\eq}
which are orthonormal under the Hilbert-Shchmidt inner product. 
Their explicit expressions are given by
\begin{\eq}
E_Y^{\vec{\nu} ,\vec{\nu}'} = \bigotimes_{j\in Y} E_j^{\nu_j ,\nu_j'} ,
\end{\eq}
and
\begin{\eq}
\tilde{E}_Y^{\vec{\nu} ,\vec{\nu}'} 
= \prod_{j\in Y}^\rightarrow \left\{ \begin{matrix} 
\tilde{\psi}_{j,Y} \tilde{\psi}_{j,Y}^\dag & {\rm for}\ (\nu_j ,\nu_j' )=(0,0) \cr
\tilde{\psi}_{j,Y}                         & {\rm for}\ (\nu_j ,\nu_j' )=(0,1) \cr
             \tilde{\psi}_{j,Y}^\dag     & {\rm for}\ (\nu_j ,\nu_j' )=(1,0) \cr
\tilde{\psi}_{j,Y}^\dag \tilde{\psi}_{j,Y} & {\rm for}\ (\nu_j ,\nu_j' )=(1,1) 
\end{matrix} \right\} .
\end{\eq}
Then let us consider the two algebras $\mathcal{A}_Y$ and $\tilde{\mathcal{A}}_Y$ whose canonical bases are $E_Y^{\vec{\nu} ,\vec{\nu}'} $ and $\tilde{E}_Y^{\vec{\nu} ,\vec{\nu}'}$, respectively.
To study a relation between them, we consider a linear map $\Phi_Y : \mathcal{A}_Y \rightarrow \tilde{A}_Y$ defined by
\begin{\eq}
\Phi_Y :  E_Y^{\vec{\nu} ,\vec{\nu}'}  \mapsto \tilde{E}_Y^{\vec{\nu} ,\vec{\nu}'}.
\end{\eq}
A little computation shows
\begin{\eq}
\Phi_Y (E_Y^{\vec{\nu} ,\vec{\nu}'} ) =\tilde{E}_Y^{\vec{\nu} ,\vec{\nu}'} =f_Y^{\vec{\nu},\vec{\nu}' } E_Y^{\vec{\nu} ,\vec{\nu}'} ,
\end{\eq}
where $f_Y^{\vec{\nu},\vec{\nu}' }$ is a phase factor given by
\begin{\eq}
f_Y^{\vec{\nu},\vec{\nu}' } := (-1)^{\sum_{j\in  Y} \nu_j' \sum_{k\in Y ,j<k} (\nu_k +\nu_k' ) } .
\end{\eq}
The linear map $\Phi_Y$ is unitary\footnote{
But it is not a $\ast$-isomorphism meaning $\Phi_Y (A_Y^\dag )\neq \Phi_Y (A_Y )^\dag$ generically.
} and it connects $\Gamma_Y$ and  $\tilde{\Gamma}_Y$ as
\begin{\eq}
\tilde{\Gamma}_Y = \Phi_Y \circ \Gamma_Y .
\end{\eq}
In particular, the identity operator is unchanged under $\Phi_Y$:
\begin{\eq}
\tilde{I}_Y := \Phi_Y (I_Y ) = I_Y .
\end{\eq}
Similarly, given the usual trace map ${\rm Tr}: \mathcal{A}_Y \rightarrow \mathbb{C}$, we define its counter part $\widetilde{{\rm Tr}}: \tilde{\mathcal{A}}_Y \rightarrow \mathbb{C}$ as
\begin{\eq}
\widetilde{\rm Tr} = \Phi_\emptyset \circ {\rm Tr}\circ \Phi_Y^{-1} ,
\end{\eq}
which satisfies 
\begin{\eq}
\widetilde{\rm Tr} [\tilde{\Gamma}_Y (E_j^{\nu ,\nu'} ) ]
={\rm Tr} [\Gamma_Y (E_j^{\nu ,\nu'} ) ]=2^{|Y|-1}\delta^{\nu ,\nu'}.
\end{\eq}

In terms of $\Phi_Y$, we define the fermionic tensor product as follows.
Let us partition $Y$ into $\xi =\{ X_1 ,X_2 ,\cdots \}$:
\begin{\eq}
\bigcup_{X\in \xi} X =Y\,,
\end{\eq}
where $X \in\xi$ is a nonempty and disjoint subsystem of $Y$.
The canonical basis $E_Y^{\vec{\nu},\vec{\nu}'}$ for $\mathcal{A}_Y$ has the usual tensor product structure
\begin{\eq}
E_Y^{\vec{\nu},\vec{\nu}'} = \bigotimes_{X\in \xi} E_X^{\vec{\nu}_X ,\vec{\nu}_X'} .
\end{\eq}
As we will see soon, its counter part for $\tilde{E}_Y^{\vec{\nu},\vec{\nu}'}$ is more complicated.
To see this, let us introduce the fermionic tensor product as
\begin{\eq}
 \widetilde{\bigotimes_{X\in\xi}} \tilde{A}_X  
 := \tilde{\Psi}_\xi \left( \bigotimes_{X\in\xi} \tilde{A}_X  \right) ,\quad
 \tilde{\Psi}_\xi := \Phi_Y \circ \bigotimes_{X\in \xi} \Phi_X^{-1} \qquad
(\tilde{A}_X \in\mathcal{\tilde{A}}_X ). 
\end{\eq}
One can show that $\tilde{E}_Y^{\vec{\nu},\vec{\nu}'}$ satisfies 
\begin{\eq}
\tilde{E}_Y^{\vec{\nu},\vec{\nu}'} = \widetilde{\bigotimes_{X\in\xi}} \tilde{E}_X^{\vec{\nu}_X ,\vec{\nu}_X'} ,
\end{\eq}
because
\begin{\eqa}
\tilde{E}_Y^{\vec{\nu},\vec{\nu}'} 
=\Phi_Y \left( \bigotimes_{X\in\xi}  E_X^{\vec{\nu}_X ,\vec{\nu}_X'} \right)
=\Phi_Y \left( \bigotimes_{X\in\xi} \Phi_X^{-1} \left( \tilde{E}_X^{\vec{\nu}_X ,\vec{\nu}_X'} \right) \right)
&=& \left( \Phi_Y \circ \bigotimes_{X\in\xi} \Phi_X^{-1} \right) \left( \bigotimes_{X\in\xi}  \left( \tilde{E}_X^{\vec{\nu}_X ,\vec{\nu}_X'} \right) \right) \NN\\
&=& \tilde{\Psi}_\xi  \left( \bigotimes_{X\in\xi}  \left( \tilde{E}_X^{\vec{\nu}_X ,\vec{\nu}_X'} \right) \right) .
\end{\eqa}
Then the algebra $\tilde{\mathcal{A}}_Y$ can be written as
\begin{\eq}
\tilde{\mathcal{A}}_Y = \widetilde{\bigotimes_{X\in\xi}} \tilde{\mathcal{A}}_X  .
\end{\eq}
While the fermionic tensor product is convenient, it does not fully obey the $\ast$-algebraic structure.

\subsection{Fermionic canonical embedding}
Let us consider the embedding of the algebras in $X$ into the ones in $Y$.
For the usual qubits case, we have the canonical embedding $\iota_{X,Y}: \mathcal{A}_X \rightarrow \mathcal{A}_Y$ defined by
\begin{\eq}
\iota_{X,Y}: A_X \ \mapsto\  A_X \otimes I_{\bar{X}} \qquad  (\bar{X}= Y \setminus X) .
\end{\eq}
For the fermionic counterpart, we introduce the fermionic canonical embedding $\tilde{\iota}_{X,Y}: \tilde{\mathcal{A}}_X \rightarrow \tilde{\mathcal{A}}_Y$ defined by
\begin{\eq}
\tilde{\iota}_{X,Y}: \tilde{A}_X \ \mapsto\  \tilde{A}_X \tilde{\otimes} \tilde{I}_{\bar{X}} \qquad  (\bar{X}= Y \setminus X) .
\end{\eq}
It is known that the two canonical embeddings are unitarily equivalent:
\begin{\eq}
\tilde{A}_X \tilde{\otimes} \tilde{I}_{\bar{X}}  = U_{X\bar{X}} ( \tilde{A}_X \otimes \tilde{I}_{\bar{X}} ) U_{X\bar{X}}^\dag
\qquad (U_{X\bar{X}} \in U(\mathcal{H}_Y) ) ,
\end{\eq}
and $\ast$-homomorphic.
For the special case of $X=\{ j \}$, the fermionic canonical embedding is nothing but $\tilde{\Gamma}_Y$:
\begin{\eq}
\tilde{\Gamma}_Y (A_j ) 
= \tilde{A}_{\{ j \} } \tilde{\otimes} \tilde{I}_{Y\setminus \{ j \} }
= \tilde{\iota}_{\{ j \} ,Y} (\tilde{A}_{\{ j \}} ) ,
\end{\eq}
and therefore it is the extension of $\tilde{\Gamma}_Y$ to general $X$.
In terms of the fermionic canonical embedding, it is convenient to introduce 
\begin{\eq}
\widehat{\bigotimes_{X\in\vec{\xi} }}\tilde{A}_X 
:= \prod_{X\in\vec{\xi}}^\rightarrow \tilde{\iota}_{X,Y} (\tilde{A}_X) ,
\end{\eq}
where $\vec{\xi}$ denotes an ordered partition of $Y$.
This satisfies
\begin{\eq}
\widehat{\bigotimes_{X\in\vec{\xi} }}\tilde{A}_X 
=\widetilde{\bigotimes_{X\in \xi }}\tilde{A}_X .
\end{\eq}

The above objects lead to fermionic counterparts of the partial trace.
For the usual qubits, we have the partial trace ${\rm Tr}_{Y,X}: \mathcal{A}_Y \rightarrow \mathcal{A}_X$ for $X\subseteq Y$ defined as
\begin{\eq}
{\rm Tr}_{Y,X}: A_X \otimes B_{\bar{X}}\ \mapsto\  A_X  {\rm Tr}[ B_{\bar{X}}] .
\end{\eq}
Similarly, we introduce the fermionic partial trace $\widetilde{{\rm Tr}}_{Y,X}: \tilde{\mathcal{A}}_Y \rightarrow \tilde{\mathcal{A}}_X$ defined as
\begin{\eq}
\widetilde{\rm Tr}_{Y,X} = \Phi_X  \circ {\rm Tr}\circ \Phi_Y^{-1} \,,
\end{\eq}
which satisfies
\begin{\eq}
\widetilde{{\rm Tr}}_{Y,X}: \tilde{A}_X \tilde{\otimes} \tilde{B}_{\bar{X}}\ \mapsto\  \tilde{A}_X  \widetilde{{\rm Tr}}[ \tilde{B}_{\bar{X}}] .
\end{\eq}
These trace maps are related to the canonical embedding maps as
\begin{\eq}
{\rm Tr}_{Y,X} = \iota_{X,Y}^\dag ,\quad \widetilde{{\rm Tr}}_{Y,X} = \tilde{\iota}_{X,Y}^\dag .
\end{\eq}
It is also known that the fermionic partial trace preserves positivity:
\begin{\eq}
\tilde{\rho}_Y \geq 0 \quad \Rightarrow \quad \widetilde{\rm Tr}_{Y,X}[ \tilde{\rho}_Y ] \geq 0 .
\end{\eq}
While the fermionic canonical embedding is useful, we generically have the property
\begin{\eq}
[\tilde{\iota}_{X,Y} (\tilde{A}_X) , \tilde{\iota}_{\bar{X},Y} (\tilde{B}_{\bar{X}}) ] 
= [\tilde{A}_X \tilde{\otimes} \tilde{I}_{\bar{X}} , \tilde{I}_X \tilde{\otimes} \tilde{B}_{\bar{X}} ] \neq 0 \qquad 
(\tilde{A}_X \in \tilde{\mathcal{A}}_X , \tilde{B}_{\bar{X}}\in \tilde{\mathcal{A}}_{\bar{X}}) ,
\label{eq:fermi-issue}
\end{\eq}
in contrast to the usual qubit case:
\begin{\eq}
[A_X \otimes I_{\bar{X}} , I_X \otimes B_{\bar{X}} ] = 0 \qquad (A_X \in \mathcal{A}_X , B_{\bar{X}}\in\mathcal{A}_{\bar{X}}) . 
\end{\eq}

\subsection{Superselection rule for fermion parity}
The above issue \eqref{eq:fermi-issue} does not occur if we impose a superselection rule for the fermionic parity.
Let us consider the fermion parity operator $\tilde{F}_Y$ on $Y$:
\begin{\eq}
\tilde{F}_Y 
:= \prod_{j\in Y}^\rightarrow \tilde{\Gamma}_Y (Z_j )
=\widehat{\bigotimes_{j\in Y}} Z_{\{ j\} }
=\widetilde{\bigotimes_{j\in Y}} Z_{\{ j\} }
=\bigotimes_{j\in Y} Z_{j } .
\end{\eq}
Then we can divide $\mathcal{H}_Y$ into the two subspaces $\mathcal{H}_Y^\pm$ by the eigenvalues of $\tilde{F}_Y $:
\begin{\eq}
\mathcal{H}_Y^\pm
:= \left\{ |\psi_Y \rangle \in \mathcal{H}_Y \ \vert\ \tilde{F}_Y |\psi_Y \rangle = \pm |\psi_Y \rangle \right\} ,
\end{\eq}
or, equivalently,
\begin{\eq}
\mathcal{H}_Y^\pm
= {\rm Span} \left\{ |\vec{\nu}_Y \rangle \ \vert\  (-1)^{\sum_{j\in Y} \nu_j} = \pm 1 \right\} .
\end{\eq}
Similarly, given the partition $\xi =\{ X_1 ,X_2 ,\cdots \} $ of $Y$ and local fermion parity $\epsilon_X$ on $X\in \xi$, we can consider the subspace
\begin{\eq}
\mathcal{H}_\xi^{\vec{\epsilon}}
:= \left\{ |\psi_Y \rangle \in \mathcal{H}_Y \ \vert\ \forall X\in\xi :
\tilde{\iota}_{X,Y}( \tilde{F}_X ) |\psi_Y \rangle = \epsilon_X |\psi_Y \rangle \right\} .
\end{\eq}

We can also decompose the algebra $\tilde{\mathcal{A}}_Y$ into two subspaces (cf.~Eq.~\eqref{eq_parity}):
\begin{\eq}
\tilde{\mathcal{A}}_Y^\pm
:= \left\{ \tilde{A}_Y \in \tilde{\mathcal{A}}_Y \ \vert\ \tilde{F}_Y \tilde{A}_Y \tilde{F}_Y^{-1} = \pm \tilde{A}_Y \right\} ,
\end{\eq}
or equivalently
\begin{\eq}
\tilde{\mathcal{A}}_Y^\pm
= {\rm Span} \left\{ \tilde{E}_Y^{\vec{\nu} ,\vec{\nu}' }  \ \vert\  (-1)^{\sum_{j\in Y} ( \nu_j +\nu_j' )} = \pm 1 \right\} .
\end{\eq}
Similarly, the algebra for the $\xi$-local subspace is decomposed into
\begin{\eq}
\tilde{\mathcal{A}}_\xi^{\vec{\epsilon}}
:= \left\{ \tilde{A}_Y \in \tilde{\mathcal{A}}_Y \ \vert\  \forall X\in\xi :
\tilde{\iota}_{X,Y} ( \tilde{F}_X ) \tilde{A}_Y \tilde{\iota}_{X,Y} ( \tilde{F}_X^{-1} ) = \epsilon_X \tilde{A}_Y \right\} .
\end{\eq}
The fermionic canonical embedding and the femionic partial trace preserve the parity:
\begin{\eq}
\tilde{\iota}_{X,Y} : \tilde{A}_X^\pm \ \rightarrow\ \tilde{A}_Y^\pm ,\quad
\widetilde{\rm Tr}_{Y,X} : \tilde{A}_Y^\pm \ \rightarrow\ \tilde{A}_X^\pm .
\end{\eq}
For $\tilde{A}_X \in \tilde{\mathcal{A}}_X^{\epsilon_X}$ and  $\tilde{B}_{\bar{X}} \in \tilde{\mathcal{A}}_{\bar{X}}^{\epsilon_{\bar{X}}}$, one can show (cf.~Eq.~\eqref{eq_oddevencommute})
\begin{\eq}
\tilde{\iota}_{X,Y} (\tilde{A}_X) \tilde{\iota}_{\bar{X},Y} (\tilde{B}_{\bar{X}})
=\pm \tilde{\iota}_{\bar{X},Y} (\tilde{B}_{\bar{X}}) \tilde{\iota}_{X,Y} (\tilde{A}_X)  ,
\end{\eq}
where the lower sign case is for $\epsilon_X = \epsilon_{\bar{X}}=-1$.
This implies the commutativity of the local operators of even parity:
\begin{\eq}
[\tilde{\iota}_{X,Y} (\tilde{A}_X) , \tilde{\iota}_{\bar{X},Y} (\tilde{B}_{\bar{X}}) ] 
=[\tilde{A}_X \tilde{\otimes} \tilde{I}_{\bar{X}} , \tilde{I}_X \tilde{\otimes} \tilde{B}_{\bar{X}} ] = 0 \qquad 
(\tilde{A}_X \in \tilde{\mathcal{A}}_X^+ , \tilde{B}_{\bar{X}}\in \tilde{\mathcal{A}}_{\bar{X}}^+ ) .
\end{\eq}

\section{Derivation of the dressed bosonic matter algebra}\label{app_bosalgebra}

Here, we expound on the shape of the dressed matter algebra in Eq.~\eqref{eq_drmatteralg}. Let us start by considering the factor of the kinematical Hilbert space in Eq.~\eqref{eq_Hkinbosfac} involving only frame and matter degrees of freedom,
\begin{equation}
    \mathcal{H}_{\text{kin}}' \coloneqq \mathcal{H}_R \otimes \mathcal{H}_{\text{matter}}.
\end{equation}
This space carries a faithful unitary representation of the gauge group $G^{N_V}$,
\begin{equation}
    U_{Rm}^{(\bm{g}, g_{v_0})} \coloneqq U_R^{(\bm{g}, g_{v_0})} \otimes U_{\text{matter}}^{(\bm{g}, g_{v_0})}, \qquad \bm{g} \in G^{N_{V}-1}, g_{v_0} \in G,
\end{equation}
where for convenience below we split off gauge transformations at the root $v_0$, and the dressed matter Hilbert space is the subspace of $\mathcal{H}_{\text{kin}}'$ invariant under this representation,
\begin{equation}
    \mathcal{H}^{\text{dr}}_{\text{matter}} = \{ \ket{\Psi}' \in \mathcal{H}_{\text{kin}}' \mid U_{Rm}^{(\bm{g}, g_{v_0})} \ket{\Psi}' = \ket{\Psi}' \enspace \forall \,(\bm{g}, g_{v_0}) \in G^{N_V}  \}  .
\end{equation}

The orientations of $R$ are perfectly distinguishable: it comprises an ideal QRF for the full group of gauge transformations $\mathcal{G}=\{U_{Rm}^{(\bm{g}, g_{v_0})} \mid (\bm{g}, g_{v_0}) \in G^{N_V}  \}$, as we saw in section~\ref{ssec_QRF}. However, its orientations take values in $G^{N_V - 1}$ and therefore cannot parametrize gauge transformations at $v_0$, i.e.\ it is an incomplete QRF  for $\mathcal{G}$ \cite{delaHamette:2021oex,Araujo-Regado:2025ejs}. This means that the tree QRF has an isotropy group isomorphic to $G$ that leaves its orientations invariant (and corresponds to the gauge transformations at $v_0$). 

While one can still describe the entire algebra on $\Hil_{\rm matter}^{\rm dr}$ as relational observables of matter degrees of freedom relative to the QRF, there is the additional subtlety, compared to complete frames, that there is an equivalence relation on system observables (here the matter observables), namely two such observables are indistinguishable relative to the QRF if they differ by an action of its isotropy group \cite{delaHamette:2021oex}. This means that the Page-Wootters relationalization map in Eq.~\eqref{relobs} implements an average of matter observables over the isotropy group of the frame, see \cite[App.~E]{Araujo-Regado:2025ejs} for a careful discussion of this in lattice gauge theory. 

In what follows, we will take care of this in a two-step process:
\begin{itemize}
    \item[(i)] First ignore gauge transformations at $v_0$, treating the tree QRF $R$ as ideal \emph{and} complete for gauge transformations everywhere else.
    \item[(ii)] Thereafter, project the relational algebra of step (i) onto its subalgebra that is invariant at $v_0$ also.
\end{itemize}

Let us thus begin with step (i), taking $R$ as  complete and ideal for the group
\begin{equation}
    \widetilde{\mathcal{G}} \coloneqq \{ U_{R_v}^{g} \otimes U_v^g \mid g \in G, v \in \mathcal{V} \setminus \{ v_0 \}   \},
\end{equation}
which motivates the introduction of the Page-Wootters reduction map $ \widetilde{\mathcal{R}}_R^{\bm{g}}: \mathcal{H}_{\widetilde{\mathcal{G}}} \to \mathcal{H}_{\text{matter}} $ from the $\widetilde{\mathcal{G}}$-invariant subspace of $\mathcal{H}_{\text{kin}}'$ to the matter Hilbert space:
\begin{equation}
    \widetilde{\mathcal{R}}_R^{\bm{g}} \coloneqq \left( \bra{\bm{g}}_R \otimes \mathbbm{1}_{\text{matter}}   \right) \Pi_{\widetilde{\mathcal{G}}}.
\end{equation}
Here, $\Pi_{\widetilde{\mathcal{G}}}$ is the projector onto $\mathcal{H}_{\widetilde{\mathcal{G}}}$,
\begin{equation}
    \Pi_{\widetilde{\mathcal{G}}} \coloneqq \int \dee^{N_V -1} \bm{h} \left( \prod_{v \in \mathcal{V} \setminus \{ v_0 \}  }  U_{R_v}^{h_v} \otimes u_v^{h_v}   \right).
\end{equation}
In this setting, relational observables constructed through $\widetilde{\mathcal{R}}_R^{\bm{g}}$ from the matter sector generate the algebra of bounded operators on $\mathcal{H}_{\widetilde{\mathcal{G}}}$, $\mathcal{A}_{\widetilde{\mathcal{G}}} = \mathcal{B}(\mathcal{H}_{\widetilde{\mathcal{G}}})$ because $\tilde{\mathcal{R}}^{\bm{g}}_R$ is unitary and so we have $\mathcal{A}_{\tilde{\mathcal{G}}}\simeq\mathcal{A}_{\rm matter}$ \cite{delaHamette:2021oex}. 

We recall that the matter algebra is generated by creation and annihilation operators for the various particles types $i\in\mathbb{Z}_r$, c.f.~the discussion in section~\ref{ssssec_bosHil}. Moreover, since for complete and ideal QRFs (here for $\widetilde{\mathcal{G}}$) the relational observable map, i.e.\ the dressing, is a unital $*$-homomorphism \cite{delaHamette:2021oex}, their frame-dressed counterparts generate $\mathcal{A}_{\widetilde{\mathcal{G}}}$:
\begin{equation}
    \mathcal{A}_{\widetilde{\mathcal{G}}} = \left\langle  \widetilde{O}_{a_{v,i}^{\dagger}|R}^{\bm{g}},     \widetilde{O}_{a_{v,i}|R}^{\bm{g}} \Big| v \in \mathcal{V},  i\in\mathbb{Z}_r   \right\rangle,
\end{equation}
where 
\begin{eqnarray}
\widetilde{O}_{a_{v,i}|R}^{\bm{g}} &\coloneqq&(\widetilde{\mathcal{R}}_R^{\bm{g}})^{\dagger} a_{v,i} \widetilde{\mathcal{R}}_R^{\bm{g}} \\
    &=& \Pi_{\widetilde{\mathcal{G}}} \left( \ket{\bm{g}}\!\bra{\bm{g}}_R \otimes \mathbbm{1}_{\text{matter}}  \right)a_{v,i} \Pi_{\widetilde{\mathcal{G}}} \\
    &=& \int \dee^{N_V - 1} \bm{h} \left( \prod_{v' \in \mathcal{V} \setminus \{ v_0 \} } U_{R_{v'}}^{h_{v'}} \otimes u_{v'}^{h_{v'}}   \right) \ket{\bm{g}}\!\bra{\bm{g}}_R a_{v,i} \left( \prod_{v' \in \mathcal{V} \setminus \{ v_0 \}} U_{R_{v'}}^{h_{v'} \dagger} \otimes u_{v'}^{h_{v'} \dagger}   \right) \Pi_{\widetilde{\mathcal{G}}}. ~ ~ ~ ~
\end{eqnarray}
At this point, we need to distinguish $v \neq v_0$ from $v=v_0$. In the former case, the gauge transformations act nontrivially on $a_{v,i}$, so we get a factor $\bar \rho_i(h_v)$ contributing to the $h_v$ integral, while the integrals over the remaining gauge parameters resolve the identity at every vertex:
\begin{eqnarray}
\widetilde{O}_{a_{v,i}|R}^{\bm{g}} &=& \left( \int \dee h_v \bar \rho_i(h_v) \ket{h_vg_v} \bra{h_vg_v}_{R_v} \right) a_{v,i} \Pi_{\widetilde{\mathcal{G}}} \\
&\propto& \left( \int \dee h_v \bar \rho_i(h_v) \ket{h_v}\bra{h_v}_R \right) a_{v,i}\Pi_{\widetilde{\mathcal{G}}},
\end{eqnarray}
where we recognize the term in brackets as $W_{R_v}^{{\rho_i}}$, the tree Wilson line connecting $v$ to $v_0$ in representation ${\rho_i}$. On the other hand, if $v=v_0$, the gauge transformations in the integral leave it invariant and we instead get
\begin{equation}
\widetilde{O}^{\bm{g}}_{a_{v_0,i} | R} \propto a_{v,i} \Pi_{\widetilde{\mathcal{G}}}.
\end{equation}

Hence, we arrive at 
\begin{equation}
    \mathcal{A}_{\widetilde{\mathcal{G}}} = \left\langle a_{v,i} W_{R_v}^{{\rho_i}} \Pi_{\widetilde{\mathcal{G}}}, \thinspace a_{v_0, i} \Pi_{\widetilde{\mathcal{G}}}, \thinspace h.c. \Big| v \in \mathcal{V} \setminus \{ v_0  \}, i\in\mathbb{Z}_r  \right\rangle.
    \label{alg_tilde_G}
\end{equation}

Of course, this algebra still contains operators transforming nontrivially at $v_0$, so let us now implement step (ii). Here, we make use of the fact that the $G$-twirl (or incoherent group average) $\mathcal{G}_{v_0}:\mathcal{A}_{\rm kin}'\to \mathcal{A}_{\rm kin}'^{G_{v_0}}$, $\mathcal{G}_{v_0}(\bullet)=\int{\dee}g_{v_0}U_{Rm}^{g_{v_0}}(\bullet)U_{Rm}^{g_{v_0}\dag}$, is the orthogonal (with respect to the Hilbert-Schmidt inner product) projector onto the subalgebra invariant under gauge transformations at $v_0$ when $G$ is compact. We further note that in $\mathcal{H}_{\widetilde{\mathcal{G}}}$ global gauge transformations are equivalent to gauge transformations at $v_0$ and that $\mathcal{G}_{v_0}(\bullet)\Pi_{v_0} = \Pi_{v_0}(\bullet)\Pi_{v_0}$, where $\Pi_{v_0}$ is the coherent group average over gauge transformations at $v_0$:
\begin{eqnarray}
    \mathcal{A}^{\text{dr}}_{\text{matter}} &=& \mathcal{G}_{v_0} (\mathcal{A}_{\widetilde{\mathcal{G}}}) \Pi_{v_0}\label{eq_drmatalg3} \\
    &=& \Big\langle \int \dee h \dee h'  U^{h}_{\text{global}} O_{\widetilde{\mathcal{G}}}U_{\text{global}}^{h' \dagger}  \Big| O_{\widetilde{\mathcal{G}}} \in \mathcal{A}_{\widetilde{\mathcal{G}}} \Big\rangle\,.\label{eq_drmatalg2}
\end{eqnarray}

It is clear that the generators in Eq.~(\ref{alg_tilde_G}) vanish upon integration, as they are charged in nontrivial irreps $\rho_i$ or $\bar\rho_i$ at $v_0$ (and so have zero support on the trivial irrep onto which $\mathcal{G}_{v_0}$ projects), $\mathcal{G}_{v_0}(\tilde{O}^{\bm{g}}_{a_{v,i}|R})=0$. However, the $G$-twirl is not a homomorphism, $\mathcal{G}_{v_0}(AB)\neq\mathcal{G}_{v_0}(A)\mathcal{G}_{v_0}(B)$, and, of course, the projection of invariant combinations of the generators of $\mathcal{A}_{\tilde{\mathcal{G}}}$ does not vanish in general. Indeed, we may combine the half open Wilson lines $a_{v,i}W_{R_v}^{{\rho_i}}$ into fully gauge invariant objects by appending either $a_{v', i}^{\dagger} W_{R_{v'}}^{\bar \rho_i}$ or $a_{v', \bar{i}}W_{R_{v'}}^{ \bar \rho_i}$ for any vertex $v'$, including $v$ and $v_0$. For $v' \neq v, v_0 $, this gives rise to two different types of dressed tree Wilson lines for each particle type $i$ and antiparticle type $\bar i$:
\begin{equation}
     a_{v,i} W_{\gamma_R[v,v']}^{ \rho_i}a_{v',i}^{\dagger}\,, \qquad a_{v,\bar i}^\dag W^{\rho_i}_{\gamma_R[v,v']}a_{v',\bar i}\,,\qquad\text{and} \quad a_{v,\bar{i}}^{\dagger}W_{\gamma_R[v,v']}^{ \rho_ i}a_{v', i}^{\dagger},
    \label{transport_and_pairing}
\end{equation}
together with their Hermitian conjugates, whereas taking $v'=v$ or $v=v_0$, we recover the number operators $a_{v,i}^{\dagger}a_{v,i}$, the gauge-invariant products $a_{v,i}^{\dagger}a_{v',\bar i}^{\dagger}$, etc. Since $\mathcal{G}_{v_0}$ leaves them invariant, inserting them for $O_{\tilde{\mathcal{G}}}$ into Eq.~\eqref{eq_drmatalg2} is equivalent to simply multiplying them with $\Pi_{\rm pn}$, yielding the generators of the dressed matter algebra, Eq.~\eqref{eq_drmatteralg}, given in the main text.

It thus remains to argue that these are indeed the generators. Now, when the species $i$ is of infinite order, the only way to neutralize a charge is with an anti-charge, so that charge-zero observables require an equal number of charges and anti-charges, which always yield zero charge in pairs. Thus, any such zero-charge observable will be a function of the bipartite ones given in Eq.~\eqref{transport_and_pairing}, including their Hermitian conjugates, since these encompass all such bipartite ones. 

When the species $i$ is of finite order, due to $\rho_i^{D_i}=1$ and neutrality only being defined $\!\!\!\mod D_i$, the situation is slightly more subtle. Indeed, we may now have gauge-invariant operators of the kind discussed around the star operator in Eq.~\eqref{eq_star}, which involve an odd number of creation and/or annihilation operators when $D_i$ is odd. However, one may convince oneself that, again, any such net-zero-charge observable can again be written as an algebraic combination of the bipartite ones given in Eq.~\eqref{transport_and_pairing}. The key to this observation is the fact that, thanks to Eqs.~\eqref{eq_aadagcommutator} and~\eqref{eq_highestweight}, the identity operator can always be written as 
\begin{equation}\label{eq_identity}
    \mathbbm{1}_{v,i}=[a_{v,i},a_{v,i}^\dag]+ c\,\left(a_{v,i}^\dag\right)^{D_i-1}\,a_{v,i}^{D_i-1}
\end{equation}
for some constant $c>0$, involving only an equal number of creation and annihilation operators (and similarly for the antiparticle $\bar i$).\footnote{For example, one may replace the identity $\mathbbm{1}_{v_0,i}$ in the center of the star in Eq.~\eqref{eq_star} with the right hand side in Eq.~\eqref{eq_identity}, upon which one will find that the star operator can be written as an algebraic combination of the bipartite operators in Eq.~\eqref{transport_and_pairing}.} Moreover, Eq.~\eqref{eq_finiteantianil}, including for exchanging $i\leftrightarrow \bar i$ and taking their Hermitian conjugates, permits one to replace $D_i-1$ powers of any of the particle or antiparticle creation and annihilation operators in terms of a sum of terms that only involve odd numbers of such operators. Combined in the appropriate way, this permits one to write any gauge-invariant object involving an odd number of creation and/or annihilation operators in terms of an expression that involves an even number of them, such that charge-anti-charge pairs arise. 
The main insight is that since $D_i-1$ particles of type $i$ correspond to one anti-particle of the same species, one can always obtain  operators such as Eq.~\eqref{eq_star}, even when the $D_i-1$ particles reside on distinct vertices, from a single bipartite particle-antiparticle observable and acting on it with other bipartite zero-charge observables that spread the $D_i-1$ particles out across different vertices. 

Thus, also for finite order, every zero-charge observable can be written as an operator containing an equal number of charges and anti-charges that neutralize in pairs, so that the bipartite operators in Eq.~\eqref{transport_and_pairing} and their Hermitian conjugates indeed generate the full algebra, yielding Eq.~\eqref{eq_drmatteralg}.

\section{Knill-Laflamme condition for bosonic Gauss law codes}\label{app_bosKL}
Here we provide the missing details in our derivation of Eq.~(\ref{bosonicKL}), which allowed us to characterize maximal sets of correctable errors in the bosonic Gauss law code in terms of sections of the extended Gauss law bundle $\partial_B'$ in Eq.~\eqref{eq_extbosGaussbundle}. The goal is to show that two distinct errors $E_a = Z(\bm{z}_a, \bm{\theta}_b)X^{\bm{x}_a}W^{\bm{\chi}_a}$, $E_b = Z(\bm{z}_b, \bm{\theta}_b)X^{\bm{x}_b}W^{\bm{\chi}_b}$ satisfy the Knill-Laflamme condition if and only if they violate Gauss's law by the same amount, i.e.\  if $\partial'_B(\bm{\chi}_a,\bm{z}_a,\bm{\theta}_a,\bm{x}_a) = \partial'_B(\bm{\chi}_b,\bm{z}_b,\bm{\theta}_b,\bm{x}_b)$. 

We begin by invoking Eqs.~(\ref{WandH}) and~(\ref{Wsonframepuregauge}) to write
\begin{equation}
    W^{\bm\chi}=H^{\bm\chi_S}\sum_{\partial\bm\eta\in\hat{G}^{N_V-1}}\ket{\partial\bm\chi\partial\bm\eta}\!\bra{\partial\bm\eta}_R\,.
\end{equation}
Hence, the combination $X^{\bm{x}}W^{\bm{\chi}}$ acts on the dressed matter states in Eq.~\eqref{partialchin} as 
\begin{equation}
    X^{\bm{x}}W^{\bm{\chi}} \ket{\psi}_{\rm loops}\otimes_R\ket{\bf{n}}^{\text{dr}}_{\text{matter}} = H^{\bm{\chi}_S}\ket{\psi}_{\rm loops}\otimes_R\ket{\partial\bm\chi\partial \bm{\eta}^{ ( \bf{n} )}}_R \otimes \ket{{\bf{n}} + \bm{x}}_{\text{matter}}
\end{equation}
for arbitrary $\ket{\psi}_{\rm loops}\in\Hil_{\rm loops}$.
This lies outside of the code space and so produces a detectable syndrome unless $\bm{x}$ and $\bm{\chi}$ are such that $\partial\bm\chi\partial \bm{\eta}^{ ( \bf{n} )}= \partial \bm{\eta}^{(\bm{x} + \bf{n})}=\partial\bm\eta^{(\bm{x})}\partial\bm\eta^{({\bf n})}$. Since $\partial \bm{\eta}^{(\bm{x})}$ only depends on $\bm{x}$ through $\Delta \bf{x}$ as defined in Eq.~\eqref{eq_Delta} (recall that $\partial\bm\eta^{(\bm{x})}$ is defined via Eqs.~\eqref{eq_bosonicGauss} and~\eqref{eq_rhototbos}), this requirement is equivalent to $(\partial\bm\chi)\bm\rho[\Delta\bm{x}]=\bm{1}$, with $\bm\rho[\Delta\bm{x}]$ given in Eq.~\eqref{eq_speciescontrib}. Projecting onto $\mathcal{H}_{\text{pn}}$ therefore yields
\begin{equation}\label{XWondressedstates}
    \Pi_{\text{pn}}X^{\bm{x}}W^{\bm{\chi}} \ket{\psi}_{\rm loops}\otimes_R\ket{\bf{n}}^{\text{dr}}_{\text{matter}} = \delta_{(\partial\bm\chi)\bm\rho[\Delta\bm{x}], \bm{1}}\,H^{\bm{\chi}_S}\ket{\psi}_{\rm loops}\otimes_R  \ket{{\bf n} + \bm{x}}^{\text{dr}}_{\text{matter}},
\end{equation}
which essentially constitutes the part of Eq.~(\ref{bosonicKL}) without the $Z$-errors when inserted in the Knill-Laflamme condition. 

In more detail, including the $Z$-errors, which commute with $\Pi_{\rm pn}$, and making use of the fact that the set of errors that we consider is closed under multiplication (up to phase), we find
\begin{eqnarray}
    \Pi_{\text{pn}} E_a^{\dagger}E_b \Pi_{\text{pn}} &\propto& Z(\bm{z}_b - \bm{z}_a, \bm{\theta}_b - \bm{\theta}_a) \Pi_{\text{pn}}  X^{\bm{x}_b - \bm{x}_a} W^{\bar{\bm{\chi}}_a\bm{\chi}_b} \Pi_{\text{pn}} \\
    &\propto& Z(\bm{z}_b - \bm{z}_a, \bm{\theta}_b - \bm{\theta}_a) \sum_{\bf{n} \in \widetilde{\mathcal{N}}} \Pi_{\text{pn}}X^{\bm{x}_b - \bm{x}_a}W^{\bar{\bm{\chi}}_a\bm{\chi}_b} \left(\mathbbm{1}_{\rm loops}\otimes_R\ket{\bf{n}} \!\bra{\bf{n}}^{\text{dr}}_{\text{matter}}\right)\,,
\end{eqnarray}
and invoking Eq.~(\ref{XWondressedstates}) we recover Eq.~(\ref{bosonicKL}) in full:
\begin{equation}
    \Pi_{\text{pn}}E_a^{\dagger}E_b\Pi_{\text{pn}} \propto Z(\bm{z}_b - \bm{z}_a, \bm{\theta}_b - \bm{\theta}_a) X_{\rm dr}^{\bm{x}_b - \bm{x}_a} H^{(\bm{\bar{\chi}}_a\bm{\chi}_b)_S} \delta_{(\partial\bm{\chi}_a)\bm{\rho}[\Delta\bm{x}_a],(\partial\bm{\chi}_b)\bm{\rho}[\Delta\bm{x}_b]}\Pi_{\text{pn}},
\end{equation}
where we have the dressed $X$-operator
\begin{equation}
    X_{\rm dr}^{\bm{x}_b - \bm{x}_a} \coloneqq \sum_{\bf{n} \in \tilde{\mathcal{N}}} \ket{{\bf n} + \bm{x}_b - \bm{x}_a} \!\bra{\bf{n}}^{\text{dr}}_{\text{matter}}.
\end{equation}

\section{Dressed fermionic matter algebra}\label{app_fermalg}

Let us briefly explain the shapes of the dressed fermionic algebras in Eqs.~\eqref{eq_inffermalg} and~\eqref{eq_finfermalg}. Noting that $u_v^g$ in Eq.~\eqref{eq_fermgaugetr} is an even operator, the derivation proceeds along the exact same QRF-dressing steps as in the bosonic case, leading up to Eq.~\eqref{eq_drmatalg2}, except that we replace every instance of $a_{v,i},a_{v,j}^\dag$
with $\psi_v,\psi_v^\dag$, respectively, and that at each site we either only have fermions or antifermions in line with the staggered nature. This immediately yields the gauge-invariant bipartite combinations
\begin{equation}
\psi_v\,W^{\chi_F}_{\gamma_R[v,v']}\,\psi_{v'}^\dag\,\Pi_{\rm pn}\,,\qquad\qquad v,v'\in\mathcal{V}\,,
\end{equation}
as algebra generator candidates. 

For $\chi_F$ of infinite order, these also constitute a complete set of generators, for the same reason as in the bosonic case, resulting in Eq.~\eqref{eq_inffermalg}.

A few additional words are in place for the opposite case that $D_F<\infty$. As explained below Eq.~\eqref{eq_finfermalg}, the star operators in Eq.~\eqref{eq_fermstar} are independent in that case and at least one of them must be added to the generators. 

Let us consider any one of them, centered at the root $v_0$, and argue that we can obtain any other from it through elementary moves involving the bipartite generators:
\begin{equation}\label{eq_fermalgA}
    S_{\{v_k\}} =\prod_{k=1}^{D_F}\,\psi_{v_k}\,W^{\chi_F}_{R_{v_k}}=(-1)^{s_1}\psi_{v_1}\,W^{\chi_F}_{R_{v_1}}\,\prod_{k=2}^{D_F}\,\psi_{v_k}\,W^{\chi_F}_{R_{v_k}}\,.
\end{equation}
On the right hand side, we have factored out the ``ray'' connecting $v_1$ with the root $v_0$, which may lead to an overall sign $(-1)^{s_1}$, depending on where in the subsystem ordering of the graded tensor product $v_1$ sits (thus taking the anti-commutation in Eq.~\eqref{eq_fermanticommute} into account). Let us assume that $v_k\neq v_0$, $k=1,\ldots,D_F$ for definiteness. Now using $\mathbbm{1}_{v_1}=\psi_{v_1}\psi_{v_1}^\dag+\psi_{v_1}^\dag\psi_{v_1}$ and keeping track of all the signs from anti-commutations, we find, using one of the bipartite generators,
\begin{equation}
    (-1)^{s_1}\,S_{\{v_k\}} \psi_{v_1}^\dag W^{\bar{\chi}_F}_{R_{v_1}}\,\psi_{v_0}+(-1)^{s_1+1}\psi_{v_1}^\dag W^{\bar{\chi}_F}_{R_{v_1}}\,\psi_{v_0}\,S_{\{v_k\}} = \psi_{v_0}\,\prod_{k=2}^{D_F}\,\psi_{v_k}\,W^{\chi_F}_{R_{v_k}}=(-1)^{s_0}\,S_{\{v_k'\}}\,,
\end{equation}
where $s_0$ parametrizes a second sign that arises upon anti-commuting $\psi_{v_0}$ into its right place in the subsystem ordering in the graded tensor product structure, and $\{v_k'\}$ agrees with $\{v_k\}$, except for the replacement $v_1\to v_0$. In other words, our procedure has cut of the ray $\psi_{v_1}\,W^{\chi_F}_{R_{v_1}}$ linking $v_1$ and $v_0$ from the star and instead placed a $\psi_{v_0}$ at the root, resulting in a new star operator.

Using such elementary moves, we can move a $\psi_{v_k}$ from any ray tip to any other $\psi$-unoccupied vertex in the star, or even build new rays, as well as move the star center $v_0$ to any other vertex in the lattice. Similarly, with such elementary moves one can ``glue'' distinct stars together. 

In this manner, one can obtain all possible net-zero charge  combinations of the ($U_{v_0}^g$-ignoring) relational observable generators 
\begin{equation}
    \widetilde{O}^{\bm{g}}_{\psi_v|R}\propto \psi_v W^{\chi_F}_{R_{v_k}}
\end{equation}
that one obtains in analogy to Eq.~\eqref{alg_tilde_G} for the algebra that is invariant under gauge transformations everywhere except at $v_0$. Indeed, all of these net-zero charge combinations correspond to graphs in the tree $R$ equipped with certain $\psi,\psi^\dag$ combinations in appropriate vertices. These then exhaust the image of $\mathcal{G}_{v_0}$ (cf.~Eq.~\eqref{eq_drmatalg3}). 

In conclusion, it suffices to add a single star operator and its conjugate to the bipartite generators to obtain a complete set for $\mathcal{A}_{\rm matter}^{\rm dr}$, yielding Eq.~\eqref{eq_finfermalg}.

\bibliographystyle{utphys-modified}
\bibliography{refs.bib}

\end{document}